\documentclass[%
  a4paper,
  UKenglish,
  cleveref,
  autoref,
  thm-restate,
  numberwithinsect,
  pdfa,
]{lipics-v2021}

\pdfoutput=1

\usepackage{booktabs}   

\usepackage{mathtools}
\usepackage{stmaryrd}
\usepackage{tikz-cd}
\usepackage{bussproofs}  \EnableBpAbbreviations
\usepackage{bm}

\usepackage{macros}

\usepackage[%
  full,		
]{optional}

\opt{full}{\hideLIPIcs}

\newcommand\colin[1]{\opt{draft}{{\color{red}\sffamily C: #1}}}

\usepackage{comment}
\opt{fullproof}{\newcommand\FLAGPROOF{}}
\ifdefined\FLAGPROOF
  
\else
  \excludecomment{fullproof}
\fi

\opt{draft}{\newcommand\FLAGDRAFTPROOF{}}
\ifdefined\FLAGDRAFTPROOF
  
\else
  \excludecomment{draftproof}
\fi

\opt{full}{\newcommand\FLAGFULL{}}
\ifdefined\FLAGFULL
  
\else
  \excludecomment{full}
\fi

\opt{draft}{\newcommand\FLAGDRAFT{}}
\ifdefined\FLAGDRAFT
  
\else
  \excludecomment{draft}
\fi

\ifdefined\FLAGDRAFT
  \newenvironment{colinpar}{\begin{color}{red}\sffamily C:}{\end{color}}
\else
  \excludecomment{colinpar}
\fi

\ifdefined\FLAGDRAFT
  
\else
  \excludecomment{adampar}
\fi

\title{A Complete Finitary Refinement Type System for Scott-Open Properties} 

\author
  {Colin Riba}
  {ENS de Lyon, CNRS, Université Claude Bernard Lyon 1, LIP, UMR 5668, 69342,
  Lyon cedex 07, France
  \and \url{https://perso.ens-lyon.fr/colin.riba/}}
  {colin.riba@ens-lyon.fr}{https://orcid.org/0009-0005-8858-7532}{}

\author
  {Adam Donadille}
  {ENS de Lyon, CNRS, Université Claude Bernard Lyon 1, LIP, UMR 5668, 69342,
  Lyon cedex 07, France}{}{}{}

\authorrunning{C. Riba and A. Donadille}

\Copyright{Colin Riba and Adam Donadille}

\ccsdesc[500]{Theory of computation~Denotational semantics}
\ccsdesc[500]{Theory of computation~Logic and verification}

\keywords{Domain Theory, Temporal Logic, Refinement Types.}

\opt{short}{\relatedversion{}}
\opt{short}{\relatedversiondetails
  [cite=rd26full]
  {Full Version}
  {https://arxiv.org/abs/2601.23082}} 

\funding{This work was partially supported by the French ANR PRCI project \emph{ATQUA}, ANR-25-CE48-1428-01.}

\nolinenumbers

\EventEditors{Frank Pfenning}
\EventNoEds{1}
\EventLongTitle{11th International Conference on Formal Structures for Computation and Deduction (FSCD 2026)}
\EventShortTitle{FSCD 2026}
\EventAcronym{FSCD}
\EventYear{2026}
\EventDate{July 20--23, 2026}
\EventLocation{Lisbon, Portugal}
\EventLogo{}
\SeriesVolume{378}
\ArticleNo{15}

\begin{document}

\maketitle

\begin{abstract}
We are interested in proving input-output properties of functions
that handle infinite data such as streams or non-wellfounded trees.
We provide a finitary refinement type system which is (sound and)
complete for Scott-open properties defined in a fixpoint-like logic.
Working on top of Abramsky's Domain Theory in Logical Form,
we build from the well-known fact that the Scott domains interpreting
recursive types are spectral spaces. 
The usual symmetry between Scott-open and compact-saturated sets 
is reflected in logical polarities:
positive formulae allow for least fixpoints and define Scott-open sets,
while negative formulae allow for greatest fixpoints and define
compact-saturated sets.
A realizability implication with the expected (contra)variance on polarities
allows for non-trivial input-output properties to be formulated as 
positive formulae on function types.
\end{abstract}

\section{Introduction}
\label{sec:intro}

We are interested in input-output
specifications of functions that handle infinite 
data, such as streams or non-wellfounded trees.
Consider the function $\cnt \colon \Stream \Bool \to \Stream \Nat$
defined as $\cnt = \cntrec\ \term 0$,
where
\[
\begin{array}{r @{~} l @{~~} l}
  \cntrec
& :
& \Nat
  ~\longto~
  \Stream \Bool
  ~\longto~
  \Stream \Nat
\\

  \cntrec\ n\ (b \Colon x)
& =
& \term{if}~ b 
  ~\term{then}~ (\cntrec\ (\term{1+}n)\ x)
  ~\term{else}~ n \Colon (\cntrec\ n\ x)
\end{array}
\]

\noindent
Given a stream of Booleans $x$, the stream of natural
numbers $(\cnt\ x)$ retains,
for each occurrence of $\false$ in $x$,
the number of $\true$'s seen so far in $x$.
It is clear that
if there are infinitely many $\true$'s and infinitely many $\false$'s in $x$,
then $(\cnt\ x)$ contains arbitrary large numbers.

Working with a simply-typed $\lambda$-calculus with sums and recursive types
(a.k.a.\ $\FPC$)~\cite{pierce02book},
we are interested in specifications 
interpreted in a denotational semantics:
since a stream (as opposed to e.g.\ an integer)
is an inherently infinite object, a specification for (e.g.) $\cnt$
should hold for any stream whatsoever, and not only
for those definable in a given syntax.

This leads us to consider properties on infinite datatypes in Scott domains.
Logics on top of domains are known since quite a long time.
%
%
We resort to Abramsky's paradigm of
``Domain Theory in Logical Form'' \cite{abramsky87lics,abramsky91apal},
which allows one to systematically generate a logic from a 
domain representing a type.
These logics are obtained by Stone duality,
the core of a
rich interplay between domain theory, logic and topology
(cf e.g.~\cite{johnstone82book,vickers89book,zhang91book,ac98book,cz00csl,vickers07chapter,goubault13book,gg24book}).

In a previous work~\cite{rk25wollic} (based on~\cite{bk03ic}),
we considered a refinement type system
which is sound and complete for saturated (i.e.\ upward-closed)
properties on Scott domains.
This strictly extends~\cite{abramsky87lics,abramsky91apal}.
Indeed,
saturated properties cover least and greatest fixpoints of
negation-free formulae in modal $\mu$-calculi on the corresponding types.
On streams, this includes usual modalities of Linear Temporal Logic ($\LTL$),
such as $\Diam$ (\emph{eventually}, i.e.\ ``there is a position such that'')
and $\Box$ (\emph{always}, i.e.\ ``for all positions'')~\cite{rs24jfla}.
Logics like $\LTL$, $\CTL$ (Computation Tree Logic) or
modal $\mu$-calculi
are widely used to formulate properties on infinite objects
(see e.g.~\cite{bk08book,hr07chapter,bs07chapter}).

On the other hand, only compact-open properties are available
in~\cite{abramsky87lics,abramsky91apal}.
But for each compact-open set $K$ of streams, there is some $n \in \NN$
such that membership to $K$ is completely determined by prefixes of length $n$.
This excludes proper uses of $\Diam$ and $\Box$.
However, the systems in~\cite{bk03ic,rk25wollic} are \emph{infinitary},
in the sense that some rules have infinitely many premises.

In this paper, we devise a \emph{finitary} refinement type system
which is sound and complete for a family of Scott-open properties
allowing for specifications of infinitary behaviours.
This extends~\cite{abramsky87lics,abramsky91apal}.
Types are refined by a (negation-free) fixpoint-like logic which
relies on the well-known fact that the Scott domains interpreting
recursive types are \emph{spectral spaces}. 

In a spectral space
(and more generally in a \emph{stably compact space}~\cite{lawson11mscs}),
there is a symmetry between the open sets and the compact-saturated ones.
We reflect this symmetry in the logic:
we have \emph{positive} formulae
which are closed under least fixpoints and define Scott-open sets,
and \emph{negative} formulae which are closed under greatest fixpoints and define
compact-saturated sets.
Input-output specifications of functions are written using a realizability implication
$\realto$
with the expected (contra)variance w.r.t.\ polarities.

On streams, positive formulae are closed under
$\Diam$, while negative formulae are closed under $\Box$.
Given a positive formula $\varphi$ and a negative formula $\psi$,
we have a positive formula $\Box \psi \realto \Diam \varphi$
which selects those functions that take a stream
whose elements all satisfy $\psi$ to a stream with at least one element
satisfying $\varphi$.
Consider now the composite
\begin{equation*}
\begin{array}{l @{~} l @{~~} l}
  \cnt \comp \bft
& :
& \Tree \Bool
  \longarrow
  \Stream \Nat
\end{array}
\end{equation*}

\noindent
where $\bft \colon \Tree \PTbis \arrow \Stream \PTbis$
is a breadth-first tree traversal.
Negative formulae on trees are closed under
the usual $\CTL$-like modalities $\forall\Box$
(\emph{invariantly}, i.e.\ ``for all nodes'')
and $\exists\Box$
(\emph{potentially always},
i.e.\ ``there is an infinite path on which for all nodes'').
Thanks to the realizability implication, 
for each $n \in \NN$ there is a positive
formula which expresses the following specification for $\cnt \comp \bft$:
Given a totally defined tree of Booleans,
if the children of the root have an infinite path with only $\true$'s
and an infinite path with only $\false$'s,
then $\cnt \comp \bft$ produces a stream which contains a totally
defined number $\geq n$.

We prove a \emph{Positive Completeness} result: our type system
derives all sound positive specifications of $\lambda$-terms
(thus including the above positive specifications for $\cnt \comp \bft$).

Positive completeness implies that
checking positive formulae on $\lambda$-terms is semi-decidable.
This is in line with the intuition that 
membership to a Scott-open set is akin to a reachability problem:
evaluate the (semantics of the) $\lambda$-term until knowing that the open set is met.
Our refinement type system essentially amounts to a
syntactic approach to this reachability problem.
We also note that checking positive formulae on $\lambda$-terms
is \emph{not} decidable.

Besides, an adaptation of Higher-Order Model Checking
to input-output specifications for a specific format
of higher-order tree transducers (weaker than our $\FPC$)
has been shown to be undecidable in~\cite{ktu10popl}.

At the technical level, 
our logic allows for prenex normal forms.
To this end, least and greatest fixpoints are decomposed
as finite iterations indexed by variables which are
existentially and universally quantified, respectively.
This builds on ideas in~\cite{jr21esop}
(which is based on guarded recursion and states no completeness result).

\subparagraph{Organisation of the paper.}
Our simply-typed $\lambda$-calculus is discussed in~\S\ref{sec:pure}.
The logic is introduced in~\S\ref{sec:log},
while refinement types are presented in~\S\ref{sec:reft}.
Section~\ref{sec:sem} covers denotational semantics,
soundness and positive completeness,
as well as some necessary material on spectral spaces.
Finally, the conclusion (\S\ref{sec:conc}) discusses some further works
and compare with more related works.
\opt{short}{Proofs are available in the Appendices of the full version~\cite{rd26full}.}%
\opt{full}{Proofs are available in the Appendices.}

\begin{colinpar}
NOTES/TODO:
\begin{itemize}
\item NOTE: Ideas from~\cite{jr21esop}
regarding continuity properties of formulae (as well as iteration terms).

\item NOTE: Maybe cite~\cite{smyth83icalp} as a historical reference?.
\end{itemize}
\end{colinpar}

\section{The Pure System}
\label{sec:pure}

The \emph{pure types}
(notation $\PT,\PTbis,\dots$) are
the closed types over the grammar
\[
\begin{array}{r @{~~} c @{~~} l}
    \PT
&   \bnf
&   \Unit
\gs \PT \times \PT
\gs \PT \arrow \PT
\gs \TV
\gs \rec \TV.\PT 
\gs \PT + \PT
\end{array}
\]

\noindent
where $\TV$ ranges over an infinite supply of \emph{type variables},
and where $\rec\TV.\PT$ binds $\TV$ in $\PT$.
 
We consider $\lambda$-terms from the grammar
\[
\begin{array}{r @{~~} r @{~~} l}
    M,N 
&   \bnf
&   \pair{}
\gs \pair{M,N}
\gs \pi_1(M)
\gs \pi_2(M)
\gs x
\gs \lambda x.M
\gs M N
\gs \fix x.M
\\

&   \mid
&   \fold(M)
\gs \unfold(M)
\gs \inj_1(M)
\gs \inj_2(M)
\gs \cse\ M\ \copair{x_1 \mapsto N_1 \mid x_2 \mapsto N_2}
\end{array}
\]

\noindent
The term formers $\fold, \unfold, \pi_1, \pi_2, \inj_1, \inj_2$
are often written in curried form,
so that e.g.\ $(\fold M)$ stands for $\fold(M)$.
We write $M \comp N$ for $\lambda x. M(N\, x)$.

Terms are typed with judgements of the form
$\Env \thesis M \colon \PT$, where the \emph{typing context} $\Env$ is a list
$x_1\colon \PTbis_1, \dots, x_n\colon \PTbis_n$ with $x_i \neq x_j$
if $i \neq j$.
The typing rules are those in Figure~\ref{fig:pure},
together with the \emph{basic rules} in Figure~\ref{fig:reft:basic}
\emph{restricted to pure types}
(i.e.\ with $\Env$ as above and $\RT$, $\RTbis$ pure types).
Of course, each type $\PT$ is inhabited
by the ``diverging'' term $\Omega_\PT \deq \fix x.x \colon \PT$.

\begin{figure}[t!]
\centering
\(
\begin{array}{c}

\dfrac{}
  {\Env \thesis \pair{} \colon \Unit}

\quad~~

\dfrac{\Env,x\colon \PT \thesis M \colon \PT}
  {\Env \thesis \fix x.M \colon \PT}

\quad~~

\dfrac{\Env \thesis M \colon \PT[\rec\TV.\PT/\TV]}
  {\Env \thesis \fold(M) \colon \rec\TV.\PT}

\quad~~

\dfrac{\Env \thesis M \colon \rec\TV.\PT}
  {\Env \thesis \unfold(M) \colon \PT[\rec\TV.\PT/\TV]}

\\\\

\dfrac{\Env \thesis M \colon \PT_i}
  {\Env \thesis \inj_i(M) \colon \PT_1 + \PT_2}

\qquad

\dfrac{\Env \thesis M \colon \PT_1 + \PT_2
  \qquad \Env, x_1 \colon \PT_1 \thesis N_1 \colon \PTbis
  \qquad \Env, x_2 \colon \PT_2 \thesis N_2 \colon \PTbis}
  {\Env \thesis \cse\ M\ \copair{x_1 \mapsto N_1 \mid x_2 \mapsto N_2} \colon \PTbis}

\end{array}
\)
\caption{Typing rules for pure types, where $i \in \{1,2\}$.%
\label{fig:pure}}
\end{figure}

\begin{example}
\label{ex:pure}
The type $\Bool \deq \Unit + \Unit$ represents Booleans, 
with $\true \deq \inj_1(\pair{})$
and $\false \deq \inj_2(\pair{})$.
The type of natural numbers
is $\Nat \deq \rec \TV.\, \Unit + \TV$.
For each $n \in \NN$, we have a term $\term n \colon \Nat$
defined as $\term 0 \deq \fold(\inj_1 \pair{})$
and $\term{n+1} \deq \term{(1+)} \term n$,
where $\term{(1+)} \colon \Nat \to \Nat$
is the successor function $\lambda x. \fold(\inj_2 x)$.
We can represent ``$+\infty$'' as
$\term{inf} \deq \fix x.\, \term{1+}x \colon \Nat$.

The type of streams over $\PTbis$ is
$\Stream\PTbis \deq \rec\TV.\, \PTbis \times \TV$.
It is equipped with the constructor
\(
  \Cons
  \deq
  \lambda h .\lambda t.\fold \pair{h,t}
  \colon
  \PTbis \arrow \Stream\PTbis \arrow \Stream \PTbis
\).
We use the infix notation $(M \Colon N)$ for $(\Cons M\, N)$.
The usual \emph{head} and \emph{tail} functions
are
$\hd \deq \lambda s.\, \pi_1 (\unfold s) \colon \Stream\PTbis \arrow \PTbis$
and
$\tl \deq \lambda s.\, \pi_2 (\unfold s) \colon \Stream\PTbis \arrow \Stream\PTbis$.

The type of binary trees over $\PTbis$ is
$\Tree\PTbis \deq \rec\TV.\, \PTbis \times (\TV \times \TV)$.
The constructor
$\Node \colon \PTbis \arrow \Tree\PTbis \arrow \Tree\PTbis \arrow \Tree\PTbis$
and the destructors
$\lbl \colon \Tree\PTbis \arrow \PTbis$
and
$\lft, \rght \colon \Tree\PTbis \arrow \Tree\PTbis$
are defined similarly as $\Cons$, $\hd$, $\tl$ on streams.
\lipicsEnd
\end{example}

\begin{table}[t!]
\caption{Functions on Streams and Trees.%
\label{tab:ex}}
\centering
\(
\begin{array}{c}
\toprule

\begin{array}{r @{~} l @{~~} l}
  \bm{\cntrec}
& :
& \Nat
  \longarrow
  \Stream \Bool
  \longarrow
  \Stream \Nat
\\

& \deq
& \fix g. \lambda n. \lambda x.~
  \term{if}~ (\hd x)
  ~\term{then}~ (g\ (\term{1+} n)\ (\tl x))
  ~\term{else}~ (n \Colon (g\ n\ (\tl x)))
\end{array}

\\\\

\begin{array}{r l l l}
  \bm{\cnt}
& \deq
& (\cntrec\ \term 0)
  \colon
& \Stream\Bool
  \longarrow
  \Stream\Nat
\end{array}

\\\midrule

\begin{array}{c !{\qquad} c}

\begin{array}[t]{l @{~} l @{~~} l}

  \bm{\extract}
& :
& \Cont \PTbis \longarrow \PTbis
\\
& \deq 
& \fix e.\lambda c. \unfold\, c\ e
\end{array}

&

\begin{array}[t]{l @{~} l @{~~} l}
  \bm{\Over}
& :
& \Cont \PTbis
\\
& \deq
& \fix c. \fold(\lambda k.\ k\ c)
\end{array}

\end{array}

\\\\

\begin{array}{r @{~} l @{~~} l}
  \bm{\bftrec}
& :
& \Tree\PTbis
  \longarrow
  \Cont (\Stream\PTbis)
  \longarrow
  \Cont (\Stream\PTbis)
\\

& \deq
& \fix g.\lambda t.\lambda c.
  \fold \left(
    \lambda k.~
    (\lbl t) \Colon
    \left( \unfold\, c \ \big( k \comp (g (\lft t)) \comp (g (\rght t)) \big) \right)
  \right)
\end{array}

\\\\ 

\begin{array}{r l l l}
  \bm{\bft}
& \deq
& \lambda t.~ \extract (\bftrec\ t\ \Over)
  \colon
& \Tree \PTbis
  \longarrow
  \Stream \PTbis

\end{array}

\\\bottomrule
\end{array}
\)
\end{table}

\begin{example}
\label{ex:pure:fun}
Table~\ref{tab:ex} defines functions on streams and trees
mentioned in~\S\ref{sec:intro}.
The stream function $\cnt$ was discussed in \S\ref{sec:intro}.
The notation
$\term{if}~ M ~\term{then}~ N_{1} ~\term{else}~ N_{2}$
stands for the term
$\cse\ M\ \copair{\_ \mapsto N_1 \mid \_ \mapsto N_2}$.
On trees, the function $\bft$ implements Martin Hofmann's breadth-first traversal
(see e.g.~\cite{bms19types,jr21esop,rk25wollic,kw26popl}).
It uses the recursive type
$\Cont \PTbis \deq \rec\TV.\, (\TV \arrow \PTbis) \arrow \PTbis$.
Note that the recursive type corresponding to $\Cont \PTbis$
in~\cite{bms19types,jr21esop,kw26popl}
has an additional constructor
(in place of our fixpoint $\Over \colon \Cont\PTbis$).
See Example~\ref{ex:scott:stream-tree} (\S\ref{sec:sem:pure}).
\lipicsEnd
\end{example}

\section{The Logic}
\label{sec:log}

The main ingredient of this paper is the logic 
we use to annotate pure types when forming refinement types.
This logic is in essence \emph{manysorted}:
for each pure type $\PT$ we have formulae $\varphi$ \emph{of type $\PT$},
notation $\varphi \in \Lang^\sign(\PT)$,
where $\sign$ is $+$ for positive formulae and $-$ for a negative ones.

Positive and negative formulae are in essence closed under
least and greatest fixpoints, respectively,
and thus may contain fixpoint variables $\FP$.
But least fixpoints $(\mu \FP)\varphi$
and greatest fixpoints $(\nu \FP)\psi$ are actually not directly available.
They are defined 
as $(\exists \itvar)(\finmu^\itvar \FP)\varphi$
and $(\forall \itvarbis)(\finnu^\itvarbis \FP)\psi$,
where $(\finmu^\itvar \FP)\varphi$ and $(\finnu^\itvarbis \FP)\psi$
represent bounded iterations
$\varphi^{\itvar}(\False)$ and $\psi^{\itvarbis}(\True)$,
respectively.
This decomposition allows for a notion of \emph{prenex} form
(Proposition~\ref{prop:ded:prenex}, \S\ref{sec:deduction}).
A fixpoint logic with iteration variables was already
considered in~\cite{sd03fossacs}.

Formulae are presented in~\S\ref{sec:formulae},
while deduction is discussed in~\S\ref{sec:deduction}.

\begin{colinpar}
NOTES/TODO:
\begin{itemize}
\item IDEA: \emph{Bounded iteration}
intermediate between finitary modal language $\Lang^\omega$
and languages $\Lang^+$, $\Lang^-$
with quantification.
\end{itemize}
\end{colinpar}

\subsection{Formulae}
\label{sec:formulae}

\begin{figure}[t!]
\begin{subfigure}{\textwidth}
\centering
\(
\begin{array}{c}

\dfrac{}
  {\True \in \Lang^\sign(\FPEnv; \PT)}

\qquad

\dfrac{\varphi,\psi \in \Lang^\sign(\FPEnv; \PT)}
  {\varphi \land \psi \in \Lang^\sign(\FPEnv; \PT)}

\qquad

\dfrac{}
  {\False \in \Lang^\sign(\FPEnv; \PT)}

\qquad

\dfrac{\varphi,\psi \in \Lang^\sign(\FPEnv; \PT)}
  {\varphi \lor \psi \in \Lang^\sign(\FPEnv; \PT)}

\end{array}
\)
\caption{Propositional connectives.%
\label{fig:form:prop}}
\end{subfigure}

\begin{subfigure}{\textwidth}
\centering
\(
\begin{array}{c}

\dfrac{}
  {\form{\pair{}} \in \Lang^\sign(\FPEnv; \Unit)}

\qquad

\dfrac{\varphi \in \Lang^\sign(\FPEnv; \PT_1)}
  {\form{\pi_1}\varphi \in \Lang^\sign(\FPEnv; \PT_1 \times \PT_2)}

\qquad

\dfrac{\varphi \in \Lang^\sign(\FPEnv; \PT_2)}
  {\form{\pi_2}\varphi \in \Lang^\sign(\FPEnv; \PT_1 \times \PT_2)}

\\\\

\dfrac{\varphi \in \Lang^\sign(\FPEnv; \PT[\rec \TV.\PT/\TV])}
  {\form\fold \varphi \in \Lang^\sign(\FPEnv; \rec \TV.\PT)}

\qquad

\dfrac{\varphi \in \Lang^\sign(\FPEnv; \PT_1)}
  {\form{\inj_1}\varphi \in \Lang^\sign(\FPEnv; \PT_1 + \PT_2)}

\qquad

\dfrac{\varphi \in \Lang^\sign(\FPEnv; \PT_2)}
  {\form{\inj_2}\varphi \in \Lang^\sign(\FPEnv; \PT_1 + \PT_2)}

\end{array}
\)
\caption{Modalities.%
\label{fig:form:mod}}
\end{subfigure}

\begin{subfigure}{\textwidth}
\centering
\(
\begin{array}{c}

\dfrac{(\FP \colon \PT) \in \FPEnv}
  {\FP \in \Lang^\sign(\FPEnv; \PT)}  

\quad~~

\dfrac{\varphi \in \Lang^\sign(\FPEnv;\PT)
  \quad~
  \FP \notin \dom(\FPEnv)}
  {\varphi \in \Lang^\sign(\FPEnv, \FP \colon \PTbis; \PT)}

\quad~~

\dfrac{\varphi \in \Lang^\sign(\FPEnv, \FP\colon\PT; \PT)}
  {(\finmu^t \FP)\varphi \in \Lang^\sign(\FPEnv; \PT)}

\quad~~

\dfrac{\varphi \in \Lang^\sign(\FPEnv, \FP\colon \PT; \PT)}
  {(\finnu^t \FP)\varphi \in \Lang^\sign(\FPEnv; \PT)}
\end{array}
\)
\caption{Bounded iteration, where $t$ is an iteration term.%
\label{fig:form:iter}}
\end{subfigure}

\begin{subfigure}{\textwidth}
\centering
\(
\begin{array}{c}

\dfrac{\varphi \in \Lang^+(\FPEnv;\PT)
  \quad~
  \itvar \Pos \varphi}
  {(\exists \itvar)\varphi \in \Lang^+(\FPEnv;\PT)}

\qquad

\dfrac{\varphi \in \Lang^-(\FPEnv;\PT)
  \quad~
  \itvar \Neg \varphi}
  {(\forall \itvar)\varphi \in \Lang^-(\FPEnv;\PT)}

\qquad

\dfrac{\psi \in \Lang^{-\sign}(;\PTbis)
  \quad~
  \varphi \in \Lang^\sign(;\PT)}
  {\psi \realto \varphi \in \Lang^\sign(;\PTbis \arrow \PT)}

\end{array}
\)
\caption{Quantification
and realizability implication,
with $-\sign$ defined as $-\pm = \pm$, $-+ = -$ and $-- = +$.%
\label{fig:form:pol}}
\end{subfigure}

\caption{Formation rules for $\Lang^\sign$
(the predicates $\Pos$ and $\Neg$ are defined in Figure~\ref{fig:posneg}).%
\label{fig:form}}
\end{figure}

\begin{figure}[t!]
\centering
\(
\begin{array}{c}

\dfrac{\itvar \notin \FV(\varphi)}
  {\itvar \Pos \varphi}

\qquad

\dfrac{\itvar \Pos \varphi, \psi}
  {\itvar \Pos \varphi \star \psi}

\qquad

\dfrac{\itvar \Pos \varphi}
  {\itvar \Pos \form\modgen \varphi}

\qquad

\dfrac{\itvar \Neg \psi
  \qquad
  \itvar \Pos \varphi}
  {\itvar \Pos \psi \realto \varphi}

\qquad

\dfrac{\itvar \Pos \varphi}
  {\itvar \Pos (\exists \itvarbis)\varphi}

\\\\

\dfrac{\itvar \notin \FV(\varphi)}
  {\itvar \Neg \varphi}

\qquad

\dfrac{\itvar \Neg \varphi, \psi}
  {\itvar \Neg \varphi \star \psi}

\qquad

\dfrac{\itvar \Neg \varphi}
  {\itvar \Neg \form\modgen \varphi}

\qquad

\dfrac{\itvar \Pos \psi
  \qquad
  \itvar \Neg \varphi}
  {\itvar \Neg \psi \realto \varphi}

\qquad

\dfrac{\itvar \Neg \varphi}
  {\itvar \Neg (\forall \itvarbis)\varphi}

\\\\

\dfrac{\itvar \Pos \varphi}
  {\itvar \Pos (\finmu^t \FP)\varphi}

\qquad

\dfrac{\itvar \Pos \varphi
  \quad~~
  \itvar \notin \FV(t)}
  {\itvar \Pos (\finnu^t \FP)\varphi}

\qquad

\dfrac{\itvar \Neg \varphi}
  {\itvar \Neg (\finnu^t \FP)\varphi}

\qquad

\dfrac{\itvar \Neg \varphi
  \quad~~
  \itvar \notin \FV(t)}
  {\itvar \Neg (\finmu^t \FP)\varphi}

\end{array}
\)
\caption{The predicates $\Pos$ and $\Neg$,
where $\modgen \in \{\pi_1,\pi_2, \inj_1, \inj_2, \fold\}$,
and where $\star$ is $\land$ or $\lor$.%
\label{fig:posneg}}
\end{figure}

We assume given infinitely many \emph{iteration variables}
$\itvar, \itvarbis, \text{etc}$.
\emph{Iteration terms},
intended to range over natural numbers,
are given by the grammar
\[
\begin{array}{r @{~~} c @{~~} l}
     t
&    \bnf
&    \itvar
\gss 0
\gss t+1
\end{array}
\]

We also assume given \emph{fixpoint variables}
$\FP, \FPbis, \text{etc}$.
\emph{Fixpoint contexts} $\FPEnv$ are
lists of the form
$\FP_1\colon \PTbis_1,\dots,\FP_n \colon \PTbis_n$
where $\FP_i \neq \FP_j$ whenever $i \neq j$.
We let $\dom(\FPEnv)$ be $\{\FP_1,\dots,\FP_n\}$.
Figure~\ref{fig:form} gathers formation rules for sets $\Lang^\sign(\FPEnv;\PT)$
indexed by fixpoint contexts and types.
We write $\Lang^\sign(\PT)$ or $\Lang^\sign(;\PT)$ 
for $\Lang^\sign(\FPEnv;\PT)$
when the fixpoint context $\FPEnv$ is empty.

\begin{definition}[Formulae]
\label{def:form}
The sets $\Lang^\pm(\FPEnv;\PT)$, $\Lang^+(\FPEnv;\PT)$
and $\Lang^-(\FPEnv;\PT)$
of respectively \emph{neutral} ($\pm$),
\emph{positive} ($+$) and \emph{negative} ($-$)
\emph{formulae}
are defined by the rules in Figure~\ref{fig:form} with $\sign \in \{\pm,+,-\}$,
and using the predicates $\Pos$ and $\Neg$ defined in Figure~\ref{fig:posneg}.

$\Lang^\omega(\PT)$ is the fragment of $\Lang^\pm(\PT)$
\emph{without} the rules 
in Figure~\ref{fig:form:iter}
(but with $\realto$ from Figure~\ref{fig:form:pol}).
$\Lang^\land(\PT)$ is the fragment of $\Lang^\omega(\PT)$
without binary disjunctions $\lor$ (but with $\False$).
\end{definition}

Note that $\Lang^\pm$, $\Lang^\omega$ and $\Lang^\land$ are stable under
the realizability implication $\realto$.
In fact,
$\Lang^\pm(\FPEnv;\PT) = \Lang^+(\FPEnv;\PT) \cap \Lang^-(\FPEnv;\PT)$
and $\Lang^\omega(\PT) \sle \Lang^\pm(\PT)$
are defined by the rules
in Figure~\ref{fig:form:prop}--\ref{fig:form:iter}
and Figure~\ref{fig:form:prop}--\ref{fig:form:mod} respectively,
together with
\[
\begin{array}{c !{\qquad\qquad} c}

\dfrac{\psi \in \Lang^\pm(;\PTbis)
  \qquad
  \varphi \in \Lang^\pm(;\PT)}
  {\psi \realto \varphi \in \Lang^\pm(;\PTbis \arrow \PT)}

&

\dfrac{\psi \in \Lang^\omega(\PTbis)
  \qquad
  \varphi \in \Lang^\omega(\PT)}
  {\psi \realto \varphi \in \Lang^\omega(\PTbis \arrow \PT)}

\end{array}
\]

\noindent
On the other hand, $\realto$ is contravariant in its first argument
w.r.t.\ $+$ and $-$:
\begin{equation}
\label{eq:form:realto}
\begin{array}{c c}

\dfrac{\psi \in \Lang^-(;\PTbis)
  \qquad
  \varphi \in \Lang^+(;\PT)}
  {\psi \realto \varphi \in \Lang^+(;\PTbis \arrow \PT)}

\qquad\qquad

\dfrac{\psi \in \Lang^+(;\PTbis)
  \qquad
  \varphi \in \Lang^-(;\PT)}
  {\psi \realto \varphi \in \Lang^-(;\PTbis \arrow \PT)}

\end{array}
\end{equation}

\noindent
In any case, 
$\varphi \realto \psi$
contains no free fixpoint variables.
%
Also, formulae $\varphi \in \Lang^{\omega}(\PT)$
have no fixpoint variables nor iteration variables at all.
When possible, we write $\Lang$ for $\Lang^\sign$.

Note that in Figure~\ref{fig:posneg}, we could have had a rule stating
that $\itvar \Pos (\forall \itvarbis)\varphi$ if $\itvar \Pos \varphi$.
But this would have been useless for the left rule in Figure~\ref{fig:form:pol},
since a positive formula has no positive occurrence of a $(\forall \itvarbis)$.
A dual remark applies to $\itvar \Neg (\exists \itvarbis)\varphi$.


The semantics of formulae is discussed in~\S\ref{sec:sem:log}.
Their intended meaning is as follows.
The formula $\psi \realto \varphi \in \Lang(\PTbis \arrow \PT)$
is intended to select those $M \colon \PTbis \arrow \PT$
such that $\varphi$ holds on $M N \colon \PT$ whenever $\psi$ holds on $N \colon \PTbis$.
The formula $\form\fold \varphi$ holds on $M$ if, and only if,
$\varphi$ holds on $\unfold(M)$.
For $i = 1,2$,
the formula $\form{\pi_i} \varphi$
selects those $M \colon \PT_1 \times \PT_2$
such that $\varphi$ holds on $\pi_i(M)$,
while $\form{\inj_i} \varphi$ selects those
$\inj_j(M) \colon \PT_1 + \PT_2$ such that $j = i$ and
$\varphi$ holds on $M$.
Hence
$\form{\inj_1}\True \land \form{\inj_2}\True$ is a contradiction
and
$\form{\inj_1}\True \lor \form{\inj_2}\True$ fails on $\Omega_{\PT_1 + \PT_2}$.



\begin{example}
\label{ex:form:mod}
We derive some composite modalities in $\Lang^\omega$ on the datatypes of
Example~\ref{ex:pure}.
On $\Bool$, we have formulae $\form\true \deq \form{\inj_1}\form{\pair{}}$
and $\form\false \deq \form{\inj_2}\form{\pair{}}$ which hold
on $\true, \false \colon \Bool$, respectively.
The formula $\form\true \lor \form\false$,
which expresses totality at type $\Bool$,
fails on $\Omega_\Bool$.

On $\Nat$, we have $\form{\term 0} \deq \form\fold \form{\inj_1} \form{\pair{}}$
and a composite modality
$\form{\term{1+}}\varphi \deq \form\fold \form{\inj_2}\varphi$.
By taking $\form{\term{n+1}} \deq \form{\term{1+}}\form{\term n}$,
we have for each $n \in \NN$ a formula $\form{\term{n}} \in \Lang(\Nat)$
which holds on $\term n \colon \Nat$.
We also have a composite modality $\form{\geq \term{n}} \varphi$,
defined as $\form{\geq \term{0}}\varphi \deq \varphi$
and $\form{\geq \term{n+1}} \deq \form{\term{1+}}\form{\geq \term{n}} \varphi$,
such that $\form{\geq \term{n}}\True$
holds on $\term{inf} \colon \Nat$ and on $\term m \colon \Nat$ for all $m \geq n$.

On streams $\Stream\PTbis$, 
the composite modalities $\form\hd$ and $\form\tl$
are defined as $\form\hd \psi \deq \form\fold \form{\pi_1} \psi$
and $\form\tl \varphi \deq \form\fold \form{\pi_2} \varphi$.
Given $\psi \in \Lang(\PTbis)$ and $\varphi \in \Lang(\Stream\PTbis)$,
the formulae $\form\hd\psi \in \Lang(\Stream\PTbis)$
and $\form\tl\varphi \in \Lang(\Stream\PTbis)$
select those streams $M$ such that $\psi$ holds on $(\hd M)$
and such that $\varphi$ holds on $(\tl M)$, respectively.
We write $\Next\varphi$ for $\form\tl\varphi$.

Similarly, on trees $\Tree\PTbis$
one can define $\form\lbl$, $\form\lft$ and $\form\rght$
such that
$\form\lbl\psi \in \Lang(\Tree\PTbis)$
given $\psi \in \Lang(\PTbis)$,
and
$\form\lft\varphi, \form\rght\varphi \in \Lang(\Tree\PTbis)$
given $\varphi \in \Lang(\Tree\PTbis)$.
We also have composite modalities
$\exists\Next \varphi \deq \form\lft \varphi \lor \form\rght\varphi$
and
$\forall\Next \varphi \deq \form\lft \varphi \land \form\rght\varphi$
which allow for taking, respectively,
disjunctions and conjunctions over the children of a node.
\lipicsEnd
\end{example}

We now discuss some uses of fixpoint and iteration variables in
$\Lang^\pm$, $\Lang^+$ and $\Lang^-$.
Let $\varphi \in \Lang(\FPEnv, \FP\colon\PT;\PT)$.
Given $\psi \in \Lang(\FPEnv;\PT)$ and $n \in \NN$,
we define $(\FP.\varphi)^n(\psi) \in \Lang(\FPEnv;\PT)$ as
\begin{equation}
\label{eq:form:syntfun}
\begin{array}{l l l !{\qquad\text{and}\qquad} l l l}
  (\FP.\varphi)^0(\psi)
& \deq
& \psi

& (\FP.\varphi)^{n+1}(\psi)
& \deq
& \varphi[(\FP.\varphi)^n(\psi) / \FP]
\end{array}
\end{equation}

If $t$ is a closed iteration term representing $n \in \NN$
(i.e.\ $t$ is the $n$th successor of $0$),
then the formulae
$(\finmu^t \FP)\varphi$ and $(\finnu^t \FP)\varphi$
should be read as 
$(\FP.\varphi)^n(\False)$
and
$(\FP.\varphi)^n(\True)$,
respectively
(cf Equation~\eqref{eq:ded:syntfun} in \S\ref{sec:deduction} below).
But in general, the iteration term $t$
may contain free iteration variables.
In fact, we shall see in 
Example~\ref{ex:sem:log:fixpoints} (\S\ref{sec:sem:log})
that the least fixpoint of $\varphi \in \Lang^+(\FP\colon\PT;\PT)$
and the greatest fixpoint of $\psi \in \Lang^-(\FP\colon\PT; \PT)$
can be represented respectively as
\begin{equation}
\label{eq:form:fp}
\begin{array}{l !{\qquad\text{and}\qquad} l}
  (\exists \itvar)(\finmu^\itvar \FP)\varphi \in \Lang^+(\PT)
& (\forall \itvarbis)(\finnu^\itvarbis \FP)\psi \in \Lang^-(\PT)
\end{array}
\end{equation}

\begin{example}
\label{ex:form:fix}
The representation of fixpoints in Equation~\eqref{eq:form:fp}
allows for formulae expressing possibly unbounded behaviours.
For instance, totality at type $\Nat$ is expressed by the \emph{positive} formula
\(
  \form\tot
  \deq
  (\exists \itvar)(\finmu^\itvar \FP)(\form{\term 0} \lor \form{\term{1+}}\FP)
\),
intended to hold
on all $\term n \colon \Nat$ but not on $\term{inf} \colon \Nat$.

On streams $\Stream \PTbis$, the usual $\LTL$-like
\emph{eventually}
modality $\Diam$
is available on \emph{positive} formulae $\varphi \in \Lang^+(\Stream\PTbis)$,
namely
$\Diam \varphi \deq (\exists \itvar)(\finmu^\itvar \FP)(\varphi \lor \Next\FP)$.
The formula $\Diam \varphi$ is intended to hold on those $M \colon \Stream\PTbis$
such that $\varphi$ holds on $(\tl^n M)$ for some $n \in \NN$.

Regarding greatest fixpoints, the $\CTL$-like modalities
$\forall\Box$ and $\exists\Box$ on $\Tree \PTbis$ 
are available for \emph{negative} formulae: given $\psi \in \Lang^-(\Tree\PTbis)$,
we let
\(
  \forall \Box \psi
  \deq
  (\forall \itvarbis)(\finnu^\itvarbis \FP)(\psi \land \forall\Next \FP)
\)
and
\(
  \exists \Box \psi
  \deq
  (\forall \itvarbis)(\finnu^\itvarbis \FP)(\psi \land \exists\Next \FP)
\).
The intended meaning of $\forall\Box\form\lbl\psi \in \Lang^-(\Tree\PTbis)$
is to select those trees of $\PTbis$'s whose node labels all satisfy
$\psi \in \Lang^-(\PTbis)$,
while $\exists\Box\form\lbl\psi \in \Lang^-(\Tree\PTbis)$
asks $\psi$ to hold on all labels in some infinite path.

We thus obtain for each $n \in \NN$
the following positive formula 
\begin{equation}
\label{eq:form:spec}
\begin{array}{l l l}
  \forall\Box\form\lbl(\form\true \lor \form\false)
  ~\land~
  \exists\Next\exists\Box\form\lbl\form\true
  ~\land~
  \exists\Next\exists\Box\form\lbl\form\false
& \realto
& \Diam\form\hd \form{\geq \term n} \form\tot
\end{array}
\end{equation}

\noindent
in $\Lang^+(\Tree\Bool \arrow \Stream\Nat)$,
which expresses the specification for $\cnt \comp \bft$
devised in~\S\ref{sec:intro}:
Given a totally defined tree of Booleans,
if the children of the root have an infinite path with only $\true$'s
and an infinite path with only $\false$'s,
then $\cnt \comp \bft$ produces a stream which contains a totally
defined number $\geq n$.
Note that
fixpoints of the same polarity can be nested,
e.g.\
$\Diam\form\hd\form{\geq \term{n}}\form\tot \in \Lang^+(\Stream \Nat)$
with $\form\tot \in \Lang^+(\Nat)$.
\lipicsEnd
\end{example}

\begin{remark}[Non-Examples]
\label{rem:form:fix}
Given a \emph{negative} $\psi \in \Lang^-(\Stream\PTbis)$,
we have
the usual \emph{always} modality
\(
  \Box \psi
  \deq
  (\forall \itvarbis)(\finnu^\itvarbis \FP)(\psi \land \Next\FP)
  \in \Lang^-(\Stream\PTbis)
\)
which holds on
$M$ if $\psi$ holds on $(\tl^n M)$ for all $n \in \NN$.
A composite $\Box\Diam\form\hd\varphi$ would require a stream
to have infinitely many elements satisfying $\varphi$.
But $\Box\Diam\form\hd\varphi$ is not a formula since
$\Diam\form\hd\varphi$ is not negative.

Also, in order to 
universally quantify over $n \in \NN$
in the r.-h.s.\ of~\eqref{eq:form:spec},
one could think of
\(
  (\forall \itvarbis)\Diam\form\hd (\finmu^\itvarbis \FP)
  (\form\tot \lor \form{\term{1+}}\FP)
\).
This is not a formula
since $\psi$ has to be negative in $(\forall \itvarbis)\psi$.
\lipicsEnd
\end{remark}

\subsection{Deduction}
\label{sec:deduction}

\begin{figure}[p!]
\begin{subfigure}{\textwidth}
\centering
\(
\begin{array}{c}

\dfrac{}
  {\varphi \,\thesis\, \varphi}

\qquad

\dfrac{\psi \,\thesis\, \theta
  \qquad
  \theta \,\thesis\, \varphi}
  {\psi \,\thesis\, \varphi}

\qquad

\ax{D}
\dfrac{}
  {\psi \land (\varphi \lor \theta)
  \,\thesis\,
  (\psi \land \varphi) \lor (\psi \land \theta)}

\\\\

\dfrac{\psi \,\thesis\, \varphi_1
  \qquad
  \psi \,\thesis\, \varphi_2}
  {\psi \,\thesis\, \varphi_1 \land \varphi_2}

\qquad

\dfrac{\psi_1 \,\thesis\, \varphi}
  {\psi_1 \land \psi_2 \,\thesis\, \varphi}

\qquad

\dfrac{\psi_1 \,\thesis\, \varphi}
  {\psi_1 \land \psi_2 \,\thesis\, \varphi}

\qquad

\dfrac{}
  {\psi \,\thesis\, \True}

\\\\

\dfrac{\psi \,\thesis\, \varphi_1}
  {\psi \,\thesis\, \varphi_1 \lor \varphi_2}

\qquad

\dfrac{\psi \,\thesis\, \varphi_2}
  {\psi \,\thesis\, \varphi_1 \lor \varphi_2}

\qquad

\dfrac{\psi_1 \,\thesis\, \varphi
  \qquad
  \psi_2 \,\thesis\, \varphi}
  {\psi_1 \lor \psi_2 \,\thesis\, \varphi}

\qquad

\dfrac{}
  {\False \,\thesis\, \varphi}

\end{array}
\)
\caption{Propositional rules.%
\label{fig:ded:prop}}
\end{subfigure}

\begin{subfigure}{\textwidth}
\centering
\(
\begin{array}{c}

\ax{I}
\dfrac{}
  {\form{\inj_1}\varphi \land \form{\inj_2}\psi \,\thesis\, \False}

\qquad

\dfrac{\psi \,\thesis\, \varphi}
  {\form\modgen \psi \,\thesis\, \form\modgen \varphi}

\qquad

\dfrac{\psi' \,\thesis\, \psi
  \qquad
  \varphi \,\thesis\, \varphi'}
  {\psi \realto \varphi
  \,\thesis\,
  \psi' \realto \varphi'}

\\\\

\dfrac{}
  {\form\modgen \varphi_1 \land \form\modgen \varphi_2
  \,\thesis\,
  \form\modgen(\varphi_1 \land \varphi_2)}

\qquad

\dfrac{\modgen \notin \{\inj_1,\inj_2\}}
  {\True \,\thesis\, \form{\modgen} \True}

\qquad

\dfrac{}
  {(\forall \itvar)\form{\modgen}\varphi
  \,\thesis\,
  \form{\modgen} (\forall \itvar)\varphi}

\\\\\

\dfrac{}
  {\form\modgen(\varphi_1 \lor \varphi_2)
  \,\thesis\,
  \form\modgen\varphi_1 \lor \form\modgen\varphi_2}

\qquad

\dfrac{}
  {\form\modgen \False \,\thesis\, \False}

\qquad

\dfrac{}
  {\form{\modgen} (\exists \itvar)\varphi
  \,\thesis\,
  (\exists \itvar)\form{\modgen}\varphi}

\end{array}
\)
\caption{Modal rules,
where $\modgen \in \{\pi_1, \pi_2, \inj_1, \inj_2, \fold\}$.%
\label{fig:ded:mod}}
\end{subfigure}

\begin{subfigure}{\textwidth}
\centering
\(
\begin{array}{c}

\dfrac{\psi \,\thesis\, \varphi[t / \itvar]}
  {\psi \,\thesis\ (\exists \itvar)\varphi}

\qquad

\ax{\exists L}
\dfrac{\theta \land \varphi \,\thesis\, \psi
  \qquad
  \itvar \notin \FV(\psi, \theta)}
  {\theta \land (\exists \itvar)\varphi \,\thesis\, \psi}

\qquad

\ax{\exists M}
\dfrac{}
  {(\exists \itvar)(\exists \itvarbis)\varphi
  \,\thesis\,
  (\exists \itvar)\varphi[\itvar / \itvarbis]}

\\\\

\dfrac{\varphi[t / \itvar] \,\thesis\, \psi}
  {(\forall \itvar)\varphi \,\thesis\, \psi}

\qquad

\ax{\forall R}
\dfrac{\psi \,\thesis\, \varphi \lor \theta
  \qquad
  \itvar \notin \FV(\psi, \theta)}
  {\psi \,\thesis\, (\forall \itvar)\varphi \lor \theta}

\qquad

\ax{\forall M}
\dfrac{}
  {(\forall \itvar)\varphi[\itvar / \itvarbis]
  \,\thesis\,
  (\forall \itvar)(\forall \itvarbis)\varphi}

\\\\

\ax{\fingen/\exists}
\dfrac{\itvar \notin \FV(t)}
  {(\fingen^t \FP)(\exists \itvar)\varphi
  \,\thesis\,
  (\exists \itvar)(\fingen^t \FP)\varphi}

\qquad

\ax{\forall/\fingen}
\dfrac{\itvar \notin \FV(t)}
  {(\forall \itvar)(\fingen^t \FP)\varphi
  \,\thesis\,
  (\fingen^t \FP)(\forall \itvar)\varphi}

\end{array}
\)
\caption{Quantifier rules,
where $\fingen \in \{\finmu, \finnu\}$.%
\label{fig:ded:quant}}
\end{subfigure}

\begin{subfigure}{\textwidth}
\centering
\(
\begin{array}{c}

\dfrac{}
  {(\finmu^0 \FP)\varphi \,\thesis\, \False}

\quad~~

\dfrac{(\finmu^t \FP)\varphi \,\thesis\, \psi}
  {(\finmu^{t+1} \FP)\varphi \,\thesis\, \varphi[\psi / \FP]}

\quad~~

\dfrac{}
  {\varphi[(\finmu^t \FP)\varphi / \FP] \,\thesis\, (\finmu^{t+1} \FP)\varphi}

\quad~~

\dfrac{\psi \,\thesis\, \varphi}
  {(\finmu^t \FP)\psi \,\thesis\, (\finmu^t \FP)\varphi}

\\\\

\dfrac{}
  {\True \,\thesis\, (\finnu^0 \FP)\varphi}

\quad~~

\dfrac{\psi \,\thesis\, (\finnu^t \FP)\varphi}
  {\varphi[\psi / \FP] \,\thesis\, (\finnu^{t+1} \FP)\varphi}

\quad~~

\dfrac{}
  {(\finnu^{t+1} \FP)\varphi \,\thesis\, \varphi[(\finnu^t \FP)\varphi / \FP]}

\quad~~

\dfrac{\psi \,\thesis\, \varphi}
  {(\finnu^t \FP)\psi \,\thesis\, (\finnu^t \FP)\varphi}

\end{array}
\)
\caption{Bounded iteration.%
\label{fig:ded:iter}}
\end{subfigure}

\begin{subfigure}{\textwidth}
\centering
\(
\begin{array}{c}

\ax{WF}
\dfrac{\itvar \notin \FV(\varphi)}
  {(\forall \itvar)\psi \realto \varphi
  \,\thesis\,
  (\exists \itvar)(\psi \realto \varphi)}

\qquad

\ax{CC}
\dfrac{\psi \in \Lang^\pm
  \qquad
  \itvar \notin \FV(\psi)}
  {\psi \realto (\exists \itvar)\varphi
  \,\thesis\,
  (\exists \itvar)(\psi \realto \varphi)}

\\\\

\ax{\exists/{\realto}}
\dfrac{\itvar \notin \FV(\varphi)}
  {(\forall \itvar)(\psi \realto \varphi)
  \,\thesis\,
  (\exists \itvar)\psi \realto \varphi}

\qquad

\ax{{\realto}/\forall}
\dfrac{\itvar \notin \FV(\psi)}
  {(\forall \itvar)(\psi \realto \varphi)
  \,\thesis\,
  \psi \realto (\forall \itvar)\varphi}

\\\\

\ax{{\realto}/\land}
\dfrac{}
  {(\psi \realto \varphi_1) \land (\psi \realto \varphi_2)
  \,\thesis\,
  \psi \realto (\varphi_1 \land \varphi_2)}

\qquad

\dfrac{}
  {\True \,\thesis\, (\psi \realto \True)}

\\\\

\ax{\lor/{\realto}}
\dfrac{}
  {(\psi_1 \realto \varphi) \land (\psi_1 \realto \varphi)
  \,\thesis\,
  (\psi_1 \lor \psi_2) \realto \varphi}

\qquad

\dfrac{}
  {\True \,\thesis\, (\False \realto \varphi)}

\\\\

\ax{{\realto}/\lor}
\dfrac{\delta \in \Lang^\land}
  {\delta \realto (\varphi_1 \lor \varphi_2)
  \,\thesis\,
  (\delta \realto \varphi_1) \lor (\delta \realto \varphi_2)}

\qquad

\ax{C}
\dfrac{\C(\delta)
  \qquad
  \delta \in \Lang^\land}
  {(\delta \realto \False) \,\thesis\, \False}

\end{array}
\)
\caption{Rules for the realizability implication $\realto$
(the predicate $\C$ is defined in Figure~\ref{fig:consist}).%
\label{fig:ded:realto}}
\end{subfigure}

\caption{Deduction rules.%
\label{fig:ded}}
\end{figure}

\begin{figure}[t!]
\centering
\(
\begin{array}{c}

\dfrac
  {}
  {\C(\form{\pair{}})}

\qquad

\dfrac{\C(\varphi)}
  {\C(\form\fold \varphi)}

\qquad

\dfrac{\C(\varphi) 
  \qquad
  \C(\psi)}
  {\C(\form{\pi_1}\varphi \land \form{\pi_2}\psi)}

\qquad

\dfrac{\C(\varphi)}
  {\C(\form{\inj_1}\varphi)}

\qquad

\dfrac{\C(\varphi)}
  {\C(\form{\inj_2}\varphi)}

\\\\

\dfrac{}{\C(\True)}

\qquad

\dfrac{\C(\psi)
  \quad~~
  \psi \,\thesis\, \varphi
  \quad~~
  \varphi \in \Lang^\land}
  {\C(\varphi)}

\qquad

\dfrac{\begin{array}{l}
  \text{$I$ finite and $\forall i \in I$,}~
  \C(\psi_i) 
  ~\text{and}~
  \C(\varphi_i) ;
  \\
  \text{$\forall J \sle I$,}~
  \bigwedge_{j \in J} \psi_j \thesis \False
  ~~\text{or}~~
  \C\left( \bigwedge_{j \in J} \varphi_j \right)
  \end{array}}
  {\C\left( \bigwedge_{i \in I}(\psi_i \realto \varphi_i) \right)}

\end{array}
\)
\caption{The consistency predicate $\C$.%
\label{fig:consist}}
\end{figure}

Deduction
will enter the refinement type system via subtyping
(Figure~\ref{fig:reft:subtyping} in~\S\ref{sec:reft}).

\begin{definition}[Deduction]
\label{def:ded}
A \emph{sequent} has the form $\psi \thesis \varphi$
where $\varphi,\psi \in \Lang^\sign(\FPEnv; \PT)$.
The \emph{derivable} sequents are defined by 
the rules in Figure~\ref{fig:ded},
using the \emph{consistency predicate} $\C$
defined in Figure~\ref{fig:consist}.
Write $\psi \thesisiff \varphi$ when the sequents
$\psi \thesis \varphi$ and $\varphi \thesis \psi$
are both derivable.
\end{definition}

We discuss some rules in Figure~\ref{fig:ded},
heading to basic properties of the deduction system.
We begin with a notion of normal form for $\Lang^\omega$.
Let $\Lang^{\lor\land}(\PT)$ be the closure of
$\Lang^\land(\PT)$ under disjunctions
(but under no other rules in Figure~\ref{fig:form},
so e.g.\ $\form{\pi_1}(\varphi \lor \psi)$ is \emph{not} in
$\Lang^{\lor\land}$).

As usual,
the converse of the distributive rule $\ax{D}$
in Figure~\ref{fig:ded:prop} is derivable,
and so is the dual law
\(
  \psi \lor (\varphi \land \theta)
  \,\thesisiff\,
  (\psi \lor \varphi) \land (\psi \lor \theta)
\).
Moreover, given $\modgen \in \{\pi_i,\inj_i,\fold\}$
we can derive
\(
  \form{\modgen}(\varphi \lor \psi)
  \thesisiff
  \form{\modgen}\varphi \lor \form{\modgen}\psi
\)
(and similarly for $\land$).
It follows that for each $\varphi \in \Lang^{\lor\land}$,
there is some $\psi \in \Lang^{\lor\land}$
such that $\form{\modgen}\varphi \thesisiff \psi$.
Using the rules $\ax{\lor/{\realto}}$
and $\ax{{\realto}/\lor}$ in Figure~\ref{fig:ded:realto},
we can similarly obtain that for each
$\varphi_1, \varphi_2 \in \Lang^{\lor\land}$
there is some $\psi \in \Lang^{\lor\land}$
such that
$(\varphi_1 \realto \varphi_2) \thesisiff \psi$
(but beware that the rule $\ax{{\realto}/\lor}$ requires
$\delta$ to be in $\Lang^\land$).
Hence:

\begin{restatable}{lemma}{LemOmegaLorLand}
\label{lem:ded:omegalorland}
For each $\varphi \in \Lang^\omega(\PT)$,
there is some $\psi \in \Lang^{\lor\land}(\PT)$
such that $\varphi \thesisiff \psi$.
\end{restatable}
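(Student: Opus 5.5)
We proceed by structural induction on $\varphi \in \Lang^\omega(\PT)$, simultaneously for all pure types $\PT$. Note that $\Lang^\omega$ formulae contain no fixpoint or iteration variables, so the only formation rules to treat are $\True$, $\False$, $\land$, $\lor$, the modalities $\form{\pair{}}$, $\form{\pi_i}$, $\form{\inj_i}$, $\form\fold$, and the realizability implication $\realto$.

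The base cases and the propositional cases are routine. $\True$, $\False$ and $\form{\pair{}}$ already lie in $\Lang^\land \sle \Lang^{\lor\land}$. For $\varphi_1 \lor \varphi_2$ we take the disjunction of the normal forms given by the induction hypothesis. We use that $\thesisiff$ is a congruence for $\lor$, $\land$, $\form\modgen$ and $\realto$, which follows from the monotonicity rules in Figures~\ref{fig:ded:prop} and~\ref{fig:ded:mod}; for $\realto$ this uses the contravariant/covariant rule. For $\varphi_1 \land \varphi_2$ with normal forms $\bigvee_i \alpha_i$ and $\bigvee_j \beta_j$ (where $\alpha_i, \beta_j \in \Lang^\land$), the rule $\ax{D}$, its converse and the dual distributive law give $\bigvee_{i,j}(\alpha_i \land \beta_j)$. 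This is in $\Lang^{\lor\land}$ because $\Lang^\land$ is closed under $\land$. Empty disjunctions are handled as $\False$.

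For a modality $\form\modgen\varphi$, write the normal form of $\varphi$ as $\bigvee_i \alpha_i$. The derivable law $\form\modgen(\varphi \lor \psi) \thesisiff \form\modgen\varphi \lor \form\modgen\psi$, together with $\form\modgen\False \thesis \False$ and $\False \thesis \form\modgen\False$, yields $\bigvee_i \form\modgen\alpha_i$, and each $\form\modgen\alpha_i$ is in $\Lang^\land$.

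For $\psi \realto \varphi$ we use the induction hypothesis to get $\psi \thesisiff \bigvee_i \alpha_i$ and $\varphi \thesisiff \bigvee_j \beta_j$. The rule $\ax{\lor/{\realto}}$ gives one direction, and the monotonicity of $\realto$ gives the converse. Together they give $(\bigvee_i \alpha_i) \realto \varphi \thesisiff \bigwedge_i(\alpha_i \realto \varphi)$; in the empty case we use $\True \thesis \False \realto \varphi$. Since each $\alpha_i \in \Lang^\land$, the rule $\ax{{\realto}/\lor}$ with monotonicity gives $\alpha_i \realto \bigvee_j \beta_j \thesisiff \bigvee_j(\alpha_i \realto \beta_j)$. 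If the $\beta$-disjunction is empty (i.e.\ $\False$), then $\alpha_i \realto \False$ is already in $\Lang^\land$. Each $\alpha_i \realto \beta_j$ lies in $\Lang^\land$. Finally, the conjunction over $i$ of these disjunctions is distributed into disjunctive normal form exactly as in the $\land$ case.

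The main obstacle is the $\realto$ case. There one must check that both directions of each equivalence are derivable; for instance, the converse of $\ax{\lor/{\realto}}$ must be obtained from monotonicity via $\alpha_i \thesis \bigvee_i \alpha_i$. One must also check that $\ax{{\realto}/\lor}$ is applied only with an antecedent in $\Lang^\land$. This is precisely why we first split the antecedent into $\Lang^\land$ disjuncts before distributing over the consequent.
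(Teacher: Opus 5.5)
Your proposal is correct and follows essentially the same route as the paper: structural induction on $\varphi$, pushing $\form\modgen$ through disjunctions, splitting the antecedent of $\realto$ via $\ax{\lor/{\realto}}$, and only then applying $\ax{{\realto}/\lor}$ with an antecedent in $\Lang^\land$, followed by distributivity. The paper simply packages these steps as a preliminary lemma (modalities, then $\delta \realto \varphi$ with $\delta \in \Lang^\land$, then general $\varphi_1 \realto \varphi_2$ by induction on $\varphi_1$). Your special handling of empty disjunctions is harmless but unnecessary, since $\False \in \Lang^\land$.
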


\begin{remark} 
\label{rem:log:ded:cons}
The rule $\ax{C}$ in Figure~\ref{fig:ded:realto}
is a kind of a zero-ary case of $\ax{{\realto}/\lor}$.
Since $\True \thesis (\False \realto \False)$,
the sequent $(\psi \realto \False) \thesis \False$
is derivable only when $\psi$ is consistent.
This is enforced by the consistency predicate $\C(\psi)$,
see Theorem~\ref{thm:sem:log:sound}(\ref{item:sem:log:sound:cons})
in~\S\ref{sec:sem:log}.
Our predicate $\C$ is akin
to~\cite{abramsky87lics,rk25wollic},
but 
differs from the \emph{coprimeness} predicates
of~\cite{abramsky91apal,bk03ic,ac98book}.
Note that the clauses defining $\C$ are positive
(compare with \cite[Figure 3]{bk03ic} and \cite[Figure 10.3]{ac98book}).

At sum types, we have inconsistent formulae which do not
mention $\False$:
The rule $\ax{I}$ in Figure~\ref{fig:ded:mod}
means that it is inconsistent to assert both $\form{\inj_1}\True$
and $\form{\inj_2}\True$. 
Hence, our sum types are \emph{disjoint}, and we do not need the termination
predicates of~\cite{abramsky87lics,abramsky91apal}.

Besides, note that the sequent
$\True \thesis \form{\inj_1}\True \lor \form{\inj_2} \True$
is \emph{not} derivable.
\lipicsEnd
\end{remark}

If $t$ is a closed iteration term,
then using the rules in Figure~\ref{fig:ded:iter}
we can derive
\begin{equation}
\label{eq:ded:syntfun}
\begin{array}{r c l !{\qquad\text{and}\qquad} r c l}
  (\finmu^t \FP)\varphi
& \thesisiff
& (\FP. \varphi)^n(\False)

& (\finnu^t \FP)\varphi
& \thesisiff
& (\FP. \varphi)^n(\True)
\end{array}
\end{equation}

\noindent
where $t$ is the $n$th successor of $0$
(cf Equation~\eqref{eq:form:syntfun} in~\S\ref{sec:formulae}).
This yields:

\begin{restatable}{lemma}{LemNeutralOmega}
\label{lem:ded:omega}
For each \emph{closed} $\varphi \in \Lang^\pm(\PT)$,
there is some $\psi \in \Lang^\omega(\PT)$ such that
$\varphi \thesisiff \psi$.
\end{restatable}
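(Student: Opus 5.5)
We argue by induction on the formation of the closed neutral formula $\varphi \in \Lang^\pm(\PT)$. More precisely, the induction hypothesis must handle formulae with free fixpoint variables but no free iteration variables. So we prove the following stronger statement. For every $\varphi \in \Lang^\pm(\FPEnv;\PT)$ without free iteration variables, and every substitution $\sigma$ assigning to each $(\FP\colon\PTbis) \in \FPEnv$ a formula $\sigma(\FP) \in \Lang^\omega(\PTbis)$, there is $\psi \in \Lang^\omega(\PT)$ with $\varphi[\sigma] \thesisiff \psi$. The lemma is the case $\FPEnv$ empty.

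First we observe what a neutral formula can contain. Since $\Lang^\pm = \Lang^+ \cap \Lang^-$ is generated by Figure~\ref{fig:form:prop}--\ref{fig:form:iter} together with $\realto$ on neutral formulae, it contains no quantifiers $(\exists\itvar)$ or $(\forall\itvar)$. Hence a neutral formula with no free iteration variables has only closed iteration terms in its subformulae $(\finmu^t\FP)$ and $(\finnu^t\FP)$.

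\textbf{Base and congruence cases.}
\begin{itemize}
\item $\True$, $\False$ and $\form{\pair{}}$ are already in $\Lang^\omega$.
\item A fixpoint variable $\FP$ becomes $\sigma(\FP)$, which is in $\Lang^\omega$ by assumption.
\item Weakening of the fixpoint context is immediate.
\item For $\land$, $\lor$ and the modalities $\form\modgen$, we apply the induction hypothesis and use that $\thesisiff$ is a congruence for these connectives. This follows from the introduction and elimination rules of Figure~\ref{fig:ded:prop} and the monotonicity rule for $\form\modgen$ in Figure~\ref{fig:ded:mod}.
\item For $\psi' \realto \varphi'$, both sides are closed and neutral, so the induction hypothesis gives $\Lang^\omega$ equivalents. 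Congruence follows from the rule for $\realto$ that is contravariant on the left and covariant on the right.
\end{itemize}

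\textbf{Bounded iteration.} Let $t$ be closed and equal to the $n$th successor of $0$. By Equation~\eqref{eq:ded:syntfun}, $(\finmu^t\FP)\varphi \thesisiff (\FP.\varphi)^n(\False)$. Applying $\sigma$ to both sides only requires the iteration rules to be stable under substitution of fixpoint variables other than $\FP$; we take that as evident from the rule schemata. We then show by an inner induction on $n$ that $(\FP.\varphi)^n(\False)[\sigma]$ has an $\Lang^\omega$ equivalent.
\begin{itemize}
\item For $n=0$ the formula is $\False$.
\item For $n+1$, let $\chi_n \in \Lang^\omega$ be equivalent to $(\FP.\varphi)^n(\False)[\sigma]$. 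We apply the outer induction hypothesis to $\varphi$ with the extended substitution $\sigma, \FP \mapsto \chi_n$. This yields $\psi_{n+1} \in \Lang^\omega$ with $\varphi[\sigma,\FP\mapsto\chi_n] \thesisiff \psi_{n+1}$.
\end{itemize}
It remains to check that $\varphi[\sigma][(\FP.\varphi)^n(\False)[\sigma]/\FP] \thesisiff \varphi[\sigma][\chi_n/\FP]$. For this we need a substitution lemma: if $\theta \thesisiff \theta'$, then $\varphi[\theta/\FP] \thesisiff \varphi[\theta'/\FP]$. It is proved by induction on $\varphi$ using the congruence rules, including monotonicity of $(\finmu^t\FP')$ and $(\finnu^t\FP')$ in Figure~\ref{fig:ded:iter}. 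The case $\realto$ is trivial because $\psi \realto \varphi$ has no free fixpoint variables. The case $\finnu$ is the same, starting from $\True$.

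\textbf{Main obstacle.} The substantive point is justifying Equation~\eqref{eq:ded:syntfun} and its interaction with substitution. The equation is proved by induction on $t$ using the unfolding rules of Figure~\ref{fig:ded:iter}: the rule $\varphi[(\finmu^t\FP)\varphi/\FP] \thesis (\finmu^{t+1}\FP)\varphi$ gives one direction, and the rule with premise $(\finmu^t\FP)\varphi \thesis \psi$ gives the other, with $\psi$ the iterate. Everything else is routine structural induction. The only subtlety is that the induction measure must be the structure of $\varphi$ with an arbitrary $\Lang^\omega$ substitution, rather than the size of the unfolded formula, which grows.
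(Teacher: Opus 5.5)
Your proof is correct and follows the paper's own argument. The paper uses the same strengthening to open neutral formulae under a substitution of $\Lang^\omega$ formulae for the fixpoint variables, the same structural induction, and the same inner induction on the iteration count via Equation~\eqref{eq:ded:syntfun}, feeding the previously obtained $\Lang^\omega$ formula back into the outer hypothesis for $(\finmu^t\FP)$ and $(\finnu^t\FP)$. If anything you are more careful than the paper, which leaves implicit the weakening case and the substitution-congruence lemma; that lemma is also needed, together with the unfolding axiom, for the right-to-left direction of Equation~\eqref{eq:ded:syntfun}.
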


The main fact of this~\S\ref{sec:deduction}
is that quantifiers can be assumed to be in prenex position.
Quantifiers of the same type can be merged using
the rules $\ax{\exists M}$ and $\ax{\forall M}$ in Figure~\ref{fig:ded:quant},
themselves permitted by
the polarity constraints in Figure~\ref{fig:form:pol}.
Formally, the \emph{positive}, resp.\ \emph{negative}, \emph{prenex forms}
are the formulae
$(\exists \itvar)\psi$, resp.\ $(\forall \itvar)\psi$,
with $\psi \in \Lang^\pm$ neutral.

The rules $\ax{WF}$, $\ax{CC}$ in Figure~\ref{fig:ded:realto},
which apply to positive formulae,
are crucial to obtain prenex forms.
%
Note also that the \emph{Frobenius} rules $\ax{\exists L}$ and $\ax{\forall R}$
in Figure~\ref{fig:ded:quant}
make it possible to derive the following
when $\itvar \notin \FV(\psi)$:
\[
\begin{array}{l l l !{\qquad\text{and}\qquad} l l l}
  \psi \land (\exists \itvar)\varphi
& \thesisiff
& (\exists \itvar)(\psi \land \varphi)

& (\forall \itvar)(\varphi \lor \psi)
& \thesisiff
& (\forall \itvar)\varphi \lor \psi
\end{array}
\]

\begin{proposition}[restate = PropFormPrenex, name = ]
\label{prop:ded:prenex}
For each $\varphi \in \Lang^\sign(\FPEnv; \PT)$,
there is a prenex $\psi \in \Lang^\sign(\FPEnv; \PT)$
such that $\varphi \thesisiff \psi$.
\end{proposition}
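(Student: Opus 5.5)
By induction on the derivation of $\varphi \in \Lang^\sign(\FPEnv;\PT)$, I will prove a slightly sharper statement. If $\sign = \pm$ then $\varphi$ is already prenex, with a vacuous quantifier: take $(\exists \itvar)\varphi$ or $(\forall \itvar)\varphi$ with $\itvar$ fresh. This is equivalent to $\varphi$ by the introduction and elimination rules for quantifiers, instantiating with any term. If $\sign = +$, I will produce $(\exists \itvar)\psi$ with $\psi$ neutral. If $\sign = -$, I will produce $(\forall \itvar)\psi$ with $\psi$ neutral. The two polarities are dual, so I describe only the positive case; the negative case uses $\ax{\forall R}$, $\ax{\forall M}$, $\ax{\forall/\fingen}$ and the $\forall$-modal rule in place of their existential counterparts.

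Before the induction I need some auxiliary facts. First, $\thesis$ is a congruence for all formers: the rules in Figure~\ref{fig:ded:mod} and~\ref{fig:ded:iter} handle modalities and iterations, and $\realto$ has its variance rule. Second, substitution of iteration terms for variables preserves derivability. Third, $(\exists\itvar)$ is monotone, which follows from the $\exists$-introduction rule together with $\ax{\exists L}$ taking $\theta = \True$. I will also rename bound variables freely.

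The positive cases then go as follows.
\begin{itemize}
\item \textbf{Atoms.} $\True$, $\False$, $\form{\pair{}}$ and fixpoint variables are neutral.
\item \textbf{Conjunction.} I prove $(\exists\itvar)\psi_1 \land (\exists\itvarbis)\psi_2 \thesisiff (\exists\itvar)(\exists\itvarbis)(\psi_1\land\psi_2)$ using Frobenius twice. I then merge the two quantifiers with $\ax{\exists M}$; its converse is obtained by instantiating $\itvarbis := \itvar$ and $\itvar := \itvar$.
\item \textbf{Disjunction.} $(\exists\itvar)\psi_1 \lor (\exists\itvar)\psi_2 \thesisiff (\exists\itvar)(\psi_1\lor\psi_2)$ is immediate once the bound variable is shared.
\item \textbf{Modalities.} I use the rule $\form\modgen(\exists\itvar)\varphi \thesis (\exists\itvar)\form\modgen\varphi$. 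Its converse follows from monotonicity of $\form\modgen$ and $\exists$-elimination, again via $\ax{\exists L}$.
\item \textbf{Existential $(\exists\itvarbis)\varphi$.} By the induction hypothesis this becomes $(\exists\itvarbis)(\exists\itvar)\psi$, which $\ax{\exists M}$ merges into a single quantifier.
\item \textbf{Bounded iteration $(\finmu^t\FP)\varphi$ or $(\finnu^t\FP)\varphi$.} Take $\itvar$ fresh, in particular $\itvar\notin\FV(t)$. Then $\ax{\fingen/\exists}$ pulls the quantifier out, and monotonicity of $\fingen$ gives the converse.
\end{itemize}

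The main obstacle is the realizability implication $\psi\realto\varphi \in \Lang^+$, where $\psi \in \Lang^-$ and $\varphi\in\Lang^+$. The induction hypothesis gives $\psi \thesisiff (\forall\itvarbis)\psi'$ and $\varphi\thesisiff(\exists\itvar)\varphi'$ with $\psi', \varphi'$ neutral, and I choose $\itvar \neq \itvarbis$. By congruence, $\psi\realto\varphi \thesisiff (\forall\itvarbis)\psi' \realto (\exists\itvar)\varphi'$. Since $\itvarbis\notin\FV((\exists\itvar)\varphi')$, the rule $\ax{WF}$ gives a sequent into $(\exists\itvarbis)(\psi'\realto(\exists\itvar)\varphi')$. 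Inside, $\psi'$ is neutral and $\itvar\notin\FV(\psi')$, so $\ax{CC}$ gives $(\exists\itvar)(\psi'\realto\varphi')$. Combining these with monotonicity of $\exists$ and then $\ax{\exists M}$ yields $(\exists\itvar)(\psi'\realto\varphi')$, and $\psi'\realto\varphi'$ is neutral.

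For the converse direction, I instantiate the universal quantifier in the antecedent with $\itvar$, getting $(\forall\itvarbis)\psi' \thesis \psi'[\itvar/\itvarbis]$, and use $\varphi'[\ldots] \thesis (\exists\itvar)\varphi'$. Then contravariance of $\realto$ and $\exists$-elimination through $\ax{\exists L}$ finish the argument. To keep this clean, I will first rename so that both prenex forms use the same variable $\itvar$. I expect the side conditions on free variables here to need the most care, in particular that $\itvar\notin\FV(\psi')$ when applying $\ax{CC}$. The negative $\realto$ case is dual and uses $\ax{\exists/{\realto}}$ and $\ax{{\realto}/\forall}$, followed by $\ax{\forall M}$.
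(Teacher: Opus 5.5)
Your proposal is correct and follows essentially the same route as the paper: a simultaneous induction over positive and negative formulae using the quantifier commutation and merge laws, with the positive $\realto$ case handled exactly as in the paper by $\ax{WF}$, then $\ax{CC}$ under the outer quantifier, then $\ax{\exists M}$. There are two minor slips. After merging, the neutral matrix is $\psi'[\itvar/\itvarbis]\realto\varphi'$ rather than $\psi'\realto\varphi'$, which is what your converse direction actually assumes. In the neutral case the witness should be $\varphi$ itself, since $(\exists\itvar)\varphi$ is not in $\Lang^\pm$.
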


\section{The Refinement Type System}
\label{sec:reft}

We consider \emph{neutral}, \emph{positive} and \emph{negative}
\emph{refinement types} (or \emph{types}),
notation $\RT^\sign, \RTbis^\sign,\text{etc.}$
with $\sign =\pm,+,-$ respectively.
They are given by the grammar
\[
\begin{array}{r @{~~} c @{~~} l}
    \RT^\sign, \RTbis^\sign
&   \bnf
&   \PT
\gs \reft{\PT \mid \varphi}
\gs \RT^\sign \times \RTbis^\sign
\gs \RTbis^{-\sign} \arrow \RT^\sign
\end{array}
\]

\noindent
where $\PT$ is a pure type and where
$\varphi \in \Lang^\sign(\PT)$ is \emph{closed}.
When possible, we write $\RT$ for $\RT^\sign$.

\begin{example}
\label{ex:reft:spec}
For each $n \in \NN$,
the specification for $\cnt \comp \bft$
is given either by the positive 
refinement type
$\reft{\Tree\Bool \arrow \Stream\Nat \mid \Phi^+}$,
where $\Phi^+$ is the formula in Equation~\eqref{eq:form:spec},
Example~\ref{ex:form:fix},
or by the positive refinement type
$S^+$ defined as
\begin{multline*}
  \big\{ \Tree\Bool ~\big|~
  \forall\Box\form\lbl(\form\true \lor \form\false)
  ~\land~
  \exists\Next\exists\Box\form\lbl\form\true
  ~\land~
  \exists\Next\exists\Box\form\lbl\form\false \big\}
  ~~\longarrow
\\
  \big\{ \Stream\Nat ~\big|~ \Diam\form\hd \form{\geq \term n} \form\tot \big\}
\end{multline*}
\lipicsEnd
\end{example}

We shall consider typing judgements of the form
$\Env \thesis M \colon \RT$,
where $\Env$ is allowed to mention refinement types.
A judgement
$M \colon \reft{\PT \mid \varphi}$
is intended to mean that $M$ is of pure type $\PT$ and satisfies $\varphi$.
\emph{Positive} typing judgements are of the
form $\Env^- \thesis M \colon \RT^+$,
where only negative types are declared in $\Env^-$.

\begin{figure}[t!]
\centering
\(
\begin{array}{c}

\dfrac{}
  {\RT \subtype \RT}

\qquad

\dfrac{\RT \subtype \RTbis
  \qquad
  \RTbis \subtype \RTter}
  {\RT \subtype \RTter}

\qquad

\dfrac{}
  {\RT \subtype \UPT\RT}

\qquad

\dfrac{}
  {\PT \subtype \reft{\PT \mid \True}}

\qquad

\dfrac{\psi \thesis \varphi}
  {\reft{\PT \mid \psi} \subtype \reft{\PT \mid \varphi}}

\\\\

\dfrac{\RT \subtype \RT'
  \qquad
  \RTbis \subtype \RTbis'}
  {\RT \times \RTbis \subtype \RT' \times \RTbis'}

\qquad

\dfrac{}{
  \reft{\PT \mid \varphi}
  \times
  \reft{\PTbis \mid \psi}
  \eqtype
  \reft{\PT \times \PTbis \mid \form{\pi_1}\varphi \land \form{\pi_2}\psi}}

\\\\

\dfrac{\RTbis' \subtype \RTbis
  \qquad
  \RT \subtype \RT'}
  {\RTbis \arrow \RT \subtype \RTbis' \arrow \RT'}

\qquad

\dfrac{}
  {\reft{\PTbis \mid \psi} \arrow \reft{\PT \mid \varphi}
  \eqtype
  \reft{\PTbis \arrow \PT \mid \psi \realto \varphi}}

%
%
%

\end{array}
\)

\caption{Subtyping.%
\label{fig:reft:subtyping}}
\end{figure}

\begin{figure}[t!]
\begin{subfigure}{\textwidth}
\centering
\(
\begin{array}{c}

\dfrac{(x \colon \RT) \in \Env}
  {\Env \thesis x \colon \RT}

\qquad

\dfrac{\Env,\, x \colon \RTbis \thesis M \colon \RT}
  {\Env \thesis \lambda x.M \colon \RTbis \arrow \RT}

\qquad

\dfrac{\Env \thesis M \colon \RTbis \arrow \RT
  \qquad
  \Env \thesis N \colon \RTbis}
  {\Env \thesis M N \colon \RT}

\\\\

\dfrac{\Env \thesis M \colon \RT
  \qquad
  \Env \thesis N \colon \RTbis}
  {\Env \thesis \pair{M,N} \colon \RT \times \RTbis}

\qquad

\dfrac{\Env \thesis M \colon \RT \times \RTbis}
  {\Env \thesis \pi_1(M) \colon \RT}

\qquad

\dfrac{\Env \thesis M \colon \RT \times \RTbis}
  {\Env \thesis \pi_2(M) \colon \RTbis}

\end{array}
\)
\caption{Basic rules.%
\label{fig:reft:basic}}
\end{subfigure}

\begin{subfigure}{\textwidth}
\centering
\(
\begin{array}{c}

\dfrac{
  \Env \thesis M \colon \reft{\PT \mid \varphi_1}
  \qquad
  \Env \thesis M \colon \reft{\PT \mid \varphi_2}}
  {\Env \thesis M \colon \reft{\PT \mid \varphi_1 \land \varphi_2}}

\qquad

\dfrac{\UPT\Env,\, x\colon \PTbis,\, \UPT{\Env'} \thesis M \colon \UPT\RT}
  {\Env,\, x \colon \reft{\PTbis \mid \False},\, \Env' \thesis M \colon \RT}

\\\\

\dfrac{\Env,\, x\colon \reft{\PTbis \mid \psi_1},\, \Env' \thesis M \colon \RT
  \qquad
  \Env,\, x\colon \reft{\PTbis \mid \psi_2},\, \Env' \thesis M \colon \RT}
  {\Env,\, x \colon \reft{\PTbis \mid \psi_1 \lor \psi_2},\, \Env' \thesis M \colon \RT}

\end{array}
\)
\caption{Propositional connectives.%
\label{fig:reft:log}}
\end{subfigure}

\begin{subfigure}{\textwidth}
\centering
\(
\begin{array}{c}

\dfrac{
  \Env' \subtype \Env
  \qquad 
  \RT \subtype \RT'
  \qquad
  \Env \thesis M \colon \RT}
  {\Env' \thesis M \colon \RT'}

\end{array}
\)
\caption{Subtyping (cf Figure~\ref{fig:reft:subtyping}).%
\label{fig:reft:reftsub}}
\end{subfigure}

\begin{subfigure}{\textwidth}
\centering
\(
\begin{array}{c}

\dfrac{\Env \thesis \fix x.M \colon \reft{\PT \mid \psi}
  \qquad
  \Env, x\colon \reft{\PT \mid \psi} \thesis M \colon \reft{\PT \mid \varphi}}
  {\Env \thesis \fix x.M \colon \reft{\PT \mid \varphi}}

\end{array}
\)
\caption{Fixpoints.%
\label{fig:reft:fix}}
\end{subfigure}

\begin{subfigure}{\textwidth}
\centering
\(
\begin{array}{c}

\dfrac{\Env \thesis M \colon \reft{\PT_1 \times \PT_2 \mid \form{\pi_i} \varphi}}
  {\Env \thesis \pi_i(M) \colon \reft{\PT_i \mid \varphi}}

\qquad

\dfrac{\Env \thesis M_i \colon \reft{\PT_i \mid \varphi}
  \qquad
  \Env \thesis M_{3-i} \colon \PT_{3-i}}
  {\Env \thesis \pair{M_1,M_2} \colon \reft{\PT_1 \times \PT_2 \mid \form{\pi_i} \varphi}}

\\\\

\dfrac{}
  {\Env \thesis \pair{} \colon \reft{\Unit \mid \form{\pair{}}}}

\qquad

\dfrac{\Env \thesis M \colon \reft{\PT_i \mid \varphi}}
  {\Env \thesis \inj_i(M) \colon \reft{\PT_1 + \PT_2 \mid \form{\inj_i} \varphi}}

\\\\

\dfrac{
  \Env \thesis M \colon \reft{\PT_1 + \PT_2 \mid \form{\inj_i}\varphi}
  \qquad
  \Env, x_i \colon \reft{\PT_i \mid \varphi} \thesis N_i \colon \RT
  \qquad
  \UPT\Env, x_{3-i} \colon \PT_{3-i} \thesis N_{3-i} \colon \UPT\RT}
  {\Env \thesis \cse\ M\ \copair{x_1 \mapsto N_1 \mid x_2 \mapsto N_2} \colon \RT}

\\\\

\dfrac{\Env \thesis M \colon \reft{\PT[\rec\TV.\PT/\TV] \mid \varphi}}
  {\Env \thesis \fold(M) \colon \reft{\rec\TV.\PT \mid \form\fold \varphi}}

\qquad

\dfrac{\Env \thesis M \colon \reft{\rec\TV.\PT \mid \form\fold \varphi}}
  {\Env \thesis \unfold(M) \colon \reft{\PT[\rec\TV.\PT/\TV] \mid \varphi}}

\end{array}
\)
\caption{Modalities, where $i \in \{1,2\}$.%
\label{fig:reft:mod}}
\end{subfigure}

\caption{Typing rules.%
\label{fig:reft:reftyping}}
\end{figure}

We derive typing judgements $\Env \thesis M \colon \RT$
using the rules in Figure~\ref{fig:reft:reftyping}
augmented with those in Figure~\ref{fig:pure} (\S\ref{sec:pure}).
Deduction on formulae (\S\ref{sec:deduction})
enters the type system in Figure~\ref{fig:reft:reftsub}
via the subtyping relation $\RTbis \subtype \RT$
defined in Figure~\ref{fig:reft:subtyping},
where $\RTbis \eqtype \RT$
stands for the conjunction of $\RTbis \subtype \RT$ and $\RT \subtype \RTbis$.
Note that for each $\PT$ we have $\PT \eqtype \reft{\PT \mid \True}$.
Subtyping is extended to typing contexts:
given $\Env = x_1 \colon \RTbis_1,\dots,x_n \colon\RTbis_n$
and $\Env' = x_1 \colon \RTbis'_1,\dots,x_n \colon \RTbis'_n$,
we let $\Env \subtype \Env'$ when $\RTbis_i \subtype \RTbis'_i$
for all $i =1,\dots,n$.
Note that if $\Env \thesis M \colon \RT$ is derivable
then so is $\UPT\Env \thesis M \colon \UPT\RT$,
where the \emph{underlying pure type} $\UPT\RT$ of $\RT$
is defined by induction from $\UPT\PT \deq \PT$
and $\UPT{\reft{\PT \mid \varphi}} \deq \PT$,
and where $\UPT\Env$ is the extension of $\UPT{\pl}$ to $\Env$.

The rules in Figures~\ref{fig:reft:subtyping} and~\ref{fig:reft:reftyping}
are adaptations of those in~\cite{abramsky91apal,bk03ic,jr21esop,rk25wollic}.
In particular, the crucial rule for $\fix x.M$ in Figure~\ref{fig:reft:fix}
comes from \cite{abramsky91apal}.

\begin{example}
The bottom $\eqtype$-rule in Figure~\ref{fig:reft:subtyping} yields
$S^+ \eqtype \reft{\Tree\Bool \arrow \Stream\Nat \mid \Phi^+}$
in Example~\ref{ex:reft:spec},
as well as the following derived typing rules.
\[
\begin{array}{c}

\dfrac{\Env,x \colon \reft{\PTbis \mid \psi} \thesis M \colon \reft{\PT \mid \varphi}}
  {\Env \thesis \lambda x.M \colon \reft{\PTbis \arrow \PT \mid \psi \realto \varphi}}

\qquad

\dfrac{\Env \thesis M \colon \reft{\PTbis \arrow \PT \mid \psi \realto \varphi}
  \qquad
  \Env \thesis N \colon \reft{\PTbis \mid \psi}}
  {\Env \thesis M N \colon \reft{\PT \mid \varphi}}

\end{array}
\]
\end{example}

\begin{lemma}[restate = LemReft, name = ]
\label{lem:reft}
For each type $\RT^\sign$, there is a $\varphi \in \Lang^\sign(\UPT\RT)$
such that $\RT^\sign \eqtype \reft{\UPT{\RT^\sign} \mid \varphi}$.
\end{lemma}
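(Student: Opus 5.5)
We argue by structural induction on the refinement type $\RT^\sign$, following the grammar $\PT \mid \reft{\PT \mid \varphi} \mid \RT^\sign \times \RTbis^\sign \mid \RTbis^{-\sign} \arrow \RT^\sign$. At each step we produce a closed $\varphi \in \Lang^\sign(\UPT\RT)$ together with derivations of both $\RT \subtype \reft{\UPT\RT \mid \varphi}$ and $\reft{\UPT\RT \mid \varphi} \subtype \RT$. These derivations use only the rules of Figure~\ref{fig:reft:subtyping}, with transitivity to chain steps.

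The two base cases are immediate.
\begin{itemize}
\item If $\RT = \PT$ is pure, we take $\varphi \deq \True$. This formula lies in $\Lang^\sign(\PT)$ for every $\sign$. The rule $\PT \subtype \reft{\PT \mid \True}$ gives one direction, and $\reft{\PT\mid\True} \subtype \UPT{\reft{\PT\mid\True}} = \PT$ gives the other, so $\PT \eqtype \reft{\PT \mid \True}$ as already noted in the paper.
\item If $\RT = \reft{\PT \mid \varphi}$, we take $\varphi$ itself and use reflexivity.
\end{itemize}

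For a product $\RT \times \RTbis$, the induction hypothesis gives $\RT \eqtype \reft{\PT \mid \varphi}$ and $\RTbis \eqtype \reft{\PTbis \mid \psi}$ with $\varphi \in \Lang^\sign(\PT)$ and $\psi \in \Lang^\sign(\PTbis)$. The congruence rule for $\times$, applied in both directions, yields $\RT \times \RTbis \eqtype \reft{\PT\mid\varphi} \times \reft{\PTbis \mid \psi}$. The $\eqtype$-axiom for products then gives $\eqtype \reft{\PT \times \PTbis \mid \form{\pi_1}\varphi \land \form{\pi_2}\psi}$. Since $\Lang^\sign$ is closed under the modalities $\form{\pi_i}$ and under $\land$, this formula lies in $\Lang^\sign(\PT\times\PTbis)$. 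It is closed, and $\PT\times\PTbis = \UPT{\RT\times\RTbis}$.

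For an arrow $\RTbis^{-\sign} \arrow \RT^\sign$, the induction hypothesis gives $\RTbis \eqtype \reft{\PTbis \mid \psi}$ with $\psi \in \Lang^{-\sign}(\PTbis)$, and $\RT \eqtype \reft{\PT\mid\varphi}$ with $\varphi \in \Lang^\sign(\PT)$. We apply the arrow congruence rule in each direction. Contravariance is unproblematic because we have subtyping both ways on the domain. This gives $\RTbis \arrow \RT \eqtype \reft{\PTbis\mid\psi} \arrow \reft{\PT\mid\varphi}$. The arrow $\eqtype$-axiom then gives $\eqtype \reft{\PTbis \arrow \PT \mid \psi \realto \varphi}$.

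The only point requiring care is checking that $\psi \realto \varphi \in \Lang^\sign(;\PTbis\arrow\PT)$. This follows from the realizability formation rule of Figure~\ref{fig:form:pol}, which needs $\psi \in \Lang^{-\sign}(;\PTbis)$ and $\varphi \in \Lang^\sign(;\PT)$ in the empty fixpoint context. This is exactly why the grammar of refinement types requires the polarity $-\sign$ on domains and closed formulae in refinements. We check it in each case:
\begin{itemize}
\item For $\sign = \pm$, we use $-\pm = \pm$.
\item For $\sign = +$ and $\sign = -$, we use the rules in Equation~\eqref{eq:form:realto}.
\end{itemize}
The induction hypothesis supplies closed formulae, so the empty fixpoint context requirement is met. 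This completes the induction.
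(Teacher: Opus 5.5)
Your proposal is correct and follows the paper's proof: the same structural induction, with $\True$ for pure types and the two $\eqtype$-axioms of the subtyping figure for products and arrows. You also spell out details the paper leaves implicit, namely the use of the congruence rules in both directions and the polarity and closedness checks for $\form{\pi_1}\varphi \land \form{\pi_2}\psi$ and $\psi \realto \varphi$.
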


\begin{remark}
\label{rem:reft:plus}
Refinement types are \emph{not} closed under $+$
since in view of Remark~\ref{rem:log:ded:cons},
the equivalence
\(
  \reft{\PT \mid \varphi}
  +
  \reft{\PTbis \mid \psi}
  \eqtype
  \reft{\PT + \PTbis \mid \form{\inj_1}\varphi \lor \form{\inj_2}\psi}
\)
would be wrong,
thus preventing from Lemma~\ref{lem:reft} to hold.
\lipicsEnd
\end{remark}

Our main result (Theorem~\ref{thm:compl})
is that the \emph{positive} fragment of this type system
is (sound and) complete w.r.t.\ the usual Scott semantics:
given $\thesis M \colon \PT$
and a \emph{positive} $\varphi \in \Lang^+(\PT)$,
\[
\begin{array}{c !{\quad}c!{\quad} c}
  \thesis M \colon \reft{\PT \mid \varphi}
& \text{if, and only if,}
& \text{$\varphi$ holds on $\I M$ in the Scott semantics.}
\end{array}
\]

\noindent
In particular, the judgement $\thesis \cnt \comp \bft \colon S^+$
is derivable.

%

\section{Semantics}
\label{sec:sem}

We interpret pure types as Scott domains and $\lambda$-terms as Scott-continuous
functions (\S\ref{sec:sem:pure}).
Using the well-known fact that Scott domains are spectral spaces
(\S\ref{sec:spectral}),
this yields the semantics of our logic (\S\ref{sec:sem:log})
and of our refinement type system (\S\ref{sec:sem:reft}).
Our main result is the Positive Completeness Theorem~\ref{thm:compl}
in \S\ref{sec:sem:reft}.

We mostly use the terminology in~\cite[\S 1]{ac98book}.
A \emph{dcpo} is a poset with suprema of all directed subsets
(a subset $D$ of a poset is \emph{directed} if all (possibly empty) finite $F \sle D$
have an upper bound in $D$; directed sets are non-empty).
A \emph{cpo} is a dcpo with a least element (often denoted $\bot$).
A function between dcpos is \emph{Scott-continuous}
if it preserves the order (i.e.\ is monotone) as well as directed suprema.

\begin{definition}[Scott Domain]
\label{def:scott}
A \emph{Scott domain} is a bounded-complete algebraic cpo.
$\Scott$ is the category of Scott domains and Scott-continuous functions.
\end{definition}

Recall that a cpo $X$ is bounded-complete if
any two $x,y \in X$ have a sup (or \emph{least} upper bound)
$x \vee y \in X$ whenever they have an upper bound.

An element $x$ of a dcpo $X$ is \emph{finite}
if for all directed $D \sle X$ such that $x \leq \bigvee D$,
we have $x \leq d$ for some $d \in D$
(finite elements are called \emph{compact} in~\cite{ac98book}).
Note that $\bot$ is always finite,
and that if $d,d' \in X$ are finite,
then $d \vee d'$ is finite whenever it exists.
A dcpo $X$ is \emph{algebraic} if for each $x \in X$,
the set $\{ d\in X \mid \text{$d$ finite and $\leq x$} \}$
is directed and has sup $x$.

The category $\Scott$ is Cartesian-closed
(see e.g.\ \cite[Corollary 4.1.6]{aj95chapter}).

\subsection{Semantics of the Pure System}
\label{sec:sem:pure}

Typed terms $\Env \thesis M : \PT$ of the pure system (\S\ref{sec:pure})
are interpreted as morphisms $\I M \colon \I\Env \to \I\PT$ in $\Scott$,
where $\I{\Env} = \prod_{i=1}^n\I{\PTbis_i}$ when
$\Env = x_1:\PTbis_1,\dots,x_n:\PTbis_n$.%
\opt{full}{ See Appendix~\ref{sec:proof:sem:pure} for details.}%
\opt{short}{ See \cite[\S C]{rd26full} for details.}%

Product types $\PT \times \PTbis$ are interpreted using
the Cartesian product of $\Scott$, i.e.\ the Cartesian product
of sets equipped with component-wise order.
Arrow types $\PTbis \to \PT$ are interpreted using the
closed structure of $\Scott$,
given by equipping each homset $\Scott(X,Y)$ with the pointwise order.

The interpretation of $\Unit$ is $\{\bot, \top\}$
with $\I{\pair{}} \deq \top$.
We let $\I{\PT + \PTbis}$ be the \emph{weak} coproduct
$(\I\PT \amalg \I\PTbis)_\bot$,
i.e.,
the disjoint union of $\I\PT$ and $\I\PTbis$ augmented with a new least element $\bot$
($\Scott$ does not have finite coproducts~\cite{hp90tcs}).
We refer to~\cite{ac98book,aj95chapter,streicher06book}
for the interpretation of recursive types $\rec\TV.\PT$.
Term-level fixpoints $\fix x.M$ are interpreted
using the usual fixpoint combinators $\term Y \colon (X \to X) \to X$
taking $f \colon X \to X$ to $\term Y(f) \deq \bigvee_{n \in \NN} f^n(\bot)$.

\begin{example}
\label{ex:scott:stream-tree}
The domain $\I{\Stream\PTbis}$ of streams is $\I\PTbis^\NN$
equipped with the pointwise order,
while the domain $\I{\Tree\PTbis}$ of trees is $\I\PTbis^{\{0,1\}^*}$.
The finite elements are those $z \in \I\PTbis^I$ such that $z(p)$ is finite in
$\I\PTbis$ for all $p \in I$,
and $z(p) \neq \bot$ for at most finitely many $p \in I$.

Given $x \in \I{\Stream\PTbis}$
we have $\I\hd(x) = x(0)$,
while $\I\tl(x)$ is the stream taking $n \in \NN$ to $x(n+1) \in \I\PTbis$.
Moreover, $x = \I\Cons(\I\hd(x),\I\tl(x))$.
%
Similarly, if $y \in \I{\Tree\PTbis}$
then $\I\lbl(y) = y(\es)$ is the root label of $y$,
while $\I\lft(y)$ and $\I\rght(y)$ are the left- and right-subtrees
of $y$, respectively.

One can then check that $\I\bft \colon \I{\Tree \PTbis} \to \I{\Stream \PTbis}$
indeed computes a breadth-first tree traversal%
\opt{full}{ (see Appendix~\ref{sec:proof:sem:stream-tree:bft}).}%
\opt{short}{ (see~\cite[\S C.3.1]{rd26full}).}
In particular, $\I\cnt \comp \I\bft$
satisfies its expected specification
(cf Equation~\eqref{eq:form:spec} in Example~\ref{ex:form:fix}).
\lipicsEnd
\end{example}

\subsection{Scott Domains as Spectral Spaces}
\label{sec:spectral}

The semantics of formulae and refinement types involves some topology.

Let $(X,\leq)$ be a dcpo.
A set $U \sle X$ is \emph{Scott-open}, notation $U \in \Open(X)$,
if $U$ is \emph{saturated}
(if $x \in U$ and $x \leq y$ in $X$, then $y \in U$),
and if moreover $U$ is inaccessible by directed sups,
in the sense that if $\bigvee D \in U$
with $D \sle X$ directed, then $D \cap U \neq \emptyset$.
This equips $X$ with a topology, called the \emph{Scott topology}.
A function between dcpos is Scott-continuous
precisely when it is continuous for the Scott topology.
See e.g.~\cite[\S 1.2]{ac98book} or~\cite[\S 2.3]{aj95chapter}.
The following is well-known
(see e.g.~\cite[Theorem 7.2.29]{aj95chapter}
or~\cite[Theorem 7.47]{gg24book}).

\begin{restatable}{theorem}{ThmScottSpectral}
\label{thm:spectral}
The Scott topology of a Scott domain is spectral.
\end{restatable}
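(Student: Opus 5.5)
We have to show that the Scott topology $\Open(X)$ of a Scott domain $X$ is spectral. Recall the definition: a space is spectral when it is $T_0$ and compact, when its compact-open sets are closed under finite intersections and form a basis, and when it is sober. The proof verifies these conditions in turn, using algebraicity and bounded completeness.

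\emph{Basis and $T_0$.} For a finite element $d$ of $X$, the set ${\uparrow}d$ is Scott-open. It is saturated by definition. It is inaccessible by directed sups because $d \leq \bigvee D$ implies $d \leq e$ for some $e \in D$, which is exactly finiteness of $d$. It is also compact, since any open cover of ${\uparrow}d$ has a member containing $d$, and that member, being saturated, contains all of ${\uparrow}d$.

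These sets form a basis. If $x \in U$ with $U$ open, then by algebraicity $x$ is the directed sup of the finite elements below it. Inaccessibility of $U$ gives a finite $d \leq x$ with $d \in U$, so $x \in {\uparrow}d \sle U$.

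Consequently every compact-open set is a finite union $\bigcup_{i}{\uparrow}d_i$ of such sets with the $d_i$ finite. In particular $X = {\uparrow}\bot$ is compact. The space is $T_0$ because the specialization order of the Scott topology is the order of $X$.

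\emph{Closure under finite intersections.} This is the step where bounded completeness is used. Distributing intersections over the finite unions, it suffices to treat ${\uparrow}d \cap {\uparrow}e$ for finite $d, e$. If $d$ and $e$ have no upper bound, this intersection is empty. Otherwise $d \vee e$ exists by bounded completeness, and it is finite as noted after Definition~\ref{def:scott}. Then ${\uparrow}d \cap {\uparrow}e = {\uparrow}(d \vee e)$, which is compact-open.

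\emph{Sobriety.} I expect this to be the main obstacle. Let $F$ be an irreducible closed set; I claim $F = {\downarrow}\bigvee F$. First, $F$ is nonempty and a lower set. Next, given $x, y \in F$ and finite $d \leq x$, $e \leq y$, the open sets ${\uparrow}d$ and ${\uparrow}e$ both meet $F$. By irreducibility their intersection meets $F$. Hence $d$ and $e$ have an upper bound in $F$, so $d \vee e$ exists, and it lies in $F$ because $F$ is a lower set.

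It follows that the set $D$ of finite elements of $F$ is directed. By algebraicity every element of $F$ is the sup of elements of $D$, and $\bigvee D \in F$ because closed sets are closed under directed sups. Therefore $F = {\downarrow}\bigvee D$, the closure of a point. Uniqueness of that point follows from $T_0$.

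Alternatively, one can simply cite \cite[Theorem 7.2.29]{aj95chapter}.
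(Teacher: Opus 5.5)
Your proof is correct and follows essentially the same route as the paper's own argument in Appendix~D: $T_0$ from the specialisation order, the sets ${\uparrow}d$ with $d$ finite as a compact-open basis closed under binary intersections by bounded completeness, and sobriety by showing that the finite elements of an irreducible closed set form a directed set whose sup generates it. The differences are minor. In the directedness step you use bounded completeness to form $d \vee e$, whereas the paper only uses that the finite approximants of a common upper bound are directed, so its sobriety lemma holds for every algebraic dcpo. The paper also invokes Hofmann--Mislove to reconcile sobriety with the well-filteredness formulation of spectrality used in \S\ref{sec:spectral}.
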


There are different equivalent definitions of spectral spaces,
see e.g.~\cite[1.1.5]{dst19book},
\cite[\S 6.1]{gg24book} or~\cite[9.5.1]{goubault13book}.
For now, a \emph{spectral} space is $T_0$, well-filtered,
and
the compact-opens are stable under finite intersections and form a basis
of the topology.%
\opt{full}{ See Appendix~\ref{sec:proof:spectral} for details.}%
\opt{short}{ See \cite[\S D]{rd26full} for details.}%

Scott topologies are always $T_0$ since $x \leq y$ in a dcpo $X$
if, and only if, $x \in U$ implies $y \in U$ for each Scott-open $U$.
%
Given a finite $d \in X$, it is easily seen that 
$\up d = \{ x \in X \mid d \leq x\}$ is
\emph{compact-open} (i.e.\ compact and Scott-open).
In fact,
if $X$ is algebraic then the Scott-opens are exactly the unions of $\up d$'s,
with $d$ finite in $X$.
Each cpo $X$ is trivially compact since $\up \bot = X$.
More importantly,
if $X$ is a Scott domain then $\up d \cap \up d'$
is compact for all finite $d,d' \in X$
(by bounded-completeness,
if $\up d \cap \up d'$ is non-empty, then $d \vee d'$
is defined, finite and such that $\up (d \vee d') = \up d \cap \up d'$).
It follows that the set $\K\Open(X)$ of compact-open subsets of $X$
is stable under finite intersections and forms
a basis of the Scott topology.

Proposition~\ref{prop:sem:wf} below states that Scott domains are \emph{well-filtered}.
This will yield the topological classification of 
polarised formulae in Proposition~\ref{prop:sem:log:degroot} (\S\ref{sec:sem:log}).
We let $\K\Sat(X)$ be the set of \emph{compact-saturated}
(i.e.\ compact and saturated) subsets of $X$.
Note that a set $\SP \sle X$ is saturated if, and only if,
$\SP = \bigcap \{U \in \Open(X) \mid \SP \sle U \}$.
Proposition~\ref{prop:sem:wf}
implies that $Q \in \K\Sat(X)$ if, and only if,
$Q = \bigcap\{K \in \K\Open(X) \mid Q \sle K \}$,
and thus that $\K\Sat(X)$ is stable under all intersections.
A subset $\mathcal{Q}$ of a poset $P$ is \emph{codirected}
if $\mathcal{Q}$ is directed in $P^\op$.

\begin{restatable}[Well-Filteredness]{proposition}{PropWellFilt}
\label{prop:sem:wf}
Each Scott domain $X$ is \emph{well-filtered}:
given a codirected
set $\mathcal{Q} \sle \K\Sat(X)$,
if $\bigcap \mathcal{Q} \sle U$ with $U$ Scott-open,
then $Q \sle U$ for some $Q \in \mathcal{Q}$.
\end{restatable}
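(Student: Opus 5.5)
Suppose, towards a contradiction, that $Q \not\sle U$ for every $Q \in \mathcal{Q}$. The plan is to use the basis of compact-opens to reduce to a statement about finite elements, and then to extract from $\mathcal{Q}$ a point of $\bigcap\mathcal{Q}$ outside $U$.

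\textbf{Step 1: reduce to finite elements.} Since $X$ is algebraic, $U = \bigcup\{\up d \mid d \text{ finite}, d \in U\}$. Since each $Q \in \mathcal{Q}$ is compact and saturated, it is covered by finitely many basic opens inside any open containing it. It therefore suffices to work with finite sets $F$ of finite elements and the sets $\up F = \bigcup_{d\in F}\up d$, which are exactly the compact-opens of $X$.

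\textbf{Step 2: compactness argument via a Zorn/ultrafilter-free route.} Consider the closed set $C = X \setminus U$. Each $Q \cap C$ is non-empty by assumption. Since $\mathcal{Q}$ is codirected, the family $\{Q \cap C \mid Q \in \mathcal{Q}\}$ has the finite intersection property: finitely many members contain a common $Q'' \cap C \neq \emptyset$. A compact set meeting every member of a codirected family of closed subsets meets their intersection. So if some $Q_0 \in \mathcal{Q}$ is fixed, the sets $Q \cap Q_0 \cap C$ for $Q \in \mathcal{Q}$ form a filtered family of non-empty closed subsets of the compact space $Q_0$. Hence their intersection is non-empty, giving a point of $\bigcap\mathcal{Q} \cap C$, which contradicts $\bigcap\mathcal{Q} \sle U$.

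\textbf{The obstacle.} There is a subtlety: $Q \cap C$ is closed in $Q_0$ only if $Q$ is closed, whereas $Q$ is merely saturated. So the naive argument fails, and the real work lies in replacing each $Q$ by closed data.

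\textbf{Step 3: Rudin-style selection.} The fix I would try is as follows. Each $Q \cap C$ is non-empty. Choose for each $Q$ a minimal element $m_Q$ of $Q$ lying in $C$; such minimal elements exist because $Q$ is compact saturated and $C$ is down-closed. Then consider the sets $\overline{\{m_{Q'} \mid Q' \sle Q\}}$. By Rudin's lemma, one obtains a minimal closed set $A \sle C$ meeting every $Q \in \mathcal{Q}$, and such an $A$ is irreducible. In a Scott domain, and more generally in a sober space, an irreducible closed set is the closure $\downarrow x$ of a point $x$. Since each $Q$ is saturated and meets $\downarrow x$, we get $x \in Q$ for all $Q \in \mathcal{Q}$, while $x \in A \sle C$. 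This contradicts $\bigcap\mathcal{Q} \sle U$.

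\textbf{Sobriety.} For Scott domains, sobriety can be checked directly. Given an irreducible closed $A$, the set of finite elements below $A$ is directed by irreducibility together with bounded completeness, and its sup $x$ satisfies $A = \downarrow x$.

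The main obstacle is Step 3: carrying out Rudin's lemma with care, and proving sobriety via algebraicity.
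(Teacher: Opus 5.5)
Your proof is correct, provided Step~3 is read as the \emph{topological} Rudin lemma. It reaches well-filteredness from sobriety by a different route than the paper.

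The sobriety part is essentially the paper's argument. The paper gets a common finite upper bound of $d, d'$ inside $A$ from the directedness of the finite approximants of a point of ${\uparrow} d \cap {\uparrow} d' \cap A$; you get it from bounded completeness. Both work.

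The paper then proves the Hofmann--Mislove theorem, by Zorn's lemma on opens $V \supseteq \bigcap\mathcal{U}$ lying outside a Scott-open filter $\mathcal{U}$ of opens. It applies this to the filter of opens containing some $Q \in \mathcal{Q}$, which requires checking that this filter is Scott-open (by compactness) and that its intersection is $\bigcap\mathcal{Q}$ (by saturation).

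You instead run the dual Zorn argument directly, on closed subsets of $C = X \setminus U$ meeting every $Q \in \mathcal{Q}$:
\begin{itemize}
\item The intersection of a chain of such sets still meets each $Q$, by compactness of $Q$, so a minimal one $A$ exists.
\item $A$ is non-empty because $\mathcal{Q}$ is.
\item $A$ is irreducible. If $A \subseteq B_1 \cup B_2$ with $A \not\subseteq B_i$, minimality gives $Q_i \in \mathcal{Q}$ disjoint from $A \cap B_i$. Then any $Q_3 \subseteq Q_1 \cap Q_2$ in $\mathcal{Q}$ would miss $A$, a contradiction.
\end{itemize}
This short argument is what you should write out for ``Rudin's lemma''. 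The selection of minimal elements $m_Q$ is not needed. The closures $\overline{\{m_{Q'} \mid Q' \subseteq Q\}}$ are indeed closed subsets of $C$ meeting every member of $\mathcal{Q}$, but that construction alone gives no irreducibility. Steps~1--2 can be dropped.

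In the paper's application, the complement of the maximal open produced by Hofmann--Mislove is exactly a minimal closed set meeting every $Q \in \mathcal{Q}$. So the two proofs are dual forms of one Zorn argument. Yours is shorter and avoids filters of opens altogether; the paper's detour yields the more general Hofmann--Mislove theorem as a by-product.
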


\begin{corollary}
\label{cor:sem:ksatbigcap}
If $X$ is a Scott-domain, then $\K\Sat(X)$ is stable under all intersections.
\end{corollary}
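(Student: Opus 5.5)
The plan is to show two things for an arbitrary family $\mathcal{Q} \sle \K\Sat(X)$: that $\bigcap \mathcal{Q}$ is saturated, and that it is compact. I would first dispose of the degenerate case: if $\mathcal{Q} = \emptyset$ then $\bigcap \mathcal{Q} = X$, which is compact since $X = \up\bot$, and saturated.

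Saturation is immediate for non-empty $\mathcal{Q}$, since an intersection of upward-closed sets is upward-closed.

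For compactness, I would reduce to codirected families. Let $\mathcal{Q}'$ be the set of all intersections of finite non-empty subfamilies of $\mathcal{Q}$. Then $\bigcap \mathcal{Q}' = \bigcap \mathcal{Q}$, and $\mathcal{Q}'$ is codirected. So two facts suffice:
\begin{enumerate}
\item $\K\Sat(X)$ is closed under binary intersections;
\item the intersection of a codirected family in $\K\Sat(X)$ is compact.
\end{enumerate}

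Step (2) follows from Proposition~\ref{prop:sem:wf}. Suppose $\bigcap \mathcal{Q}' \sle \bigcup_i U_i$ with each $U_i$ Scott-open. The union $U = \bigcup_i U_i$ is Scott-open, so well-filteredness gives some $Q \in \mathcal{Q}'$ with $Q \sle U$. Since $Q$ is compact, finitely many $U_i$ already cover $Q$, and hence they also cover $\bigcap\mathcal{Q}'$.

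Step (1) is the main obstacle, and it is where spectrality, i.e.\ coherence, is used. Take $Q_1, Q_2 \in \K\Sat(X)$.
\begin{itemize}
\item Each $Q_j$ is saturated, so it is the intersection of the Scott-opens containing it.
\item Each such open is a union of compact-opens $\up d$. By compactness, every open $U \supseteq Q_j$ contains a finite union $K$ of compact-opens with $Q_j \sle K \sle U$. Hence $Q_j$ is the intersection of the family $\mathcal{K}_j = \{K \in \K\Open(X) \mid Q_j \sle K\}$.
\item $\mathcal{K}_j$ is codirected, because $\K\Open(X)$ is stable under finite intersections, as established in the discussion preceding Proposition~\ref{prop:sem:wf}.
\item Therefore $Q_1 \cap Q_2 = \bigcap\{K_1 \cap K_2 \mid K_j \in \mathcal{K}_j\}$, which is the intersection of a codirected family of compact-opens.
\end{itemize}
Compact-opens belong to $\K\Sat(X)$, so step (2) applies to this family and shows that $Q_1 \cap Q_2$ is compact. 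It is also saturated, so $Q_1 \cap Q_2 \in \K\Sat(X)$.

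Step (2) itself uses only well-filteredness together with compactness of a single member of the family, so it does not depend on step (1). The argument is therefore not circular.
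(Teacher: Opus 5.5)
Your proof is correct. It uses the same ingredients as the paper, but organises them differently. The shared ingredients are: each compact-saturated $Q$ is the intersection of the compact-opens containing it (by the basis of $\up d$'s plus compactness), $\K\Open(X)$ is closed under finite intersections, and well-filteredness. The paper packages these into the characterisation of Lemma~\ref{lem:proof:spectral:ksat}: $Q\in\K\Sat(X)$ iff $Q=\bigcap\{K\in\K\Open(X)\mid Q\sle K\}$. There, the `if' direction is a single application of Proposition~\ref{prop:sem:wf} to the family of all compact-opens above $Q$, which is automatically codirected because it contains $X$ and is closed under binary intersections. Arbitrary intersections then come for free: if each $Q\in\mathcal{Q}$ is an intersection of compact-opens, so is $\bigcap\mathcal{Q}$, and hence it equals the intersection of all compact-opens containing it. This needs no separate binary case and no reduction to codirected families, and the empty family is covered because $X\in\K\Open(X)$. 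You instead first prove binary closure, which is exactly the paper's Lemma~\ref{lem:proof:spectral:coh} with the converse of the characterisation inlined. You then prove closure under codirected intersections of arbitrary compact-saturated sets, and glue the two via $\mathcal{Q}'$. Your decomposition has the merit of separating roles cleanly: step (2) holds in any well-filtered space, and step (1) is the only place where coherence (a compact-open basis closed under finite intersections) enters. Note, though, that your step (1) argument already handles arbitrary non-empty families if you take finite intersections of compact-opens drawn from the various $\mathcal{K}_Q$, $Q\in\mathcal{Q}$. That family is codirected and consists of compact-opens, and its intersection is $\bigcap\mathcal{Q}$. So the general form of step (2) and the passage through $\mathcal{Q}'$ are not actually needed.
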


\begin{remark}[De Groot Duality]
\label{rem:sem:degroot}
In a spectral space $X$, the compact-saturated sets
are exactly the intersections of compact-opens,
while the opens are exactly the unions of compact-opens.
This symmetry between $\K\Sat(X)$ and $\Open(X)$
is formalised by \emph{de Groot} duality.
A nice panorama from the more general point
of view of \emph{stably compact spaces}
is given in~\cite{lawson11mscs}.%
\footnote{In a stably compact space $X$, $\K\Open(X)$ may not be a basis.
A stably compact algebraic dcpo is spectral.}
\lipicsEnd
\end{remark}

\begin{colinpar}
TODO/NOTES
\begin{itemize}
\item
NOTE: We avoided the terminology \emph{coherent},
since it is too overloaded.
\end{itemize}
\end{colinpar}

\subsection{Semantics of Formulae}
\label{sec:sem:log}

\begin{figure}[t!]
\centering
\(
\begin{array}{r @{~}c@{~} l r @{~}c@{~} l r @{~}c@{~} l}

  \I{\varphi \land \psi}\val
& \deq
& \I\varphi\val \cap \I\psi\val

& \I{\FP}\val
& \deq
& \val(\FP)

& \I{(\finmu^t \FP)\varphi}\val
& \deq
& \I{(\FP.\varphi)^{\I{t}\val}(\False)}\val
\\

  \I{\varphi \lor \psi}\val
& \deq
& \I\varphi\val \cup \I\psi\val

& \I{\form{\pair{}}}\val
& \deq
& \{\top\}

& \I{(\finnu^t \FP)\varphi}\val
& \deq
& \I{(\FP.\varphi)^{\I{t}\val}(\True)}\val
\\

  \I\True\val
& \deq
& \I\PT

& \I{\form{\modgen}\varphi}\val
& \deq
& \I{\form{\modgen}}(\I\varphi\val)

& \I{(\exists \itvar)\varphi}\val
& \deq
& \bigcup_{n \in \NN} \I{\varphi}\val[n / \itvar]
\\

  \I\False\val
& \deq
& \emptyset

& \I{\psi \realto \varphi}\val
& \deq
& \I\psi\val \realto \I\varphi\val

& \I{(\forall \itvar)\varphi}\val
& \deq
& \bigcap_{n \in \NN} \I{\varphi}\val[n / \itvar]

\end{array}
\)
\caption{Interpretation $\I\varphi\val \in \Po(\I\PT)$
of $\varphi \in \Lang(\FPEnv;\PT)$,
where $\modgen \in \{\pi_1,\pi_2,\inj_1,\inj_2,\fold\}$.%
\label{fig:sem:form}}
\end{figure}

\begin{figure}[t!]
\centering
\(
\begin{array}{r c l c l}

  \SP \in \Po(\I{\PT_i})
& \longmapsto
& \I{\form{\pi_i}}(\SP)
& \deq
& \left\{
  x \in \I{\PT_1 \times \PT_2}
  \mid
  \I{\pi_i}(x) \in \SP
  \right\}
\\

  \SP \in \Po(\I{\PT_i})
& \longmapsto
& \I{\form{\inj_i}}(\SP)
& \deq
& \left\{
  \I{\inj_i}(x) \in \I{\PT_1 + \PT_2}
  \mid
  x \in \SP
  \right\}
\\

  \SP \in \Po(\I{\PT[\rec\TV.\PT/\TV]})
& \longmapsto
& \I{\form\fold}(\SP)
& \deq
& \left\{
  x \in \I{\rec\TV.\PT}
  \mid
  \I\unfold(x) \in \SP
  \right\}
\\

  \SP \in \Po(\I\PTbis)
  \,,\,
  \SPbis \in \Po(\I\PT)
& \longmapsto
& (\SP \realto \SPbis)
& \deq
& \left\{
  f \in \I{\PTbis \arrow \PT}
  \mid
  \forall x \in \SP,~ f(x) \in \SPbis
  \right\}

\end{array}
\)
\caption{Semantic modalities, where $i = 1,2$.%
\label{fig:sem:mod}}
\end{figure}

A \emph{valuation} of a fixpoint context $\FPEnv$ is a function $\val$ taking
fixpoint variables
$\FP$ with $(\FP \colon \PT) \in \FPEnv$ to sets $\val(\FP) \in \Po(\I\PT)$,
and iteration variables $\itvar$ to natural numbers $\val(\itvar) \in \NN$.
A valuation $\val$ thus interprets iteration terms $t$
as natural numbers $\I{t}\val \in \NN$.
Figure~\ref{fig:sem:form}
defines the interpretation $\I\varphi\val \in \Po(\I\PT)$
of a formula $\varphi \in \Lang(\FPEnv; \PT)$
under a valuation $\val$ of $\FPEnv$,
where $(\FP.\varphi)^n(\False)$
and
$(\FP.\varphi)^n(\True)$
are defined in Equation~\eqref{eq:form:syntfun}, \S\ref{sec:formulae},
and where the semantic modalities $\I{\form\modgen}$ and
$(\pl) \realto (\pl)$ are defined in Figure~\ref{fig:sem:mod}.%
\opt{full}{ See Appendix~\ref{sec:proof:sem:log} for details.}%
\opt{short}{ See \cite[\S E]{rd26full} for details.}%

In view of Figure~\ref{fig:sem:mod},
it is clear that $\I{\inj_1}(\SP_1) \cap \I{\inj_2}(\SP_2)$
is always empty.
This yields the soundness of the rule $\ax{I}$ in Figure~\ref{fig:ded:mod}
(cf Remark~\ref{rem:log:ded:cons}).
The semantic realizability implication $(\pl) \realto (\pl)$ 
is clearly contravariant in its first argument and covariant
in its second argument.
Besides, the polarity constraints in Equation~\eqref{eq:form:realto}
(\S\ref{sec:formulae}),
as well as the rule $\ax{WF}$ in Figure~\ref{fig:ded:realto},
come from the following crucial consequence of well-filteredness
(Proposition~\ref{prop:sem:wf}).

\begin{restatable}{lemma}{LemWfArrow}
\label{lem:sem:log:realto}
Given $V \in \Open(\I\PTbis)$
and a codirected $\mathcal{Q} \sle \K\Sat(\I\PT)$,
if $f \in \bigcap\mathcal{Q} \realto V$,
then there is some $Q \in \mathcal{Q}$
such that $f \in Q \realto V$ already.
Moreover, given $Q \in \K\Sat(\I\PT)$ we have
$Q \realto V \in \Open(\I{\PT \arrow \PTbis})$
and
$V \realto Q \in \K\Sat(\I{\PTbis \arrow \PT})$.
\end{restatable}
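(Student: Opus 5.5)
Three claims must be shown: the well-filtered factorisation, openness of $Q \realto V$, and compact-saturation of $V \realto Q$. Throughout, recall that $\I{\PT \arrow \PTbis}$ is the Scott domain of Scott-continuous functions with the pointwise order. Its finite elements are finite sups of step functions $\step{d}{e}$ with $d,e$ finite. Here $\step{d}{e}(x) = e$ if $d \leq x$ and $\bot$ otherwise.

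\emph{First claim.} Let $f \in \bigcap\mathcal{Q} \realto V$. Then $\bigcap\mathcal{Q} \sle f^{-1}(V)$, and $f^{-1}(V)$ is Scott-open since $f$ is continuous. By Proposition~\ref{prop:sem:wf}, some $Q \in \mathcal{Q}$ satisfies $Q \sle f^{-1}(V)$, that is $f \in Q \realto V$.

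\emph{Openness of $Q \realto V$.} Saturation is immediate: if $f \leq g$ pointwise and $f(x) \in V$, then $g(x) \in V$ because $V$ is saturated. For inaccessibility, let $\bigvee D \in Q \realto V$ with $D$ directed. For each $f \in D$ put $W_f \deq f^{-1}(V)$, which is open. Since sups in the function space are pointwise, $Q \sle (\bigvee D)^{-1}(V) = \bigcup_{f \in D} W_f$: for $x \in Q$, $\bigvee_{f\in D} f(x) \in V$, and $V$ open gives some $f$ with $f(x) \in V$. The family $(W_f)_{f \in D}$ is directed because $D$ is directed and $f \mapsto W_f$ is monotone. Compactness of $Q$ therefore yields a single $f \in D$ with $Q \sle W_f$, i.e.\ $f \in Q \realto V$.

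\emph{Compact-saturation of $V \realto Q$.} Saturation again follows from saturation of $Q$. Compactness is the main obstacle. I would reduce to the spectral structure: write $V = \bigcup_{i} \up d_i$ with $d_i$ finite, and $Q = \bigcap_j K_j$ with $K_j$ compact-open (Remark~\ref{rem:sem:degroot}). By monotonicity of continuous functions, $V \realto Q = \bigcap_{i,j} (\up d_i \realto K_j)$. Corollary~\ref{cor:sem:ksatbigcap} holds in the Scott domain $\I{\PT\arrow\PTbis}$, so it suffices to show that each $\up d \realto K$ is compact, with $d$ finite and $K$ compact-open. A continuous $f$ lies in this set iff $f(d) \in K$. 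Write $K = \up e_1 \cup \dots \cup \up e_n$ with the $e_k$ finite. Then $f(d) \in K$ iff $\step{d}{e_k} \leq f$ for some $k$. Hence $\up d \realto K = \bigcup_k \up \step{d}{e_k}$, a finite union of principal filters of finite elements, which is compact.

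A small care point: the case $K$ empty gives the empty set, which is fine. I would also double-check that $K$ compact-open is a finite union of $\up e$'s; this holds since the $\up e$'s form a basis and $K$ is compact.
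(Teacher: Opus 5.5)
Your proof is correct. The first claim is exactly the paper's argument. Your treatment of $V \realto Q$ also matches the paper's: decompose $V$ into $\up d$'s and $Q$ into compact-opens $K \supseteq Q$, show each $\up d \realto K = \bigcup_k \up(d \step e_k)$ is compact-open, and conclude with Corollary~\ref{cor:sem:ksatbigcap}. One small point there: the identity $V \realto Q = \bigcap_{i,j}(\up d_i \realto K_j)$ is purely set-theoretic. Monotonicity is only needed in the next step, to reduce $f \in \up d \realto K$ to $f(d) \in K$.

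You genuinely differ on the openness of $Q \realto V$.
\begin{itemize}
\item The paper derives it from the first claim. It writes $Q$ as the codirected intersection of the compact-opens $K \supseteq Q$, and well-filteredness gives $Q \realto V = \bigcup_{K \supseteq Q}(K \realto V)$. Then, for $K = \up d_1 \cup \dots \cup \up d_n$, it computes $K \realto V = \bigcap_i \bigcup\{\up(d_i \step e) \mid e \in V \text{ finite}\}$, an explicit union of finite intersections of basic compact-opens.
\item You instead check Scott-openness directly. Saturation comes from saturation of $V$. For inaccessibility, you cover $Q$ by the directed family $(f^{-1}(V))_{f \in D}$, using that directed sups in the function space are pointwise, and then apply compactness of $Q$.
\end{itemize}

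Your route is more elementary and more general. It uses only compactness of $Q$ (neither its saturation, nor well-filteredness, nor algebraicity), so it works verbatim in any dcpo function space with the pointwise order. The paper's route costs more machinery, but it reuses the first claim and gives a concrete step-function description of $Q \realto V$. That makes the openness and compact-saturation parts visibly dual, in line with the de Groot duality theme of the paper.
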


\begin{proposition}[restate = PropFormdeGroot, name = ]
\label{prop:sem:log:degroot}
If $\varphi \in \Lang^+(\PT)$
then
$\I\varphi\val \in \Open(\I\PT)$.
If $\varphi \in \Lang^-(\PT)$
then
$\I\varphi\val \in \K\Sat(\I\PT)$.
In particular,
if $\varphi \in \Lang^\pm(\PT)$
then
$\I\varphi\val \in \K\Open(\I\PT)$.
\end{proposition}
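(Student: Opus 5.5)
We argue by a simultaneous induction on the formation of $\varphi \in \Lang^\sign(\FPEnv;\PT)$, but for a strengthened statement that handles free fixpoint and iteration variables. Say $\val$ is \emph{adequate} for $\sign$ when $\val(\FP) \in \Open(\I\PTbis)$ for all $(\FP \colon \PTbis) \in \FPEnv$ if $\sign = +$, $\val(\FP) \in \K\Sat(\I\PTbis)$ if $\sign = -$, and $\val(\FP) \in \K\Open(\I\PTbis)$ if $\sign = \pm$. The claim is that for adequate $\val$, $\I\varphi\val$ lies in $\Open$, $\K\Sat$ or $\K\Open$ respectively; since $\Lang^\pm = \Lang^+ \cap \Lang^-$ and $\K\Open = \Open \cap \K\Sat$, the third case follows from the first two. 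We also need two auxiliary monotonicity facts, proved by induction along Figure~\ref{fig:posneg}. If $\itvar \Pos \varphi$ then $n \mapsto \I\varphi\val[n/\itvar]$ is monotone (increasing) in $n$, and if $\itvar \Neg \varphi$ it is antitone. For bounded iterations, this uses that $(\FP.\varphi)^n(\False)$ increases in $n$ and $(\FP.\varphi)^n(\True)$ decreases in $n$, by monotonicity of $\I\varphi$ in $\val(\FP)$. That monotonicity in turn holds because all connectives, modalities and $\realto$ (in its second argument, with no free $\FP$ under $\realto$) are monotone.

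The base cases are direct. $\True$ gives $\I\PT = \up\bot$, which is compact-open, and $\False$ gives $\emptyset$. $\form{\pair{}}$ gives $\{\top\} = \up\top$, and fixpoint variables are handled by adequacy. For $\land$ and $\lor$ we use that $\K\Open(X)$ is closed under finite unions and intersections (Theorem~\ref{thm:spectral}). $\Open$ is closed under these as well. $\K\Sat$ is closed under finite unions trivially and under intersections by Corollary~\ref{cor:sem:ksatbigcap}.

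For the modalities $\form{\pi_i}$ and $\form\fold$, the semantics is the inverse image under the continuous maps $\I{\pi_i}$ and $\I\unfold$. Openness is therefore preserved. For compactness, we use that $\I{\PT_1\times\PT_2}$ is the product, so $\I{\pi_i}^{-1}(Q) = Q \times \I{\PT_{3-i}}$ is compact-saturated. The $\fold$ case is even simpler: $\I\unfold$ is an isomorphism, with inverse $\I\fold$. For $\form{\inj_i}$, the image of a set under the embedding $x \mapsto \I{\inj_i}(x)$ into the lifted disjoint union $(\I{\PT_1}\amalg\I{\PT_2})_\bot$ preserves both openness and compact-saturatedness. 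The reason is that the Scott topology on the non-bottom summand is the subspace topology, and any $\up$-set avoiding $\bot$ stays saturated.

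The bounded iterations $(\finmu^t\FP)\varphi$ and $(\finnu^t\FP)\varphi$ are unfolded as $(\FP.\varphi)^{\I t\val}(\False)$ and $(\FP.\varphi)^{\I t\val}(\True)$. We then run an inner induction on $n$, using the outer induction hypothesis on $\varphi$ at a valuation extended by an adequate set for $\FP$; this is legitimate because $\False$ and $\True$ are compact-open.

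The quantifiers and $\realto$ are the substantive cases. For $(\exists\itvar)\varphi$ with $\varphi$ positive, the interpretation is a union of opens, hence open; dually, $(\forall\itvar)\varphi$ is an intersection of compact-saturated sets, hence in $\K\Sat$ by Corollary~\ref{cor:sem:ksatbigcap}. For $\psi\realto\varphi$ with $\psi$ negative and $\varphi$ positive, Lemma~\ref{lem:sem:log:realto} gives $Q \realto V \in \Open$. In the dual case, with $\psi$ positive and $\varphi$ negative, the same lemma gives $V \realto Q \in \K\Sat$. The neutral case of $\realto$ is covered by combining the two. Finally, since closed formulae have no free fixpoint variables, the empty assignment is adequate, and we obtain the statement for any $\val$.

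The main obstacle is checking that the induction is well-founded as set up, in particular that the polarity side-conditions $\itvar\Pos\varphi$ and $\itvar\Neg\varphi$ in Figure~\ref{fig:form:pol} are never actually needed for the topological claim. Unions of opens and intersections of compact-saturated sets work unconditionally, so the monotonicity facts are not required here; they are needed later for soundness of $\ax{WF}$ and $\ax{CC}$. The remaining care point is the $\form{\inj_i}$ case, where we must verify compactness of images inside the lifted sum.
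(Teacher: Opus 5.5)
Your proof is correct and follows essentially the paper's route: you strengthen the statement to valuations sending fixpoint variables to opens (resp.\ compact-saturated sets) and induct on the formula. The substantive cases rest on the same facts, namely closure of $\Open$ and $\K\Sat$ under finite unions and (all) intersections, plus Lemma~\ref{lem:sem:log:realto} for $\realto$. The differences are only local: you handle the modalities by direct topological arguments (inverse images under continuous maps, images under the embedding $\I{\inj_i}$) where the paper goes through basic compact-opens $\up d$ and non-empty intersections, and you spell out the bounded-iteration cases, which the paper's written proof omits.
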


Our polarised logical languages $\Lang^+$ and $\Lang^-$
are motivated by the topological classification in
Proposition~\ref{prop:sem:log:degroot}.
In view of Figure~\ref{fig:form:pol},
the following Lemma~\ref{lem:sem:log:posneg} implies
that the unions and intersections interpreting
$(\exists \itvar)\varphi$ and $(\forall \itvar)\varphi$
are directed and codirected, respectively.
It also yields the soundness of the 
rules $\ax{\exists M}$ and $\ax{\forall M}$
in Figure~\ref{fig:ded:quant}.

\begin{restatable}{lemma}{LemPosNeg}
\label{lem:sem:log:posneg}
If $\itvar \Pos \varphi$
(resp.\ $\itvar \Neg \varphi$)
then the function
$n \mapsto \I\varphi\val[n/\itvar]$
is monotone (resp.\ antimonotone),
that is, $n \leq m$ (resp.\ $m \leq n$)
implies
$\I\varphi\val[n / \itvar] \sle \I\varphi\val[m / \itvar]$.
\end{restatable}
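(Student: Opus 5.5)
We prove both claims simultaneously by induction on the derivation of $\itvar \Pos \varphi$ and $\itvar \Neg \varphi$ (Figure~\ref{fig:posneg}). The statement has to be made for \emph{all} valuations $\val$ at once, since the induction hypothesis will be needed at modified valuations (e.g.\ $\val[k/\itvarbis]$ under a quantifier, or with $\FP$ reassigned under an iteration). We also need the following auxiliary fact, which is proved by a separate straightforward induction on formulae: every formula denotes a monotone function of the sets assigned to its fixpoint variables. This holds because all the connectives are monotone in each argument. The only one that is not monotone in all arguments is $\realto$, which is contravariant in its first argument; however, $\psi \realto \varphi$ has no free fixpoint variables, so this causes no problem.

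The base case $\itvar \notin \FV(\varphi)$ is immediate, since then $\I\varphi\val[n/\itvar]$ does not depend on $n$. For $\land$, $\lor$, the modalities $\form\modgen$ and the quantifiers, the induction hypothesis transfers directly:
\begin{itemize}
\item intersection, union and the set operations $\I{\form\modgen}$ of Figure~\ref{fig:sem:mod} are monotone;
\item for $(\exists\itvarbis)\varphi$, we apply the induction hypothesis at $\val[k/\itvarbis]$ for each $k$, and then take unions. For $(\forall \itvarbis)$ we do the same with intersections. When $\itvarbis = \itvar$, the variable is not free and we are in the base case.
\end{itemize}
For $\psi \realto \varphi$ with $\itvar \Neg \psi$ and $\itvar \Pos \varphi$, we use that the semantic $\realto$ is antitone in its first argument and monotone in its second. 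Hence $n \le m$ gives $\I\psi\val[m/\itvar] \sle \I\psi\val[n/\itvar]$ and $\I\varphi\val[n/\itvar]\sle\I\varphi\val[m/\itvar]$, so $\I{\psi\realto\varphi}\val[n/\itvar] \sle \I{\psi\realto\varphi}\val[m/\itvar]$. The $\Neg$ cases are dual.

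The main work is in the iteration cases. Take $(\finmu^t\FP)\varphi$ with $\itvar \Pos \varphi$, and write $F_n(S) \deq \I\varphi\val[n/\itvar][S/\FP]$. This is monotone in $S$ by the auxiliary fact, and monotone in $n$ by the induction hypothesis, applied at the valuation $\val[S/\FP]$. The denotation is $F_n^{k}(\emptyset)$ with $k = \I t\val[n/\itvar]$, which is monotone in $n$ since $t$ is built from $\itvar$, $0$ and $+1$. We then show:
\begin{itemize}
\item $F_n^j(\emptyset) \sle F_m^j(\emptyset)$ by induction on $j$, using monotonicity of $F$ in both arguments;
\item $j \le j'$ implies $F^j_m(\emptyset)\sle F^{j'}_m(\emptyset)$, since $\emptyset \sle F_m(\emptyset)$ and $F_m$ is monotone.
\end{itemize}
Together these give $F_n^{k}(\emptyset)\sle F_m^{k'}(\emptyset)$ for $k\le k'$. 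For the interpretation we rely on the fact that $\I{(\FP.\varphi)^k(\psi)}\val$ unfolds to $F^k(\I\psi\val)$, which is a routine substitution lemma.

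For $(\finnu^t\FP)\varphi$ with $\itvar \Pos\varphi$ and $\itvar\notin\FV(t)$, the iteration count $k$ is fixed. The inclusion $G_n^k(\text{all})\sle G_m^k(\text{all})$ then follows from the same induction on $k$. It must not let $k$ vary, because the sequence $G^j(\text{all})$ decreases in $j$; this is exactly why the side condition $\itvar\notin\FV(t)$ is needed. The two $\Neg$ iteration cases are dual:
\begin{itemize}
\item for $\finnu^t$, increasing $t$ shrinks $G^k(\text{all})$, and the body is antitone in $n$, so both effects push in the same direction;
\item for $\finmu^t$, $t$ is fixed and only the body's antitonicity matters.
\end{itemize}

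I expect these iteration cases to be the only delicate point. Two things need care there: stating the induction hypothesis for arbitrary valuations of $\FP$, and checking the substitution/unfolding lemma for the semantics of $(\FP.\varphi)^k$.
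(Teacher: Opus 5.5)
Your proposal is correct and follows essentially the same route as the paper. Both arguments use a mutual induction on the derivations of $\itvar \Pos \varphi$ and $\itvar \Neg \varphi$, with monotonicity of the semantic connectives, the contra/covariance of $\realto$, and the auxiliary monotonicity of formulae in their fixpoint variables; for bounded iteration, both prove pointwise inclusion at a fixed iteration count by induction, then combine it, when $\itvar$ may occur in $t$, with the fact that iterates from $\emptyset$ increase while iterates from the whole domain decrease. Your remarks on quantifying over all valuations and on the unfolding of $(\FP.\varphi)^k$ simply make explicit points the paper leaves implicit.
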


Formulae satisfy an important Scott-(co)continuity property
w.r.t.\ their fixpoint variables.
Given $\varphi \in \Lang(\FPEnv;\PT)$ and a valuation $\val$ of $\FPEnv$,
for each $(\FP \colon \PTbis) \in \FPEnv$
we have a function
\[
  \I{\FP.\varphi}\val \colon
  \Po(\I{\PTbis})
  \longto
  \Po(\I\PT)
  ,\quad
  \SP
  \longmapsto
  \I{\varphi}\val[\SP/\FP]
\]

\noindent
Note that in the case of
$\varphi \in \Lang(\FPEnv,\FP \colon \PT;\PT)$,
for all $\psi \in \Lang(\FPEnv;\PT)$ and all $n \in \NN$,
we have
$(\I{\FP.\varphi}\val)^n(\I\psi\val) = \I{(\FP.\varphi)^n(\psi)}\val$,
where
$(\FP.\varphi)^n(\psi)$ is the formula in Equation~\eqref{eq:form:syntfun}.

\begin{restatable}{proposition}{PropFormScott}
\label{prop:sem:log:scott}
Let $\val$ be a valuation of $\FPEnv$, and let $(\FP \colon \PTbis) \in \FPEnv$.
\begin{enumerate}[(1)]
\item
\label{item:sem:log:scott:mon}
If $\varphi \in \Lang^\sign(\FPEnv;\PT)$,
then $\I{\FP.\varphi}\val$ is monotone.

\item
\label{item:sem:log:scott:pos}
If $\varphi \in \Lang^+(\FPEnv;\PT)$,
then $\I{\FP.\varphi}\val$ is Scott-continuous
(i.e.\ preserves directed unions).

\item
\label{item:sem:log:scott:neg}
If $\varphi \in \Lang^-(\FPEnv;\PT)$,
then $\I{\FP.\varphi}\val$ is Scott-cocontinuous
(i.e.\ preserves codirected intersections).
\end{enumerate}
\end{restatable}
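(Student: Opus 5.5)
We argue by induction on the derivation of $\varphi \in \Lang^\sign(\FPEnv;\PT)$, proving all three items at once, with the valuation $\val$ (and hence the iteration variables) universally quantified in the induction hypothesis. Item~(\ref{item:sem:log:scott:mon}) is proved for every polarity; items~(\ref{item:sem:log:scott:pos}) and~(\ref{item:sem:log:scott:neg}) only for the matching polarity, and they use~(\ref{item:sem:log:scott:mon}) when the argument needs directedness.

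The base cases are immediate. The functions for $\True$, $\False$ and $\form{\pair{}}$ are constant. The function for $\FP$ is the identity, and the function for another variable $\FPbis$ is constant. The weakening rule also gives a constant dependence on the new variable. For $\land$ and $\lor$ we combine the hypotheses as follows. Monotonicity is clear. Binary unions commute with directed unions and with intersections. The only mildly subtle points concern $\cap$ with directed unions and $\cup$ with codirected intersections. For the first, if $x \in \bigcup_i A_i \cap \bigcup_j B_j$, directedness gives a common index. For the second, if $x \notin A_i \cup B_i$ and $x\notin A_j\cup B_j$, codirectedness gives an index $k$ below both, and monotonicity then gives $x \notin A_k \cup B_k$. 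For the modalities $\form\modgen$, the semantic operators of Figure~\ref{fig:sem:mod} are inverse images (for $\pi_i$, $\fold$) or direct images along the injective map $\I{\inj_i}$. Both preserve arbitrary unions and intersections, including the codirected ones. The case $\psi \realto \varphi$ is trivial, since it has no free fixpoint variables, so the function is constant.

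For the bounded iterations $(\finmu^t\FPbis)\varphi$ and $(\finnu^t\FPbis)\varphi$, we set $n = \I t\val$ and unfold $\I{(\FPbis.\varphi)^n(\False)}\val$ and $\I{(\FPbis.\varphi)^n(\True)}\val$. We then show by an inner induction on $n$ that $\SP \mapsto (\I{\FPbis.\varphi}\val[\SP/\FP])^n(\emptyset)$ is monotone, resp.\ Scott-(co)continuous, in $\SP$. The map is a composite of functions that are jointly (co)continuous in $\SP$ and in the argument for $\FPbis$. Joint (co)continuity in two arguments follows from separate (co)continuity together with monotonicity, by the usual diagonal argument for directed families. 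This is why monotonicity is proved alongside the other items. To use the hypothesis on the second argument, we apply the induction hypothesis to $\varphi$ with respect to $\FPbis$, which is legitimate since the valuation is arbitrary.

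The main obstacle is the quantifier cases. For $(\exists\itvar)\varphi$ with $\varphi$ positive, the union over $n$ of continuous functions is continuous, since unions commute with unions, and monotonicity is clear. For $(\forall\itvar)\varphi$ with $\varphi$ negative, intersections commute with intersections. These are the only quantifier cases the formation rules allow with the relevant polarity: a neutral formula contains no quantifier at the top level, and item~(\ref{item:sem:log:scott:mon}) for quantified formulae is immediate. So no mixed interchange, such as a union over $n$ of a codirected intersection, is ever required. That interchange would fail in general, and this is where the polarity discipline of Figure~\ref{fig:form:pol} is essential. I expect the only real care needed is in stating the joint-(co)continuity lemma for the iteration case and in checking that neutral formulae satisfy both items~(\ref{item:sem:log:scott:pos}) and~(\ref{item:sem:log:scott:neg}). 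The latter holds because $\Lang^\pm = \Lang^+\cap\Lang^-$, so the induction covers both.
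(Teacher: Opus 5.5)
Your proof is correct and follows essentially the paper's route. The paper instead states joint (co)continuity in all fixpoint variables at once, and handles bounded iteration via a lemma that separate and joint Scott-(co)continuity coincide together with an iteration lemma, which is exactly your diagonal argument with the valuation universally quantified. Two small slips do not affect validity: in the case of $\cup$ with codirected intersections the assumption should read $x \notin A_i$ and $x \notin B_j$, and direct image along $\I{\inj_i}$ preserves only \emph{non-empty} intersections, which suffices since codirected families are non-empty.
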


\begin{example}
\label{ex:sem:log:fixpoints}
Let $\varphi \in \Lang^+(\FPEnv,\FP\colon\PT;\PT)$
and $\psi \in \Lang^-(\FPEnv,\FP\colon\PT;\PT)$.
It follows from Proposition~\ref{prop:sem:log:scott}
that the least fixpoint of $\I{\FP.\varphi}\val$
and the greatest fixpoint of $\I{\FP.\psi}\val$
are
\[
\begin{array}{l !{\quad\text{and}\quad} l}
  \I{(\exists \itvar)(\finmu^\itvar \FP)\varphi}\val
  =
  \bigcup_{n \in \NN} (\I{\FP.\varphi}\val)^n(\False)

& \I{(\forall \itvarbis)(\finnu^\itvarbis \FP)\psi}\val
  =
  \bigcap_{n \in \NN} (\I{\FP.\psi}\val)^n(\True)
\end{array}
\]

On streams, we get that
$\Diam\varphi$ defines an \emph{open} set,
while
$\Box\psi$ defines a \emph{compact-saturated} set 
(with $\varphi$ positive and $\psi$ negative).
For instance, $\I{\Box\form\hd\form\true} = \{\I\true^\omega\}$
is compact-saturated in
$\I{\Stream\Bool} = \I\Bool^\omega$.
This contrasts with the case of $\omega$-words,
for which $\{\true^\omega\}$ is a \emph{closed} subset
of $\{\true, \false\}^\omega$ for the product topology.%
\footnote{See~\cite{rs24jfla} for a comparison of the Scott topology on streams
with the product topology on $\omega$-words.}
Besides, $\omega$-words over a finite alphabet form a Stone (Boolean) space,
i.e.\ a spectral space in which the compact-saturated sets coincide with 
the closed ones.
But in the case of the Scott domain of streams,
proper compact-saturated sets (such as $\{\I\true^\omega\}$)
are never closed.
We think this makes the case of de Groot duality (Remark~\ref{rem:sem:degroot})
for temporal properties on streams.

In the context of Example~\ref{ex:form:fix},
the significance of well-filteredness (Proposition~\ref{prop:sem:wf})
in Lemma~\ref{lem:sem:log:realto}
is that a Scott-continuous function
$f \colon \I{\Tree\Bool \arrow \Stream\Nat}$,
say $\I{\cnt \comp \bft}$,
satisfies the formula in Equation~\eqref{eq:form:spec}
\emph{if, and only if,} there is some $m \in \NN$
such that for each totally defined $t \in \I{\Tree\Bool}$,
the length-$m$ prefix of $f(t)$ contains a totally defined
number $\geq n$
provided that in the depth-$m$ truncation of $t$,
the children of the root have a maximal path consisting of $\true$'s
and one consisting of $\false$'s.
%
%
Hence, only a depth-$m$ exploration of $f(t)$ is needed to establish the property,
where $m \in \NN$ depends on $f$ but \emph{not} on $t$.
\lipicsEnd
\end{example}

\colin{BEWARE: One may define $P \sle X$ to be a safety property
if there is a set of finite elements $P_{\text{bad}} \sle X$
such that $P$ is the set of all $x \in X$ with no finite $d \leq x$ in $P_{\text{bad}}$.
But in this case all safety properties are Scott-closed.}

Proposition~\ref{prop:sem:log:scott} yields 
the rules $\ax{\fingen/\exists}$ and $\ax{\forall/\fingen}$
in Figure~\ref{fig:ded:quant}.
The following Proposition~\ref{prop:sem:char:fin}
is crucial for the rules
$\ax{{\realto}/\lor}$ and $\ax{CC}$ in Figure~\ref{fig:ded:realto},
as well as for completeness.

\begin{restatable}{proposition}{PropTopCharFin}
\label{prop:sem:char:fin}
Given $\delta \in \Lang^\land(\PT)$,
if $\I\delta \neq \emptyset$ then
$\I\delta = \up d$ for some finite $d \in \I\PT$.
Conversely, if $d \in \I\PT$ is finite, then $\up d = \I\delta$ for some
$\delta \in \Lang^{\land}(\PT)$.
\end{restatable}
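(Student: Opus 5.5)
Both directions go by structural induction on the pure type $\PT$. Recursive types are not well-founded, so the induction is really on the formula: each $\delta \in \Lang^\land$ is a finite syntactic object, and each finite $d$ can be reached in finitely many unfoldings. In both directions we use the explicit descriptions of $\I{\PT \times \PTbis}$, $\I{\PT+\PTbis}$, $\I{\PTbis \arrow \PT}$ and $\I{\rec\TV.\PT}$ together with their finite elements.

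\textbf{First direction.} We show that if $\delta \in \Lang^\land(\PT)$ and $\I\delta \neq \emptyset$, then $\I\delta = \up d$ for some finite $d$. This is an induction on $\delta$, in fact proving the stronger statement that $\I\delta$ is either empty or a principal filter of a finite element.
\begin{itemize}
\item The cases $\True$ (giving $\up\bot$), $\False$ (giving $\emptyset$) and $\form{\pair{}}$ (giving $\up\top$) are immediate.
\item For a conjunction, use bounded-completeness as recalled in \S\ref{sec:spectral}: $\up d \cap \up d'$ is either empty or equal to $\up(d \vee d')$, and $d \vee d'$ is finite.
\item For $\form{\pi_i}$, $\form{\inj_i}$ and $\form\fold$, check on the concrete domains that the semantic modality maps $\up d$ to $\up d'$ for an explicit finite $d'$: respectively $(d,\bot)$ or $(\bot,d)$; $\I{\inj_i}(d)$; and the image of $d$ under the isomorphism $\I\fold$. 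Each of these preserves finiteness.
\item For $\psi \realto \varphi$ with $\psi,\varphi \in \Lang^\land$:
 \begin{itemize}
 \item if $\I\psi = \emptyset$, the set is $\up\bot$;
 \item if $\I\varphi = \emptyset$ while $\I\psi = \up e$, the set is empty;
 \item otherwise $\I\psi = \up e$ and $\I\varphi = \up d$, and $\up e \realto \up d = \up (e \searrow d)$, where $e \searrow d$ is the step function. Step functions are finite in the function space of Scott domains.
 \end{itemize}
\end{itemize}

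\textbf{Second direction.} Given a finite $d \in \I\PT$, we build $\delta$ with $\I\delta = \up d$. We use the standard fact that in the Scott domains interpreting $\FPC$ types, finite elements are generated inductively:
\begin{itemize}
\item $\bot$ and $\top$ in $\I\Unit$;
\item pairs of finite elements;
\item $\I{\inj_i}$ of finite elements (or $\bot$);
\item $\I\fold$ of finite elements;
\item finite sups $\bigvee_{i}(e_i \searrow d_i)$ of consistent families of step functions with $e_i, d_i$ finite.
\end{itemize}
This characterization follows from the construction of recursive types as bilimits, where each finite element lives in a finite approximant; we take it from the cited references (\cite{ac98book,aj95chapter}). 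Given it, we set
\begin{itemize}
\item $\delta = \True$ for $\bot$ and $\form{\pair{}}$ for $\top$;
\item $\form{\pi_1}\delta_1 \land \form{\pi_2}\delta_2$ for pairs;
\item $\form{\inj_i}\delta'$ for injections;
\item $\form\fold\delta'$ for folds;
\item $\bigwedge_i(\epsilon_i \realto \delta_i)$ for step-function sups.
\end{itemize}
Correctness in the function case is the computation $\bigcap_i(\up e_i \realto \up d_i) = \up\bigvee_i(e_i\searrow d_i)$. It holds because $f \geq e\searrow d$ iff $f(e) \geq d$, and by monotonicity $f(e) \geq d$ iff $f(\up e) \sle \up d$.

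\textbf{Main obstacle.} The delicate step is the inductive description of finite elements at recursive and function types, since induction on $\PT$ does not go through $\rec\TV.\PT$. We handle it by inducting on a rank of the finite element $d$: the least $k$ such that $d$ lies in the image of the $k$-th approximant embedding of the bilimit. Unfolding then strictly decreases this rank, or at least keeps the element inside a finite-depth approximant where plain structural induction applies. Everything else consists of routine checks.
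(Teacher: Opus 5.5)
Your proposal is correct and is essentially the paper's proof. The first direction is an induction on $\delta$, using bounded-completeness for conjunctions and the fact that the realizability implication between the principal filters of $e$ and $d$ is the principal filter of the step function from $e$ to $d$; your ``empty or principal'' strengthening only packages the paper's case split, including the trivial case of falsity. The converse is an induction along the inductive description of finite elements (pairs, injections, folds, finite consistent sups of step functions), which the paper also does not reprove at recursive types but takes from the bilimit construction, citing \cite{aj95chapter}; your rank argument is that same justification made slightly more explicit.
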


\begin{theorem}[restate = ThmLogSound, name = Soundness of Deduction]
\label{thm:sem:log:sound}
\hfill
\begin{enumerate}[(1)]
\item
\label{item:sem:log:sound:cons}
For all $\delta \in \Lang^\land(\PT)$,
if $\C(\delta)$ is derivable, then $\I\delta \neq \emptyset$.

\item
\label{item:sem:log:sound:ded}
For all $\varphi, \psi \in \Lang^\sign(\FPEnv;\PT)$,
if $\psi \thesis \varphi$ is derivable,
then $\I\psi\val \sle \I\varphi\val$.
\end{enumerate}
\end{theorem}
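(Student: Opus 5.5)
The two items are proved together by simultaneous induction on derivations, since $\C$ and $\thesis$ refer to each other. The rule $\ax{C}$ uses $\C$, and the weakening clause for $\C$ uses $\thesis$. Both statements are taken for arbitrary valuations $\val$.

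\emph{Item (\ref{item:sem:log:sound:cons}).} We go by cases on the last rule deriving $\C(\delta)$.
\begin{itemize}
\item $\C(\True)$ and $\C(\form{\pair{}})$ hold because $\top \in \I{\form{\pair{}}}$.
\item For $\form\fold$ and $\form{\inj_i}$, nonemptiness transfers through $\I\fold$, which is a bijection, and through $\I{\inj_i}$.
\item For $\form{\pi_1}\varphi\land\form{\pi_2}\psi$, we pair two witnesses.
\item For the weakening clause, we use item (\ref{item:sem:log:sound:ded}) on the premise $\psi\thesis\varphi$, since $\I\psi \sle \I\varphi$ and $\I\psi \neq \emptyset$.
\end{itemize}
The main case is the arrow clause. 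By induction and Proposition~\ref{prop:sem:char:fin}, we have $\I{\psi_i}=\up d_i$ and $\I{\varphi_i}=\up e_i$ with $d_i,e_i$ finite. We want a Scott-continuous $f$ with $f(x)\geq e_i$ whenever $x\geq d_i$. We take the step-function join $f(x)=\bigvee\{e_i \mid d_i\le x\}$. This join exists: the set $J=\{i\mid d_i\le x\}$ has $\bigwedge_{j\in J}\psi_j$ satisfied by $x$, so that conjunction does not derive $\False$, by soundness of (\ref{item:sem:log:sound:ded}). The premise then yields $\C(\bigwedge_J\varphi_j)$, hence $\bigcap_J \up e_j\neq\emptyset$ by induction, and bounded completeness gives the join. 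Monotonicity is clear, and continuity follows because each $d_i$ is finite. For $J=\emptyset$ the join is $\bot$.

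\emph{Item (\ref{item:sem:log:sound:ded}).} We go rule by rule; most cases are direct set-theoretic checks from Figures~\ref{fig:sem:form} and~\ref{fig:sem:mod}.
\begin{itemize}
\item Propositional rules hold by lattice identities.
\item Modal rules hold since $\I{\form{\modgen}}$ preserves unions and intersections, and commutes with $\bigcup_n$ and $\bigcap_n$.
\item $\ax{I}$ holds because the injections have disjoint images.
\item The rule $\True\thesis\form\modgen\True$ for $\modgen\notin\{\inj_i\}$ holds because $\fold$ is an isomorphism and $\pi_i$ is total.
\item Quantifier introduction and elimination, and Frobenius, are standard: substitution lemma $\I{\varphi[t/\itvar]}\val=\I\varphi\val[\I t\val/\itvar]$ and freshness.
\item $\ax{\exists M}$ and $\ax{\forall M}$ use Lemma~\ref{lem:sem:log:posneg}: the family is monotone in each variable, so $\varphi\val[n/\itvar][m/\itvarbis]\sle\varphi\val[\max/\itvar,\itvarbis]$. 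We need to check that the formation side-conditions give $\itvar\Pos\varphi$ and $\itvarbis\Pos\varphi$, or dually.
\item $\ax{\fingen/\exists}$ and $\ax{\forall/\fingen}$ follow from Proposition~\ref{prop:sem:log:scott}, iterated $\I t\val$ times. The union is directed by Lemma~\ref{lem:sem:log:posneg}, and in the $\ax{\fingen/\exists}$ case only $\sle$ is needed, which also follows by monotonicity.
\item Bounded-iteration rules unfold the definition with $\I{t+1}\val=\I t\val+1$, using induction and monotonicity (Proposition~\ref{prop:sem:log:scott}(\ref{item:sem:log:scott:mon})).
\end{itemize}

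For the $\realto$ rules:
\begin{itemize}
\item Variance, $\ax{{\realto}/\land}$, $\ax{\lor/{\realto}}$, $\ax{\exists/{\realto}}$ and $\ax{{\realto}/\forall}$ are elementary.
\item $\ax{WF}$ is exactly Lemma~\ref{lem:sem:log:realto}. Here $\I\varphi\val$ is open and $\{\I\psi\val[n/\itvar]\}_n$ is codirected compact-saturated, by Proposition~\ref{prop:sem:log:degroot} and Lemma~\ref{lem:sem:log:posneg}.
\item $\ax{CC}$ uses $\I\psi=\up d$ or $\emptyset$, by Lemma~\ref{lem:ded:omega} and Lemma~\ref{lem:ded:omegalorland}. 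These give $\I\psi$ as a finite union of $\up d_k$. We then have $f(d_k)\in\bigcup_n \I\varphi[n]$, an increasing union, so we pick the maximal $n$ over $k$.
\item $\ax{{\realto}/\lor}$ uses $\I\delta=\up d$ by Proposition~\ref{prop:sem:char:fin}, and tests $f(d)$; the empty case is trivial.
\item $\ax{C}$ uses item (\ref{item:sem:log:sound:cons}): $\I\delta\neq\emptyset$ forces $\I{\delta\realto\False}=\emptyset$.
\end{itemize}

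The expected obstacle is the arrow clause of $\C$. It requires the mutual induction and careful existence of joins, and it uses the fact that the finitely many relevant $J$ cover all $x$, including $\bot$ when no $d_i\le\bot$.
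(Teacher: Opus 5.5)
Your plan follows the paper's: a simultaneous induction on $\C$ and $\thesis$, rule by rule, with the same semantic inputs. Two of your steps differ slightly and are both fine. For the arrow clause of $\C$ you give an explicit witness $x \mapsto \bigvee\{e_i \mid d_i \leq x\}$, which amounts to re-proving the criterion for existence of sups of step functions that the paper cites. For $\ax{WF}$ you apply Lemma~\ref{lem:sem:log:realto} directly to the codirected family $\{\I\psi\val[n/\itvar]\}_n$.

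\textbf{The $\ax{CC}$ case fails as written.} Lemmas~\ref{lem:ded:omega} and~\ref{lem:ded:omegalorland} only give \emph{derivable} equivalences $\psi \thesisiff \bigvee_k \delta_k$. To read off $\I\psi\val = \bigcup_k \I{\delta_k}$ you need soundness of those derivations, which is the statement being proved. Since $\ax{CC}$ has no premises, they are not subderivations reached by your induction hypothesis. Also, Lemma~\ref{lem:ded:omega} is about closed formulae, while $\psi$ may contain free iteration variables besides $\itvar$.

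Obtain the fact semantically instead. Since $\psi \in \Lang^\pm$ has no free fixpoint variables, Proposition~\ref{prop:sem:log:degroot} gives $\I\psi\val \in \K\Open$. A compact-open subset of an algebraic domain is a finite union $\up d_1 \cup \dots \cup \up d_k$ of principal filters of finite elements (in general not a single $\up d$). Your argument then goes through:
\begin{enumerate}
\item $f(d_j) \in \I\varphi\val[n_j/\itvar]$ for each $j$; take $N$ to be the maximum of the $n_j$ ($0$ if $k=0$).
\item Conclude with $\itvar \Pos \varphi$ (Lemma~\ref{lem:sem:log:posneg}), upward closure of the open set $\I\varphi\val[N/\itvar]$, monotonicity of $f$, and $\itvar \notin \FV(\psi)$.
\end{enumerate}
The paper does the same via compactness, covering $\I\psi\val$ by the directed family of opens $f^{-1}(\I\varphi\val[n/\itvar])$. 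Alternatively you could stratify the induction, noting that the derivations built in those two lemmas never use $\ax{CC}$, but this has to be said explicitly.

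\textbf{A smaller point on $\ax{\fingen/\exists}$.} Write $g_i = \I{\FP.\varphi}\val[i/\itvar]$, so that $g = \I{\FP.(\exists\itvar)\varphi}\val$ satisfies $g(S) = \bigcup_i g_i(S)$, and let $m = \I t\val$. The rule asks for $g^m(A) \sle \bigcup_i g_i^m(A)$, with $A$ either $\emptyset$ or $\I\PT$. This is exactly the inclusion that does \emph{not} follow from monotonicity; monotonicity only gives the converse. You need an induction on $m$ with two ingredients:
\begin{enumerate}
\item Scott-continuity of $g$ (Proposition~\ref{prop:sem:log:scott}) on the family $\{g_i^m(A)\}_i$, which is directed because the $g_i$ increase with $i$ (Lemma~\ref{lem:sem:log:posneg});
\item the bound $g_j(g_i^m(A)) \sle g_{\max(i,j)}^{m+1}(A)$.
\end{enumerate}
This is the Claim in the paper's proof. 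Dually, $\ax{\forall/\fingen}$ needs cocontinuity.
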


The following is the key logical fact toward completeness.
It also entails the completeness of the neutral fragment
(which we actually do not need for our main Theorem~\ref{thm:compl}).

\begin{restatable}[Completeness of $\Lang^\land$]{theorem}{ThmLogCompl}
\label{thm:sem:log:compl}
Let $\varphi, \psi \in \Lang^\land(\PT)$.
\begin{enumerate}[(1)]
\item
\label{item:sem:log:compl:consist}
If $\I\varphi \neq \emptyset$, then $\C(\varphi)$ is derivable.

\item
\label{item:sem:log:compl:ded}
If $\I\psi \sle \I\varphi$,
then $\psi \thesis \varphi$ is derivable.
\end{enumerate}
\end{restatable}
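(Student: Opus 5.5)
We prove (1) and (2) together by induction on the pure type $\PT$. Induction on the structure of pure types is not well-founded because of $\rec\TV.\PT$, so we instead induct on the size of formulae in $\Lang^\land$. Unfolding $\form\fold$ strictly decreases the size of the formula, so this measure is well-founded. First, every $\delta \in \Lang^\land(\PT)$ can be brought by deduction into a normal form according to the head type constructor. The normal forms are: $\False$; $\True$; a conjunction $\form{\pi_1}\delta_1 \land \form{\pi_2}\delta_2$ at product type; $\form{\inj_i}\delta'$ or $\True$ at sum type; $\form\fold\delta'$ at recursive type; and a finite conjunction $\bigwedge_{i \in I}(\psi_i \realto \varphi_i)$ at arrow type. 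To get these we use the modal rules for distributing $\form\modgen$ over $\land$, the rule $\ax{I}$ to collapse $\form{\inj_1}\cdot \land \form{\inj_2}\cdot$ to $\False$, and $\form\modgen\False \thesis \False$. Proposition~\ref{prop:sem:char:fin} tells us that nonempty interpretations are principal filters $\up d$ with $d$ finite, and we follow that structure throughout.

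For (1), suppose $\I\varphi \neq \emptyset$. We check that $\varphi$ is not $\False$ and that its components are nonempty, so the induction hypothesis gives $\C$ of each component and the matching rule of Figure~\ref{fig:consist} applies. Finally, the rule $\C(\psi),\ \psi \thesis \varphi$ transports $\C$ back along the normalisation $\varphi \thesisiff$ normal form. The arrow case $\bigwedge_i(\psi_i \realto \varphi_i)$ is the interesting one. Nonemptiness means there is a continuous $f$ with $f(\I{\psi_i}) \sle \I{\varphi_i}$ for all $i$. We may drop the conjuncts with $\I{\psi_i} = \emptyset$, since $\True \thesis \False \realto \varphi$. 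For the rest, each $\I{\psi_i} = \up d_i$, and $f(d_i) \in \I{\varphi_i}$, so $\I{\varphi_i} \neq \emptyset$ and the induction hypothesis gives $\C(\psi_i)$ and $\C(\varphi_i)$. Now take $J \sle I$. If $\bigwedge_J \psi_j$ is semantically empty, the induction hypothesis (2) applied to $\I{\bigwedge_J\psi_j} \sle \I\False$ yields $\bigwedge_J \psi_j \thesis \False$. Otherwise $\bigvee_J d_j$ exists, and $f$ of it lies in $\bigcap_J \I{\varphi_j}$, which gives $\C(\bigwedge_J\varphi_j)$ by the induction hypothesis. One subtlety remains: the induction hypothesis is only available at smaller formulae, but the $\psi_j, \varphi_j$ are proper subformulae, so this is fine.

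For (2), suppose $\I\psi \sle \I\varphi$. If $\I\psi = \emptyset$, then (1) is not directly usable, and we need to derive $\psi \thesis \False$. We prove that, alongside the main claim, by the same induction: normalise $\psi$ and show that emptiness forces either $\ax{I}$, an empty component (induction hypothesis), or, in the arrow case, the rule $\ax{C}$. Concretely, if $\I{\bigwedge_i(\psi_i\realto\varphi_i)} = \emptyset$, then some side condition of the consistency rule fails. We aim to reduce this to a single $J$ with $\bigwedge_J\psi_j$ consistent and $\bigwedge_J \varphi_j \thesis \False$. We then derive $\bigwedge_J(\psi_j \realto \varphi_j) \thesis (\bigwedge_J\psi_j) \realto \bigwedge_J\varphi_j \thesis \delta \realto \False \thesis \False$, using contravariance, $\ax{{\realto}/\land}$, and $\ax{C}$ with $\C(\delta)$ from (1). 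The existence of such a $J$ comes from the standard step-function construction. Define $f = \bigvee\{ d_i \searrow e_i\}$ with $\up e_i = \I{\varphi_i}$. This join is defined exactly when every consistent $\{d_j\}_J$ has $\{e_j\}_J$ consistent, and it then lies in the set, contradicting emptiness. If $\I\psi \neq \emptyset$, then $\I\psi = \up d$ and $\I\varphi = \up e$ (or is the whole space) with $e \leq d$, and we argue componentwise. At arrow type, $\I\psi = \up(\bigvee_i d_i\searrow e_i)$ and $\varphi = \bigwedge_k(\psi'_k \realto \varphi'_k)$. For each $k$, the inclusion $\I\psi\sle \I\varphi$ yields $\bigvee\{e_i \mid d_i \leq d'_k\} \geq e'_k$. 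Let $J_k = \{i \mid \psi'_k \thesis \psi_i\}$, which by the induction hypothesis (2) is exactly $\{i \mid d_i \leq d'_k\}$. Then $\bigwedge_{J_k}\varphi_i \thesis \varphi'_k$, again by the induction hypothesis, and we conclude via contravariance and $\ax{{\realto}/\land}$. The main obstacle I expect is this arrow case, specifically the empty-$\psi$ subcase. It requires organising the step-function argument so that it produces a derivation using only $\ax{C}$ and the finitary consistency predicate, while keeping the simultaneous induction well-founded.
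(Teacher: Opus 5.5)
Your proposal is correct and follows essentially the same route as the paper. Both proofs use normal forms by head type constructor, then a simultaneous induction on formula size showing that nonempty $\Lang^\land$-formulae satisfy $\C$ while empty ones derive $\False$, with the arrow case via step functions and rule $\ax{C}$. Both then treat the inclusion case, whose arrow case evaluates $\bigvee_i (d_i \step e_i)$ at $d'_k$ and uses $J_k = \{ i \mid \psi'_k \thesis \psi_i \}$ exactly as the paper does.

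The differences are cosmetic. You absorb the paper's separate lemma that $\I\delta = \I\True$ implies $\True \thesis \delta$ into your componentwise argument, and you treat all right-hand conjuncts at once instead of splitting them. One point to state explicitly: the induction hypothesis is applied to $\bigwedge_J \psi_j$ and $\bigwedge_J \varphi_j$, which are not subformulae, so you need, as the paper records, that normalisation does not increase size.
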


\subsection{Semantics of the Refinement Type System}
\label{sec:sem:reft}
The interpretation $\I\RT \sle \I{\UPT\RT}$
of a refinement type $\RT$ is defined as
$\I{\reft{\PT \mid \varphi}} \deq \I\varphi$
and
$\I{\RTbis \arrow \RT} \deq \I\RTbis \realto \I\RT$
and
$\I{\RT \times \RTbis} \deq \I{\form{\pi_1}}(\I\RT) \cap \I{\form{\pi_2}}(\I\RTbis)$.

\begin{definition}[Sound Judgement]
\label{def:sound:typing}
A judgement $\Env \thesis M \colon \RT$ with
$\Env = x_1\colon\RTbis_1, \dots, x_n\colon\RTbis_n$
is \emph{sound} if $\UPT\Env \thesis M \colon \UPT\RT$ is derivable and
if moreover $\I M(u_1,\dots,u_n) \in \I\RT$
whenever $u_i \in \I{\RTbis_i}$ for all $i = 1,\dots,n$.
\end{definition}

\begin{theorem}[Soundness of Typing]
\label{thm:sem:sound:reft}
If $\Env \thesis M \colon \RT$ is derivable then it is sound.
\end{theorem}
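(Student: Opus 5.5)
We argue by induction on the derivation of $\Env \thesis M \colon \RT$, showing that each typing rule preserves soundness in the sense of Definition~\ref{def:sound:typing}. The pure part of soundness, namely derivability of $\UPT\Env \thesis M \colon \UPT\RT$, was already noted after Figure~\ref{fig:reft:reftyping}, so only the semantic inclusion needs to be established.

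We first deal with subtyping. By induction on the derivation of $\RTbis \subtype \RT$, we prove that it implies $\I\RTbis \sle \I\RT$. Reflexivity, transitivity and $\RT \subtype \UPT\RT$ are immediate, the last because $\I{\UPT\RT}$ is the whole domain. The rule $\PT \subtype \reft{\PT \mid \True}$ is likewise immediate. The rule from $\psi \thesis \varphi$ follows from Theorem~\ref{thm:sem:log:sound}(\ref{item:sem:log:sound:ded}) with the empty valuation. The product and arrow rules follow from the monotonicity of $\I{\form{\pi_i}}$ and from the contravariance/covariance of the semantic $\realto$. The two $\eqtype$ rules hold by unfolding definitions: $\I{\reft{\PTbis\mid\psi}\arrow\reft{\PT\mid\varphi}} = \I\psi \realto \I\varphi = \I{\psi\realto\varphi}$, and similarly for products. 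Given this, the subsumption rule of Figure~\ref{fig:reft:reftsub} is sound directly.

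The basic rules of Figure~\ref{fig:reft:basic} are sound by the standard clauses of the Cartesian-closed interpretation. For example, for $\lambda$ we have $\I{\lambda x.M}(\vec u)(v) = \I M(\vec u, v)$, and $v \in \I\RTbis$ gives $\I M(\vec u, v) \in \I\RT$. The propositional rules of Figure~\ref{fig:reft:log} are sound because $\I{\varphi_1\land\varphi_2}$ is an intersection and $\I{\psi_1\lor\psi_2}$ is a union. The $\False$-hypothesis rule is vacuous since $\I\False = \emptyset$. The modality rules of Figure~\ref{fig:reft:mod} follow from Figure~\ref{fig:sem:mod} together with the equations $\I\unfold\comp\I\fold = \id$, $\I{\pi_i}\I{\pair{M_1,M_2}} = \I{M_i}$ and $\I{\pair{}} = \top$. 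In the case rule, the hypothesis $\I M(\vec u) \in \I{\form{\inj_i}}(\I\varphi)$ forces $\I M(\vec u) = \I{\inj_i}(v)$ with $v \in \I\varphi$. The case expression then evaluates to $\I{N_i}(\vec u, v) \in \I\RT$, and the branch $N_{3-i}$ is irrelevant.

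The step needing care is the fixpoint rule of Figure~\ref{fig:reft:fix}, yet it is simple once stated correctly. Fix $\vec u \in \I\Env$. The first premise gives $a \deq \I{\fix x.M}(\vec u) \in \I\psi$. Since $\I{\fix x.M}(\vec u)$ is the fixpoint $\term Y(v \mapsto \I M(\vec u,v))$, we have $a = \I M(\vec u, a)$. The second premise, applied to $a \in \I\psi$, then yields $a \in \I\varphi$. No continuity argument (such as Scott induction) is required, because the rule assumes the invariant $\psi$ already holds of the fixpoint. The only point to check is that the equation $\I{\fix x.M} = \I M \comp \langle \id, \I{\fix x.M}\rangle$ holds in the semantics of \S\ref{sec:sem:pure}.
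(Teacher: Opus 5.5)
Your argument is correct, and it is the routine induction on typing derivations that the paper leaves implicit: the paper states Theorem~\ref{thm:sem:sound:reft} without giving a proof. As in your proposal, subtyping is discharged by Theorem~\ref{thm:sem:log:sound}(\ref{item:sem:log:sound:ded}), and the fixpoint rule by the equation $\I{\fix x.M}(\vec u) = \I M(\vec u, \I{\fix x.M}(\vec u))$. The one case you do not list is the rules of Figure~\ref{fig:pure} used with refinement contexts; it is trivial, because their conclusions carry pure types, which are interpreted as the whole domain.
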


It follows from Examples~\ref{ex:scott:stream-tree} and~\ref{ex:sem:log:fixpoints}
that $\thesis \cnt \comp \bft \colon \RT^+$
is sound, where $\RT^+$ is either type in Example~\ref{ex:reft:spec}.
This judgement is also derivable, thanks to our main result:

\begin{restatable}[Positive Completeness]{theorem}{ThmCompl}
\label{thm:compl}
If $\Env^- \thesis M\colon \RT^+$ is sound,
then it is derivable.
\end{restatable}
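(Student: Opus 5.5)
First I would normalise the judgement. By Lemma~\ref{lem:reft}, every type in $\Env^-$ is equivalent to some $\reft{\PTbis_i \mid \psi_i}$ with $\psi_i \in \Lang^-$, and $\RT^+ \eqtype \reft{\PT \mid \varphi}$ with $\varphi \in \Lang^+$. Abstracting the context variables (via the derived $\lambda$-rule, the application rule and the $\eqtype$-rules on arrows), it is enough to treat closed judgements $\thesis M \colon \reft{\PT \mid \varphi}$ with $\varphi \in \Lang^+(\PT)$ closed. By Proposition~\ref{prop:ded:prenex} and subsumption, I may further assume $\varphi = (\exists \itvar)\varphi_0$ with $\varphi_0 \in \Lang^\pm$.

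Now suppose $\I M \in \I\varphi$. Since $\I\varphi = \bigcup_n \I{\varphi_0[n/\itvar]}$, there is some $n$ with $\I M \in \I{\varphi_0[\term n/\itvar]}$, where $\term n$ is the closed iteration term for $n$. The formula $\varphi_0[\term n/\itvar]$ is closed and neutral, so Lemmas~\ref{lem:ded:omega} and~\ref{lem:ded:omegalorland} turn it into an equivalent $\bigvee_j \delta_j$ with each $\delta_j \in \Lang^\land$. By Proposition~\ref{prop:sem:log:degroot} this set is compact-open, and $\I M$ lies in some $\I{\delta_j}$. Since the $\exists$-introduction rule gives $\delta_j \thesis \varphi$, it then suffices to derive $\thesis M \colon \reft{\PT \mid \delta}$ whenever $\delta \in \Lang^\land$ and $\I M \in \I\delta$. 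In short, positive completeness reduces to completeness for compact-open ``finite-element'' properties $\up d$, using Proposition~\ref{prop:sem:char:fin}.

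For that core statement I would follow Abramsky's strategy. I would prove, by induction on typing, the stronger claim that for every $\Env \thesis M \colon \PT$ with all context types of the form $\reft{\PTbis_i \mid \delta_i}$, $\delta_i \in \Lang^\land$, and every $\delta \in \Lang^\land$ with $\I M(\up d_1,\dots,\up d_n) \sle \I\delta$, the judgement is derivable. By monotonicity this amounts to $\I M(d_1,\dots,d_n) \in \I\delta$ for the finite elements $d_i$ given by Proposition~\ref{prop:sem:char:fin}. The cases are handled as follows.
\begin{itemize}
\item Variables: use Theorem~\ref{thm:sem:log:compl}(\ref{item:sem:log:compl:ded}) and subtyping.
\item $\lambda x.N$: $\I\delta$ is $\up f$ for a finite step function $f = \bigvee_k (e_k \searrow c_k)$. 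This gives the case $\bigwedge_k(\delta'_k \realto \delta''_k)$, settled by the induction hypothesis on $N$ with $x\colon\reft{\cdot \mid \delta'_k}$, intersection, and the consistency rules. The case $\I\delta = \emptyset$ cannot arise.
\item Application $N\,N'$: by continuity there is a finite $e \le \I{N'}(\vec d)$ with $\I N(\vec d)(e) \ge$ the required element. This gives $N \colon \delta_e \realto \delta$ and $N' \colon \delta_e$.
\item $\fold$, $\unfold$, projections, pairs, $\inj_i$, $\cse$: handled by the modal rules of Figure~\ref{fig:reft:mod}. For $\cse$, I would split on whether the scrutinee is $\bot$; in that case $\delta$ must be $\True$-equivalent, since $\I\delta \ni$ a value above $\bot$ only if $\up\bot$.
\item $\fix x.N$: $\I{\fix x.N}(\vec d) = \bigvee_k g^k(\bot)$ and $\up e$ is compact-open, so some finite approximant $g^k(\bot) \in \I\delta$. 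I would then build a chain $\True = \delta^0, \delta^1, \dots, \delta^k \thesis \delta$ of $\Lang^\land$ formulae describing finite elements below $g^j(\bot)$, and apply the rule of Figure~\ref{fig:reft:fix} $k$ times using the induction hypothesis on $N$.
\end{itemize}

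I expect the main obstacle to be this $\fix$ case, together with the $\lambda$/application interplay. There one must pick finite approximations carefully: choose finite $e_j \le g^j(\bot)$ with $e_j \le \I N(\vec d, e_{j-1})$, going backwards from $e_k$ by algebraicity and continuity. The induction hypothesis is then applied to $N$ at each step. Everything else is bookkeeping against Theorem~\ref{thm:sem:log:compl}.
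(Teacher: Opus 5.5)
\textbf{Gap: the reduction to closed judgements.} Your right-hand-side reduction (prenex form, a witness for $(\exists\itvar)$, splitting the closed neutral formula into $\Lang^\land$ disjuncts) matches the paper. So does your core induction for $\Lang^\land$ contexts, which is Theorem~\ref{thm:compl:land}, except that the inconsistent-context case needs the $\False$-rule. The problem is your first step. Soundness does transfer from $\Env^- \thesis M \colon \RT^+$ to $\thesis \lambda\vec x.M \colon \RTbis_1 \arrow \cdots \arrow \RTbis_n \arrow \RT^+$, and your argument would derive the latter. But derivability does not transfer back. The $\lambda$-rule goes the wrong way, and weakening plus the application rule only give $\Env^- \thesis (\lambda\vec x.M)\,\vec x \colon \RT^+$, which is a different term. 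The system has no $\beta$-rule, and it provides neither subject reduction nor an inversion lemma for $\lambda$. Such a lemma would have to extract typings of the body from arbitrary deductions $\theta \thesis \psi \realto \varphi$ with $\psi$ negative, in the presence of $\ax{WF}$, $\ax{CC}$, $\ax{{\realto}/\lor}$ and $\ax{C}$. That is essentially as hard as the theorem itself, and obtaining it semantically would be circular.

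\textbf{What is missing.} You need a direct treatment of the negative context, using the typing rules for $\lor$ and $\False$ in contexts (Figure~\ref{fig:reft:log}); your proposal never uses them. The paper proceeds as follows.
First, it writes $\RTbis_i \eqtype \reft{\UPT{\RTbis_i} \mid (\forall\itvarbis_i)\psi_i}$ with $\psi_i$ neutral.
Second, the sets $S_m = \prod_i \I{\psi_i[m/\itvarbis_i]}$ form a codirected family of compact-saturated sets whose intersection lies in the open set $\I{M}^{-1}(\I{\RT})$. Well-filteredness (Proposition~\ref{prop:sem:wf}) then gives a single $m$ for which the context with the $\psi_i[m/\itvarbis_i]$ is already sound.
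Third, each $\psi_i[m/\itvarbis_i]$ is equivalent to a finite disjunction of formulae $\delta_{i,j} \in \Lang^\land$. Each choice of disjuncts is either inconsistent, in which case the $\False$-rule applies, or given by finite elements $d_i$. In the latter case your right-hand argument together with Theorem~\ref{thm:compl:land} applies.
Finally, the $\lor$-rule reassembles the cases, and subtyping along $(\forall\itvarbis_i)\psi_i \thesis \psi_i[m/\itvarbis_i]$ concludes.

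Your closed detour does produce the same uniform $m$, through $\ax{WF}$ when prenexing $\psi_1 \realto \cdots \realto \varphi$. It still needs exactly this $\lor$-left splitting to get back to the open judgement.
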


Write $\RT^\land$ for a refinement type which only contains formulae in $\Lang^\land$,
and similarly for $\Env^\land$.
We obtain Theorem~\ref{thm:compl} by reduction to 
judgements of the form $\Env^\land \thesis M \colon \RT^\land$,
for which completeness essentially amounts to~\cite{abramsky91apal}
(through formally our system differs).

\begin{restatable}{theorem}{ThmComplLand}
\label{thm:compl:land}
If $\Env^\land \thesis M \colon \RT^\land$ is sound,
then it is derivable.
\end{restatable}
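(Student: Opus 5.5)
We prove Theorem~\ref{thm:compl:land} by induction on $M$, in the style of Abramsky's completeness proof for the DTLF type system. First we normalise. By Lemma~\ref{lem:reft} we may replace each $\RT^\land$ and each type in $\Env^\land$ by a refinement $\reft{\PT \mid \delta}$ with $\delta \in \Lang^\land$. This requires that the formula produced by Lemma~\ref{lem:reft} from a $\Lang^\land$-type remains in $\Lang^\land$, which holds since products give $\form{\pi_1}\varphi \land \form{\pi_2}\psi$ and arrows give $\realto$. Soundness is preserved by Theorem~\ref{thm:sem:sound:reft} applied to the subtyping equivalences.

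If some hypothesis $\delta_i$ has $\I{\delta_i} = \emptyset$, then Theorem~\ref{thm:sem:log:compl}(\ref{item:sem:log:compl:ded}) gives $\delta_i \thesis \False$. Subsumption together with the $\False$-rule of Figure~\ref{fig:reft:log} then closes the case. Otherwise each $\I{\delta_i} = \up d_i$ with $d_i$ finite, by Proposition~\ref{prop:sem:char:fin}. So soundness says exactly that $\I M(d_1,\dots,d_n) \in \I\varphi$. Since $\I\varphi$ is a finite union of $\up e$'s and the function is monotone, by Lemma~\ref{lem:ded:omegalorland} together with the left $\lor$-rule it suffices to handle targets $\varphi = \delta \in \Lang^\land$ with $\I\delta = \up e$. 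The goal then becomes $e \leq \I M(\vec d)$.

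The induction on $M$ then proceeds case by case. For a variable $x_i$, we have $e \leq d_i$, and Theorem~\ref{thm:sem:log:compl}(\ref{item:sem:log:compl:ded}) gives $\delta_i \thesis \delta$. For $\lambda x.N$, a finite $e$ in a function space is a finite sup of step functions $a_j \searrow b_j$, so $\delta \thesisiff \bigwedge_j (\alpha_j \realto \beta_j)$ with $\alpha_j,\beta_j \in \Lang^\land$. We then conclude by the IH on $\Env, x \colon \reft{\PTbis\mid\alpha_j} \thesis N \colon \reft{\PT\mid\beta_j}$, the $\land$-rule, and subtyping. For an application $N\,P$, the value $\I N(\vec d)(\I P(\vec d)) \geq e$ is a directed sup over finite approximants. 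So by algebraicity there is a finite $a \leq \I P(\vec d)$ with $e \leq \I N(\vec d)(a)$, hence $\I N(\vec d) \in \up a \realto \up e$. Taking $\alpha$ with $\I\alpha = \up a$, we apply the IH to both premises. Pairs, projections, injections, $\cse$, $\fold$ and $\unfold$ are treated in the same way, using the modality rules of Figure~\ref{fig:reft:mod}. For $\cse$, the scrutinee denotes some $\I{\inj_i}(v)$ because the result is $\neq \bot$ whenever $e \neq \bot$; if $e = \bot$, then $\delta \thesisiff \True$ trivially.

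The main obstacle is $\fix x.N$. Here $\I{\fix x.N}(\vec d) = \bigvee_k g^k(\bot)$ with $g = \I N(\vec d,-)$, so by finiteness $e \leq g^k(\bot)$ for some $k$. We run an inner induction on $k$, showing that $\thesis \fix x.N \colon \reft{\PT \mid \delta_k}$, where $\I{\delta_k} = \up f$ for finite approximants $f \leq g^k(\bot)$. The base case $\delta_0 = \True$ is trivial. For the step, from $e \leq g(g^{k}(\bot))$ and continuity we choose a finite $f \leq g^k(\bot)$ with $e \leq g(f)$. The outer IH on $N$ with hypothesis $x \colon \reft{\PT\mid\delta_f}$ gives the second premise of the rule in Figure~\ref{fig:reft:fix}, and the inner IH gives its first premise. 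The delicate point is that the outer induction must be on $M$ with arbitrary contexts and targets, so that $N$ is available with the extended context; this is fine since the IH is quantified over all sound $\Lang^\land$-judgements.
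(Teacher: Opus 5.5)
Your proposal is correct and follows essentially the same route as the paper. Both arguments reduce to a consistent context whose formulas denote principal filters $\up d_i$, using Theorem~\ref{thm:sem:log:compl} and Proposition~\ref{prop:sem:char:fin}, and then induct on the (syntax-directed) pure typing of $M$. Both use algebraicity to extract a finite approximant of the argument in the application and $\cse$ cases, and for $\fix x.N$ both build a chain of finite approximants $f \le g^k(\bot)$ and apply the rule of Figure~\ref{fig:reft:fix} repeatedly.

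One step is superfluous: the reduction of disjunctive targets via Lemma~\ref{lem:ded:omegalorland}. It would in any case need subsumption into a disjunct rather than the left $\lor$-rule. Since the target is already in $\Lang^\land$ and the context is consistent, Proposition~\ref{prop:sem:char:fin} directly gives $\I\varphi = \up e$.
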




The reduction of Theorem~\ref{thm:compl} to Theorem~\ref{thm:compl:land}
is an adaptation of~\cite{bk03ic,rk25wollic}.
Consider first the case of a sound $\thesis M \colon \RT$,
where $\RT$ is positive.
By Lemma~\ref{lem:reft} and Proposition~\ref{prop:ded:prenex},
let $\varphi \in \Lang^\pm(\UPT\RT)$
such that $\RT \eqtype \reft{\UPT\RT \mid (\exists \itvar)\varphi}$.
Since the judgement $\thesis M \colon \RT$
is sound, there is some $m \in \NN$
such that $\thesis M \colon \reft{\UPT\RT \mid \varphi[m/\itvar]}$
is also sound.
By Lemmas~\ref{lem:ded:omega} and~\ref{lem:ded:omegalorland},
the closed formula $\varphi[m/\itvar] \in \Lang^\pm(\UPT\RT)$
is equivalent to a disjunction of formulae $\gamma \in \Lang^\land(\UPT\RT)$.
But there must be at least one of these $\gamma$'s such that
the judgement $\thesis M \colon \reft{\UPT\RT \mid \gamma}$ is sound,
and we can apply Theorem~\ref{thm:compl:land}.
%
%
Since
\(
  \reft{\UPT\RT \mid \gamma}
  \subtype
  \reft{\UPT\RT \mid \varphi[m / \itvar]}
  \subtype
  \reft{\UPT\RT \mid (\exists \itvar)\varphi}
\),
we obtain a derivation of 
$\thesis M \colon \reft{\UPT\RT \mid (\exists \itvar)\varphi}$
from a derivation of $\thesis M \colon \reft{\UPT\RT \mid \gamma}$.

Consider now the case of $x \colon \RTbis \thesis M \colon \RT$,
where $\RTbis$ is negative.
Let $\psi \in \Lang^\pm(\UPT\RTbis)$
such that $\RTbis \eqtype \reft{\UPT\RTbis \mid (\forall \itvarbis)\psi}$.
Since $\I\RT$ is Scott-open and $\I M$ is Scott-continuous,
well-filteredness (Proposition~\ref{prop:sem:wf})
yields some $m \in \NN$ such that
$\I{\psi[m / \itvarbis]} \sle \I M^{-1}(\I\RT)$,
i.e.\ such that
$x \colon \reft{\UPT\RTbis \mid \psi[m / \itvarbis]} \thesis M \colon \RT$
is sound.
%
We can similarly write $\psi[m / \itvarbis]$ as a disjunction
of $\delta \in \Lang^\land(\UPT\RTbis)$,
and each disjunct $\delta$ leads to a sound judgement
$x \colon \reft{\UPT\RTbis \mid \delta} \thesis M \colon \RT$.
If $\delta \not\thesis \False$, then Proposition~\ref{prop:sem:char:fin}
gives a finite $d \in \I{\UPT\RTbis}$ such that $\I\delta = \up d$.
Since $\I\RT$ is saturated, we have
$\up d \sle \I M^{-1}(\I\RT)$ if, and only if, $\I M(d) \in \I\RT$.
Unfolding $\RT$ similarly as above,
the condition $\I M(d) \in \I\RT$ yields some $\gamma \in \Lang^\land(\UPT\RT)$
with $\reft{\UPT\RT \mid \gamma} \subtype \RT$
and such that
$x \colon \reft{\UPT\RTbis \mid \delta} \thesis M \colon \reft{\UPT\RT \mid \gamma}$
is sound.%
\opt{full}{ See Appendix~\ref{sec:proof:sem:reft} for details.}%
\opt{short}{ See \cite[\S F]{rd26full} for details.}%


\begin{remark}[Undecidability]
\label{rem:sem:undec}
Positive judgements are semi-decidable but not decidable:
by Turing-completeness,
there is a $K \colon \Nat \arrow \Nat$
such that $K \term n$ evaluates the $n$th partial recursive function on input $n$.
Hence the total function which given $n \in \NN$
decides the derivability (or soundness)
of $\thesis K \term n \colon \reft{\Nat \mid \form{\tot}}$,
is not recursive
(\cite[Theorem II.2.3]{odifreddi99book}).
\lipicsEnd
\end{remark}

\section{Conclusion}
\label{sec:conc}

We presented a finitary refinement type system which is
complete for a set of Scott-open specifications 
strictly extending~\cite{abramsky87lics,abramsky91apal}.
We moreover noted that these specifications are not decidable 
for the typed $\lambda$-calculus under consideration (Remark~\ref{rem:sem:undec} above).

An important direction of future work is to characterise
the expressiveness of our logic.
Example~\ref{ex:sem:log:fixpoints} means that our positive
(resp. negative) fragment corresponds to a 
modal $\mu$-calculus on (finitary) polynomial types,
restricted to negation-free formulae with least (resp.\ greatest) fixpoints.
However, we cannot yet be as precise for higher-order types
(with formulae involving the realizability implication $\realto$). 

An other direction for future work is to extend this system
to liveness properties (as in Remark~\ref{rem:form:fix}),
while still targeting a complete finitary type system
(in contrast with~\cite{jr21esop,rk25wollic}).

Our system has sum types (so as to have a recursive type of natural numbers),
but the category $\Scott$ does not have finite coproducts.
Working with Call-By-Push-Value (CBPV)~\cite{levy03book,levy22siglog},
for the usual adjunction between dcpos and cpos with strict functions,
may lead to a cleaner situation.
In the long run, it would be nice if this basis could extend to
enriched models of CBPV,
so as to handle further computational effects.
Print and global store are particularly relevant,
as an important trend in proving temporal properties
considers programs generating streams of events.
Major works in this line include
\cite{ssv08jfp,hc14lics,hl17lics,nukt18lics,kt14lics,ust17popl,nukt18lics,%
su23popl}.
In contrast with ours, these approaches are based on trace semantics
of syntactic expressions rather than denotational domains.

\colin{See e.g.~\cite[Theorem 4.1 (and Figure 6)]{nukt18lics}
or~\cite[Theorem 1 (and Definition 20 from the full version)]{su23popl}.}

In a different direction, we think the approach of this paper
could extend to linear types~\cite{hjk00mscs,nw03concur,winskel04llcs},
in particular for the $\lambda$-calculus $\HOPLA$,
and possibly relying on the categorical study of~\cite{bf06book}.


\opt{full}{\newpage}
\bibliography{bibliographie}

\appendix
\opt{full}{\newpage}
\opt{full}{
\section{Proofs of \S\ref{sec:log} (\nameref{sec:log})}
\label{sec:proof:log}


\begin{figure}[t!]
\begin{subfigure}{\textwidth}
\centering
\(
\begin{array}{c}

  \psi \land (\varphi \lor \theta)
  \,\thesisiff\,
  (\psi \land \varphi) \lor (\psi \land \theta)

\qquad

  \psi \lor (\varphi \land \theta)
  \,\thesisiff\,
  (\psi \lor \varphi) \land (\psi \lor \theta)

\end{array}
\)
\caption{Distributive laws.%
\label{fig:proof:log:der:distr}}
\end{subfigure}

\begin{subfigure}{\textwidth}
\centering
\(
\begin{array}{c}

  \form{\modgen}(\varphi \lor \psi)
  \,\thesisiff\,
  \form{\modgen}\varphi \lor \form{\modgen}\psi

\qquad

  \form{\modgen}(\varphi \land \psi)
  \,\thesisiff\,
  \form{\modgen}\varphi \land \form{\modgen}\psi

\end{array}
\)
\caption{Modalities $\modgen \in \{\pi_1, \pi_2, \inj_1, \inj_2, \fold\}$.%
\label{fig:proof:log:der:mod}}
\end{subfigure}

\begin{subfigure}{\textwidth}
\centering
\(
\begin{array}{c}

  (\psi_1 \lor \psi_2) \realto \varphi
  \,\thesisiff\,
  (\psi_1 \realto \varphi) \land (\psi_2 \realto \varphi)

\\\\

  \psi \realto (\varphi_1 \land \varphi_2)
  \,\thesisiff\,
  (\psi \realto \varphi_1) \land (\psi \realto \varphi_2)

\\\\

  \delta \realto (\varphi_1 \lor \varphi_2)
  \,\thesisiff\,
  (\delta \realto \varphi_1) \lor (\delta \realto \varphi_2)

\end{array}
\)
\caption{The realizability implication $\realto$,
where $\delta \in \Lang^\land$.%
\label{fig:proof:log:der:realto}}
\end{subfigure}

\begin{subfigure}{\textwidth}
\centering
\(
\begin{array}{c}

  (\finmu^0 \FP)\varphi
  \,\thesisiff\,
  \False

\qquad

  (\finmu^{t+1} \FP)\varphi
  \,\thesisiff\,
  \varphi[(\finmu^t \FP)\varphi / \FP]

\\\\

  (\finnu^0 \FP)\varphi
  \,\thesisiff\,
  \True

\qquad

  (\finnu^{t+1} \FP)\varphi
  \,\thesisiff\,
  \varphi[(\finnu^t \FP)\varphi / \FP]

\end{array}
\)
\caption{Bounded iteration.%
\label{fig:proof:log:der:iter}}
\end{subfigure}

\caption{Some derivable sequents.%
\label{fig:proof:log:der}}
\end{figure}

Figure~\ref{fig:proof:log:der} gathers some derivable sequents.

\begin{lemma}
\label{lem:proof:log:der}
The sequents in Figure~\ref{fig:proof:log:der} are all derivable.
\end{lemma}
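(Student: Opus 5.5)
We derive each sequent of Figure~\ref{fig:proof:log:der} directly from the rules of Figure~\ref{fig:ded}, in the order of the subfigures.

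\emph{Distributive laws (Figure~\ref{fig:proof:log:der:distr}).} The left-to-right direction of the first law is $\ax{D}$. For the converse, we prove both $\psi \land \varphi \thesis \psi \land (\varphi \lor \theta)$ and $\psi \land \theta \thesis \psi \land (\varphi \lor \theta)$ using the $\land$-right rule, the $\land$-left projections and the $\lor$-right injections, and then combine them with $\lor$-left. For the dual law, the direction $\psi \lor (\varphi \land \theta) \thesis (\psi \lor \varphi) \land (\psi \lor \theta)$ is purely structural. For the converse, we apply $\ax{D}$ to $(\psi \lor \varphi) \land (\psi \lor \theta)$, obtaining $((\psi\lor\varphi)\land\psi) \lor ((\psi\lor\varphi)\land\theta)$. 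The first disjunct implies $\psi$. To the second disjunct we apply $\ax{D}$ again (after commuting the conjunction, which is derivable) and get $(\psi\land\theta)\lor(\varphi\land\theta)$, both parts of which imply the goal. We finish with cut.

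\emph{Modalities (Figure~\ref{fig:proof:log:der:mod}).} We derive $\form\modgen(\varphi\lor\psi) \thesis \form\modgen\varphi \lor \form\modgen\psi$ as an axiom. The converse comes from monotonicity of $\form\modgen$ applied to both $\lor$-injections, followed by $\lor$-left. For $\land$, the right-to-left direction is an axiom, and the left-to-right direction comes from monotonicity applied to the projections, followed by $\land$-right.

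\emph{Realizability implication (Figure~\ref{fig:proof:log:der:realto}).} Each of the three laws has one direction given by an axiom: $\ax{\lor/{\realto}}$ (whose statement has an evident typo, $\psi_1$ repeated, which we read as $\psi_2$), $\ax{{\realto}/\land}$ and $\ax{{\realto}/\lor}$. The other directions follow from the variance rule for $\realto$, namely $\psi'\thesis\psi$ and $\varphi\thesis\varphi'$ give $\psi\realto\varphi \thesis \psi'\realto\varphi'$. We instantiate it with $\psi_i \thesis \psi_1\lor\psi_2$, with $\varphi_1\land\varphi_2\thesis\varphi_i$, and with $\varphi_i\thesis\varphi_1\lor\varphi_2$ respectively, and then combine using $\land$-right or $\lor$-left.

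\emph{Bounded iteration (Figure~\ref{fig:proof:log:der:iter}).} For $\finmu$, we have $(\finmu^0\FP)\varphi\thesis\False$ as an axiom and $\False\thesis(\finmu^0\FP)\varphi$ by ex falso. The sequent $\varphi[(\finmu^t\FP)\varphi/\FP]\thesis(\finmu^{t+1}\FP)\varphi$ is an axiom. The converse is the premise-taking rule instantiated with $\psi \deq (\finmu^t\FP)\varphi$, whose premise is the identity. The $\finnu$ case is dual: $\True\thesis(\finnu^0\FP)\varphi$ is an axiom, its converse holds by $\thesis\True$, and the two successor directions come from the axiom and the rule with $\psi\deq(\finnu^t\FP)\varphi$.

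The only step I expect to take any care is the second distributive law, because it requires chaining $\ax{D}$ twice together with commutativity of $\land$. Everything else is a direct instantiation of the rules.
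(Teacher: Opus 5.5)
Your proposal is correct and follows the paper's proof essentially step by step. In each subfigure, one direction is a primitive rule of Figure~\ref{fig:ded}, and the converse comes from the structural rules together with one of: monotonicity of the modality, variance of $\realto$, or the premise-taking iteration rule instantiated with an identity premise; the dual distributive law is obtained, as in the paper, by two applications of rule (D), and your explicit commutation of the conjunction before the second application only spells out a step the paper leaves implicit.
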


\begin{proof}
We discuss each subfigure separately.
\begin{description}
\item[Case of Figure~\ref{fig:proof:log:der:distr}.]
The sequent
\(
  \psi \land (\varphi \lor \theta)
  \,\thesis\,
  (\psi \land \varphi) \lor (\psi \land \theta)
\)
is given by the rule $\ax{D}$ in Figure~\ref{fig:ded:prop}.
We derive its converse with
\[
\text{
\AXC{}
\UIC{$\psi \,\thesis\, \psi $}
\UIC{$\psi \land \varphi \,\thesis\, \psi $}
\AXC{}
\UIC{$\psi \,\thesis\, \psi $}
\UIC{$\psi \land \theta \,\thesis\, \psi $}
\BIC{$(\psi \land \varphi) \lor (\psi \land \theta) \,\thesis\, \psi $}
\AXC{}
\UIC{$\varphi \,\thesis\, \varphi$}
\UIC{$\psi \land \varphi \,\thesis\, \varphi$}
\UIC{$\psi \land \varphi \,\thesis\, \varphi \lor \theta$}
\AXC{}
\UIC{$\theta \,\thesis\, \theta$}
\UIC{$\psi \land \theta \,\thesis\, \theta$}
\UIC{$\psi \land \theta \,\thesis\, \varphi \lor \theta$}
\BIC{$(\psi \land \varphi) \lor (\psi \land \theta) \,\thesis\, \varphi \lor \theta$}
\BIC{$(\psi \land \varphi) \lor (\psi \land \theta)
  \,\thesis\,
  \psi \land (\varphi \lor \theta)$}
\DisplayProof}
\]

We similarly obtain
\(
  \psi \lor (\varphi \land \theta)
  \,\thesis\,
  (\psi \lor \varphi) \land (\psi \lor \theta)
\).
For the converse, note that $\ax{D}$ yields
\[
  (\psi \lor \varphi) \land (\psi \lor \theta)
  \,\thesis\,
  \big ( (\psi \lor \varphi) \land \psi \big)
  \lor
  \big ( (\psi \lor \varphi) \land \theta \big)
\]

\noindent
and thus
\[
  (\psi \lor \varphi) \land (\psi \lor \theta)
  \,\thesis\,
  \psi
  \lor
  \big ( (\psi \lor \varphi) \land \theta \big)
\]

\noindent
Now, using $\ax{D}$ again yields
\[
  (\psi \lor \varphi) \land (\psi \lor \theta)
  \,\thesis\,
  \psi
  \lor
  (\psi \land \theta)
  \lor
  (\varphi \land \theta)
\]

\noindent
Since $\psi \lor (\psi \land \theta) \,\thesis\, \psi$,
we obtain
\[
  (\psi \lor \varphi) \land (\psi \lor \theta)
  \,\thesis\,
  \psi
  \lor
  (\varphi \land \theta)
\]

\item[Case of Figure~\ref{fig:proof:log:der:mod}.]
The sequent
\(
  \form{\modgen}(\varphi \lor \psi)
  \,\thesis\,
  \form{\modgen}\varphi \lor \form{\modgen}\psi
\)
is given by Figure~\ref{fig:ded:mod}.
We derive its converse with
\[
\text{
\AXC{}
\UIC{$\varphi_1 \,\thesis\, \varphi_1$}
\UIC{$\varphi_1 \,\thesis\, \varphi_1 \lor \varphi_2$}
\UIC{$\form{\modgen}\varphi_1 \,\thesis\, \form{\modgen}(\varphi_1 \lor \varphi_2)$}
\AXC{}
\UIC{$\varphi_2 \,\thesis\, \varphi_2$}
\UIC{$\varphi_2 \,\thesis\, \varphi_1 \lor \varphi_2$}
\UIC{$\form{\modgen}\varphi_2 \,\thesis\, \form{\modgen}(\varphi_1 \lor \varphi_2)$}
\BIC{$\form{\modgen}\varphi_1 \lor \form{\modgen}\varphi_2
  \,\thesis\,
  \form{\modgen}(\varphi_1 \lor \varphi_2)$}
\DisplayProof}
\]

The law for $\land$ is derived similarly.

\item[Case of Figure~\ref{fig:proof:log:der:realto}.]
The sequent
\(
  {(\psi_1 \realto \varphi) \land (\psi_1 \realto \varphi)
  \,\thesis\,
  (\psi_1 \lor \psi_2) \realto \varphi}
\)
is given by the rule $\ax{\lor/{\realto}}$ 
in Figure~\ref{fig:ded:realto}.
We derive its converse with
\[
\text{
\AXC{}
\UIC{$\psi_1 \,\thesis\, \psi_1$}
\UIC{$\psi_1 \,\thesis\, \psi_1 \lor \psi_2$}
\AXC{}
\UIC{$\varphi \,\thesis\, \varphi$}
\BIC{$(\psi_1 \lor \psi_2) \realto \varphi \,\thesis\, \psi_1 \realto \varphi$}
\AXC{}
\UIC{$\psi_2 \,\thesis\, \psi_2$}
\UIC{$\psi_2 \,\thesis\, \psi_1 \lor \psi_2$}
\AXC{}
\UIC{$\varphi \,\thesis\, \varphi$}
\BIC{$(\psi_1 \lor \psi_2) \realto \varphi \,\thesis\, \psi_2 \realto \varphi$}
\BIC{$(\psi_1 \lor \psi_2) \realto \varphi
  \,\thesis\,
  (\psi_1 \realto \varphi) \land (\psi_2 \realto \varphi)$}
\DisplayProof}
\]

The law
\(
  \psi \realto (\varphi_1 \land \varphi_2)
  \,\thesisiff\,
  (\psi \realto \varphi_1) \land (\delta \land \varphi_2)
\)
is derived similarly.

The sequent
\(
  \delta \realto (\varphi_1 \lor \varphi_2)
  \,\thesis\,
  (\delta \realto \varphi_1) \lor (\delta \realto \varphi_2)
\)
with $\delta \in \Lang^\land$ is given by the rule
$\ax{{\realto}/\lor}$
in Figure~\ref{fig:ded:realto}.
We derive its converse with
\[
\text{
\AXC{}
\UIC{$\delta \,\thesis\, \delta$}
\AXC{}
\UIC{$\varphi_1 \,\thesis\, \varphi_1$}
\UIC{$\varphi_1 \,\thesis\, \varphi_1 \lor \varphi_2$}
\BIC{$\delta \realto \varphi_1 \,\thesis\, \delta \realto (\varphi_1 \lor \varphi_2)$}
\AXC{}
\UIC{$\delta \,\thesis\, \delta$}
\AXC{}
\UIC{$\varphi_2 \,\thesis\, \varphi_2$}
\UIC{$\varphi_2 \,\thesis\, \varphi_1 \lor \varphi_2$}
\BIC{$\delta \realto \varphi_2 \,\thesis\, \delta \realto (\varphi_1 \lor \varphi_2)$}
\BIC{$(\delta \realto \varphi_1) \lor (\delta \realto \varphi_2)
  \,\thesis\,
  \delta \realto (\varphi_1 \lor \varphi_2)$}
\DisplayProof}
\]

\item[Case of Figure~\ref{fig:proof:log:der:iter}.]
We derive 
$(\finmu^0 \FP)\varphi \thesisiff \False$
using the rules
\[
\begin{array}{c}

\dfrac{}
  {(\finmu^0 \FP)\varphi \,\thesis\, \False}

\qquad

\dfrac{}
  {\False \,\thesis\, \psi}
\end{array}
\]

We moreover derive
\(
  (\finmu^{t+1} \FP)\varphi
  \,\thesisiff\,
  \varphi[(\finmu^t \FP)\varphi / \FP]
\)
from the rules
\[
\begin{array}{c}

\dfrac{(\finmu^t \FP)\varphi \,\thesis\, \psi}
  {(\finmu^{t+1} \FP)\varphi \,\thesis\, \varphi[\psi / \FP]}

\qquad

\dfrac{}
  {\varphi[(\finmu^t \FP)\varphi / \FP] \,\thesis\, (\finmu^{t+1} \FP)\varphi}

\end{array}
\]

The laws for $(\finnu^t \FP)$ are obtained similarly.
\qedhere
\end{description}
\end{proof}

We now turn to Lemma~\ref{lem:ded:omegalorland}.
It uses the following fact.

\begin{lemma}
\label{lem:proof:ded:omegalorland:prelim}
\begin{enumerate}[(1)]
\item
\label{item:proof:ded:omegalorland:mod}
For each $\varphi \in \Lang^{\lor\land}$,
there is some $\psi \in \Lang^{\lor\land}$
such that $\form{\modgen}\varphi \thesisiff \psi$.

\item
\label{item:proof:ded:omegalorland:realto:base}
For each $\delta \in \Lang^{\land}$
and each $\varphi \in \Lang^{\lor\land}$,
there is some $\psi \in \Lang^{\lor\land}$
such that $(\delta \realto \varphi) \thesisiff \psi$

\item
\label{item:proof:ded:omegalorland:realto}
For each $\varphi_1, \varphi_2 \in \Lang^{\lor\land}$,
there is some $\psi \in \Lang^{\lor\land}$
such that $(\varphi_1 \realto \varphi_2) \thesisiff \psi$
\end{enumerate}
\end{lemma}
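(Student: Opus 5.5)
We prove the three items in order, each using the derivable sequents of Lemma~\ref{lem:proof:log:der} together with the fact that $\thesisiff$ is a congruence for $\land$, $\lor$, $\form\modgen$ and $\realto$. The congruence holds because of the monotonicity rules for $\form\modgen$ and $\realto$ in Figure~\ref{fig:ded:mod} and the propositional rules. Throughout, $\varphi \in \Lang^{\lor\land}$ is written as a finite disjunction $\bigvee_{i \in I}\gamma_i$ with $\gamma_i \in \Lang^\land$; the empty disjunction is read as $\False \in \Lang^\land$.

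For item~(\ref{item:proof:ded:omegalorland:mod}), we use induction on the structure of $\varphi \in \Lang^{\lor\land}$.
\begin{itemize}
\item If $\varphi \in \Lang^\land$, then $\form\modgen\varphi \in \Lang^\land$ already, and we take $\psi = \form\modgen\varphi$.
\item If $\varphi = \varphi_1 \lor \varphi_2$, then Figure~\ref{fig:proof:log:der:mod} gives $\form\modgen(\varphi_1 \lor \varphi_2) \thesisiff \form\modgen\varphi_1 \lor \form\modgen\varphi_2$. By the induction hypothesis each disjunct is equivalent to some $\psi_j \in \Lang^{\lor\land}$, and we take $\psi = \psi_1 \lor \psi_2$.
\end{itemize}

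For item~(\ref{item:proof:ded:omegalorland:realto:base}), with $\delta \in \Lang^\land$ fixed, we again induct on $\varphi$.
\begin{itemize}
\item If $\varphi \in \Lang^\land$, then $\delta \realto \varphi \in \Lang^\land$, since $\Lang^\land$ is closed under $\realto$.
\item If $\varphi = \varphi_1 \lor \varphi_2$, the third law of Figure~\ref{fig:proof:log:der:realto} (rule $\ax{{\realto}/\lor}$ and its converse, both valid because $\delta \in \Lang^\land$) gives $\delta \realto (\varphi_1 \lor \varphi_2) \thesisiff (\delta \realto \varphi_1) \lor (\delta \realto \varphi_2)$. We conclude by the induction hypothesis and closure of $\Lang^{\lor\land}$ under $\lor$.
\end{itemize}

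For item~(\ref{item:proof:ded:omegalorland:realto}), we induct on $\varphi_1$.
\begin{itemize}
\item If $\varphi_1 \in \Lang^\land$, this is item~(\ref{item:proof:ded:omegalorland:realto:base}).
\item If $\varphi_1 = \varphi_{1,1} \lor \varphi_{1,2}$, the first law of Figure~\ref{fig:proof:log:der:realto} gives $(\varphi_{1,1} \lor \varphi_{1,2}) \realto \varphi_2 \thesisiff (\varphi_{1,1} \realto \varphi_2) \land (\varphi_{1,2} \realto \varphi_2)$. By the induction hypothesis each conjunct is equivalent to some $\psi_j \in \Lang^{\lor\land}$.
\end{itemize}

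The remaining obstacle is that $\Lang^{\lor\land}$ is not syntactically closed under $\land$: a conjunction of two disjunctions has to be put back into disjunctive form. So we first prove an auxiliary claim: for $\psi_1, \psi_2 \in \Lang^{\lor\land}$, there is some $\psi \in \Lang^{\lor\land}$ with $\psi_1 \land \psi_2 \thesisiff \psi$. The proof is by induction on $\psi_1$ and then on $\psi_2$, using the distributive law of Figure~\ref{fig:proof:log:der:distr} together with commutativity of $\land$, which is derivable from the propositional rules. The base case of two conjunctions is trivial, since $\Lang^\land$ is closed under $\land$.

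With the claim in hand, item~(\ref{item:proof:ded:omegalorland:realto}) is finished, and item~(\ref{item:proof:ded:omegalorland:mod}) has no gap. The only delicate point is to apply $\ax{{\realto}/\lor}$ only when the antecedent lies in $\Lang^\land$. This is exactly why the case split is on $\varphi_1$ first, reducing to item~(\ref{item:proof:ded:omegalorland:realto:base}).
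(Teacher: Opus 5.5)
Your proposal is correct and follows the paper's proof. It uses the same three structural inductions with the modal and $\realto$ laws of Lemma~\ref{lem:proof:log:der}, and item~(\ref{item:proof:ded:omegalorland:realto:base}) serves as the base case of item~(\ref{item:proof:ded:omegalorland:realto}); your auxiliary claim that $\Lang^{\lor\land}$ is closed under $\land$ up to $\thesisiff$ simply spells out what the paper compresses into ``together with distributive laws'' in the step case of item~(\ref{item:proof:ded:omegalorland:realto}).
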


\begin{proof}
\begin{enumerate}[(1)]
\item 
By induction on $\varphi \in \Lang^{\lor\land}$,
using Figure~\ref{fig:proof:log:der:mod}.

\item
By induction on $\varphi \in \Lang^{\lor\land}$,
using the last law in Figure~\ref{fig:proof:log:der:realto}.

\item 
By induction on $\varphi_1 \in \Lang^{\lor\land}$,
using item~(\ref{item:proof:ded:omegalorland:realto:base})
in the base case of $\varphi_1 \in \Lang^\land$,
and using the first law in Figure~\ref{fig:proof:log:der:realto}
together with distributive laws
(Figure~\ref{fig:proof:log:der:distr})
in the induction step ($\varphi_1$ of the form $\varphi \lor \varphi'$).
\qedhere
\end{enumerate}
\end{proof}

We can now prove Lemma~\ref{lem:ded:omegalorland}.

\LemOmegaLorLand*

\begin{proof}
We reason by induction on $\varphi \in \Lang^\omega(\PT)$.
\begin{itemize}
\item
The cases of $\form{\pair{}}$, $\True$ and $\False$ are trivial.

\item
The case of $\varphi \lor \psi$ follows directly from the induction hypothesis.
As for $\varphi \land \psi$, we use the induction hypothesis
and conclude with distributive laws (Figure~\ref{fig:proof:log:der:distr}).

\item
The cases of $\form{\modgen}\varphi$ follow from
the induction hypothesis and
Lemma~\ref{lem:proof:ded:omegalorland:prelim}(\ref{item:proof:ded:omegalorland:mod}).

\item
The case of $\psi \realto \varphi$ follows from
the induction hypothesis and
Lemma~\ref{lem:proof:ded:omegalorland:prelim}(\ref{item:proof:ded:omegalorland:realto}).
\qedhere
\end{itemize}
\end{proof}

As for Lemma~\ref{lem:ded:omega},
we use the following.

\begin{lemma}
\label{lem:proof:ded:neutralomega:prelim}
Let the closed iteration term $t$ be the $n$th successor of $0$.
Then:
\begin{enumerate}[(1)]
\item
\label{item:proof:ded:neutralomega:mu}
\(
  (\finmu^t \FP)\varphi
  \thesisiff
  (\FP. \varphi)^n(\False)
\)

\item
\label{item:proof:ded:neutralomega:nu}
\(
  (\finnu^t \FP)\varphi
  \thesisiff
  (\FP. \varphi)^n(\True)
\)
\end{enumerate}
\end{lemma}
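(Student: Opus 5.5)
We argue by induction on $n$, simultaneously for all formulae $\varphi \in \Lang(\FPEnv,\FP\colon\PT;\PT)$, using the derivable equivalences of Figure~\ref{fig:proof:log:der:iter} (Lemma~\ref{lem:proof:log:der}). We treat item~(\ref{item:proof:ded:neutralomega:mu}); item~(\ref{item:proof:ded:neutralomega:nu}) is obtained by replacing $\finmu$ with $\finnu$ and $\False$ with $\True$.

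For $n = 0$ the term $t$ is $0$, and $(\finmu^0 \FP)\varphi \thesisiff \False = (\FP.\varphi)^0(\False)$ is exactly the first law of Figure~\ref{fig:proof:log:der:iter}.

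For the step, let $t = s+1$ with $s$ the $n$th successor of $0$. By Figure~\ref{fig:proof:log:der:iter} we have $(\finmu^{s+1}\FP)\varphi \thesisiff \varphi[(\finmu^s \FP)\varphi/\FP]$. The induction hypothesis gives $(\finmu^s\FP)\varphi \thesisiff (\FP.\varphi)^n(\False)$. It remains to show $\varphi[(\finmu^s \FP)\varphi/\FP] \thesisiff \varphi[(\FP.\varphi)^n(\False)/\FP]$, whose right-hand side is by definition $(\FP.\varphi)^{n+1}(\False)$. We then conclude by the cut rule.

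The only real obstacle is this congruence step: a substitution lemma stating that if $\theta \thesis \theta'$ then $\varphi[\theta/\FP] \thesis \varphi[\theta'/\FP]$. We prove it by induction on $\varphi$, considering each formation rule.
\begin{itemize}
\item Propositional connectives follow from the rules of Figure~\ref{fig:ded:prop}.
\item Modalities $\form\modgen$ follow from their monotonicity rule.
\item Bounded iterations $(\fingen^u \FPbis)$ follow from the monotonicity rules of Figure~\ref{fig:ded:iter}.
\item Quantifiers follow from the introduction and elimination rules of Figure~\ref{fig:ded:quant}, after renaming bound variables away from $\FV(\theta,\theta')$.
\item The variable case $\FP$ is immediate, and other variables are unaffected.
\item For $\psi\realto\varphi$ there is nothing to prove, since such a formula contains no free fixpoint variable.
\end{itemize}
No polarity issue arises: every fixpoint-variable occurrence is positive in the monotone sense, because $\realto$ blocks all fixpoint variables.
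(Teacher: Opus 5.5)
Your proof is correct and follows the paper's argument: induction on $n$ using the laws of Figure~\ref{fig:proof:log:der:iter} and the definition in Equation~\eqref{eq:form:syntfun}. The one difference is that you state and prove the congruence step $\varphi[\theta/\FP] \thesis \varphi[\theta'/\FP]$, which the paper's one-line proof uses without comment; it is in fact needed for only one direction of each item, because the rule of Figure~\ref{fig:ded:iter} deriving $(\finmu^{t+1}\FP)\varphi \thesis \varphi[\psi/\FP]$ from $(\finmu^{t}\FP)\varphi \thesis \psi$, and its dual for $\finnu$, already has the substitution built in.
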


\begin{proof}
\hfill
\begin{enumerate}[(1)]
\item 
By induction on $n \in \NN$,
using Figure~\ref{fig:proof:log:der:iter}
and Equation~\eqref{eq:form:syntfun} in~\S\ref{sec:formulae}.

\item
Similar to item~(\ref{item:proof:ded:neutralomega:mu}).
\qedhere
\end{enumerate}
\end{proof}

We now prove Lemma~\ref{lem:ded:omega}.

\LemNeutralOmega*

\begin{proof}
We prove a more general statement.
Let $\varphi \in \Lang^\pm(\FPEnv;\PT)$,
without iteration variables,
but with $\Env = \FP_1\colon\PT_1,\dots,\FP_n\colon\PT_n$.
Then for all $\theta_1,\dots,\theta_n$
with $\theta_i \in \Lang^\omega(\PT_i)$,
there is some $\psi \in \Lang^\omega(\PT)$
such that
\[
\begin{array}{l l l}
  \varphi[\theta_1 / \FP_1, \dots, \theta_n / \FP_n] 
& \thesisiff
& \psi
\end{array}
\]

We reason by induction on $\varphi \in \Lang^\pm(\FPEnv;\PT)$.
\begin{itemize}
\item In the case of $\FP_j \in \Lang^\pm(\FPEnv;\PT)$,
we take $\psi$ to be $\theta_j$.

\item In the cases of $\form{\pair{}}$, $\True$ and $\False$,
we can take $\psi$ to be $\varphi$.

\item
The cases of $\varphi_1 \lor \varphi_2$, $\varphi_1 \land \varphi_2$,
$\form{\modgen}\varphi$ and $\varphi_1 \realto \varphi_2$
all follow from the induction hypothesis.

\item
In the case of $(\finmu^t \FP)\varphi$,
let $\theta_1,\dots,\theta_n$ be as in the statement.
Note that $\varphi \in \Lang^{\pm}(\FPEnv, \FP\colon\PT; \PT)$.

We show that for each $m \in \NN$,
there is some $\psi_m \in \Lang^\omega(\PT)$ such that
\[
\begin{array}{l l l}
  (\FP.\varphi[\theta_1 / \FP_1, \dots, \theta_n / \FP_n])^m(\False)
& \thesisiff
& \psi_m
\end{array}
\]

We reason by induction on $m \in \NN$.
In the base case $m = 0$, we can take $\psi_0$ to be $\False$.
As for the induction step, we can take
\[
\begin{array}{l l l}
  \psi_{m+1}
& \thesisiff
& \varphi[\theta_1 / \FP_1, \dots, \theta_n / \FP_n, \psi_m / \FP]
\end{array}
\]

\noindent
obtained by applying the induction hypothesis on $\varphi$
with $\theta_1,\dots,\theta_n$ and $\psi_m$.

We can then conclude using
Lemma~\ref{lem:proof:ded:neutralomega:prelim}(\ref{item:proof:ded:neutralomega:mu}),
taking $\psi$ to be $\psi_m$ with
$m \in \NN$ such that the \emph{closed} iteration term $t$
is the $m$th successor of $0$.

\item
The case of $(\finnu^t \FP)\varphi$
is similar, using item~(\ref{item:proof:ded:neutralomega:nu})
of Lemma~\ref{lem:proof:ded:neutralomega:prelim}
instead of item~(\ref{item:proof:ded:neutralomega:mu}).
\qedhere
\end{itemize}
\end{proof}

\begin{figure}
\begin{subfigure}{\textwidth}
\centering
\(
\begin{array}{c}

  (\exists\itvar)\varphi \lor \psi
  \,\thesisiff\,
  (\exists\itvar)(\varphi \lor \psi) 

\qquad

  (\exists\itvar)\varphi \land \psi
  \,\thesisiff\,
  (\exists\itvar)(\varphi \land \psi) 

\\\\

  (\forall\itvar)\varphi \land \psi
  \,\thesisiff\,
  (\forall\itvar)(\varphi \land \psi) 

\qquad

  (\forall\itvar)\varphi \lor \psi
  \,\thesisiff\,
  (\forall\itvar)(\varphi \lor \psi) 

\end{array}
\)
\caption{Commutation over propositional connectives,
where $\itvar \notin \FV(\psi)$.%
\label{fig:proof:log:der:quant:prop}}
\end{subfigure}

\begin{subfigure}{\textwidth}
\centering
\(
\begin{array}{c}

  \form{\modgen}(\exists \itvar)\varphi
  \,\thesisiff\,
  (\exists \itvar)\form{\modgen}\varphi

\qquad

  \form{\modgen}(\forall \itvar)\varphi
  \,\thesisiff\,
  (\forall \itvar)\form{\modgen}\varphi

\end{array}
\)
\caption{Commutation over modalities.%
\label{fig:proof:log:der:quant:mod}}
\end{subfigure}

\begin{subfigure}{\textwidth}
\centering
\(
\begin{array}{c}

  (\forall \itvarbis)\psi \realto \varphi
  \,\thesisiff\,
  (\exists \itvarbis)(\psi \realto \varphi)

\qquad

  \theta \realto (\exists \itvar)\varphi
  \,\thesisiff\,
  (\exists \itvar)(\theta \realto \varphi)

\\\\

  (\exists \itvarbis)\psi \realto \varphi
  \,\thesisiff\,
  (\forall \itvarbis)(\psi \realto \varphi)

\qquad

  \psi \realto (\forall \itvar)\varphi
  \,\thesisiff\,
  (\forall \itvar)(\psi \realto \varphi)

\end{array}
\)
\caption{Commutation over the realizability implication,
where $\itvarbis \notin \FV(\varphi)$,
$\itvar \notin \FV(\psi,\theta)$ and $\theta \in \Lang^\pm$.%
\label{fig:proof:log:der:quant:realto}}
\end{subfigure}

\begin{subfigure}{\textwidth}
\centering
\(
\begin{array}{c}

 (\fingen^t \FP)(\exists \itvar)\varphi 
 \,\thesisiff\,
 (\exists \itvar)(\fingen^t \FP)\varphi

\qquad

 (\fingen^t \FP)(\forall \itvar)\varphi 
 \,\thesisiff\,
 (\forall \itvar)(\fingen^t \FP)\varphi
\end{array}
\)
\caption{Commutation over bounded iteration,
where $\fingen \in \{\finmu, \finnu\}$
and $\itvar \notin \FV(t)$.%
\label{fig:proof:log:der:quant:iter}}
\end{subfigure}

\begin{subfigure}{\textwidth}
\centering
\(
\begin{array}{c}

 (\exists \itvar)\varphi[\itvar / \itvarbis]
 \,\thesisiff\,
 (\exists \itvar)(\exists \itvarbis)\varphi

\qquad

 (\forall \itvar)(\forall \itvarbis)\varphi 
 \,\thesisiff\,
 (\forall \itvar)\varphi[\itvar / \itvarbis]
\end{array}
\)
\caption{Merge laws.%
\label{fig:proof:log:der:quant:merge}}
\end{subfigure}

\caption{Quantifier laws.%
\label{fig:proof:log:der:quant}}
\end{figure}

We finally turn to the crucial Proposition~\ref{prop:ded:prenex}.
Figure~\ref{fig:proof:log:der:quant} gathers some derivable
quantifier laws.

\begin{lemma}
\label{lem:proof:log:der:quand}
The sequents in Figure~\ref{fig:proof:log:der:quant} are all derivable.
\end{lemma}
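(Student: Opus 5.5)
I will establish the sequents of Figure~\ref{fig:proof:log:der:quant} one subfigure at a time. In each case one direction is a primitive rule of Figure~\ref{fig:ded}, and I derive the converse from the introduction rules for $(\exists\itvar)$ (right rule, with $t \deq \itvar$) and $(\forall\itvar)$ (left rule, with $t \deq \itvar$), together with monotonicity of the connectives.

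\textbf{Figure~\ref{fig:proof:log:der:quant:prop}.} For $\exists$ and $\land$, the direction from left to right is the Frobenius rule $\ax{\exists L}$. I apply it with $\theta \deq \psi$ to the premise $\psi \land \varphi \thesis (\exists\itvar)(\varphi \land \psi)$, which is obtained from the $\exists$-right rule with $t = \itvar$; commutativity of $\land$ is available from the propositional rules. For the converse I use the left rule for $\exists$ inside $\ax{\exists L}$: the premise $\varphi \land \psi \thesis (\exists\itvar)\varphi \land \psi$ is derived by an $\exists$-right step on the first conjunct, and the side condition holds because $\itvar \notin \FV(\psi)$.

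For $\exists$ and $\lor$, the direction from left to right is an $\lor$-left step. Its first case is monotonicity of $\exists$, and its second case is an $\exists$-right step with $t = \itvar$. The converse is again $\ax{\exists L}$ with $\theta \deq \True$, applied after splitting $\varphi \lor \psi$.

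The $\forall$ laws are dual: I use $\ax{\forall R}$ and the $\forall$-left rule.

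\textbf{Figure~\ref{fig:proof:log:der:quant:mod}.} The sequents $\form\modgen(\exists\itvar)\varphi \thesis (\exists\itvar)\form\modgen\varphi$ and $(\forall\itvar)\form\modgen\varphi \thesis \form\modgen(\forall\itvar)\varphi$ are primitive. For the converses, I first show that $\exists$ is monotone: from $\varphi \thesis \varphi'$ derive $(\exists\itvar)\varphi \thesis (\exists\itvar)\varphi'$ by $\ax{\exists L}$ with $\theta = \True$. Then $\form\modgen\varphi \thesis \form\modgen(\exists\itvar)\varphi$ follows from modal monotonicity and $\exists$-right, and a final $\ax{\exists L}$ closes the argument. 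The $\forall$ case is dual via $\ax{\forall R}$.

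\textbf{Figures~\ref{fig:proof:log:der:quant:realto} and~\ref{fig:proof:log:der:quant:iter}.} The hard directions are the primitive rules $\ax{WF}$, $\ax{CC}$, $\ax{\exists/{\realto}}$, $\ax{{\realto}/\forall}$, $\ax{\fingen/\exists}$ and $\ax{\forall/\fingen}$. The converses follow from contravariance and covariance of $\realto$ and from monotonicity of $\fingen^t$, combined with the $\exists$/$\forall$ introduction rules. As an example, consider $(\exists\itvarbis)(\psi\realto\varphi) \thesis (\forall\itvarbis)\psi\realto\varphi$. Since $(\forall\itvarbis)\psi \thesis \psi$ (by $\forall$-left with $t = \itvarbis$), contravariance of $\realto$ gives $\psi \realto \varphi \thesis (\forall\itvarbis)\psi \realto \varphi$. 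Then $\ax{\exists L}$ applies, because $\itvarbis$ is not free in $(\forall\itvarbis)\psi\realto\varphi$.

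Similarly, for $(\exists\itvar)(\theta\realto\varphi) \thesis \theta\realto(\exists\itvar)\varphi$, I use $\varphi \thesis (\exists\itvar)\varphi$ under $\realto$ and then $\ax{\exists L}$. For $(\exists\itvar)\psi \realto \varphi \thesis (\forall\itvar)(\psi\realto\varphi)$, I use $\ax{\forall R}$ with $\theta = \False$ (adjusting by $\varphi \lor \False \thesisiff \varphi$), starting from $\psi \thesis (\exists\itvar)\psi$. The iteration case of Figure~\ref{fig:proof:log:der:quant:iter} uses the last rules of Figure~\ref{fig:ded:iter} in the same way.

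\textbf{Figure~\ref{fig:proof:log:der:quant:merge}.} The merge laws combine $\ax{\exists M}$ and $\ax{\forall M}$ with their easy converses. For the $\exists$ case, take $\varphi[\itvar/\itvarbis] \thesis (\exists\itvarbis)\varphi$ by $\exists$-right with $t = \itvar$, then apply $\exists$-right again, and finally $\ax{\exists L}$.

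\textbf{Main obstacle.} I expect the main difficulty to be bookkeeping rather than a conceptual step. The rules $\ax{\exists L}$ and $\ax{\forall R}$ carry contexts $\theta$, so $\True$, $\False$ and commutativity must be inserted each time. In addition, every intermediate formula must remain in the correct polarised language $\Lang^\sign$, which requires checking the $\Pos$ and $\Neg$ side conditions for bound variables.
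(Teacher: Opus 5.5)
Your proposal is correct and follows essentially the same route as the paper. In each subfigure one direction is a primitive rule (or, for the $\lor$ laws, a short derivation), and the converse comes from the $\exists$-right/$\forall$-left rules with $t \deq \itvar$, monotonicity of the relevant connective, and a final $\exists$-left/$\forall$-right step. Your explicit instantiation of $\ax{\exists L}$ and $\ax{\forall R}$ with $\theta = \True$ and $\theta = \False$ simply spells out the bare quantifier steps that the paper's derivation trees leave implicit.
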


\begin{proof}
We discuss each subfigure separately.
\begin{description}
\item[Case of Figure~\ref{fig:proof:log:der:quant:prop}.]
We derive
\(
  (\exists\itvar)\varphi \lor \psi
  \,\thesisiff\,
  (\exists\itvar)(\varphi \lor \psi) 
\)
as
\[
\text{
\AXC{}
\UIC{$\varphi \,\thesis\, \varphi$}
\UIC{$\varphi \,\thesis\, \varphi \lor \psi$}
\UIC{$\varphi \,\thesis\, (\exists\itvar)(\varphi \lor \psi)$}
\UIC{$(\exists\itvar)\varphi \,\thesis\, (\exists\itvar)(\varphi \lor \psi)$}
\AXC{}
\UIC{$\psi \,\thesis\, \psi$}
\UIC{$\psi \,\thesis\, \varphi \lor \psi$}
\UIC{$\psi \,\thesis\, (\exists\itvar)(\varphi \lor \psi)$}
\BIC{$(\exists\itvar)\varphi \lor \psi
  \,\thesis\,
  (\exists\itvar)(\varphi \lor \psi)$}
\DisplayProof}
\]

\noindent
and
\[
\text{
\AXC{}
\UIC{$\varphi \,\thesis\, \varphi$}
\UIC{$\varphi \,\thesis\, (\exists\itvar)\varphi$}
\UIC{$\varphi \,\thesis\, (\exists\itvar)\varphi \lor \psi$}
\AXC{}
\UIC{$\psi \,\thesis\, \psi$}
\UIC{$\psi \,\thesis\, (\exists\itvar)\varphi \lor \psi$}
\BIC{$\varphi \lor \psi
  \,\thesis\,
  (\exists\itvar)\varphi \lor \psi$}
\UIC{$(\exists\itvar)(\varphi \lor \psi)
  \,\thesis\,
  (\exists\itvar)\varphi \lor \psi$}
\DisplayProof}
\]

The rule $\ax{\exists L}$ in Figure~\ref{fig:ded:quant}
yields
\(
  (\exists\itvar)\varphi \land \psi
  \,\thesis\,
  (\exists\itvar)(\varphi \land \psi) 
\)
and we derive
\(
  (\exists\itvar)(\varphi \land \psi) 
  \,\thesis\,
  (\exists\itvar)\varphi \land \psi
\)
as
\[
\text{
\AXC{}
\UIC{$\varphi \,\thesis\, \varphi$}
\UIC{$\varphi \land \psi \,\thesis\, \varphi$}
\UIC{$\varphi \land \psi \,\thesis\, (\exists\itvar)\varphi$}
\UIC{$(\exists\itvar)(\varphi \land \psi) \,\thesis\, (\exists\itvar)\varphi$}
\AXC{}
\UIC{$\psi \,\thesis\, \psi$}
\UIC{$\varphi \land \psi \,\thesis\, \psi$}
\UIC{$(\exists\itvar)(\varphi \land \psi) \,\thesis\, \psi$}
\BIC{$(\exists\itvar)(\varphi \land \psi) 
  \,\thesis\,
  (\exists\itvar)\varphi \land \psi$}
\DisplayProof}
\]

The laws for $(\forall \itvar)$ are obtained similarly.

\item[Case of Figure~\ref{fig:proof:log:der:quant:mod}.]

We have
\(
  \form{\modgen}(\exists \itvar)\varphi
  \,\thesis\,
  (\exists \itvar)\form{\modgen}\varphi
\)
from Figure~\ref{fig:ded:mod},
and we derive the converse as
\[
\text{
\AXC{}
\UIC{$\varphi \,\thesis\, \varphi$}
\UIC{$\varphi \,\thesis\, (\exists \itvar)\varphi$}
\UIC{$\form{\modgen}\varphi \,\thesis\, \form{\modgen}(\exists \itvar)\varphi$}
\UIC{$(\exists \itvar)\form{\modgen}\varphi
  \,\thesis\,
  \form{\modgen}(\exists \itvar)\varphi$}
\DisplayProof}
\]

The laws for $(\forall \itvar)$ are obtained similarly.

\item[Case of Figure~\ref{fig:proof:log:der:quant:realto}.]
We begin with
\(
  (\forall \itvarbis)\psi \realto \varphi
  \,\thesisiff\,
  (\exists \itvarbis)(\psi \realto \varphi)
\).
We have
\(
  (\forall \itvarbis)\psi \realto \varphi
  \,\thesis\,
  (\exists \itvarbis)(\psi \realto \varphi)
\)
by rule $\ax{WF}$ in Figure~\ref{fig:ded:realto},
and we derive its converse as
\[
\text{
\AXC{}
\UIC{$\psi \,\thesis\, \psi$}
\UIC{$(\forall \itvarbis)\psi \,\thesis\, \psi$}
\AXC{}
\UIC{$\varphi \,\thesis\, \varphi$}
\BIC{$\psi \realto \varphi
  \,\thesis\,
  (\forall \itvarbis)\psi \realto \varphi$}
\UIC{$(\exists \itvarbis)(\psi \realto \varphi)
  \,\thesis\,
  (\forall \itvarbis)\psi \realto \varphi$}
\DisplayProof}
\]

The other laws
are obtained similarly.

\item[Case of Figure~\ref{fig:proof:log:der:quant:iter}.]
We have
\(
  (\fingen^t \FP)(\exists \itvar)\varphi
  \,\thesis\,
  (\exists \itvar)(\fingen^t \FP)\varphi
\)
from Figure~\ref{fig:ded:quant}
and we derive its converse as
\[
\text{
\AXC{}
\UIC{$\varphi \,\thesis\, \varphi$}
\UIC{$\varphi \,\thesis\, (\exists \itvar)\varphi$}
\UIC{$(\fingen^t \FP)\varphi \,\thesis\, (\fingen^t \FP)(\exists \itvar)\varphi$}
\UIC{$(\exists \itvar)(\fingen^t \FP)\varphi
  \,\thesis\,
  (\fingen^t \FP)(\exists \itvar)\varphi$}
\DisplayProof}
\]

The laws for $(\forall \itvar)$ are handled similarly.

\item[Case of Figure~\ref{fig:proof:log:der:quant:merge}.]
We derive
\[
\begin{array}{c !{\qquad\qquad} c}

\text{
\AXC{}
\UIC{$\varphi[\itvar / \itvarbis]
  \,\thesis\,
  \varphi[\itvar / \itvarbis]$}
\UIC{$\varphi[\itvar / \itvarbis]
  \,\thesis\,
  (\exists \itvarbis)\varphi$}
\UIC{$\varphi[\itvar / \itvarbis]
  \,\thesis\,
  (\exists \itvar)(\exists \itvarbis)\varphi$}
\UIC{$(\exists \itvar)\varphi[\itvar / \itvarbis]
  \,\thesis\,
  (\exists \itvar)(\exists \itvarbis)\varphi$}
\DisplayProof}

&

\text{
\AXC{}
\UIC{$\varphi[\itvar / \itvarbis]
  \,\thesis\,
  \varphi[\itvar / \itvarbis]$}
\UIC{$(\forall \itvarbis)\varphi
  \,\thesis\,
  \varphi[\itvar / \itvarbis]$}
\UIC{$(\forall \itvar)(\forall \itvarbis)\varphi
  \,\thesis\,
  \varphi[\itvar / \itvarbis]$}
\UIC{$(\forall \itvar)(\forall \itvarbis)\varphi
  \,\thesis\,
  (\forall \itvar)\varphi[\itvar / \itvarbis]$}
\DisplayProof}

\end{array}
\]

\noindent
and we obtain the converses from the rules
$\ax{\exists M}$ and $\ax{\forall M}$
in Figure~\ref{fig:ded:quant}.
\qedhere
\end{description}
\end{proof}

We can finally prove Proposition~\ref{prop:ded:prenex}.

\PropFormPrenex*

\begin{proof}
We reason by induction on $\varphi \in \Lang^\sign(\FPEnv; \PT)$.
\begin{itemize}
\item
In the case of $\FP$ with $(\FP \colon \PTbis) \in \FPEnv$,
we take $\psi$ to be $\FP$.

\item
In the cases of $\form{\pair{}}$, $\True$ and $\False$,
we can take $\psi$ to be $\varphi$.

\item
The cases of $(\exists \itvar)\varphi$ and $(\forall \itvar)\varphi$ 
follow from the induction hypothesis
and Figure~\ref{fig:proof:log:der:quant:merge}.

\item
The cases of $\varphi_1 \lor \varphi_2$ and $\varphi_1 \land \varphi_2$
are dealt-with using the induction hypothesis and
Figures~\ref{fig:proof:log:der:quant:prop}
and~\ref{fig:proof:log:der:quant:merge}.

\item
The cases of $\form{\modgen}\varphi$ are dealt-with 
using the induction hypothesis and
Figure~\ref{fig:proof:log:der:quant:mod}.

\item
In the case of a \emph{negative} realizability implication
$\varphi_1 \realto \varphi_2$,
we conclude by induction hypothesis
and Figure~\ref{fig:proof:log:der:quant:realto}.

Consider now the case of a \emph{positive} realizability implication
$\varphi_1 \realto \varphi_2$.
Using the induction hypothesis
and the top left law in Figure~\ref{fig:proof:log:der:quant:realto},
we can reduce
to the case of
$(\exists \itvar)(\theta \realto \varphi_2)$
with $\theta \in \Lang^\pm$.
We can the conclude using the top right law
in Figure~\ref{fig:proof:log:der:quant:realto}.

\item
The cases of $(\finmu^t \FP)\varphi$ and $(\finnu^t \FP)\varphi$
are dealt-with using the induction hypothesis and
Figure~\ref{fig:proof:log:der:quant:iter}.
\qedhere
\end{itemize}
\end{proof}

}
\opt{full}{
\section{Proofs of \S\ref{sec:reft} (\nameref{sec:reft})}
\label{sec:proof:reft}

We prove Lemma~\ref{lem:reft}.

\LemReft*

\begin{proof}
The proof is by induction on $\RT^\sign$.
The base case of $\reft{\PT \mid \varphi}$
with $\varphi \in \Lang^\sign(\PT)$ is trivial.
In the base case of $\PT$, one can take $\varphi \deq \True \in \Lang^\sign(\PT)$.
In the cases of $\RT^\sign \times \RTbis^\sign$
and $\RTter^{-\sign} \arrow \RT^\sign$,
by induction hypotheses we get
$\varphi \in \Lang^\sign(\UPT\RT)$,
$\psi \in \Lang^\sign(\UPT\RTbis)$,
and $\theta \in \Lang^{-\sign}(\UPT\RTter)$
such that
$\RT^\sign \eqtype \reft{\UPT\RT \mid \varphi}$,
$\RTbis^\sign \eqtype \reft{\UPT\RTbis \mid \varphi}$
and
$\RTter^{-\sign} \eqtype \reft{\UPT\RTter \mid \varphi}$.
We then conclude with
\[
\begin{array}{r c l}
  \RT^\sign \times \RTbis^\sign
& \eqtype
& \reft{\UPT\RT \times \UPT\RTbis \mid \form{\pi_1}\varphi \land \form{\pi_2}\psi}
\\

  \RTter^{-\sign} \arrow \RT^\sign
& \eqtype
& \reft{\UPT\RTter \arrow \UPT\RT \mid \theta \realto \varphi}
\end{array}
\]
\end{proof}

}
\opt{full}{

\section{Proofs of \S\ref{sec:sem:pure} (\nameref{sec:sem:pure})}
\label{sec:proof:sem:pure}

\colin{REFERENCES:
\begin{itemize}

\item Steps for solving domain equations,
besides~\cite[\S 7.1]{ac98book},
on may look at~\cite{aj95chapter}, namely:
\begin{itemize}
\item Embedding projection pairs
are defined in~\cite[Definition 3.1.15]{aj95chapter}.

\item Bilimits (in the sense of limits-colimits coincidence
for e-p pairs) are discussed in~\cite[\S 3.3.2]{aj95chapter}.

\item The case of algebraic domains is handled in~\cite[\S 3.3.3]{aj95chapter}.

\item The case of bounded-complete domains is handled
in~\cite[Proposition 4.1.3]{aj95chapter}.

\item Equations with parameters handled in~\cite[\S 5.2.3]{aj95chapter}
\end{itemize}
\end{itemize}}

\subsection{Solutions of Recursive Domain Equations}
We review the usual solution of recursive domain equations.
We refer to~\cite{ac98book,aj95chapter,streicher06book}.

\subsubsection{Categories of Domains}
In the following, $\DCPO$ is the category of those
posets with all directed suprema, and with Scott-continuous
functions as morphisms.
$\CPO$ is the full subcategory of $\DCPO$
on posets with a least element.
Note that $\Scott$ is a full subcategory of $\CPO$.

Recall that $\DCPO$, $\CPO$ and $\Scott$ are Cartesian-closed categories.
Finite products are given by Cartesian products
of sets equipped with the component-wise order.
The closed structures are
given by equipping each homset with the pointwise order.
See \cite[Theorem~3.3.3, Theorem~3.3.5 and Corollary~4.1.6]{aj95chapter}.
Hence for each $n \in \NN$, the categories
$\Scott^n$, $\CPO^n$ and $\DCPO^n$
are (not full) subcategories of $\Scott$, $\CPO$ and $\DCPO$
respectively.

\begin{lemma}
\label{lem:proof:scott:enrich}
If $n \in \NN$ then
$\DCPO^n$,
$\CPO^n$, $\Scott^n$ are enriched in $\DCPO$.
\end{lemma}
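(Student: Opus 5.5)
Here $\DCPO^n$, $\CPO^n$ and $\Scott^n$ are the $n$-fold product categories, whose objects are $n$-tuples $(X_1,\dots,X_n)$ and whose morphisms are tuples $(f_1,\dots,f_n)$ of Scott-continuous maps $f_i\colon X_i\to Y_i$. The plan is to reduce everything to the case $n=1$ together with closure of $\DCPO$ under finite products.

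For $n=1$, we order each homset $\DCPO(X,Y)$ pointwise. Directed suprema of Scott-continuous functions are computed pointwise, and the result is again Scott-continuous: monotonicity is immediate, and preservation of directed sups follows by interchanging two directed suprema in $Y$. Hence each homset is a dcpo; this is just the exponential $[X\to Y]$ of the Cartesian-closed structure recalled above.

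The substantive point is that composition
\[
\comp \colon \DCPO(Y,Z)\times\DCPO(X,Y)\to\DCPO(X,Z)
\]
is a morphism of $\DCPO$. By the standard fact that a map out of a product of dcpos is Scott-continuous iff it is continuous in each argument separately, it suffices to treat each argument.

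In the first argument, $(\bigvee_i g_i)\comp f = \bigvee_i (g_i\comp f)$ holds pointwise by definition.

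In the second argument we have
\[
g\comp(\bigvee_i f_i)(x) = g(\bigvee_i f_i(x)) = \bigvee_i g(f_i(x)),
\]
using continuity of $g$ on the directed set $\{f_i(x)\}_i$. This step is the only place where continuity of the morphisms themselves is used, and I regard it as the main (mild) obstacle.

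The identities are given by maps $1\to\DCPO(X,X)$ out of the terminal dcpo, which are trivially continuous.

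For general $n$, a homset of $\DCPO^n$ is the product $\prod_{i}\DCPO(X_i,Y_i)$. This is a dcpo under the componentwise order, since finite products in $\DCPO$ are computed componentwise. Composition is then the product of the $n$ continuous composition maps, precomposed with the canonical shuffle isomorphism. Both are continuous, so their composite is too.

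Since $\CPO^n$ and $\Scott^n$ are full subcategories of $\DCPO^n$, they inherit the same hom-dcpos and the same continuous composition, so they are $\DCPO$-enriched as well. Here we use that full subcategories of $\Scott$ and $\CPO$ within $\DCPO$ have the same homsets. We do not need their homsets to be Scott domains themselves, only dcpos.
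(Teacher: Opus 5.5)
Your proof is correct and follows essentially the same route as the paper: the $n=1$ case rests on the pointwise-ordered function spaces given by Cartesian closure, and the general case reduces to the componentwise order on finite products. The differences are only in detail: you verify continuity of composition by hand and transfer the structure to $\CPO$ and $\Scott$ by fullness, whereas the paper simply cites Cartesian closure of each of $\DCPO$, $\CPO$ and $\Scott$.
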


\begin{proof}
The result for $n=1$ follows from the
Cartesian-closure of $\DCPO$, $\CPO$ and $\Scott$.
In the cases of $n \neq 1$, the result follows from the fact that
in $\DCPO, \CPO, \Scott$, finite products are Cartesian products
of sets equipped with the component-wise order.
\end{proof}

Let $\cat C$ be a category enriched over $\DCPO$.
Given objects $X,Y \in \cat C$,
an \emph{embedding-projection} pair $X \to Y$
is a pair of morphisms $\ladj f: X\rightleftarrows Y:\radj f$
where $\radj f \comp \ladj f = \id_X$ and $\ladj f \comp \radj f \leq \id_Y$.
The morphism $\ladj f$ is an \emph{embedding}
(it reflects (as well as preserves) the order),
while $\radj f$ is called a \emph{projection}.
Note that if $X$ (resp.\ $Y$) has a least element,
then so does $Y$ (resp.\ $X$) and $\ladj f$ (resp $\radj f$)
is strict.%
\footnote{A Scott-continuous function is \emph{strict} if it preserves
least elements.}
It is well-known that $\ladj f$ completely determines $\radj f$
and reciprocally, see~\cite[\S 7.1]{ac98book}
(cf.\ also~\cite[\S 3.1.4]{aj95chapter} and~\cite[\S 9]{streicher06book}).
Given an embedding $e$ (resp.\ a projection $p$),
we write $\radj e$ (resp.\ $\ladj p$)
for the corresponding projection (resp.\ embedding).

We write $\cat C^\ep$ for the category with the same objects as $\cat C$,
and with embedding-projection pairs as morphisms.
Note that we have faithful functors
$\ladj{(\pl)} \colon \cat C^\ep \to \cat C$
and
$\radj{(\pl)} \colon \cat C^\ep \to \cat C^\op$,
taking $(\ladj f,\radj f)$ to $\ladj f$ and to $\radj f$,
respectively.
Given a functor $H$ of codomain $\cat C^\ep$,
we write $\radj H$ for $\radj{(\pl)} \comp H$,
and similarly for $\ladj H$.

\subsubsection{The Limit-Colimit Coincidence}
The following (crucial and) well-known fact
is \cite[Theorem 7.1.10]{ac98book}
(see also \cite[Theorem 3.3.7]{aj95chapter}).

\colin{In Theorem~\ref{thm:proof:scott:limcolim}
we need that composition in $\cat C$ is continuous
(similarly as in \cite[Theorem 7.1.10]{ac98book}).}

\begin{theorem}
\label{thm:proof:scott:limcolim}
Let $K \colon \omega \to \cat C^\ep$ be a functor,
where $\cat C$ is enriched over $\DCPO$.
Each limiting cone $\varpi \colon \Lim \radj K \to \radj K$
for $\radj K \colon \omega^\op \to \cat C$ consists of projections,
and the $(\ladj{(\varpi_n)},\varpi_n)_n$
form a colimiting cocone
$K \to \Colim K$ in $\cat C^\ep$.
\end{theorem}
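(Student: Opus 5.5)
We follow the classical limit-colimit argument (cf.\ \cite[Theorem 7.1.10]{ac98book}). Write $K(n) = X_n$ and $K(n \le m) = (e_{nm}, p_{mn})$. Let $X = \Lim \radj K$ with limiting cone $\varpi_n \colon X \to X_n$. The plan has three steps: build candidate embeddings $\iota_n \colon X_n \to X$ by ``embedding then projecting''; show that each $(\iota_n,\varpi_n)$ is an embedding-projection pair; and show that these pairs form a colimiting cocone in $\cat C^\ep$, using the characterisation $\bigvee_n \iota_n \comp \varpi_n = \id_X$.

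First, for fixed $n$ and each $m$, define $f_{n,m} \colon X_n \to X_m$ by $e_{nm}$ when $n \le m$ and by $p_{nm}$ when $m \le n$. One checks that $p_{m'm} \comp f_{n,m'} = f_{n,m}$ for $m \le m'$, using $p \comp e = \id$ and functoriality of $K$. So the $f_{n,m}$ form a cone over $\radj K$, and the universal property yields a unique $\iota_n \colon X_n \to X$ with $\varpi_m \comp \iota_n = f_{n,m}$. In particular $\varpi_n \comp \iota_n = f_{n,n} = \id$. For the inequality $\iota_n \comp \varpi_n \le \id_X$, I would compare after composing with each $\varpi_m$. For $m \ge n$,
\[
\varpi_m \comp \iota_n \comp \varpi_n = e_{nm} \comp p_{mn} \comp \varpi_m \le \varpi_m .
\]
Cases $m \le n$ reduce to $m \ge n$ by post-composing with $p$'s. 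To turn these componentwise inequalities into an inequality of maps into $X$, I would use that the enriched limit is ``order-jointly-monic'': $g \le h$ iff $\varpi_m \comp g \le \varpi_m \comp h$ for all $m$. This is the step I expect to be the main obstacle in the abstract enriched setting. I would obtain it from the assumption that $\cat C$'s limit is a $\DCPO$-enriched limit, as in \cite{ac98book} where the concrete categories have limits computed as sets of compatible sequences with componentwise order. If needed, I would state this as the standing hypothesis, noting that it holds for $\DCPO$, $\CPO$ and $\Scott$ (the latter by its \cite[Proposition 4.1.3]{aj95chapter} closure).

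Next, the $\iota_n \comp \varpi_n$ form an increasing chain, since $\iota_n \comp \varpi_n = \iota_m \comp e_{nm} \comp p_{mn} \comp \varpi_m \le \iota_m \comp \varpi_m$ for $n \le m$. Their supremum $s$ exists because homsets are dcpos, and composition is continuous. Then $\varpi_m \comp s = \bigvee_{n \ge m} \varpi_m \comp \iota_n \comp \varpi_n = \bigvee_{n\ge m} p_{nm} \comp \varpi_n = \varpi_m$. By uniqueness in the limit, $s = \id_X$.

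Finally, for colimitness, take a cocone $(g_n, q_n) \colon X_n \to Y$ in $\cat C^\ep$. Define $g = \bigvee_n g_n \comp \varpi_n$ and $q = \bigvee_n \iota_n \comp q_n$; both chains are increasing by the cocone equations. Continuity of composition gives $q \comp g = \bigvee_n \iota_n \comp \varpi_n = \id_X$, $g \comp q = \bigvee_n g_n \comp q_n \le \id_Y$, and $g \comp \iota_n = g_n$. Uniqueness follows from $\id_X = \bigvee \iota_n \comp \varpi_n$: any mediating $h$ satisfies $h = \bigvee_n h \comp \iota_n \comp \varpi_n = \bigvee_n g_n \comp \varpi_n$, and projections are determined by embeddings.
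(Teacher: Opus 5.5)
Your construction of the $\iota_n$ (the paper's $c_n$) and the key identity $\bigvee_n \iota_n \comp \varpi_n = \id_X$ are exactly the paper's, and your remaining computations are correct. The one thing to fix is the step you flag as the main obstacle: it is not one. You should not add order-joint-monicity of the limit as a standing hypothesis, since that would weaken the theorem.

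Your proof that the $\iota_n \comp \varpi_n$ form an increasing chain with supremum $s = \id_X$ never uses $\iota_n \comp \varpi_n \le \id_X$. It only needs three things:
\begin{itemize}
\item $e_{nm} \comp p_{mn} \le \id$ in $X_m$;
\item monotonicity and continuity of composition;
\item the \emph{equational} uniqueness clause of the limit.
\end{itemize}
That uniqueness clause also gives the identity $\iota_n = \iota_m \comp e_{nm}$, which you use implicitly, since $f_{m,k} \comp e_{nm} = f_{n,k}$ for all $k$. So just reorder: prove $s = \id_X$ first, and then $\iota_n \comp \varpi_n \le s = \id_X$ is immediate. This is precisely how the paper argues. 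The property you wanted even follows a posteriori: if $\varpi_m \comp u \le \varpi_m \comp v$ for all $m$, then $u = \bigvee_m \iota_m \comp \varpi_m \comp u \le \bigvee_m \iota_m \comp \varpi_m \comp v = v$.

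For colimitness, your route differs mildly from the paper's. You define both $g$ and $q$ as suprema and get uniqueness from $\bigvee_n \iota_n \comp \varpi_n = \id_X$. The paper instead obtains the projection $p$ directly as the mediating map of the cone $(\radj{(\tau_n)})_n$, defines only the embedding $e$ as a supremum, and checks $p \comp e = \id$ by limit uniqueness. It then derives uniqueness of the mediating pair from the limit's universal property on the projection side. Both work: yours is the standard characterisation of colimits in $\cat C^\ep$ via $\bigvee_n \iota_n \comp \varpi_n = \id$, while the paper's exploits the limit structure more directly.
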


\begin{proof}
Let
$K \colon \omega \to \cat C^\ep$
and consider a limiting cone in $\cat C$:
\begin{equation}
\label{diag:proof:scott:lim}
\begin{array}{c}
\begin{tikzcd}[column sep=2em] 
&
& \Lim \radj K
  \arrow{dll}[above]{\varpi_0}
  \arrow{dl}{\varpi_1}
  \arrow{d}{\varpi_2}
  \arrow{dr}[below]{\varpi_n}
  \arrow{drr}{\varpi_{n+1}}

\\

  \radj K(0)
& \radj K(1)
  \arrow{l}[below]{\radj{(k_0)}}
& \radj K(2)
  \arrow{l}[below]{\radj{(k_1)}}
& \radj K(n)
  \arrow[dashed]{l}
& \radj K(n+1)
  \arrow{l}[below]{\radj{(k_n)}}
& \phantom{F}
  \arrow[dashed]{l}
\end{tikzcd}
\end{array}
\end{equation}

\noindent
The components of the colimiting cocone
in $\cat C^\ep$
\begin{equation}
\label{diag:proof:scott:colim}
\begin{array}{c}
\begin{tikzcd}[column sep=2em]
&
& \Colim K

\\

  K(0)
  \arrow{r}[below]{k_0}
  \arrow{urr}[above]{\gamma_0}
& K(1)
  \arrow{r}[below]{k_1}
  \arrow{ur}[below]{\gamma_1}
& K(2)
  \arrow[dashed]{r}
  \arrow{u}{\gamma_2}
& K(n)
  \arrow{r}[below]{k_n}
  \arrow{ul}[below]{\gamma_n}
& K(n+1)
  \arrow[dashed]{r}
  \arrow{ull}[above]{\gamma_{n+1}}
& \phantom{F}
\end{tikzcd}
\end{array}
\end{equation}

\noindent
are given by $\radj{(\gamma_n)} = \varpi_n$ for projections.

Concerning embeddings,
for each $n \in \NN$ we build a cone with vertex $K(n) = \radj K(n)$.
Given $m \in \NN$, we have a morphism $h_{n,m} \colon K(n) \to K(m)$
obtained by composing $\radj{(k_i)}$'s or $\ladj{(k_i)}$'s
according to whether $m \leq n$ or $n \leq m$.
The $h_{n,m}$'s can be made so that $h_{n,m} = \radj{(k_m)} \comp h_{n,m+1}$.
The universal property of limits in $\cat C$ then yields
a unique morphism $c_n$ from $K(n) = \radj K(n)$ to $\Lim \radj K(n)$
such that $\varpi_m \comp c_n = h_{n,m}$ for all $m \in \NN$.

We are going to show that $c_n = \ladj{(\varpi_n)}$.
Note that $\varpi_n \comp c_n$ is the identity by definition of $c_n$.
It remains to show that $c_n \comp \varpi_n \leq \id_{\Lim \radj K}$.
We first show that
$(c_n \comp \varpi_n)_n$ forms an increasing sequence
in $\cat C(\Lim \radj K, \Lim \radj K)$.
To this end, note that $\varpi_n = \radj{(k_n)} \comp \varpi_{n+1}$ 
(since $\varpi$ is a cone).
We moreover have
$c_n = c_{n+1} \comp \ladj{(k_n)}$
since
\(
  \varpi_m \comp c_{n+1} \comp \ladj{(k_n)}
  =
  h_{n+1,m} \comp \ladj{(k_n)}
  =
  h_{n,m}
\)
for all $m \in \NN$.
We compute
\[
\begin{array}{*{5}{l}}
  c_n \comp \varpi_n
& =
& c_{n+1} \comp \ladj{(k_n)} \comp \radj{(k_n)} \comp \varpi_{n+1}
& \leq
& c_{n+1} \comp \varpi_{n+1}
\end{array}
\]

Let $\ell = \bigvee_{n}(c_n \comp \varpi_n)$.
We now claim that $\ell$ is the identity.
This will yield that $c_n \comp \varpi_n \leq \id_{\Lim \radj K}$.
In order to show that $\ell = \id_{\Lim \radj K}$,
we show that $\varpi_m \comp \ell = \varpi_m$ for all $m \in \NN$,
and use the universal property of limits in $\cat C$.
We have
\[
\begin{array}{l l l}
  \varpi_m \comp \ell
& =
& \varpi_m \comp \bigvee_{n}(c_n \comp \varpi_n)
\\

& =
& \varpi_m \comp \bigvee_{n \geq m}(c_n \comp \varpi_n)
\\

& =
& \bigvee_{n \geq m} \left(
  \varpi_m \comp c_n \comp \varpi_n
  \right)
\\

& =
& \bigvee_{n\geq m} \left(
  h_{n,m} \comp \varpi_n
  \right)
\end{array}
\]

\noindent
But by definition of $h_{n,m}$, we have
$h_{n,m} \comp \varpi_n = \varpi_m$ when $m \leq n$.

We can thus set $\gamma_n = (c_n, \varpi_n)$.
Moreover, $\gamma = (\gamma_n)_n$ is indeed a cocone since
$c_n = c_{n+1} \comp \ladj{(k_n)}$ (see above).

We now claim that $\gamma \colon K \to \Lim \radj K$ is colimiting.
To this end, consider a cocone $\tau \colon K \to C$.
We thus get a cone $\radj\tau \colon \radj C \to \radj K$ in $\cat C$,
and the universal property of limits yields a unique
$p \colon \radj C \to \Lim \radj K$ such that
$\varpi_n \comp p = \radj{(\tau_n)}$ for all $n \in \NN$.
We show that $p$ is a projection.
We define a morphism $e \colon \Lim \radj K \to C$
as $e = \bigvee_{n}(\ladj{(\tau_n)} \comp \varpi_n)$.
We have
\[
\begin{array}{l l l}
  e \comp p
& =
& \left( \bigvee_{n} \ladj{(\tau_n)} \comp \varpi_n \right) \comp p
\\

& =
& \bigvee_n \ladj{(\tau_n)} \comp \radj{(\tau_n)}
\\

& \leq
& \id_{C}
\end{array}
\]

\noindent
On the other hand, given $m \in \NN$ we have
\[
\begin{array}{l l l}
  \varpi_m \comp p \comp e
& =
& \bigvee_{n} \radj{(\tau_m)} \comp \ladj{(\tau_n)} \comp \varpi_n
\\

& =
& \bigvee_{n \geq m} \radj{(\tau_m)} \comp \ladj{(\tau_n)} \comp \varpi_n
\\

& =
& \bigvee_{n \geq m} h_{n,m} \comp \radj{(\tau_n)} \comp \ladj{(\tau_n)} \comp \varpi_n
\\

& =
& \bigvee_{n \geq m} h_{n,m} \comp \varpi_n
\\

& =
& \bigvee_{n \geq m} \varpi_m
\\

& =
& \varpi_m
\end{array}
\]

\noindent
so that $p \comp e = \id_{\Lim \radj K}$
by the universal property of limits in $\cat C$.

Moreover, for all $n \in \NN$ we have
\[
\begin{array}{l l l}
  e \comp c_n
& =
& \bigvee_m \ladj{(\tau_m)} \comp \varpi_m \comp c_n
\\

& =
& \bigvee_m \ladj{(\tau_m)} \comp h_{n,m}
\\

& =
& \bigvee_{m\geq n} \ladj{(\tau_m)} \comp h_{n,m}
\\

& =
& \bigvee_{m\geq n} \ladj{(\tau_n)}
\\

& =
& \ladj{(\tau_n)}
\end{array}
\]

Consider now a morphism $\ell \colon \Lim \radj K \to C$
in $\cat C^\ep$
such that 
$\varpi_n \comp \radj\ell = \radj{(\tau_n)}$
and
$\ladj\ell \comp c_n = \ladj{(\tau_n)}$
for all $n \in \NN$.
The universal property of limits in $\cat C$
yields $\radj\ell = p$, so that $\ladj\ell = e$
since $e$ is uniquely determined from $p$.
\end{proof}

\subsubsection{Solutions of Domain Equations}
We shall use Theorem~\ref{thm:proof:scott:limcolim}
in the following situation.
Consider a functor
\[
\begin{array}{*{4}{l}}
  G \colon
& \cat D^\ep \times \cat C^\ep
& \longto
& \cat C^\ep
\end{array}
\]

\noindent
where $\cat C$ and $\cat D$ are enriched over $\DCPO$.
We moreover assume that $\cat C$ has a terminal object $\one$
which is initial in $\cat C^\ep$.
We are going to define a functor
\[
\begin{array}{*{4}{l}}
  K \colon
& \cat D^\ep \times \omega
& \longto
& \cat C^\ep
\end{array}
\]

\noindent
Given an object $B$ of $\cat D^\ep$,
$K(B,\pl)$ is the $\omega$-chain in $\cat C^\ep$
obtained by iterating $G_B = G(B,\pl)$ from the initial object $\one$ of $\cat C^\ep$:
\begin{equation}
\label{diag:proof:scott:chain}
\begin{tikzcd} 
  \one
  \arrow{r}[above]{\one}
& G_B(\one)
  \arrow{r}[above]{G_B(\one)}
& G^2_B(\one)
  \arrow[dashed]{r}
& G^n_B(\one)
  \arrow{r}[above]{G_B^n(\one)}
& G^{n+1}_B(\one)
  \arrow[dashed]{r}
& \phantom{F}
\end{tikzcd}
\end{equation}

Given a morphism $f \colon B \to B'$ in $\cat D^\ep$,
$K(f,\pl)$ is obtained by commutativity of the following.
\begin{equation}
\label{diag:proof:scott:natdiag}
\begin{array}{c}
\begin{tikzcd} 

  \one
  \arrow{r}[above]{\one}
  \arrow{d}{\one}
& G_B(\one)
  \arrow{r}[above]{G_B(\one)}
  \arrow{d}{G_f(\one)}
& G^2_B(\one)
  \arrow{d}{G_f^2(\one)}
  \arrow[dashed]{r}
& G^n_B(\one)
  \arrow{r}[above]{G_B^n(\one)}
  \arrow{d}{G_f^n(\one)}
& G^{n+1}_B(\one)
  \arrow{d}{G_f^{n+1}(\one)}
  \arrow[dashed]{r}
& \phantom{F}

\\

  \one
  \arrow{r}[below]{\one}
& G_{B'}(\one)
  \arrow{r}[below]{G_{B'}(\one)}
& G^2_{B'}(\one)
  \arrow[dashed]{r}
& G^n_{B'}(\one)
  \arrow{r}[below]{G_{B'}^n(\one)}
& G^{n+1}_{B'}(\one)
  \arrow[dashed]{r}
& \phantom{F}
\end{tikzcd}
\end{array}
\end{equation}

Assume now that $\cat C$ has limits of $\omega^\op$-chains
of projections.
Then Theorem~\ref{thm:proof:scott:limcolim}
yields that each $K(B,\pl)$ has a colimit in $\cat C^\ep$.
Since $K$ is a functor $\cat D^\ep \times \omega \to \cat C^\ep$,
it follows from \cite[Theorem V.3.1]{maclane98book}
that these colimits assemble into a functor
\[
\begin{array}{l r c l}
  \Fix G \colon
& \cat D^\ep
& \longto
& \cat C^\ep
\\

& B
& \longmapsto
& \Colim_{n \in \omega} K(B,n)
\end{array}
\]

If $G(B,\pl)$ preserves colimits of $\omega$-chains,
then the universal property of colimits gives an isomorphism
\(
  \fold^\ep
  :
  G(B,\Fix G(B))
  \rightleftarrows
  \Fix G(B)
  :
  \unfold^\ep
\)
in $\cat C^\ep$.

We are going to prove the following.

\begin{proposition}
\label{prop:proof:scott:contfunct}
If $G \colon \cat D^\ep \times \cat C^\ep \to \cat C^\ep$
preserves colimits of $\omega$-chains,
then so does $\Fix G \colon \cat D^\ep \to \cat C^\ep$.
\end{proposition}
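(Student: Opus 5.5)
Let $B_0 \to B_1 \to \cdots$ be an $\omega$-chain in $\cat D^\ep$, with morphisms $b_i \colon B_i \to B_{i+1}$ and colimiting cocone $\beta_i \colon B_i \to B$, where $B = \Colim_i B_i$. The goal is to show that the cocone $\Fix G(\beta_i) \colon \Fix G(B_i) \to \Fix G(B)$ is colimiting in $\cat C^\ep$. The approach is an interchange-of-colimits argument on the functor $K \colon \cat D^\ep \times \omega \to \cat C^\ep$. By construction $\Fix G(B) = \Colim_n K(B,n)$, and $\Fix G(B_i) = \Colim_n K(B_i,n)$. Consider the diagram $(i,n) \mapsto K(B_i,n)$ indexed by $\omega \times \omega$. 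Colimits are computed in $\cat C^\ep$ via Theorem~\ref{thm:proof:scott:limcolim}, so each relevant colimit exists as a limit of projections in $\cat C$. Colimits commute with colimits (each iterated colimit is a colimit of the whole $\omega\times\omega$ diagram). Hence $\Colim_i \Fix G(B_i) \cong \Colim_i \Colim_n K(B_i,n) \cong \Colim_n \Colim_i K(B_i,n)$.

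It therefore suffices to show, by induction on $n$, that the cocone $K(\beta_i,n) \colon K(B_i,n) \to K(B,n)$ is colimiting for each fixed $n$, i.e.\ that $K(\pl,n) = G^n_{(\pl)}(\one)$ preserves colimits of $\omega$-chains. For $n=0$ the functor is constant at $\one$, and a constant diagram over $\omega$ with identity maps has that object as its colimit. For the step, $K(B_i,n+1) = G(B_i, K(B_i,n))$. The diagram $i \mapsto (B_i, K(B_i,n))$ is an $\omega$-chain in $\cat D^\ep \times \cat C^\ep$, and by the induction hypothesis it has colimit $(B, K(B,n))$, since colimits in a product category are computed componentwise. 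Since $G$ preserves colimits of $\omega$-chains, $G(B_i,K(B_i,n))$ has colimit $G(B,K(B,n)) = K(B,n+1)$. One then checks that the colimiting maps are exactly $K(\beta_i,n+1) = G(\beta_i, K(\beta_i,n))$, which matches the definition of $K$ on morphisms via diagram~\eqref{diag:proof:scott:natdiag}.

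Finally, the composite isomorphism has to be identified with the canonical comparison map induced by the cocone $\Fix G(\beta_i)$, not merely with some abstract isomorphism. For this I would use the fact from \cite[Theorem V.3.1]{maclane98book} that $\Fix G$ on morphisms is the map induced between colimits by the natural transformation $K(\beta_i,\pl)$. Both sides are then colimits of the same $\omega\times\omega$ diagram with compatible coprojections, so uniqueness of mediating maps identifies them.

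I expect the main obstacle to be the interchange step. It needs colimits of the $\omega \times \omega$ diagram (equivalently, of its diagonal, which is cofinal) to exist in $\cat C^\ep$ and to agree with both iterated colimits. I would handle this by restricting to the diagonal chain $n \mapsto K(B_n,n)$. Theorem~\ref{thm:proof:scott:limcolim} gives that chain a colimit, and cofinality of the diagonal in $\omega\times\omega$ then shows that both iterated colimits coincide with it.
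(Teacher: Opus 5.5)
Your proposal is correct and follows essentially the same route as the paper. The paper also shows by induction, using that the diagonal (equivalently, componentwise colimits in $\cat D^\ep \times \cat C^\ep$) preserves colimits of $\omega$-chains, that each $G^n_{(\pl)}(\one)$ preserves colimits of $\omega$-chains; it deduces that the cocone $K(\beta_i,\pl)$ is colimiting pointwise and concludes by interchanging the two $\omega$-indexed colimits. Your identification of the comparison map with $\Fix G(\beta_i)$ makes explicit what the paper leaves implicit, but the diagonal/cofinality detour is not needed: $\Fix G(B) = \Colim_n K(B,n)$ with $K(B,n) = \Colim_i K(B_i,n)$ already exhibits $\Fix G(B)$ as a colimit of the whole $\omega\times\omega$ diagram.
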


The proof of Proposition~\ref{prop:proof:scott:contfunct}
is split into the following lemmas.
Fix a functor $G \colon \cat D^\ep \times \cat C^\ep \to \cat C^\ep$
which preserves colimits of $\omega$-chains.

\begin{lemma}
\label{lem:proof:scott:contdiag}
The diagonal functor $\modgen \colon \cat D^\ep \to \cat D^\ep \times \cat D^\ep$
preserves colimits of $\omega$-chains.
\end{lemma}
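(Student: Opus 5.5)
The plan is to show that if $d \colon I \to \cat D^\ep$ is an $\omega$-chain with colimiting cocone $\gamma \colon d \to D$ in $\cat D^\ep$, then the cocone $(\gamma_n,\gamma_n)_n \colon (d(n),d(n))_n \to (D,D)$ is colimiting in $\cat D^\ep \times \cat D^\ep$. The most direct route is to use the standard fact that colimits in a product category are computed componentwise: a cocone in $\cat D^\ep \times \cat D^\ep$ is colimiting exactly when each of its two projections is colimiting in $\cat D^\ep$. Here both projections are the given colimiting cocone $\gamma$, so the claim follows.

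To avoid relying on the general fact as a black box, I would verify the universal property directly. Take a cocone $(\tau_n,\sigma_n)_n \colon (d(n),d(n))_n \to (C_1,C_2)$ in $\cat D^\ep \times \cat D^\ep$. Then $\tau$ and $\sigma$ are each cocones over $d$ in $\cat D^\ep$. Their naturality squares are exactly the components of the naturality squares of the product cocone, since composition in the product category is componentwise.

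The universal property of $\gamma$ gives unique mediating maps $u \colon D \to C_1$ and $v \colon D \to C_2$ with $u \comp \gamma_n = \tau_n$ and $v \comp \gamma_n = \sigma_n$. The pair $(u,v)$ then mediates in the product category. Uniqueness holds because any mediating pair $(u',v')$ has components mediating $\tau$ and $\sigma$ separately, hence $u'=u$ and $v'=v$.

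No genuine obstacle is expected. The only point needing care is that morphisms of $\cat D^\ep$ are embedding-projection pairs, so equalities such as $u \comp \gamma_n = \tau_n$ are equalities of pairs. This causes no difficulty, since the argument never looks inside a morphism; it only uses the universal property in $\cat D^\ep$ itself. In particular, Theorem~\ref{thm:proof:scott:limcolim} is not needed here.
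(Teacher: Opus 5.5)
Your proposal is correct and takes the same approach as the paper. The paper simply cites the fact that colimits in functor categories are computed pointwise, viewing $\cat D^\ep \times \cat D^\ep$ as a functor category over a two-object discrete category. Your direct check of the universal property, component by component, just unfolds that fact.
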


\begin{proof}
Since colimits are pointwise in functor categories
(\cite[Corollary V.3]{maclane98book}).%
\footnote{Note that \cite[Corollary V.3]{maclane98book} only gives the result for limits.
But recall that the opposite of a functor category $[\cat C,\cat D]$
is the functor category $[\cat C^\op, \cat D^\op]$.}
\end{proof}

Lemma~\ref{lem:proof:scott:contdiag}
entails in particular that each functor
$G^n_{(\pl)}(\one) \colon \cat D^\ep \to \cat C^\ep$
preserves colimits of $\omega$-chains
($G^{n+1}_{(\pl)}(\one)$ is
$G(\pl,G^{n}_{(\pl)}(\one)) \comp \modgen$).

Proposition~\ref{prop:proof:scott:contfunct}
relies on the fact that the functor
$K \colon \cat D^\ep \to \funct{\omega,\cat C^\ep}$
preserves colimits of $\omega$-chains.
This involves some notation.

Let $W \colon \omega \to \cat D^\ep$ be an $\omega$-chain,
with colimiting cocone $\gamma \colon W \to \Colim W$.
In the following, we write $w_m \colon W(m) \to W(m+1)$
for the connecting morphisms of $W$.
The cocone $K \gamma \colon K(W) \to K(\Colim W)$
has component at $m \in \NN$
the commutative diagram in \eqref{diag:proof:scott:natdiag}
where one takes $\gamma_m \colon W(m) \to \Colim W$
for $f \colon B \to B'$.

\begin{lemma}
\label{lem:proof:scott:colimiting}
The cocone $K\gamma \colon K(W) \to K(\Colim W)$
is colimiting.
\end{lemma}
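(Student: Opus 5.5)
Since colimits in the functor category $\funct{\omega,\cat C^\ep}$ are computed pointwise (\cite[Corollary V.3]{maclane98book}, dualised exactly as in the proof of Lemma~\ref{lem:proof:scott:contdiag}), I would reduce the claim to the following: for each fixed $n \in \NN$, the cocone
\[
  \big( G^n_{\gamma_m}(\one) \big)_{m \in \NN}
  \colon
  G^n_{W(\pl)}(\one)
  \longto
  G^n_{\Colim W}(\one)
\]
is colimiting in $\cat C^\ep$. Here the connecting morphisms are $G^n_{w_m}(\one)$. I also need that the components of $K\gamma$ at $m$ are natural in the $\omega$-index of $K$, i.e.\ that they commute with the horizontal maps $G^n_B(\one)$ of~\eqref{diag:proof:scott:chain}. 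This is exactly the commutation of~\eqref{diag:proof:scott:natdiag}, which I take as given from the functoriality of $K$. So the whole statement amounts to: each functor $G^n_{(\pl)}(\one) \colon \cat D^\ep \to \cat C^\ep$ preserves colimits of $\omega$-chains.

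I would prove this by induction on $n$, as already hinted after Lemma~\ref{lem:proof:scott:contdiag}.

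\textbf{Base case $n = 0$.} The functor is constant at $\one$ with identity action on morphisms. The image of $W$ is the constant $\omega$-chain on $\one$ with identity connecting maps, and the image cocone consists of identities. This is trivially colimiting, since $\omega$ is connected.

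\textbf{Inductive step.} I write
\[
  G^{n+1}_{(\pl)}(\one)
  =
  G \comp \big(\Id \times G^n_{(\pl)}(\one)\big) \comp \modgen .
\]
The diagonal $\modgen$ preserves colimits of $\omega$-chains by Lemma~\ref{lem:proof:scott:contdiag}. The functor $\Id \times G^n_{(\pl)}(\one)$ preserves them because colimits in a product category $\cat D^\ep \times \cat C^\ep$ are computed componentwise: a cocone there is colimiting iff both of its projections are. Its first component is the identity, and its second preserves colimits by the induction hypothesis. Finally, $G$ preserves colimits of $\omega$-chains by assumption. A composite of $\omega$-colimit-preserving functors preserves $\omega$-colimits, which completes the induction.

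The only point requiring some care, and the one I expect to be the main (mild) obstacle, is the bookkeeping that the cocone $K\gamma$ really has, at each index $n$, exactly the components $G^n_{\gamma_m}(\one)$ with connecting maps $G^n_{w_m}(\one)$. This must match the definition of $K$ on morphisms via~\eqref{diag:proof:scott:natdiag}. The same check also covers the componentwise description of colimits in the product $\cat D^\ep \times \cat C^\ep$. I would verify it directly by unfolding the definition of $K(f,\pl)$ with $f = \gamma_m$ and $f = w_m$, using functoriality of $G^n_{(\pl)}(\one)$, which gives $G^n_{\gamma_{m+1}}(\one) \comp G^n_{w_m}(\one) = G^n_{\gamma_m}(\one)$.
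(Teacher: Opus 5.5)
Your proposal is correct and is essentially the paper's argument. The paper likewise takes from the remark after Lemma~\ref{lem:proof:scott:contdiag} that each cocone $(G^n_{\gamma_m}(\one))_m$ is colimiting, which your induction spells out. Where you cite the pointwise computation of colimits in functor categories, the paper instead verifies that step by hand: it constructs the mediating maps $\ell_n$, checks their naturality in $n$ via~\eqref{diag:proof:scott:natdiag} and~\eqref{diag:proof:scott:taucocone}, and deduces uniqueness.
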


\begin{proof}
First, it follows from the above that each
$G_\gamma^n(\one) \colon G_W^n(\one) \to G_{\Colim W}^n(\one)$
is colimiting.

Consider now a cocone
$\tau \colon K(W) \to H$ in $\funct{\omega,\cat C^\ep}$.
For each $m \in \NN$, we have
$\tau_m = \tau_{m+1} \comp K(w_m)$,
that is
\begin{equation}
\label{diag:proof:scott:taucocone}
\begin{array}{c}
\begin{tikzcd}[column sep=2.14em, row sep=large]

  \one
  \arrow{r}[above]{\one}
  \arrow{d}{\one}
& G_{B}(\one)
  \arrow{r}[above]{G_{B}(\one)}
  \arrow{d}{G_{w_m}(\one)}
& G^2_{B}(\one)
  \arrow{d}{G_{w_m}^2(\one)}
  \arrow[dashed]{r}
& G^n_{B}(\one)
  \arrow{r}[above]{G_{B}^n(\one)}
  \arrow{d}{G_{w_m}^n(\one)}
& G^{n+1}_{B}(\one)
  \arrow{d}{G_{w_m}^{n+1}(\one)}
  \arrow[dashed]{r}
& \phantom{F}

\\

  \one
  \arrow{r}[above]{\one}
  \arrow{d}{(\tau_{m+1})_0}
& G_{B'}(\one)
  \arrow{r}[above]{G_{B'}(\one)}
  \arrow{d}{(\tau_{m+1})_1}
& G^2_{B'}(\one)
  \arrow{d}{(\tau_{m+1})_2}
  \arrow[dashed]{r}
& G^n_{B'}(\one)
  \arrow{r}[above]{G_{B'}^n(\one)}
  \arrow{d}{(\tau_{m+1})_n}
& G^{n+1}_{B'}(\one)
  \arrow{d}{(\tau_{m+1})_{n+1}}
  \arrow[dashed]{r}
& \phantom{F}

\\

  H(0)
  \arrow{r}[below]{h(0)}
& H(1)
  \arrow{r}[below]{h(1)}
& H(2)
  \arrow[dashed]{r}
& H(n)
  \arrow{r}[below]{h(n)}
& H(n+1)
  \arrow[dashed]{r}
& \phantom{F}
\end{tikzcd}
\end{array}
\end{equation}

\noindent
where $B$ is $W(m)$, $B'$ is $W(m+1)$
and the $h(n) \colon H(n) \to H(n+1)$ are the connecting morphisms of $H$.
In particular, for each $m \in \NN$ and each $n \in \NN$,
we have $(\tau_m)_n = (\tau_{m+1})_n \comp G_{w_m}^n(\one)$.
Hence, for each $n \in \NN$ we have a cocone
$((\tau_m)_n)_m \colon G_{W(-)}^n(\one) \to H(n)$,
and the universal property of $G_\gamma^n(\one)$
gives a unique morphism $\ell_n \colon G_{\Colim W}^n(\one) \to H(n)$
such that $(\tau_m)_n = \ell_n \comp G_{\gamma_m}^n(\one)$ for all $m \in \NN$.

We show that the $\ell_n$'s assemble into a morphism
$\ell \colon K(\Colim W) \to H$ in $\funct{\omega,\cat C^\ep}$.
We thus have to show that the following commutes,
where $B'$ is $\Colim W$:
\begin{equation*}
\begin{array}{c}
\begin{tikzcd} 

  \one
  \arrow{r}[above]{\one}
  \arrow{d}{\ell_0}
& G_{B'}(\one)
  \arrow{r}[above]{G_{B'}(\one)}
  \arrow{d}{\ell_1}
& G^2_{B'}(\one)
  \arrow{d}{\ell_2}
  \arrow[dashed]{r}
& G^n_{B'}(\one)
  \arrow{r}[above]{G_{B'}^n(\one)}
  \arrow{d}{\ell_n}
& G^{n+1}_{B'}(\one)
  \arrow{d}{\ell_{n+1}}
  \arrow[dashed]{r}
& \phantom{F}

\\

  H(0)
  \arrow{r}[below]{h(0)}
& H(1)
  \arrow{r}[below]{h(1)}
& H(2)
  \arrow[dashed]{r}
& H(n)
  \arrow{r}[below]{h(n)}
& H(n+1)
  \arrow[dashed]{r}
& \phantom{F}

\end{tikzcd}
\end{array}
\end{equation*}

We show that $\ell_{n+1} \comp G_{B'}^n(\one) = h(n) \comp \ell_n$
for all $n \in \NN$.
For each $m \in \NN$, 
by commutativity of \eqref{diag:proof:scott:natdiag}
and \eqref{diag:proof:scott:taucocone}
we have the following, where $B$ is $W(m)$:
\[
\begin{array}{l l l}
  \ell_{n+1} \comp G_{B'}^{n}(\one) \comp G_{\gamma_m}^{n}(\one)
& =
& \ell_{n+1} \comp G_{\gamma_m}^{n+1}(\one) \comp G_B^{n}(\one)
\\

& =
& (\tau_m)_{n+1} \comp G_B^{n}(\one)
\\

& =
& h(n) \comp (\tau_m)_n
\\

& =
& h(n) \comp \ell_n \comp G_{\gamma_m}^n(\one)
\end{array}
\]

\noindent
Then we are done by the universal property of $G_{\gamma_m}^n(\one)$.

Consider finally a morphism $f \colon K(\Colim W) \to H$ in
$\funct{\omega,\cat C^\ep}$
such that $f \comp K(\gamma) = \tau$.
Then for all $m \in \NN$ we have
$f \comp K(\gamma_m) = \tau_m$,
and thus
$f_n \comp G_{\gamma_m}^n(\one) = (\tau_m)_n$
for all $n \in \NN$.
It follows that $f_n = \ell_n$, so that $f = \ell$.
\end{proof}

We can now prove Proposition~\ref{prop:proof:scott:contfunct}.

\begin{proof}[Proof of Proposition~\ref{prop:proof:scott:contfunct}]
Let $W \colon \omega \to \cat D^\ep$ be an $\omega$-chain.
By Lemma~\ref{lem:proof:scott:colimiting},
and since colimits always commute over colimits,
we have
\[
\begin{array}{l l l}
  \Fix G(\Colim W)
& =
& \Colim_{n \in \omega} K(\Colim W,n)
\\

& \cong
& \Colim_{n \in \omega} \Colim_{m \in \omega} K(W(m),n)
\\

& \cong
& \Colim_{m \in \omega} \Colim_{n \in \omega} K(W(m),n)
\\

& \cong
& \Colim_{m \in \omega} \Fix G(W(m))
\end{array}
\]
\end{proof}

\colin{TODO: Remark on strictness}

\subsubsection{Local Continuity}
Functors $G \colon \cat D^\ep \times \cat C^\ep \to \cat E^\ep$
will be obtained from ``mixed-variance'' functors
\[
\begin{array}{*{4}{l}}
  F \colon
& \cat D^\op \times \cat C
& \longto
& \cat E
\end{array}
\]

\noindent
where
$\cat D,\cat C, \cat E$ are enriched over $\DCPO$.

\begin{definition}
\label{def:proof:scott:loc}
We say that $F$
is \emph{locally} \emph{monotone} (resp.\ \emph{continuous})
if each hom-function
\[
\begin{array}{r c l}
  \cat D(B',B) \times \cat C(A,A')
& \longto
& \cat C(F(B,A), F(B',A'))
\\

  (g,f)
& \longmapsto
& F(g,f)
\end{array}
\]

\noindent
is monotone (resp.\ Scott-continuous).
\end{definition}

\noindent
We refer to \cite[Definition 5.2.5]{aj95chapter},
\cite[Definition 7.1.15]{ac98book}
and \cite[Definition 9.1]{streicher06book}.
The following is a straightforward adaptation of \cite[Proposition 7.1.19]{ac98book}
(see also \cite[Proposition 5.2.6]{aj95chapter}).

\begin{lemma}
\label{lem:proof:scott:lift}
Let
$F \colon \cat D^\op \times \cat C \to \cat E$
be locally monotone.
Then $F$ lifts to a covariant functor
\[
\begin{array}{*{4}{l}}
  F^\ep \colon
& \cat D^\ep \times \cat C^\ep
& \longto
& \cat E^\ep
\end{array}
\]

\noindent
with $F^\ep(B,A) = F(B,A)$ on objects and
$F^\ep(g,f) = (F(\radj g,\ladj f) \,,\, F(\ladj g,\radj f))$
on morphisms.

If moreover $F$ is locally continuous, then $F^\ep$
preserves colimits of $\omega$-chains.
\end{lemma}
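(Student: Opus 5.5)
The statement to prove is Lemma~\ref{lem:proof:scott:lift}: a locally monotone mixed-variance $F$ lifts to a covariant $F^\ep$, and $F^\ep$ preserves colimits of $\omega$-chains when $F$ is locally continuous. I would follow the standard argument of \cite[Proposition 7.1.19]{ac98book}, in two parts.

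\textbf{Functoriality.} Let $g \colon B \to B'$ in $\cat D^\ep$ and $f \colon A \to A'$ in $\cat C^\ep$. I first check that $(F(\radj g,\ladj f), F(\ladj g,\radj f))$ is an embedding-projection pair $F(B,A) \to F(B',A')$. By functoriality of $F$,
\[
F(\ladj g,\radj f)\comp F(\radj g,\ladj f) = F(\radj g \comp \ladj g, \radj f \comp \ladj f) = F(\id,\id) = \id .
\]
Here the first component composes in $\cat D^\op$, i.e.\ as $\radj g\comp\ladj g$ in $\cat D$. In the other direction,
\[
F(\radj g,\ladj f)\comp F(\ladj g,\radj f) = F(\ladj g \comp \radj g, \ladj f \comp \radj f).
\]
Since $\ladj g\comp\radj g \leq \id$ and $\ladj f\comp\radj f\leq\id$, local monotonicity gives that this is $\leq F(\id,\id)=\id$. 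Preservation of identities and composition by $F^\ep$ then follows directly from functoriality of $F$. One has to track the variance carefully: composing $g$ then $g'$ in $\cat D^\ep$ gives the embedding $\ladj{g'}\comp\ladj g$ and the projection $\radj g\comp\radj{g'}$, and these match $F$'s contravariance in its first argument.

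\textbf{Continuity.} Let $W \colon \omega \to \cat D^\ep \times \cat C^\ep$ be an $\omega$-chain with colimiting cocone $\gamma$ into $(B,A)$. I would use the characterisation, extracted from the proof of Theorem~\ref{thm:proof:scott:limcolim}, that a cocone $\gamma$ of embedding-projection pairs is colimiting if and only if $\bigvee_n \ladj{(\gamma_n)}\comp\radj{(\gamma_n)} = \id$. The ``if'' direction is the bulk of the uniqueness and existence argument given there. The ``only if'' direction follows by comparing with the canonical limit, for which the proof computed $\ell=\id$. Given this, I write $\gamma_n = (\delta_n,\epsilon_n)$ with $\delta_n$ in $\cat D^\ep$ and $\epsilon_n$ in $\cat C^\ep$, so that $\bigvee_n \ladj{(\delta_n)}\comp\radj{(\delta_n)} = \id$ and similarly for $\epsilon_n$. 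Then
\[
\bigvee_n \ladj{F^\ep(\gamma_n)}\comp \radj{F^\ep(\gamma_n)} = \bigvee_n F\big(\ladj{(\delta_n)}\comp\radj{(\delta_n)},\ \ladj{(\epsilon_n)}\comp\radj{(\epsilon_n)}\big) = F(\id,\id)=\id .
\]
The middle step uses local continuity: both sequences are increasing, so their directed sups are preserved.

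The main obstacle is justifying this characterisation of colimiting cocones cleanly. The argument needs $\cat C$ to admit the limit of the associated $\omega^\op$-chain of projections, or alternatively a direct proof of the criterion that avoids limits. I would argue directly. Given a cocone $\tau$, define the mediating morphism as $\bigvee_n \ladj{(\tau_n)}\comp \radj{(\gamma_n)}$ with the corresponding projection, and verify the required equations using continuity of composition, exactly as in the computations of $e$ and $p$ above. Uniqueness then follows from the identity $\bigvee_n \ladj{(\gamma_n)}\comp\radj{(\gamma_n)}=\id$.
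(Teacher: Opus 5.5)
Your functoriality argument is correct, and for continuity you follow the standard proof of \cite[Proposition 7.1.19]{ac98book}, which is all the paper offers: it cites that result and gives no proof of its own.

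The gap is in how you justify the characterisation ``$\gamma$ is colimiting iff $\bigvee_n \ladj{(\gamma_n)}\comp\radj{(\gamma_n)}=\id$''. You need \emph{both} directions. You use ``only if'' on the source, to get $\bigvee_n \ladj{(\delta_n)}\comp\radj{(\delta_n)}=\id$ and likewise for $\epsilon$. You use ``if'' on the target $F^\ep\gamma$ in $\cat E^\ep$.

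The direct argument you settle on (mediating embedding $\bigvee_n \ladj{(\tau_n)}\comp\radj{(\gamma_n)}$, uniqueness from the sup identity) proves only the ``if'' direction. That part is fine and needs nothing from $\cat E$. The ``only if'' direction, however, cannot be obtained by any limit-free argument, because it fails for general $\DCPO$-enriched categories. Here is a counterexample.
\begin{itemize}
\item Let $\cat A$ be the full subcategory of $\CPO$ on the chains $W_n=\{0<\dots<n\}$ ($n\in\NN$) and $C=\{0<1<2<\dots<\omega<\omega+1\}$. Take the inclusions $W_n\to W_{n+1}$ and $\gamma_n\colon W_n\to C$ as embeddings.
\item No cocone has vertex some $W_m$, since embeddings are injective.
\item A cocone with vertex $C$ amounts to a strictly increasing $t\colon\NN\to\NN$ with $t(0)=0$.
\item The only candidate mediating embedding extends $t$ by fixing $\omega$ (by continuity) and $\omega+1$ (by monotonicity and injectivity), and this map is indeed an embedding.
\end{itemize}
So $\gamma$ is colimiting in $\cat A^\ep$, yet $\bigvee_n \ladj{(\gamma_n)}\comp\radj{(\gamma_n)}$ sends $\omega+1$ to $\omega$. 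Moreover, the inclusion $\cat A\to\CPO$, seen as $F$ with a trivial $\cat D$, is locally continuous. But $F^\ep\gamma$ is not colimiting in $\CPO^\ep$, where the colimit is the chain $\NN\cup\{\omega\}$, which is not isomorphic to $C$. So the lemma, read with only $\DCPO$-enrichment assumed, is false.

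What is missing is a hypothesis, not a cleverer proof: $\cat D$ and $\cat C$ should have limits of $\omega^\op$-chains of projections. $\Scott$ has them, and it is the only case where the paper uses the lemma; this is also the setting of \cite{ac98book}. Under this hypothesis your comparison argument is sound:
\begin{itemize}
\item Theorem~\ref{thm:proof:scott:limcolim} gives a colimiting cocone satisfying the sup identity (the computation $\ell=\id$).
\item Any other colimiting cocone differs from it by an isomorphism $\iota$ of $\cat C^\ep$, for which $\radj{\iota}=(\ladj{\iota})^{-1}$.
\item Continuity of composition then transports the identity to the other cocone.
\end{itemize}
You should also say why $\delta$ and $\epsilon$ are separately colimiting: pair an arbitrary cocone in one factor with the given cocone in the other. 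With these additions, the rest of your argument goes through unchanged: the identity $\ladj{F^\ep(\gamma_n)}\comp\radj{F^\ep(\gamma_n)}=F(\ladj{(\delta_n)}\comp\radj{(\delta_n)},\ladj{(\epsilon_n)}\comp\radj{(\epsilon_n)})$, local continuity on the increasing sequence, and the ``if'' direction in $\cat E^\ep$.
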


\subsection{Interpretation of Pure Types}

A \emph{pure type expression} is a possibly open production of the
grammar of pure types (\S\ref{sec:pure}), namely
\[
\begin{array}{r @{\ \ }c@{\ \ } l}
    \PT
&   \bnf
&   \Unit
\gs \PT \times \PT
\gs \PT \arrow \PT
\gs \PT + \PT
\gs \TV
\gs \rec \TV.\PT 
\end{array}
\]

\noindent
where $\TV$ is a type variable,
and where $\rec\TV.\PT$ binds $\TV$ in $\PT$.

Consider a pure type expression $\PT$ with free
type variables $\vec \TV = \TV_1,\dots,\TV_n$.
We are going to interpret $\PT$ as a functor
\[
\begin{array}{*{4}{l}}
  \I\PT \colon
& \left( \Scott^\ep \right)^{n}
& \longto
& \Scott^\ep
\end{array}
\]

\noindent
which preserves colimits of $\omega$-chains.

Since $\Scott^\ep$ has the same objects as $\Scott$,
this thus provides the interpretation $\I\PT \in \Scott$
of each (pure) \emph{type} $\PT$
(i.e.\ each type expression $\PT$ without free type variables $\TV$).

\subsubsection{Preliminaries}
Recall that the category $\Scott$ is Cartesian-closed
(products and homsets are equipped with pointwise orders),
see \cite[Corollary 4.1.6]{aj95chapter} or \cite[\S 1.4]{ac98book}.
This yields functors
\[
\begin{array}{r r c l}
  \Scott(\pl,\pl) \colon
& \Scott^\op \times \Scott
& \longto
& \Scott
\\

  (\pl) \times (\pl) \colon
& \Scott \times \Scott
& \longto
& \Scott
\end{array}
\]

\noindent
These functors are locally continuous
(\cite[Example 7.1.16]{ac98book}).
By Lemma~\ref{lem:proof:scott:lift}
we obtain functors
\[
\begin{array}{*{4}{l}}
  \left(\Scott(\pl,\pl) \right)^\ep
  ,\,
  \left( (\pl) \times (\pl) \right)^\ep
  \colon
& \Scott^\ep \times \Scott^\ep
& \longto
& \Scott^\ep
\end{array}
\]

\noindent
which preserve colimits of $\omega$-chains.

Moreover, $\Scott$ has limits of $\omega^\op$-chains
of projections (in the embedding-projection sense),
see \cite[Theorem 3.3.7, Theorem 3.3.11 and Proposition 4.1.3]{aj95chapter}.
More precisely, the full inclusion $\Scott \emb \DCPO$
creates limits for $\omega^\op$-chains of projections.%
\footnote{The notion of creation of limits has to be understood in the usual
sense of \cite[Definition V.1]{maclane98book}.}
In particular, $\Scott$
is closed in $\DCPO$ under limits of $\omega^\op$-chains of projections.
Note that the category $\DCPO$ has all limits,
and that they are created by the forgetful functor to the category
of posets (and monotone functions),
see \cite[Theorem 3.3.1]{aj95chapter}.
It follows that given $K \colon \omega \to \cat \Scott^\ep$,
the limit of $\radj K \colon \omega^\op \to \Scott$
is
\[
  \left\{
  (x_i)_i \in \prod_{i \in \NN} K(i)
  \mathrel{\Big|}
  \radj K(i \leq j)(x_j) = x_i
  \right\}
\]

\noindent
equipped with the pointwise order.
Moreover, the limiting
cone $\Lim \radj K \to \radj K$ 
consists in set-theoretic projections.%
\footnote{These are also projections in the embedding-projection sense
by Theorem~\ref{thm:proof:scott:limcolim}.}
In view of Theorem~\ref{thm:proof:scott:limcolim},
we also get that $\Scott$ is closed in $\DCPO$
under colimits of $\omega$-chains of embeddings.

The terminal object $\one = \{\bot\}$ of $\Scott$ is initial in $\Scott^\ep$
(\cite[Proposition 7.1.9]{ac98book}).

\subsubsection{Interpretation of (weak) sum types}
\label{sec:proof:sem:pure:sum}

We now discuss the interpretation of $\PT + \PTbis$.
It is well-known that the categories $\Scott$ and $\CPO$,
as opposed to $\DCPO$, do not have coproducts~\cite{hp90tcs}.
We shall thus interpret $\PT + \PTbis$ as a \emph{weak} coproduct,
satisfying a weakening of the
universal property of coproducts:
we do not require the ``universal'' morphisms to be unique.

Consider first the composite functor
\begin{equation}
\label{eq:proof:sem:weaksum}
\begin{tikzcd}[column sep=large]
  \CPO \times \CPO
  \arrow{r}{\incl \times \incl}
& \DCPO \times \DCPO
  \arrow{r}{(\pl) \amalg (\pl)}
& \DCPO
  \arrow{r}{(\pl)_\bot}
& \CPO
\end{tikzcd}
\end{equation}

\noindent
Here, $\incl \colon \CPO \emb \DCPO$ is the full inclusion,
while $(\pl) \amalg (\pl)$ is the coproduct in $\DCPO$
(disjoint union with the pointwise order),
and $(\pl)_\bot \colon \DCPO \to \CPO$
adds a new least element.

Note that the inclusion
$\iota_\bot \colon X \emb X_\bot$
is Scott-continuous (since directed sets are non-empty)
and an order embedding.
Also, given $f \colon X \to Y$ and $g \colon X' \to Y'$
we have
\[
\begin{array}{l r c l}
  (f \amalg g)_\bot \colon
& (X \amalg Y)_\bot
& \longto
& (X' \amalg Y')_\bot
\\

& \bot
& \longmapsto
& \bot
\\

& (\iota_\bot \comp \kappa_1)(x)
& \longmapsto
& (\iota_\bot \comp \kappa_1)(f(x))
\\

& (\iota_\bot \comp \kappa_2)(y)
& \longmapsto
& (\iota_\bot \comp \kappa_2)(g(y))
\end{array}
\]

\noindent
where $\kappa_1$ and $\kappa_2$ are the coprojection for
the coproduct $(\pl) \amalg (\pl)$ in $\DCPO$.

\begin{lemma}
\label{lem:proof:sem:weaksum:scott}
Given $X, Y \in \Scott$, we have $(X \amalg Y)_\bot \in \Scott$.
\end{lemma}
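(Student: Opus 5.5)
We check the three defining properties of a Scott domain directly for $Z \deq (X \amalg Y)_\bot$, using the explicit description of its order: $\bot$ lies below everything, and otherwise $z \leq z'$ holds only when $z, z'$ come from the same summand and are related there. We already know that $Z$ is a cpo: $(\pl) \amalg (\pl)$ and $(\pl)_\bot$ are functors into $\DCPO$ and $\CPO$ in~\eqref{eq:proof:sem:weaksum}. Concretely, a directed $D \sle Z$ either is $\{\bot\}$ or, after removing $\bot$, lies entirely inside one summand. Two elements from distinct summands have no upper bound, so directedness forces a single summand. The sup of $D$ is then computed in that summand.

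\textbf{Step 1: finite elements.} We show that the finite elements of $Z$ are $\bot$ together with the images $(\iota_\bot \comp \kappa_1)(d)$ and $(\iota_\bot \comp \kappa_2)(e)$ of finite $d \in X$ and finite $e \in Y$. Take $x \in X$ and a directed $D \sle Z$ with $\iota_\bot\kappa_1(x) \leq \bigvee D$. Then $\bigvee D$ lies in the $X$-summand. Hence $D \setminus \{\bot\}$ is a directed subset of $X$ with the same sup, and it is nonempty, since $\bigvee D \neq \bot$. So $x$ is finite in $X$ if and only if $\iota_\bot\kappa_1(x)$ is finite in $Z$. The converse direction uses directed subsets of $X$ mapped into $Z$.

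\textbf{Step 2: algebraicity.} For $z = \iota_\bot\kappa_1(x)$, the finite elements of $Z$ below $z$ are $\bot$ together with the images of the finite $d \leq x$ in $X$. This set is directed, because $X$ is algebraic, and its sup is $z$. The cases $z = \iota_\bot\kappa_2(y)$ and $z = \bot$ are analogous or trivial.

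\textbf{Step 3: bounded completeness.} Suppose $z, z' \in Z$ have an upper bound. If one of them is $\bot$, the other is their sup. Otherwise they lie in the same summand, and any upper bound also lies in that summand, so it is an upper bound there. Bounded completeness of $X$ (resp.\ $Y$) then gives a least upper bound, and this is also least in $Z$.

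No step is a real obstacle. The point needing the most care is Step 1: showing that directed sets not equal to $\{\bot\}$ live in a single summand, so that sups and finiteness transfer along $\iota_\bot \comp \kappa_i$. Once that is in place, Steps 2 and 3 are routine case analyses.
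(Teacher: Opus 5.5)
Your proposal is correct and follows essentially the same route as the paper: the cpo structure comes from the composite functor, algebraicity comes from identifying the finite elements below $(\iota_\bot \circ \kappa_1)(x)$ as $\bot$ together with the images of finite $d \leq x$, and bounded-completeness is a case split on whether one of the two elements is $\bot$ or both lie in the same summand. The only difference is that you spell out the transfer of finiteness along $\iota_\bot \circ \kappa_i$ in both directions, which the paper simply asserts.
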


\begin{proof}
We have seen that $(X \amalg Y)_\bot$ is a cpo.
It is algebraic since the only finite approximation of $\bot \in (X \amalg Y)_\bot$
is $\bot$,
while given $x \in X$,
the finite approximations of $(\iota_\bot \comp \kappa_1)(x)$ are $\bot$
and the $(\iota_\bot \comp \kappa_1)(d)$'s
where $d$ is a finite approximation of $x$ in $X$
(and similarly with $(\iota_\bot \comp \kappa_2)(y)$ for $y \in Y$).

Moreover, $(X \amalg Y)_\bot$ is bounded-complete since if $a,b$
have an upper bound, then either both $a$ and $b$ are in the same
components, or at least one of them is $\bot$.
\end{proof}

\begin{definition}
\label{def:proof:sem:weaksum:scott}
Given $X, Y \in \Scott$,
we let $X + Y \deq (X \amalg Y)_\bot \in \Scott$.
\end{definition}

Let $X, Y \in \Scott$.
The ``coprojections''
$\inj_1 \colon X \to X + Y$
and
$\inj_2 \colon Y \to X + Y$
are the order embeddings defined as the composites
\[
\begin{array}{c !{\qquad\text{and}\qquad} c}

\begin{tikzcd}
  X
  \arrow{r}{\kappa_1}
& X \amalg Y
  \arrow[hookrightarrow]{r}{\iota_\bot}
& X + Y
\end{tikzcd}

&

\begin{tikzcd}
  Y
  \arrow{r}{\kappa_2}
& X \amalg Y
  \arrow[hookrightarrow]{r}{\iota_\bot}
& X + Y
\end{tikzcd}

\end{array}
\]

Further,
consider some $Z \in \Scott$,
together with $f \colon X \to Z$
and $g \colon Y \to Z$.
Let the dashed arrow $h$ in the diagram below be given by the
universal property of coproducts in $\DCPO$:
\[
\begin{tikzcd}
  X
  \arrow{dr}[swap]{\kappa_1}
  \arrow[bend left]{drr}{f}
\\
& X \amalg Y
  \arrow[dashed]{r}{h}
& Z
\\
  Y
  \arrow{ur}{\kappa_2}
  \arrow[bend right]{urr}[swap]{g}
\end{tikzcd}
\]

\noindent
We define $\copair{f,g} \colon X+Y \to Z$
to be the extension of $h$ with $\copair{f,g}(\bot) = \bot_Z$.

The function $\copair{f,g}$ is Scott-continuous.
Indeed, if $a \leq b$ in $X+Y$,
then either $a = \bot$
and $\copair{f,g}(a) = \bot_Z \leq \copair{f,g}(b)$,
or both $a$ and $b$ are in the same component.
Moreover, given a directed $D \sle X + Y$,
if $D \setminus \{\bot\}$ is non-empty,
then it is a directed set either in the component $X$ or in the component $Y$,
and
\(
  h(\bigvee D)
  =
  h(\bigvee (D\setminus \{\bot\}))
  =
  \bigvee h(D \setminus \{\bot\})
  =
  \bigvee \copair{f,g}(D)
\)
since $\copair{f,g}(\bot) = \bot_Z$.

Let now $f \colon X \to Y$
and $g \colon X' \to Y'$ in $\Scott$,
and consider 
$\copair{{\inj_1} \comp f, {\inj_2} \comp g}$
in
\[
\begin{tikzcd}[column sep=huge]
  X
  \arrow{dr}[swap]{\inj_1}
  \arrow[bend left]{drr}{{\inj_1} \comp f}
\\
& X + Y
  \arrow{r}{\copair{{\inj_1} \comp f,\, {\inj_2} \comp g}}
& X' + Y'
\\
  Y
  \arrow{ur}{\inj_2}
  \arrow[bend right]{urr}[swap]{{\inj_2} \comp g}
\end{tikzcd}
\]

Then $\copair{{\inj_1} \comp f, {\inj_2} \comp g} = f+g$,
where $f+g = (f \amalg g)_\bot$
(cf Equation~\eqref{eq:proof:sem:weaksum}).
Indeed,
we have
$\copair{{\inj_1} \comp f, {\inj_2} \comp g}(\bot) = \bot = (f+g)(\bot)$,
while
\(
  \copair{{\inj_1} \comp f, {\inj_2} \comp g}(\inj_1(x))
  =
  \inj_1(f(x))
  =
  (\iota_\bot \comp \kappa_1)(f(x))
  =
  (f \amalg g)_\bot((\iota_\bot \comp \kappa_1) (x))
  =
  (f+g)(\inj_1(x))
\),
and similarly for $\inj_2(y)$.

\begin{lemma}
\label{lem:proof:sem:weaksum:local}
The functor $(\pl) + (\pl) \colon \Scott \times \Scott \to \Scott$
is locally continuous.
\end{lemma}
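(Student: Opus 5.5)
We have to show that the hom-function
\[
  \Scott(X,Y) \times \Scott(X',Y') \longto \Scott(X+X', Y+Y'),
  \qquad
  (f,g) \longmapsto f+g
\]
is monotone and preserves directed suprema. We will use the description of $f+g$ obtained just above: $(f+g)(\bot) = \bot$, $(f+g)(\inj_1(x)) = \inj_1(f(x))$ and $(f+g)(\inj_2(y)) = \inj_2(g(y))$. Since homsets carry the pointwise order, and directed suprema in $\Scott(X,Y)$ are computed pointwise, both properties can be checked pointwise at each argument $a \in X+X'$. We split into the three cases $a = \bot$, $a = \inj_1(x)$ and $a = \inj_2(y)$.

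For monotonicity, assume $f \leq f'$ and $g \leq g'$.
\begin{itemize}
\item At $\bot$, both sides equal $\bot$.
\item At $\inj_1(x)$, we have $\inj_1(f(x)) \leq \inj_1(f'(x))$, because $\inj_1$ is monotone (it is an order embedding).
\item At $\inj_2(y)$, the same argument applies with $\inj_2$ and $g \leq g'$.
\end{itemize}

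For Scott-continuity, take a directed set $D \sle \Scott(X,Y)\times\Scott(X',Y')$. The supremum $\bigvee D = (\bigvee_{(f,g)\in D} f, \bigvee_{(f,g) \in D} g)$ is computed componentwise and then pointwise. At argument $\inj_1(x)$ we need
\[
  \inj_1\Bigl(\bigvee_{(f,g)\in D} f(x)\Bigr) = \bigvee_{(f,g)\in D} \inj_1(f(x)).
\]
This holds because $\inj_1 = \iota_\bot \comp \kappa_1$ is Scott-continuous. Indeed, $\kappa_1$ is a coprojection in $\DCPO$, and $\iota_\bot$ is Scott-continuous because directed sets are non-empty. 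The set $\{f(x)\}$ is directed since $D$ is directed and projection is monotone. The cases $\inj_2(y)$ and $\bot$ are handled the same way; at $\bot$ the supremum is that of a constant family $\bot$, which is fine since $D$ is non-empty.

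The only mildly delicate point is to confirm that the pointwise supremum of the family $f+g$ coincides with the supremum taken in $\Scott(X+X',Y+Y')$. This holds because sups in the exponential of $\Scott$ are pointwise (Cartesian closedness, as recalled earlier). Local continuity for the binary hom-function then follows. Joint Scott-continuity reduces to separate continuity in each argument, but the direct componentwise argument above already handles it.
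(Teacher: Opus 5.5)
Your proof is correct and is essentially the paper's argument. Monotonicity comes from the pointwise case split on $\bot$, $\inj_1(x)$, $\inj_2(y)$, and continuity from the pointwise computation $\inj_1\bigl(\bigvee f(x)\bigr) = \bigvee \inj_1(f(x))$ together with the fact that directed sups in $\Scott(-,-)$ are pointwise. The only difference is cosmetic: the paper first reduces to Scott-continuity in each argument separately, whereas you handle a directed subset of the product directly, which works just as well.
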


\begin{proof}
Let $X,Y,X',Y' \in \Scott$,
and consider the hom-function
\[
\begin{array}{r c l}
  \Scott(X,Y) \times \Scott(X',Y')
& \longto
& \Scott(X+Y,\, X' + Y')
\\

  (f,g)
& \longmapsto
& f + g
\end{array}
\]

This function is monotone: given $f \leq f'$ and $g \leq g'$
we have
$f+g \leq f' + g'$
since
\(
  (f + g)(\bot)
  =
  \bot
  =
  (f' + g')(\bot)
\),
while 
\(
  (f + g)(\inj_1(x))
  \leq
  (f' + g')(\inj_1(x))
\)
since
\(
  (f + g)(\inj_1(x))
  =
  \copair{{\inj_1} \comp f, {\inj_2} \comp g}(\inj_1(x))
  =
  \inj_1(f(x))
\)
for $x \in X$,
and similarly for $y \in Y$.

This function is moreover Scott-continuous.
Recall (from e.g.~\cite[Proposition 1.4.3]{ac98book})
that Scott-continuity in several arguments is equivalent
to Scott-continuity in each argument separately.
Let $F \sle \CPO(X,Y)$ be directed, and let $g \in \CPO(X',Y')$.
Since directed sups of Scott-continuous functions are pointwise,
we have $((\bigvee F) + g)(\bot) = \bot = (\bigvee_{f \in F}(f+g))(\bot)$
and
$((\bigvee F) + g)(\inj_2(y)) = \inj_2(g(y)) = (\bigvee_{f \in F}(f+g))(\inj_2(y))$,
while
\[
\begin{array}{l l l}
  ((\bigvee F) + g)(\inj_1(x))
& =
& \inj_1((\bigvee F)(x))
\\

& =
& \inj_1(\bigvee_{f \in F}f(x))
\\

& =
& \bigvee_{f \in F}(\inj_1(f(x)))
\\

& =
& \bigvee_{f \in F}((f+g)(\inj_1(x)))
\\

& =
& (\bigvee_{f \in F}(f+g))(\inj_1(x))
\end{array}
\]

The case of a directed sup in the second argument of $(\pl) + (\pl)$
is similar.
\end{proof}

By combining Lemma~\ref{lem:proof:sem:weaksum:local}
with Lemma~\ref{lem:proof:scott:lift},
we obtain a functor
\[
\begin{array}{*{4}{l}}
  \left( (\pl) + (\pl) \right)^\ep
  \colon
& \Scott^\ep \times \Scott^\ep
& \longto
& \Scott^\ep
\end{array}
\]

\noindent
which preserves colimits of $\omega$-chains.

\subsubsection{Definition of the Interpretation}
\label{sec:proof:sem:pure:intdef}

Let $\PT$ be a (pure) type expression with free
type variables $\vec \TV = \TV_1,\dots,\TV_n$.
The interpretation $\I\PT \colon \left(\Scott^\ep\right)^n \to \Scott^\ep$
is defined by induction on $\PT$.
\begin{itemize}
\item
In the case of $\PT = \TV_i$,
we let $\I\PT$ take $\vec X = X_1,\dots,X_n$ to $X_i$.

\item
In the case of $\PT = \Unit$,
we let $\I\PT(\vec X)$ be $\{\bot, \top\}$ with $\bot \leq \top$.

Moreover,
we let $\I{\pair{}} \deq \top \in \I\Unit(\vec X)$.

\item
In the cases of $\PT \times \PTbis$
and $\PTbis \arrow \PT$,
the induction hypotheses give us
functors
\[
\begin{array}{*{4}{l}}
  \I\PT, \I\PTbis
  \colon
& \left( \Scott^\ep \right)^{n}
& \longto
& \Scott^\ep
\end{array}
\]

\noindent
which preserve colimits of $\omega$-chains.
We can thus set
\[
\begin{array}{r c l}
  \I{\PTbis \arrow \PT}(\vec X)
& =
& \left( \Scott \left( \I{\PTbis}(\vec X) ,\, \I{\PT}(\vec X) \right) \right)^\ep
\\

  \I{\PT \times \PTbis}(\vec X)
& =
& \left(
  \I\PT(\vec X) \times \I\PTbis(\vec X)
  \right)^\ep
\end{array}
\]

\item
In the case of $\PT + \PTbis$,
we similarly let 
\[
\begin{array}{r c l}
  \I{\PT + \PTbis}(\vec X)
& =
& \left(
  \I\PT(\vec X) + \I\PTbis(\vec X)
  \right)^\ep
\end{array}
\]

\noindent
where $(\pl) + (\pl)$ is the function in Definition~\ref{def:proof:sem:weaksum:scott}
(\S\ref{sec:proof:sem:pure:sum}).

Moreover,
we let $\I{\inj_1} \colon \I\PT(\vec X) \to \I\PT(\vec X) + \I\PTbis(\vec X)$
and $\I{\inj_2} \colon \I\PTbis(\vec X) \to \I\PT(\vec X) + \I\PTbis(\vec X)$
be the morphisms $\inj_1$ and $\inj_2$ defined in~\S\ref{sec:proof:sem:pure:sum}.

Further, the terms
$\cse\ M\ \copair{x_1 \mapsto N_1 \mid x_2 \mapsto N_2}$
are interpreted using the operation
$\copair{\pl, \pl}$, also defined in~\S\ref{sec:proof:sem:pure:sum}.

\item
In the case of $\rec \TV.\PT$,
the induction hypothesis gives a functor
\[
\begin{array}{*{4}{l}}
  \I\PT \colon
& \left( \Scott^\ep \right)^{n} \times \Scott^\ep
& \longto
& \Scott^\ep
\end{array}
\]

\noindent
which preserves colimits of $\omega$-chains.
Theorem~\ref{thm:proof:scott:limcolim}
gives a functor
\[
\begin{array}{l r c l}
  \I{\rec \TV.\PT} \colon
& \left( \Scott^\ep \right)^{n} 
& \longto
& \Scott^\ep
\\

& \vec X
& \longmapsto
& \Fix (\I{\PT}(\vec X))
\end{array}
\]

\noindent
This functor preserves colimits of $\omega$-chains by
Proposition~\ref{prop:proof:scott:contfunct}.
Moreover, since $\I\PT$ preserves
colimits of $\omega$-chains,
we obtain canonical isomorphisms
\[
\begin{array}{l @{\,:~~} l l l @{~~:\,} l}
  \I\fold
& \I{\PT[\rec \TV.\PT/\TV]}(\vec X)
& \rightleftarrows
& \I{\rec \TV.\PT}(\vec X)
& \I\unfold
\end{array}
\]

\noindent
by taking
$\I\fold = \ladj{(\fold^\ep)}$
and
$\I\unfold = \ladj{(\unfold^\ep)}$.
\end{itemize}

\begin{figure}[t!]
\centering
\(
\begin{array}{c}

\dfrac{}
  {\bot \in \Fin(\I\PT(\vec X))}  

\qquad

\dfrac{\text{$d$ finite in $X_i$}}
  {d \in \Fin(\I{\TV_i}(\vec X))}

\\\\

\dfrac{}
  {\top \in \Fin(\I\Unit(\vec X))}

\qquad

\dfrac{d \in \Fin(\I\PT(\vec X))}
  {\I{\inj_1}(d) \in \Fin(\I{\PT + \PTbis}(\vec X))}

\qquad

\dfrac{d \in \Fin(\I\PTbis(\vec X))}
  {\I{\inj_2}(d) \in \Fin(\I{\PT + \PTbis}(\vec X))}

\\\\

\dfrac{d \in \Fin(\I\PT(\vec X))
  \qquad
  e \in \Fin(\I\PTbis(\vec X))}
  {(d,e) \in \Fin(\I{\PT \times \PTbis}(\vec X))}

\qquad

\dfrac{d \in \Fin(\I{\PT[\rec\TV.\PT/\TV]}(\vec X))}
  {\I\fold(d) \in \Fin(\I{\rec\TV.\PT}(\vec X))}

\\\\

\dfrac{\begin{array}{l}
  \text{for all $i \in I$,~
  $d_i \in \Fin(\I{\PT})(\vec X)$
  ~and~
  $e_i \in \Fin(\I{\PTbis})(\vec X)$ \@;}
  \\
  \text{for all $J \sle I$,~
  $\bigvee_{j \in J} d_j$ defined in $\I{\PT}(\vec X)$
  ~$\imp$~
  $\bigvee_{j \in J} e_j$ defined in $\I{\PTbis}(\vec X)$}
  \end{array}}
  {\bigvee_{i \in I}(d_i \step e_i) \in \Fin(\I{\PT \arrow \PTbis}(\vec X))}
~(\text{$I$ finite})

\end{array}
\)
\caption{Inductive description of the finite elements of $\I\PT(\vec X)$.%
\label{fig:proof:sem:finelt}}
\end{figure}

\subsubsection{Description of the Finite Elements}
\label{sec:proof:sem:pure:finite}

For each (pure) type expression $\PT$
with free variables $\vec\TV = \TV_1,\dots,\TV_n$,
we define a set $\Fin(\I{\PT}(\vec X))$.
The definition is by induction on derivations with
the rules in Figure~\ref{fig:proof:sem:finelt}.
The set $\Fin(\I{\PT}(\vec X))$
describes the finite elements of $\I{\PT}(\vec X)$.
This relies on the following.

First, the least element $\bot$ is finite in $X$
for each $X \in \Scott$.
Moreover, $\top$ is finite in $\I\Unit = \{\bot,\top\}$.

Let $X,Y \in \Scott$.
The finite elements in the weak coproduct $X + Y$
are $\bot$,
as well as the $\I{\inj_1}(d)$'s 
and the $\I{\inj_2}(e)$'s
where $d$ is finite in $X$ and $e$ is finite in $Y$.
The finite elements in the product $X \times Y$
are exactly the pairs of finite elements.
The finite elements of $\Scott(X,Y)$ are exactly the finite sups of step functions.
Given finite $d \in X$ and $e \in Y$,
the \emph{step function} $(d \step e) \colon X \to Y$
is defined as $(d \step e)(x) = e$ if $x \geq d$ and
$(d \step e)(x)= \bot$ otherwise.
Recall that the sup $\bigvee_{i \in I}(d_i \step e_i)$ of 
a finite family of step functions exists
if, and only if,
for every $J \sle I$, the set $\{e_j \mid j \in J\}$ has an upper bound
whenever so does $\{d_j \mid j \in J\}$.
See \cite[Theorem 1.4.12]{ac98book}.

Concerning recursive types, let 
\[
\begin{array}{*{4}{l}}
  G \colon
& \Scott^\ep
& \longto
& \Scott^\ep
\end{array}
\]

\noindent
be a functor which preserves colimits of $\omega$-chains.
Recall that $\Fix G$ is the colimit in \eqref{diag:proof:scott:colim}
where $K \colon \omega \to \Scott^\ep$
takes $n$ to $G^n(\one)$
(similarly as in \eqref{diag:proof:scott:chain}).
We have seen that $\Scott$ is closed in $\DCPO$
under colimits of $\omega$-chains of embeddings.
Hence it follows from \cite[Theorem 3.3.11]{aj95chapter}
that the finite elements of $\Fix G$
are the images of the finite elements of the $G^n(\one)$'s
under the components of the colimiting cocone
$\gamma \colon K \to \Fix G$.

We thus have the following.

\begin{proposition}
\label{prop:proof:scott:fin}
$\Fin(\I\PT(\vec X))$ is the set of finite elements of $\I\PT(\vec X)$.
\end{proposition}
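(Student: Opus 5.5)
The claim is that $\Fin(\I\PT(\vec X))$, defined inductively by the rules of Figure~\ref{fig:proof:sem:finelt}, is exactly the set of finite elements of $\I\PT(\vec X)$. I will argue by induction on the pure type expression $\PT$, uniformly in the tuple $\vec X$ of Scott domains. At each constructor I match the rules of Figure~\ref{fig:proof:sem:finelt} with the known description of finite elements of the corresponding domain construction, as recalled just above. Because of the bottom rule, which puts $\bot$ in every $\Fin(\I\PT(\vec X))$, I will check in each case that $\bot$ is finite and is also obtained by the construction-specific description.

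The non-recursive cases are routine.
\begin{itemize}
\item For $\TV_i$ the rule says exactly ``$d$ finite in $X_i$'', which is the claim.
\item For $\Unit$ both $\bot$ and $\top$ are finite in the two-point domain, and these are all its elements.
\item For $\PT+\PTbis$ I use the description from Lemma~\ref{lem:proof:sem:weaksum:scott}: the finite elements are $\bot$ together with $\I{\inj_i}(d)$ for $d$ finite. The induction hypothesis identifies such $d$ with $\Fin$.
\item For $\PT\times\PTbis$ the finite elements are the pairs of finite elements; note that $\bot=(\bot,\bot)$ is already such a pair.
\item For $\PTbis\arrow\PT$ I invoke \cite[Theorem 1.4.12]{ac98book}: finite elements of $\Scott(X,Y)$ are exactly the existing finite sups of step functions $d\step e$ with $d,e$ finite, and such a sup exists iff the consistency condition in the last rule holds. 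With $I=\emptyset$ this gives $\bot$. The induction hypotheses on $\PTbis$ and $\PT$ then translate ``finite'' into membership in $\Fin$.
\end{itemize}

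The real work is the case $\rec\TV.\PT$, where $\I{\rec\TV.\PT}(\vec X)=\Fix G$ with $G=\I\PT(\vec X,\pl)$. Here I use the fact quoted before the statement, from \cite[Theorem 3.3.11]{aj95chapter}: the finite elements of $\Fix G$ are the images $\ladj{(\gamma_n)}(d)$ of finite $d\in G^n(\one)$ under the colimit embeddings. The target characterisation is that $x$ is finite in $\Fix G$ iff $x=\I\fold(d)$ for some finite $d\in \I{\PT[\rec\TV.\PT/\TV]}(\vec X)$. This follows because $\I\fold$ is an order isomorphism, and order isomorphisms preserve and reflect finiteness. The difficulty is that the induction hypothesis does not directly apply to $\PT[\rec\TV.\PT/\TV]$, which is not a subterm of $\rec\TV.\PT$.

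To handle this, I will instead strengthen the induction to a statement that is uniform in $\vec X$ and compatible with substitution. By functoriality, $\I{\PT[\rec\TV.\PT/\TV]}(\vec X)=\I\PT(\vec X,\Fix G)$. The induction hypothesis for $\PT$, applied with the extra domain $X_{n+1}=\Fix G$, then says that its finite elements are $\Fin(\I\PT(\vec X,\Fix G))$. In that set, the leaves at $\TV$ are arbitrary finite elements of $\Fix G$, so the derivation is not yet expressed in terms of $\Fin(\I{\rec\TV.\PT}(\vec X))$.

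To close this circularity I will take $\Fin(\I{\rec\TV.\PT}(\vec X))$ as the least fixed point of the rules and prove both inclusions separately.
\begin{itemize}
\item \emph{Soundness} (every element of $\Fin$ is finite) is a direct induction on derivations: each rule preserves finiteness, using the constructor facts above and the fact that $\I\fold$ is an isomorphism.
\item \emph{Completeness} (every finite element lies in $\Fin$) goes by induction on the least $n$ such that the finite element lies in the image of $G^n(\one)$. A finite element of $G^{n+1}(\one)=\I\PT(\vec X,G^n(\one))$ is, by the outer induction hypothesis with $X_{n+1}=G^n(\one)$, described by a derivation whose $\TV$-leaves are finite elements of $G^n(\one)$. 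By the inner induction, those leaves are already in $\Fin$. Finally I push the derivation along the embedding $\ladj{(\gamma_n)}$, checking that embeddings commute with the rule constructions.
\end{itemize}
I expect the main obstacle to be this last check: showing that embeddings preserve step-function sups and their existence, and transport the $\inj_i$, pair and $\fold$ constructions.
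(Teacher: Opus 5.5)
Your proposal is correct and uses the same ingredients as the paper, which obtains the proposition directly from the constructor-wise descriptions of finite elements (weak sums, products, \cite[Theorem 1.4.12]{ac98book} for function spaces) together with \cite[Theorem 3.3.11]{aj95chapter} for $\Fix G$. The paper leaves implicit how the non-structural $\fold$ rule is handled. Your split fills in that step: soundness by induction on derivations, and completeness by induction on the approximation level $n$, using a transport lemma along the induced embedding $G(\gamma_n)$ followed by the identity $\gamma_{n+1} = \fold^\ep \circ G(\gamma_n)$.
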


\subsection{Proofs of Example~\ref{ex:scott:stream-tree}}
\label{sec:proof:sem:stream-tree}

We now provide some details 
on $\I{\Stream\PTbis}$ and $\I{\Tree\PTbis}$
(Example~\ref{ex:scott:stream-tree}),
where $\PTbis$ is a pure type.
We handle streams and binary trees
uniformly by considering the covariant functor
\[
\begin{array}{l r c l}
  F \colon
& \Scott
& \longto
& \Scott
\\

& X
& \longmapsto
& \I\PTbis \times X^\Dir
\end{array}
\]

\noindent
where $\Dir$ is a finite set.
In view of Theorem~\ref{thm:proof:scott:limcolim},
$\Fix F$ is the limit of the $\omega^\op$-chain
\[
\begin{tikzcd} 
  \one
& F(\one)
  \arrow{l}[above]{\one}
& F^2(\one)
  \arrow{l}[above]{F(\one)}
& F^n(\one)
  \arrow[dashed]{l}
& F^{n+1}(\one)
  \arrow{l}[above]{F^n(\one)}
& \phantom{F}
  \arrow[dashed]{l}
\end{tikzcd}
\]

\noindent
where $\one$ is the terminal Scott domain $\{\bot\}$.
Hence, $\Fix F$ is
\[
\begin{array}{c}
  \left\{
  x \in \prod_{n \in \NN} F^n(\one)
  ~\Big|~
  x(n) = F^n(\one)(x(n+1))
  \right\}
\end{array}
\]

\noindent
equipped with the pointwise order.
In order to see that $\Fix F$ is isomorphic to $\I\PTbis^{\Dir^*}$,
define for each $n \in \NN$ an
isomorphism $\iota_n \colon \I\PTbis^{\Dir^n} \to F^n(\one)$
as $\iota_0 = \one \colon \one \to \one$
and
\[
\begin{array}{l r c l}
  \iota_{n+1} \colon
& \I\PTbis^{\Dir^{n+1}}
& \longto
& F^{n+1}(\one) = \I\PTbis \times \left( F^n(\one) \right)^\Dir
\\

& T
& \longmapsto
& \left( T(\es),~ (\iota_n(u \mapsto T(d \cdot u)))_{d \in \Dir}  \right)
\end{array}
\]

\noindent
Then observe that the following commutes
\[
\begin{tikzcd}
  \I\PTbis^{\Dir^{n+1}}
  \arrow{d}[left]{T \mapsto T\restr \Dir^n}
  \arrow{r}{\iota_{n+1}}
& F^{n+1}(\one)
  \arrow{d}{F^n(\one)}
\\
  \I\PTbis^{\Dir^{n}}
  \arrow{r}[below]{\iota_{n}}
& F^n(\one)
\end{tikzcd}
\]

The characterisation of the finite elements then follows from
Proposition~\ref{prop:proof:scott:fin}.

\subsubsection{Martin Hofmann's Breadth-First Tree Traversal}
\label{sec:proof:sem:stream-tree:bft}

We now show that the function
$\bft \colon \Tree\PTbis \arrow \Stream\PTbis$
in Table~\ref{tab:ex} (\S\ref{sec:pure})
indeed computes a tree breadth-first traversal.
We work at the denotational level.

It will be convenient to see $\I{\Tree\PTbis}$
as $(1 \cdot \{0,1\}^*) \to \I\PTbis$
rather than $\{0,1\}^* \to \I\PTbis$
(via the bijection $\{0,1\}^* \to 1 \cdot \{0,1\}^*$ which prepends a $1$).
Likewise, $\I{\Stream\PTbis}$
will be seen as $\NN_{>0} \to \I\PTbis$ rather than
$\NN \to \I\PTbis$.
With these conventions, the \emph{breadth-first traversal}
of a tree $t \in \I{\Tree\PTbis}$
is the stream $x \in \I{\Stream\PTbis}$
such that $x(i) = t(b(i))$,
where $b(i)$ is the binary representation of $i \in \NN_{>0}$
with most significant digit first
(hence $b(1) = 1$ and $b(2) = 10$).
Note that 
in a tree $t$,
the children of the node $b(i)$ (of label $t(b(i))$)
are the nodes $b(2i)$ and $b(2i+1)$
(of labels $t(b(2i))$ and $t(b(2i+1))$).

Fix a tree $t \in \I{\Tree\PTbis}$,
and let $x \in \I{\Stream\PTbis}$ be its breadth-first traversal.

Given a binary string $b \in 1 \cdot\{0,1\}^*$,
we write $t \restr b$ for the tree taking $1 \cdot b' \in 1 \cdot \{0,1\}^*$
to $t(b \cdot b')$ (thus dropping the first $1$ of $1 \cdot b'$).
Note that $t\restr 1 = t$.

For notational simplicity, we drop the brackets $\I{-}$
around interpretations of terms
(so that e.g.\ $\bftrec$ stands for $\I\bftrec$).
Given $i \in \NN_{>0}$ and $n \in \NN$,
let
\[
\begin{array}{r c l}
  a(i)
& =
& \bftrec\ (t \restr b(i))
\\

  c_n
& =
& a(2^n) \comp \dots \comp a(2^{n+1}-1)
\end{array}
\]

\noindent
Note that
\(
  a(i), c_n \colon
  \I{\Cont(\Stream \PTbis)}
  \to
  \I{\Cont(\Stream \PTbis)}
\)
and that $c_0 = a(1) = (\bftrec\ t)$.

\begin{lemma}
\label{lem:proof:sem:stream-tree}
For all $n \in \NN$, we have
\[
\begin{array}{r c l}
  \extract\ (c_n \Over)
& =
& x(2^n) \Colon \ldots \Colon x(2^{n+1}-1) \Colon
  (\extract\ (c_{n+1} \Over))
\end{array}
\]
\end{lemma}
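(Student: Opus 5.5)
The proof is a direct computation at the denotational level. It uses only the fixpoint equations of the terms in Table~\ref{tab:ex}, together with $\I\unfold \comp \I\fold = \id$. I would first isolate three unfolding facts, valid for all $c \in \I{\Cont(\Stream\PTbis)}$ and all continuations $k$.

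First, since $\extract = \term Y(\ldots)$, the fixpoint equation gives $\extract\ c = (\unfold\ c)\ \extract$.

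Second, since $\Over$ is the fixpoint of $c \mapsto \fold(\lambda k.\, k\, c)$, we get $(\unfold\ \Over)\ k = k\ \Over$.

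Third, unrolling the fixpoint defining $\bftrec$ gives the key equation: for every $i \in \NN_{>0}$,
\[
  (\unfold\,(a(i)\ c))\ k
  \;=\;
  t(b(i)) \Colon \big( (\unfold\ c)\ (k \comp a(2i) \comp a(2i+1)) \big).
\]
This uses $\lbl(t\restr b(i)) = t(b(i))$, $\lft(t \restr b(i)) = t\restr b(2i)$ and $\rght(t\restr b(i)) = t \restr b(2i+1)$. These hold because, with the conventions above, the children of the node $b(i)$ are $b(2i)$ and $b(2i+1)$. Finally, $t(b(i)) = x(i)$ by definition of the breadth-first traversal $x$.

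Next, I would prove the following claim by induction on $j$, for $2^n \le j \le 2^{n+1}$. For every continuation $k$, the stream $(\unfold\,(c_n\ \Over))\ k$ equals
\[
  x(2^n) \Colon \cdots \Colon x(j-1) \Colon
  \big(\unfold\,\big((a(j) \comp \cdots \comp a(2^{n+1}-1))\ \Over\big)\big)\
  \big(k \comp a(2^{n+1}) \comp \cdots \comp a(2j-1)\big).
\]
The base case $j = 2^n$ is trivial, since both composites are empty and $k$ is unchanged. For the inductive step, write $(a(j) \comp \cdots)\ \Over = a(j)\,\big((a(j+1) \comp \cdots)\ \Over\big)$ and apply the key equation with $i = j$. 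This emits $x(j)$ and post-composes the continuation with $a(2j) \comp a(2j+1)$. That extends the suffix $a(2^{n+1}) \comp \cdots \comp a(2j-1)$ to end at $a(2j+1) = a(2(j+1)-1)$, preserving the left-to-right order.

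At $j = 2^{n+1}$ the remaining composite is empty, so the tail is $(\unfold\ \Over)\ (k \comp c_{n+1})$. By the second fact this equals $k\,(c_{n+1}\ \Over)$. Taking $k = \extract$ and using the first fact on the left, we get
\[
  \extract\ (c_n\ \Over) = x(2^n) \Colon \cdots \Colon x(2^{n+1}-1) \Colon \extract\ (c_{n+1}\ \Over),
\]
which is the statement.

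The main obstacle is bookkeeping rather than any conceptual difficulty. The continuation accumulates the new subtree processors on the \emph{right*}, so one has to check carefully that after processing all nodes of level $n$ it is exactly $k \comp a(2^{n+1}) \comp \cdots \comp a(2^{n+2}-1) = k \comp c_{n+1}$, in this order. One also has to justify replacing the interpretations of the fixpoint terms by their one-step unfoldings, which is immediate since $\term Y(f) = f(\term Y(f))$.
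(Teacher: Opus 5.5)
Your proposal is correct and takes the same route as the paper. You use the unfolding equation $\unfold(a(i)\,c)\,k = x(i) \Colon \unfold\,c\,(k \comp a(2i) \comp a(2i+1))$, iterate it over the nodes $2^n,\dots,2^{n+1}-1$, and close with $\unfold\,\Over\,k = k\,\Over$ at $k = \extract$; your induction on $j$, with its explicit invariant on the accumulated continuation, is a rigorous version of the paper's informal ``iteratively unfolding $c'$ as much as possible''.
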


\begin{proof}
Note that given $c \in \I{\Cont(\Stream\PTbis)}$
and $i \in \NN_{>0}$,
since
$\lft(t \restr b(i)) = t \restr b(2i)$
and
$\rght(t \restr b(i)) = t \restr b(2i+1)$,
we have
\[
\begin{array}{r c l}
  \unfold\ (a(i)\ c)
& =
& \lambda k.~
  t(b(i)) \Colon
  \unfold c\ (k \comp a(2i) \comp a(2i+1))
\\
& =
& \lambda k.~
  x(i) \Colon
  \unfold c\ (k \comp a(2i) \comp a(2i+1))
\end{array}
\]

\noindent
and thus
\[
\begin{array}{r c l}
  \extract\ (a(2^n)\ c)
& =
& x(2^n) \Colon
  \unfold c\ ({\extract} \comp {a(2^{n+1})} \comp a(2^{n+1}+1))
\end{array}
\]

\noindent
It follows that
\[
\begin{array}{l l l}
  \extract\ (c_n \Over)
& =
& \extract\ \big( (a(2^n) \comp \dots \comp a(2^{n+1}-1)) \Over \big)
\\

& =
& \extract\
  \big( a(2^n) \underbrace{((a(2^n+1) \comp \dots \comp a(2^{n+1}-1)) \Over)}_c \big)
\\

& =
& x(2^n) \Colon
  \unfold\ c\ ({\extract} \comp {a(2^{n+1})} \comp a(2^{n+1}+1))
\end{array}
\]

\noindent
But
\[
\begin{array}{l l l}
  c
& =
& (a(2^n+1) \comp \dots \comp a(2^{n+1}-1)) \Over
\\

& =
& a(2^n+1)
  \underbrace{\big( (a(2^n+2) \comp \dots \comp a(2^{n+1}-1)) \Over \big)}_{c'}
\end{array}
\]

\noindent
so that
\[
\begin{array}{l l}
& \unfold\ c\ ({\extract} \comp {a(2^{n+1})} \comp a(2^{n+1}+1))
\\

  =
& x(2^n+1) \Colon
  \unfold\ c'\
  \big(
  {\extract} \comp {a(2^{n+1})} \comp a(2^{n+1}+1)
   \comp {a(2(2^n+1))} \comp a(2(2^n+1)+1)
  \big)
\\

  =
& x(2^n+1) \Colon
  \unfold\ c'\
  \big(
  {\extract} \comp {a(2^{n+1})} \comp a(2^{n+1}+1)
   \comp {a(2^{n+1}+2)} \comp a(2^{n+1}+3)
  \big)
\end{array}
\]

\noindent
Hence,
$\extract\ (c_n \Over)$ is
\[
  x(2^n) \Colon
  x(2^n+1) \Colon
  \unfold\ c'\
  \big(
  {\extract} \comp {a(2^{n+1})} \comp a(2^{n+1}+1)
   \comp {a(2^{n+1}+2)} \comp a(2^{n+1}+3)
  \big)
\]

\noindent
By iteratively unfolding $c'$ as much as possible,
we obtain
\[
\begin{array}{r c l}
  \extract\ (c_n \Over)
& =
& x(2^n) \Colon \ldots \Colon x(2^{n+1}-1) \Colon
  \unfold\ \Over\ ({\extract} \comp {c_{n+1}})
\\

& =
& x(2^n) \Colon \ldots \Colon x(2^{n+1}-1) \Colon
  (\extract\ (c_{n+1} \Over))
\end{array}
\]
\end{proof}

Since $(\bft\ t)$ is $(\extract\ (c_0\ \Over))$,
we thus obtain from Lemma~\ref{lem:proof:sem:stream-tree}
that $(\bft t)(i) = x(i)$ for all $i \in \NN_{>0}$.

}
\opt{full}{
\section{Proofs of~\S\ref{sec:spectral} (\nameref{sec:spectral})}
\label{sec:proof:spectral}

We provide some additional
well-known material on spectral spaces (\S\ref{sec:spectral}).
Besides~\cite{dst19book,gg24book,goubault13book},
see also~\cite{lawson11mscs} on the more general \emph{stably compact} spaces.

\subparagraph*{Topological Spaces.}
A \emph{topological space} is a pair $(X,\Open(X))$ of a set $X$
and a collection $\Open(X)$ of subsets of $X$, called \emph{open sets}.
The collection $\Open(X)$
is called a \emph{topology} on $X$,
and is asked to be stable under arbitrary unions and under finite intersections.
In particular, $\emptyset$ and $X$ are open in $X$
(respectively as the empty union and the empty intersection).

A set $C \sle X$ is \emph{closed} if its complement $X \setminus C$ is open.
Closed sets are stable under finite unions and arbitrary
intersections.
Hence, any $\SP \sle X$ is contained in a least closed
set $\clos \SP  \sle X$. 
Each space $(X,\Open(X))$ is equipped with a \emph{specialisation} (pre)order
$\leq_{\Open(X)}$ on $X$, defined as
\[
\begin{array}{l l l}
  x \leq_{\Open(X)} y
& \text{iff}
& (\forall U \in \Open(X))
  \left(
  x \in U
  ~\longimp~
  y \in U
  \right)
\end{array}
\]

\noindent
Given $x \in X$, we have
$\clos{\{x\}} = \down x \deq \{ y \in X \mid y \leq_{\Open(X)} x\}$
(\cite[Lemma 4.2.7]{goubault13book}).
A topology $\Open(X)$ 
is \emph{$T_0$} when $\leq_{\Open(X)}$ is a partial order
(\cite[Proposition 4.2.3]{goubault13book}).

A subset $\SP$ of a $T_0$ space $(X,\Open(X))$ is \emph{saturated}
if $\SP$ is upward-closed w.r.t.\ the specialisation order $\leq_{\Open(X)}$,
or equivalently if
$\SP = \bigcap \{U \in \Open(X) \mid \SP \sle U \}$
(\cite[Proposition~4.2.9]{goubault13book}).

In the case of the Scott topology $\Open(X)$ on a dcpo $(X,\leq_X)$,
the specialisation (pre)order $\leq_{\Open(X)}$ 
coincides with $\leq_X$ (\cite[Lemma 1.2.3]{ac98book}).
Hence $\leq_{\Open(X)}$ is a partial order and $(X,\Open(X))$ is $T_0$.
Moreover, a set $\SP \sle X$ is saturated precisely when $\SP$
is upward-closed w.r.t.\ $\leq_X$.

\subparagraph*{Spectral Spaces.}

We take~\cite[Definition 1.1.5]{dst19book} as our official
definition of spectral space.

\begin{definition}
\label{def:proof:spectral}
A space $(X,\Open(X))$ is \emph{spectral} if all the following conditions
are satisfied.
\begin{enumerate}[(i)]
\item
\label{item:proof:spectral:T0}
$(X,\Open(X))$ is $T_0$.

\item
\label{item:proof:spectral:KO}
$\K\Open(X)$ is stable under finite intersections
and form a basis of $\Open(X)$.

\item
\label{item:proof:spectral:sober}
$(X,\Open(X))$ is sober.
\end{enumerate}
\end{definition}

Items (\ref{item:proof:spectral:T0})
and (\ref{item:proof:spectral:KO}) in Definition~\ref{def:proof:spectral}
were explained above and in \S\ref{sec:spectral}.
Note that item (\ref{item:proof:spectral:KO})
implies that $X \in \K\Open(X)$ (as the empty intersection).
Spectral spaces are therefore compact.

Regarding item (\ref{item:proof:spectral:sober}),
let us say that a closed set $F$ is \emph{irreducible}
if given a (possibly empty) finite set of closed sets $\mathcal{C}$,
if $F \sle \bigcup \mathcal{C}$,
then $F \sle C$ for some $C \in \mathcal{C}$.
Hence irreducible closed sets are non-empty.
It is clear that each $\clos{\{x\}}$ is irreducible.
A space $(X, \Open(X))$ is \emph{sober} if for each irreducible closed set $F$,
there is exactly one $x \in X$ such that $F = \clos{\{x\}}$.
Note that if $X$ is $T_0$,
then $\clos{\{x\}} = \clos{\{y\}}$ implies $x = y$.

It is well-known that if $X$ is an algebraic dcpo,
then the Scott topology on $X$ is sober.

\begin{lemma}
\label{lem:proof:spectral:sober}
If $X$ is an algebraic dcpo, then $(X,\Open(X))$ is a sober space.
\end{lemma}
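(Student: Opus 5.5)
We must show that every irreducible closed set $F$ of the Scott topology equals $\clos{\{x\}} = \down x$ for exactly one $x$. Uniqueness is immediate from $T_0$, so the work is existence. The natural candidate is $x = \bigvee D$, where $D$ is the set of finite elements lying in $F$. The plan is first to prove that $D$ is directed, then to show $F = \down x$.

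\emph{Directedness of $D$.} Let $G \sle D$ be finite; we need some $d \in D$ above all elements of $G$. For each $g \in G$, the set $\up g$ is Scott-open (since $g$ is finite) and meets $F$ (at $g$). Irreducibility of $F$ gives the following. If the open sets $U_g \deq \up g$ ($g\in G$) each meet $F$, then so does $\bigcap_{g \in G} U_g$. Otherwise $F \sle \bigcup_g (X \setminus U_g)$, and irreducibility would put $F$ inside a single $X\setminus U_g$, a contradiction. When $G = \emptyset$, this argument says that $F$ is non-empty. So pick $y \in F \cap \bigcap_g \up g$. By algebraicity, $y$ is the directed sup of its finite approximants, and $\bigcap_g \up g$ is open and contains $y$, so some finite $d \leq y$ already lies in $\bigcap_g \up g$. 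Since $F$ is closed, it is downward closed, so $d \in F$. Hence $d \in D$ and $d$ is an upper bound of $G$.

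\emph{$F = \down x$.} Closed sets are downward closed and closed under directed sups (their complements are Scott-open). Since $D \sle F$ is directed, we get $x \in F$ and therefore $\down x \sle F$. Conversely, let $y \in F$. Every finite $e \leq y$ lies in $F$ by down-closure, so $e \in D$ and $e \leq x$. By algebraicity, $y = \bigvee\{e \text{ finite} \mid e \leq y\} \leq x$. Finally, $\down x = \clos{\{x\}}$ for the Scott topology, because the specialisation order is $\leq$.

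The main obstacle is the directedness step, specifically turning irreducibility, which is stated for finite covers by closed sets, into the statement that finitely many opens meeting $F$ have a common point in $F$. The complement argument above handles this, including the empty case, which yields $D \neq \emptyset$ (indeed $\bot$, if present, lies in $D$, but we do not need a least element).
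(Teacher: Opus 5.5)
Your proof is correct and follows essentially the same route as the paper: take $x = \bigvee D$ with $D$ the set of finite elements of $F$, use irreducibility to show that $D$ is directed, and conclude $F = \down x$ by down-closure and algebraicity. The only cosmetic difference is that you treat all finite $G \sle D$ at once, via the open set $\bigcap_{g \in G} \up g$. The paper instead treats non-emptiness and the binary case separately, using directedness of the finite approximants of a common point of $\up d \cap \up d' \cap F$.
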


\begin{proof}
Let $C$ be an irreducible closed set.
We have to provide some $x \in X$ such that $C = \clos{\{x\}}$.
Let $D$ be the set of all finite $d \in X$ such that
$d \in C$.

We show that $D$ is directed.
Since $C$ is non-empty, let $x \in C$.
Since $X$ is algebraic, there is some finite $d \leq x$,
and $d \in C$ since $C$ is Scott-closed.
Hence $D$ is non-empty.
Let now $d, d' \in D$.
Assume toward a contradiction that 
$\up d \cap \up d' \cap C = \emptyset$,
and consider the closed sets $F \deq X \setminus \up d$
and $F' \deq X \setminus \up d'$.
Then $C \sle F \cup F'$, but $C \not\sle F, F'$
since $d,d' \in C$.
Hence there is some $x \in \up d \cap \up d' \cap C$,
and $d$, $d'$ are two finite approximations of $x$.
But since $X$ is algebraic, the set of finite approximations of $x$
is directed.
Hence there is some finite $d''$ such that $d,d' \leq d'' \leq x$,
and we get $d'' \in C$ since $x \in C$ with $C$ Scott-closed.

Let now $x = \bigvee D$.
Since $D \sle C$ with $C$ Scott-closed, we have $x \in C$.

We now show that $C = \clos{\{x\}}$,
i.e.\ that for all $y \in X$, we have $y \in C$ if, and only if, $y \leq x$.
If $y \leq x$, then $y \in C$ since $x \in C$ with $C$ Scott-closed.
Assume now $y \in C$.
Given a finite $d \leq y$, we have $d \in C$ since $C$ is Scott-closed,
and thus $d \in D$ and $d \leq x$.
By algebraicity, it follows that $y \leq x$.
\end{proof}

\subparagraph*{Equivalent Definitions of Spectral Spaces.}
It follows from Lemma~\ref{lem:proof:spectral:sober}
that if $X$ is a Scott domain, then $(X, \Open(X))$
is spectral in the sense of Definition~\ref{def:proof:spectral}.
But the notion of spectral space we gave in~\S\ref{sec:spectral}
differs from Definition~\ref{def:proof:spectral},
since item~(\ref{item:proof:spectral:sober}) is replaced with the condition that $X$
is well-filtered (cf Proposition~\ref{prop:sem:wf}).
It is well-known that the two conditions are equivalent for $T_0$ spaces
which satisfy a weakening of condition~(\ref{item:proof:spectral:KO})
in Definition~\ref{def:proof:spectral}.
We present here a reformulation
of proofs in~\cite[\S 8.3]{goubault13book}.
See also~\cite[\S 6.1 and Exercise 6.3.6]{gg24book}.

A space $(X,\Open(X))$ is \emph{locally} compact
if for each $x \in X$ and each open $U \in \Open(X)$
such that $x \in U$,
there is an open $V \in \Open(X)$ and a compact $K$
such that
\[
  x \sle V \sle K \sle U
\]

It is clear that condition (\ref{item:proof:spectral:KO})
in Definition~\ref{def:proof:spectral}
implies local compactness.
The following is e.g.~\cite[Proposition~8.3.8]{goubault13book}.

\begin{proposition}
\label{prop:proof:spectral:wf-sober}
If $(X, \Open(X))$ is $T_0$, locally compact and well-filtered,
then $(X, \Open(X))$ is sober.
\end{proposition}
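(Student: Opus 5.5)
Given an irreducible closed set $F \sle X$, the plan is to produce $x$ with $F = \clos{\{x\}}$; uniqueness is then immediate from $T_0$. The idea is to run a filter-of-compacts argument through well-filteredness. Consider the family $\mathcal{Q}$ of all compact-saturated sets $Q \in \K\Sat(X)$ such that $Q \cap F \neq \emptyset$ and $Q = \up(\text{some }K)$ with $K$ compact containing an open $V$ meeting $F$. More concretely, I would take the family $\mathcal{U}$ of all opens $U$ with $U \cap F \neq \emptyset$. By irreducibility, $\mathcal{U}$ is a filter: if $U_1, U_2$ both meet $F$, then so does $U_1 \cap U_2$. Otherwise $F \sle (X\setminus U_1) \cup (X \setminus U_2)$, so $F$ would lie in one of these complements. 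Also $X \in \mathcal{U}$, since irreducible closed sets are non-empty.

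Next I use local compactness to pass from opens to compacts. Let $\mathcal{Q}$ be the set of saturated compacts $Q$ such that $Q \supseteq V$ for some $V \in \mathcal{U}$; each such $Q$ can be taken as $\up K$ with $K$ compact. I then show that $\mathcal{Q}$ is codirected. Given $Q_1 \supseteq V_1$ and $Q_2 \supseteq V_2$, the intersection $V_1 \cap V_2$ lies in $\mathcal{U}$, so pick $y \in V_1 \cap V_2 \cap F$. Local compactness gives $y \in V \sle K \sle V_1 \cap V_2$, with $V$ open and $K$ compact, and $\up K$ is compact-saturated, contained in $Q_1 \cap Q_2$, and contains $V \in \mathcal{U}$. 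Every $Q \in \mathcal{Q}$ meets $F$.

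The key claim is $\bigcap \mathcal{Q} \cap F \neq \emptyset$. Suppose not. Then $\bigcap\mathcal{Q} \sle X \setminus F$, which is open, so well-filteredness yields some $Q \in \mathcal{Q}$ with $Q \sle X \setminus F$, contradicting $Q \cap F \neq \emptyset$. Pick $x \in \bigcap\mathcal{Q} \cap F$. Then $\clos{\{x\}} \sle F$ because $F$ is closed. For the converse it suffices to show that every open $U$ meeting $F$ contains $x$. Indeed, if $y \in F \setminus \clos{\{x\}}$, then $U \deq X \setminus \clos{\{x\}}$ is an open meeting $F$ but not containing $x$. So take $U \in \mathcal{U}$ and $y \in U \cap F$; local compactness gives $y \in V \sle K \sle U$, so $\up K \in \mathcal{Q}$ (with $V \in \mathcal{U}$), hence $x \in \up K \sle U$, since $U$ is saturated.

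The main obstacle is the codirectedness of $\mathcal{Q}$, which is where irreducibility (via $\mathcal{U}$ being a filter) and local compactness must be combined. The remaining delicate point is checking that $\up K$ is compact-saturated whenever $K$ is compact, which is standard: an open cover of $\up K$ covers $K$, and opens are saturated.
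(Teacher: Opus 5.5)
Your proof is correct and takes essentially the same route as the paper's. The paper likewise uses irreducibility to show the compacts $K$ containing an open that meets $F$ form a codirected family, applies well-filteredness to the sets $\up K$ against $X \setminus F$, and uses local compactness again to show $F = \clos{\{x\}}$.

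One detail should be made explicit: codirected families must be non-empty, which well-filteredness needs. Knowing $X \in \mathcal{U}$ does not give a member of $\mathcal{Q}$, since $X$ need not be compact. Instead, apply local compactness to a point of $F$ and the open $X$, as the paper does.
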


\begin{proof}
Let $C$ be an irreducible closed set.
We have to show that $C = \clos{\{x\}}$ for some $x \in X$.
Let $\K_C$ be the set of all compacts $K$ such that $U \sle K$
for some open $U$ with $U \cap C \neq \emptyset$.

We first show that $\K_C$ is codirected.
First, since $C$ is not empty, there is some $x \in C$.
By local compactness (applied to $x \in X$ with $X$ open),
there is an open $U$ and a compact $K$ such that $x \in U \sle K$.
But $x \in U \cap C$ implies $U \cap C \neq \emptyset$.
Hence $K \in \K_C$.
Let now $K, L \in \K_C$.
Hence, there are opens $U, V$ such that
$U \sle K$, $V \sle L$
and $U \cap C \neq \emptyset$
and $V \cap C \neq \emptyset$.
We have $(U \cap V) \cap C \neq \emptyset$,
since otherwise we would have $C \sle (X \setminus U) \cup (X \setminus V)$,
which is impossible since $C$ is irreducible
with $C \not\sle X \setminus U, X \setminus V$.
Hence, there is some $x \in (U \cap V) \cap C$.
By local compactness, there is some open $W$ and some compact $M$
such that
\[
  x \in W \sle M \sle U \cap V
\]

\noindent
Since $x \in C$, we have $W \cap C \neq \emptyset$,
and we are done with $M \in \K_C$ since $M \sle U \cap V \sle K, L$.

We are going to apply well-filteredness to a codirected
family of compact-saturated induced by $\K_C$.
Note that for each compact $K$, the set $\up K$ is also compact
(since opens are saturated, given $U \in \Open(X)$ we have
$K \sle U$ if, and only if $\up K \sle U$).

Let $\mathcal{Q} \sle \K\Sat(X)$ be the set of all $\up K$ for $K \in \K_C$.
Note that $\mathcal{Q}$ is codirected since so is $\K_C$.
Assume toward a contradiction that $\bigcap \mathcal{Q} \cap C = \emptyset$.
Hence $\bigcap \mathcal{Q} \sle X \setminus C$,
and by well-filteredness there is some $\up K \in \mathcal{Q}$
such that $\up K \sle X \setminus C$.
But then $K \cap C = \emptyset$,
which contradicts $K \in \K_C$.

Hence $\mathcal{Q} \cap C \neq \emptyset$.
Let $x \in \mathcal{Q} \cap C$.
We show that $C = \clos{\{x\}}$.
We have to show that for all $y \in X$,
we have $y \in C$ if, and only if, $y \leq_{\Open(X)} x$.
If $y \leq_{\Open(X)} x$ then
we must have $y \in C$ since $C$ 
otherwise $y$ belongs to the open $X \setminus C$ with $x \in C$.
Assume conversely that $y \in C$,
and let $U$ be an open such that $y \in U$.
By local compactness there is an open $V$ and a compact $K$ such that
\[
  y \in V \sle K \sle U
\]

\noindent
Since $y \in V \cap C$, we have $K \in \K_C$.
Hence $x \in \up K$ since $x \in \bigcap \mathcal{Q}$,
and $x \in U$ since $K \sle U$ implies $\up K \sle U$.
\end{proof}

\subparagraph*{The Hofmann-Mislove Theorem.}
Conversely, sober spaces are well-filtered.
This is a consequence of the famous Hofmann-Mislove (or Scott-open filter)
theorem.
The following is e.g.~\cite[Theorem 8.3.2]{goubault13book}
or~\cite[Theorem 7.2.9]{aj95chapter}.
A \emph{filter} in a poset is a codirected set which is also upward-closed.

\begin{theorem}[Hofmann-Mislove]
\label{thm:proof:spectral:hofmann-mislove}
Let $(X, \Open(X))$ be a sober space,
and let $\mathcal{U} \sle \Open(X)$ be a Scott-open filter.
If $\bigcap \mathcal{U} \sle V$ for some $V \in \Open(X)$,
then $V \in \mathcal{U}$.
\end{theorem}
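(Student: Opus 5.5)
The plan is the classical proof of the Hofmann--Mislove theorem by contradiction, using sobriety to produce a point from an irreducible closed set. Suppose $\bigcap \mathcal{U} \sle V$ but $V \notin \mathcal{U}$. Consider the family $\mathcal{V}$ of all opens $W \in \Open(X)$ such that $V \sle W$ and $W \notin \mathcal{U}$. It is non-empty, since it contains $V$.

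We show that $\mathcal{V}$ is closed under directed unions. Take a directed $\mathcal{D} \sle \mathcal{V}$. If $\bigcup \mathcal{D} \in \mathcal{U}$, then Scott-openness of $\mathcal{U}$ in the complete lattice $\Open(X)$ gives some $W \in \mathcal{D}$ with $W \in \mathcal{U}$, which is a contradiction. By Zorn's Lemma, there is then a maximal element $W_0 \in \mathcal{V}$.

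The central step is to show that the closed set $C \deq X \setminus W_0$ is irreducible. It is non-empty, because otherwise $W_0 = X$; and $X \in \mathcal{U}$, since $\mathcal{U}$ is upward-closed and non-empty (filters are codirected, hence non-empty). Now suppose $C \sle C_1 \cup C_2$ with $C_1, C_2$ closed, and set $W_i \deq W_0 \cup (X \setminus C_i)$. If $C \not\sle C_i$ for both $i$, then each $W_i$ strictly contains $W_0$. Maximality then forces $W_1, W_2 \in \mathcal{U}$, and since $\mathcal{U}$ is a filter, $W_1 \cap W_2 \in \mathcal{U}$. But $W_1 \cap W_2 = W_0 \cup (X \setminus (C_1 \cup C_2))$, and $X \setminus (C_1 \cup C_2) \sle X \setminus C = W_0$, so $W_1 \cap W_2 = W_0$. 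Hence $W_0 \in \mathcal{U}$, which is impossible. The empty cover is already excluded by $C \neq \emptyset$.

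By sobriety, $C = \clos{\{x\}}$ for some $x$, and in particular $x \notin W_0$. We claim $x \in \bigcap\mathcal{U}$. Let $U \in \mathcal{U}$. If $x \notin U$, then $\clos{\{x\}}$ is disjoint from the open $U$, because $\clos{\{x\}} = \down x$ and $U$ is saturated. Hence $U \sle W_0$, and upward-closure of $\mathcal{U}$ gives $W_0 \in \mathcal{U}$, a contradiction. So $x \in \bigcap \mathcal{U} \sle V \sle W_0$, contradicting $x \notin W_0$. Therefore $V \in \mathcal{U}$.

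The main obstacle is the irreducibility argument: it must use maximality carefully together with closure of $\mathcal{U}$ under binary intersections, and it must handle the zero-ary case, i.e.\ non-emptiness of $C$.
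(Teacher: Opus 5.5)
Your proposal is correct and follows essentially the same route as the paper. Both arguments take, by Zorn's lemma, a maximal open outside $\mathcal{U}$, prove that its complement is irreducible using that $\mathcal{U}$ is closed under binary intersections, and then use sobriety to obtain a point of $\bigcap\mathcal{U}$ lying outside that open. The only difference is cosmetic: you maximise over opens containing $V$, whereas the paper maximises over opens containing $\bigcap\mathcal{U}$, and both choices work.
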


\begin{proof}
Let $\mathcal{U} \sle \Open(X)$ be Scott-open (w.r.t.\ inclusion),
and let $V_0 \in \Open(X)$ such that $\bigcap \mathcal{U} \sle V_0$.
Assume toward a contradiction that $V_0 \notin \mathcal{U}$.

We shall apply Zorn's lemma to the poset
\[
  \left(
    \left\{ V \in \Open(X)
    \mid
    \bigcap \mathcal{U} \sle V
    ~\text{and}~
    V \notin \mathcal{U}
    \right\}
    ,\,
    \sle
  \right)
\]

\noindent
This poset is non-empty since $V_0$ belongs to it.
Moreover, it is stable by unions of non-empty chains:
if $\mathcal{V}$ is a non-empty chain of opens $V \sge \bigcap \mathcal{U}$,
then $\bigcup \mathcal{V} \in \mathcal{U}$ would imply
$V \in \mathcal{U}$ for some $V \in \mathcal{V}$ since $\mathcal{U}$
is Scott-open w.r.t.\ inclusion.

Hence by Zorn's lemma there is a maximal open $V$
such that $\bigcap \mathcal{U} \sle V$
and $V \notin \mathcal{U}$.
Let $C \deq X \setminus V$.

We claim that $C$ is irreducible.
First, $C \neq \emptyset$
since otherwise we would have $V = X$, contradicting $X \in \mathcal{U}$
($\mathcal{U}$ is a filter of opens).
Assume now $C \sle F \cup G$ with $F$, $G$ closed,
and consider the opens $A \deq X \setminus F$
and $B \deq X \setminus G$.
We have $A \cap B \sle V$.
Assume that the opens $V \cup A$ and $V \cup B$ both belong to $\mathcal{U}$.
Since $\mathcal{U}$ is a filter of opens,
we thus get
$(V \cup A) \cap (V \cup B) \in \mathcal{U}$.
But $(V \cup A) \cap (V \cup B) = V \cup (A \cap B)$,
while $V \cup (A \cap B) = V$ since $A \cap B \sle V$.
Hence $V \in \mathcal{U}$, a contradiction.
We thus have (say) $V \cup A \notin \mathcal{U}$,
which by maximality of $V$ implies $A \sle V$,
that is, $C \sle F$.

Since $X$ is sober, there is some $x \in X$
such that $C = \clos{\{x\}}$.
We of course have $x \notin V$.
Moreover, if $x \notin U$ with $U$ open,
then $U \cap C = \emptyset$
(since $y \in U \cap C$ would imply $y \leq_{\Open(X)} x$ and $x \in U$),
so that $U \sle V$
and $U \notin \mathcal{U}$ since $\mathcal{U}$ is a filter with $V \notin \mathcal{U}$.
Hence $x \in U$ for all $U \in \mathcal{U}$.
We thus have some $x \in \bigcap \mathcal{U}$ with $x \notin V$,
contradicting $\bigcap \mathcal{U} \sle V$.
\end{proof}

The Hofmann-Mislove Theorem~\ref{thm:proof:spectral:hofmann-mislove}
yields that sober spaces are well-filtered.
The following is e.g.~\cite[Proposition 8.3.5]{goubault13book}
or~\cite[Corollary 7.2.11]{aj95chapter}.

\begin{proposition}
\label{prop:proof:spectral:wf}
If $X$ is sober, then $X$ is well-filtered.
\end{proposition}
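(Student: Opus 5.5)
We derive well-filteredness from the Hofmann--Mislove Theorem~\ref{thm:proof:spectral:hofmann-mislove}. Let $X$ be sober, let $\mathcal{Q} \sle \K\Sat(X)$ be codirected, and let $U \in \Open(X)$ with $\bigcap\mathcal{Q} \sle U$. We must find some $Q \in \mathcal{Q}$ with $Q \sle U$.

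First, we build the family of opens
\[
  \mathcal{U} \deq \{ V \in \Open(X) \mid Q \sle V \text{ for some } Q \in \mathcal{Q} \}.
\]
We check that $\mathcal{U}$ is a Scott-open filter of $(\Open(X),\sle)$.
\begin{itemize}
\item It is upward-closed, by definition.
\item It is non-empty: a codirected family is non-empty, and $X \in \mathcal{U}$.
\item It is closed under binary intersections. If $Q_1 \sle V_1$ and $Q_2 \sle V_2$, codirectedness gives some $Q_3 \in \mathcal{Q}$ with $Q_3 \sle Q_1 \cap Q_2 \sle V_1 \cap V_2$.
\item It is Scott-open. If $\bigcup \mathcal{V} \in \mathcal{U}$ for a directed $\mathcal{V} \sle \Open(X)$, then some $Q \in \mathcal{Q}$ satisfies $Q \sle \bigcup\mathcal{V}$. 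Since $Q$ is compact, $Q$ is covered by finitely many members of $\mathcal{V}$. By directedness these are below a single $V \in \mathcal{V}$, so $Q \sle V$ and $V \in \mathcal{U}$.
\end{itemize}

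Next, we relate $\bigcap\mathcal{U}$ to $\bigcap\mathcal{Q}$. Since every $Q$ is saturated, $Q = \bigcap\{V \in \Open(X) \mid Q \sle V\}$ by the characterisation of saturated sets recalled above. Hence
\[
  \bigcap \mathcal{U} = \bigcap_{Q \in \mathcal{Q}} \bigcap \{V \in \Open(X) \mid Q \sle V\} = \bigcap \mathcal{Q}.
\]

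Finally, we conclude. From $\bigcap\mathcal{U} = \bigcap\mathcal{Q} \sle U$, Theorem~\ref{thm:proof:spectral:hofmann-mislove} gives $U \in \mathcal{U}$. By definition of $\mathcal{U}$, this means $Q \sle U$ for some $Q \in \mathcal{Q}$, as required.

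The only substantive step is the Scott-openness of $\mathcal{U}$, which is exactly where compactness of the members of $\mathcal{Q}$ is used. Everything else is bookkeeping, since the real work is already contained in the Hofmann--Mislove theorem.
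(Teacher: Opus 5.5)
Your proof is correct and follows the paper's argument essentially step by step. You form the same family $\mathcal{U}$ of opens containing some member of $\mathcal{Q}$, check that it is a Scott-open filter (using compactness for Scott-openness), identify $\bigcap\mathcal{U}$ with $\bigcap\mathcal{Q}$ via saturation, and conclude with the Hofmann--Mislove Theorem~\ref{thm:proof:spectral:hofmann-mislove}.
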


\begin{proof}
Let $\mathcal{Q}$ be a codirected set of compact-saturated sets,
and let $V \in \Open(X)$ such that $\bigcap \mathcal{Q} \sle V$.
Let $\mathcal{U}$ be the set of all opens $U$ such that
$Q \sle U$ for some $Q \in \mathcal{Q}$.

We claim that $\mathcal{U}$ is a Scott-open filter.
First, if $U \in \mathcal{U}$ and $U \sle U'$ with $U'$ open,
then we of course have $U' \in \mathcal{U}$.
Moreover, $\mathcal{U}$ is not empty since $X \in \mathcal{U}$ 
as $\mathcal{Q}$ is not empty.
Let now $U, U' \in \mathcal{U}$.
Hence there are $Q, Q' \in \mathcal{Q}$ such that $Q \sle U$ and $Q' \sle U'$.
Since $\mathcal{Q}$ is codirected,
there is $Q'' \in \mathcal{Q}$ such that $Q'' \sle Q$ and $Q'' \sle Q'$.
Hence $Q'' \sle U \cap U'$ and $U \cap U' \in \mathcal{U}$.
It follows that $\mathcal{U}$ is a filter.

It remains to show that $\mathcal{U}$ is Scott-open.
Let $\bigcup \mathcal{D} \in \mathcal{U}$
with $\mathcal{D} \sle \Open(X)$ directed.
Hence there is some $Q \in \mathcal{Q}$ such that
$Q \sle \bigcup \mathcal{D}$.
But since $Q$ is compact and $\mathcal{D}$ is a directed set of opens,
there is some $U \in \mathcal{D}$ such that $Q \sle U$.
Hence $U \in \mathcal{U}$.

For each $Q \in \mathcal{Q}$, since $Q$ is saturated
we have $Q = \bigcap \{U \in \Open(X) \mid Q \sle U\}$.
Hence $Q = \bigcap \{U \in \mathcal{U} \mid Q \sle U\}$
and $\bigcap \mathcal{Q} = \bigcap \mathcal{U}$.

We now apply the Hofmann-Mislove Theorem~\ref{thm:proof:spectral:hofmann-mislove}
to $\mathcal{U}$ and $V$,
to the effect that $V \in \mathcal{U}$.
Hence there is some $Q \in \mathcal{Q}$ such that $Q \sle V$,
as required.
\end{proof}

Combining Proposition~\ref{prop:proof:spectral:wf}
with Lemma~\ref{lem:proof:spectral:sober}
yields Proposition~\ref{prop:sem:wf}:

\PropWellFilt*

Moreover, Propositions~\ref{prop:proof:spectral:wf}
and~\ref{prop:proof:spectral:wf-sober}
imply that Definition~\ref{def:proof:spectral}
is equivalent to 
the definition of spectral space in \S\ref{sec:spectral}.
Hence, with Lemma~\ref{lem:proof:spectral:sober}
we have proved Theorem~\ref{thm:spectral}.

\ThmScottSpectral*

\subparagraph*{Intersections of Compact-Saturated Sets.}

Proposition~\ref{prop:proof:spectral:wf} also yields the following,
from which one easily obtains Corollary~\ref{cor:sem:ksatbigcap}.

\begin{lemma}
\label{lem:proof:spectral:ksat}
Let $X$ be a spectral space.
Given $\SP \sle X$,
we have $\SP \in \K\Sat(X)$
if, and only if,
$\SP = \bigcap \{ K \in \K\Open(X) \mid \SP \sle K\}$.
\end{lemma}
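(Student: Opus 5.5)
The plan is to prove the two implications separately. In both directions we use that $X$ is spectral: the compact-opens $\K\Open(X)$ form a basis and are stable under finite intersections, and $X$ is well-filtered by Proposition~\ref{prop:proof:spectral:wf}, because a spectral space is sober.

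\emph{Right-to-left.} Assume $\SP = \bigcap \mathcal{K}$, where $\mathcal{K} \deq \{ K \in \K\Open(X) \mid \SP \sle K\}$. Saturation is immediate, since each $K$ is open and hence upward-closed for the specialisation order, and saturated sets are closed under intersections.

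For compactness, the family $\mathcal{K}$ is codirected. It contains $X$, and it is closed under binary intersections because $\K\Open(X)$ is stable under finite intersections. It also consists of compact-saturated sets. Now take an open cover $\SP \sle \bigcup \mathcal{U}$, and let $\mathcal{D}$ be the directed family of finite unions of members of $\mathcal{U}$. Then $\bigcap\mathcal{K} \sle \bigcup\mathcal{D}$. Well-filteredness yields some $K \in \mathcal{K}$ with $K \sle \bigcup \mathcal{D}$. Since $K$ is compact and $\mathcal{D}$ is directed, $K \sle D$ for a single $D \in \mathcal{D}$. Hence $\SP \sle D$ is covered by finitely many members of $\mathcal{U}$.

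\emph{Left-to-right.} Let $\SP \in \K\Sat(X)$. The inclusion $\SP \sle \bigcap\mathcal{K}$ is trivial. For the converse, take $x \notin \SP$; we must find some $K \in \mathcal{K}$ with $x \notin K$.

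Since $\SP$ is saturated, for each $y \in \SP$ we have $y \not\leq x$. As $X$ is $T_0$ and the specialisation order is the order, there is an open containing $y$ but not $x$. Because $\K\Open(X)$ is a basis, we can shrink it to some $K_y \in \K\Open(X)$ with $y \in K_y$ and $x \notin K_y$.

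The sets $K_y$ cover $\SP$, so by compactness finitely many of them, say $K_{y_1},\dots,K_{y_k}$, already cover $\SP$. Their union $K$ is compact-open, contains $\SP$, and does not contain $x$. The case $\SP = \emptyset$ is covered by the empty union $K = \emptyset$.

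The main subtlety is in the right-to-left direction: compactness of an infinite intersection is not automatic, and it is exactly where well-filteredness (via Hofmann–Mislove) enters. The trick is to pass from an arbitrary cover to the directed family of its finite unions.
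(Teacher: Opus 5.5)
Your proof is correct and follows essentially the paper's route. For left-to-right, you use saturation, compactness and the compact-open basis. The paper phrases this as refining each open containing the set to a compact-open in between, while you separate points pointwise. For right-to-left, you use that the compact-opens containing the set form a codirected family, then well-filteredness, then compactness of a single member. Your detour through the directed family of finite unions is harmless but unnecessary, since well-filteredness applies directly to the open set $\bigcup\mathcal{U}$.
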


\begin{proof}
Fix $\SP \sle X$ with $X$ spectral.

First, note that
the set $\{ K \in \K\Open(X) \mid \SP \sle K\}$
is a codirected family of compact-saturated sets.
Indeed, since $X$ is compact we have $X \in \K\Open(X)$,
so that $\{ K \in \K\Open(X) \mid \SP \sle K\}$ is not empty.
Moreover, given $K, L \in \K\Open(X)$ such that $\SP \sle K,L$,
we have $\SP \sle K \cap L$ with $K \cap L \in \K\Open(X)$.


We first show that if $\SP \in \K\Sat(X)$,
then $\SP = \bigcap \{ K \in \K\Open(X) \mid \SP  \sle K\}$.
We have $\SP \sle \bigcap \{ K \in \K\Open(X) \mid \SP  \sle K\}$
since $\SP \sle X$ with $X \in \K\Open(X)$.
Moreover, we have seen above that if $\SP$ is saturated,
then
$\SP = \bigcap \{ U \in \Open(X) \mid \SP  \sle U\}$.
But $\K\Open(X)$ is a basis of the topology since $X$ is spectral.
Hence, if $\SP \sle X$ is compact and included in an open $U$,
then $\SP$ is included in a compact-open $K \sle U$.
It follows that
\[
\begin{array}{l l l}
  \bigcap \{ K \in \K\Open(X) \mid \SP  \sle K \}
& \sle
& \bigcap \{ U \in \Open(X) \mid \SP  \sle U \}
\end{array}
\]

\noindent
and we are done.

Well-filteredness
(Proposition~\ref{prop:proof:spectral:wf}) yields the converse,
namely that $\SP \in \K\Sat(X)$ whenever
$\SP = \bigcap \{ K \in \K\Open(X) \mid \SP \sle K\}$.
Indeed, if $\SP = \bigcap \{ K \in \K\Open(X) \mid \SP \sle K\}$,
then $\SP$ is saturated (as an intersection of saturated sets).
The set $\SP$ is moreover compact.
Let $\mathcal{D} \sle \Open(X)$ be a directed family of opens.
Since $\{ K \in \K\Open(X) \mid \SP \sle K\}$
is a codirected family of compact-saturated sets,
we get from Proposition~\ref{prop:proof:spectral:wf}
that if
$\SP = \bigcap \{ K \in \K\Open(X) \mid \SP \sle K\} \sle \bigcup \mathcal{D}$,
then $K \sle \bigcup \mathcal{D}$ for some $K \in \K\Open(X)$ such that $\SP \sle K$.
But since $K$ is compact, we have $K \sle U$ for some $U \in \mathcal{D}$.
Hence $\SP \sle U$.
\end{proof}

\begin{lemma}
\label{lem:proof:spectral:coh}
If $X$ is a spectral space, then $\K\Sat(X)$ is stable under finite
intersections.
\end{lemma}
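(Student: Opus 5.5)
Given $Q_1, Q_2 \in \K\Sat(X)$, the goal is to show that $Q_1 \cap Q_2$ is compact-saturated. The plan is to reduce to the characterisation of Lemma~\ref{lem:proof:spectral:ksat} rather than argue about compactness directly.

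Saturation is immediate: an intersection of upward-closed sets is upward-closed for the specialisation order. Equivalently, an intersection of sets each equal to an intersection of opens is again such an intersection. So the whole content is compactness, which is where Lemma~\ref{lem:proof:spectral:ksat} is used.

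By Lemma~\ref{lem:proof:spectral:ksat}, for $i=1,2$ we have
\[
Q_i = \bigcap \{K \in \K\Open(X) \mid Q_i \sle K\}.
\]
Intersecting the two, $Q_1 \cap Q_2$ is an intersection $\bigcap \mathcal{K}$, where $\mathcal{K}$ is the union of the two families of compact-opens. I then want to show
\[
Q_1 \cap Q_2 = \bigcap \{K \in \K\Open(X) \mid Q_1 \cap Q_2 \sle K\}.
\]
The inclusion from left to right is trivial. For the converse, every $K \in \mathcal{K}$ contains $Q_1 \cap Q_2$, so the right-hand intersection is contained in $\bigcap \mathcal{K} = Q_1 \cap Q_2$. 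The converse direction of Lemma~\ref{lem:proof:spectral:ksat}, which rests on well-filteredness, then yields $Q_1 \cap Q_2 \in \K\Sat(X)$.

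The empty intersection gives $X$, which is compact-saturated since $X \in \K\Open(X)$ by item~(\ref{item:proof:spectral:KO}) of Definition~\ref{def:proof:spectral}. Stability under all finite intersections then follows by induction. I expect no real obstacle here. The only point to check is that the converse of Lemma~\ref{lem:proof:spectral:ksat} applies to any set of the form $\bigcap\{K \in \K\Open(X) \mid \SP \sle K\}$, which it does as stated. This is also the step that genuinely needs well-filteredness rather than just the basis property.
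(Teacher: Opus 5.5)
Your proof is correct and follows essentially the same route as the paper: write each $Q_i$ as the intersection of the compact-opens containing it (Lemma~\ref{lem:proof:spectral:ksat}), show $Q_1 \cap Q_2 = \bigcap\{K \in \K\Open(X) \mid Q_1 \cap Q_2 \sle K\}$, and conclude by the converse direction of that lemma. The only difference is cosmetic: the paper writes $Q_1 \cap Q_2$ as an intersection of the sets $K_1 \cap K_2$ and invokes stability of $\K\Open(X)$ under binary intersections, whereas you intersect over the union of the two families, so that appeal is not needed at this step.
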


\begin{proof}
We of course have $X \in \K\Sat(X)$.

Let $Q_1, Q_2 \in \K\Sat(X)$.
We have
$Q_i = \bigcap \left\{ K_i \in \K\Open(X) \mid Q_i \sle K_i \right\}$
by Lemma~\ref{lem:proof:spectral:ksat}.
Hence
\[
\begin{array}{l l l}
  Q_1 \cap Q_2
& =
& \bigcap \left\{
  K_1 \cap K_2
  \mid
  \text{$Q_1 \sle K_1$ and $Q_2 \sle K_2$ with $K_1,K_2 \in \K\Open(X)$}
  \right\}
\end{array}
\]

Since $\K\Open(X)$ is stable under binary intersections
(Definition~\ref{def:proof:spectral}(\ref{item:proof:spectral:KO})),
it follows that
$Q_1 \cap Q_2 = \bigcap \left\{ K \in \K\Open(X) \mid Q_1 \cap Q_2 \sle K \right\}$,
and we get $Q_1 \cap Q_2 \in \K\Sat(X)$
by Lemma~\ref{lem:proof:spectral:ksat}.
\end{proof}

Corollary~\ref{cor:sem:ksatbigcap} is given by Theorem~\ref{thm:spectral}
and the following.

\begin{corollary}
\label{cor:proof:spectral:ksatbigcap}
If $X$ is a spectral space, then $\K\Sat(X)$ is stable under all intersections.
\end{corollary}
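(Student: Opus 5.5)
The statement to prove is that in a spectral space $X$, the family $\K\Sat(X)$ is closed under arbitrary intersections. The plan is to reduce this to Lemma~\ref{lem:proof:spectral:ksat}, which characterises compact-saturated sets as exactly those $\SP$ with $\SP = \bigcap\{K \in \K\Open(X) \mid \SP \sle K\}$. So, given an arbitrary family $\mathcal{Q} \sle \K\Sat(X)$, I will set $\SP \deq \bigcap \mathcal{Q}$ and show that $\SP$ is the intersection of the compact-opens containing it. Lemma~\ref{lem:proof:spectral:ksat} then gives $\SP \in \K\Sat(X)$.

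The empty family needs a separate word. Its intersection is $X$, which lies in $\K\Sat(X)$ because spectral spaces are compact (so $X \in \K\Open(X)$) and $X$ is trivially saturated.

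For a non-empty $\mathcal{Q}$, apply Lemma~\ref{lem:proof:spectral:ksat} to each member: every $Q \in \mathcal{Q}$ satisfies $Q = \bigcap\{K \in \K\Open(X) \mid Q \sle K\}$. This gives
\[
  \SP = \bigcap_{Q \in \mathcal{Q}} \bigcap\{K \in \K\Open(X) \mid Q \sle K\}.
\]
Each compact-open $K$ appearing on the right contains some $Q \in \mathcal{Q}$, hence contains $\SP$. The right-hand side is therefore an intersection over a subfamily of $\{K \in \K\Open(X) \mid \SP \sle K\}$, so
\[
  \bigcap\{K \in \K\Open(X) \mid \SP \sle K\} \sle \SP.
\]
The reverse inclusion is immediate, since every such $K$ contains $\SP$. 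Hence $\SP = \bigcap\{K \in \K\Open(X) \mid \SP \sle K\}$, and Lemma~\ref{lem:proof:spectral:ksat} yields $\SP \in \K\Sat(X)$.

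There is no real obstacle here. The substantive content, namely well-filteredness, is already packaged in the converse direction of Lemma~\ref{lem:proof:spectral:ksat}. In particular, no codirectedness of $\mathcal{Q}$ is needed, because we never apply well-filteredness to $\mathcal{Q}$ itself, only to the family of compact-opens containing $\SP$. That family is codirected automatically, since $\K\Open(X)$ is closed under finite intersections; this was already checked inside that lemma. The only care required is the empty-family case handled above.
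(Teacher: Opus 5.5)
Your proof is correct and follows the route the paper intends. The paper leaves this corollary without an explicit proof but presents it as an easy consequence of Lemma~\ref{lem:proof:spectral:ksat}, and the proof of Lemma~\ref{lem:proof:spectral:coh} runs your argument for binary intersections. Your separate treatment of the empty family is harmless but not needed: your main argument, read with $\bigcap\emptyset = X$, already goes through there.
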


An other important consequence of well-filteredness is
Lemma~\ref{lem:sem:log:realto} (\S\ref{sec:sem:log}).

}
\opt{full}{
\section{Proofs of \S\ref{sec:sem:log} (\nameref{sec:sem:log})}
\label{sec:proof:sem:log}

\subsection{Properties of Semantic Modalities}
\label{sec:proof:sem:log:mod}

We begin with Lemma~\ref{lem:sem:log:realto}.

\LemWfArrow*

\begin{proof}
For the first part of the statement,
since $f \colon \I\PT \to \I\PTbis$ is Scott-continuous,
we have that $f^{-1}(V)$ is open in $\I{\PT}$.
Since $\bigcap\mathcal{Q} \sle f^{-1}(V)$
with $\mathcal{Q}$ a codirected family of compact-saturated sets,
it follows from Proposition~\ref{prop:sem:wf} (well-filteredness)
that $Q \sle f^{-1}(V)$ for some $Q \in \mathcal{Q}$.
Hence $f \in Q \realto V$.

We now show that $Q \realto V$ is open in $\I\PT \to \I\PTbis$
when $V$ is open in $\I\PTbis$ and $Q$ is compact-saturated in $\I\PT$.
We have seen Lemma~\ref{lem:proof:spectral:ksat} (Appendix~\ref{sec:proof:spectral})
that $Q$ is the codirected intersection of the set of all compact-opens
$K$ such that $Q \sle K$.
It thus follows from the first part of the statement that
\[
\begin{array}{l l l}
  Q \realto V
& \sle
& \bigcup\left\{
  K \realto V
  \mid \text{$Q \sle K$ and $K$ compact open}
  \right\}
\end{array}
\]

\noindent
The converse inclusion follows from the fact that if $Q \sle K$,
then $K \realto V \sle Q \realto V$.

Consider now some $K \in \K\Open(\I\PT)$.
There are finite $d_1,\dots,d_n \in \I\PT$
such that $K = \up d_1 \cup \dots \cup \up d_n$.
Hence $K \realto V = \bigcap_{1 \leq i \leq n} (\up d_i \realto V)$.
Moreover, given $f \in \up d_i \realto V$,
we have $f(d_i) \in V$.
Hence there is some finite $e \in V$
such that $e \leq f(d_i)$.
We thus have $(d_i \step e) \leq f$,
where the step function $(d_i \step e)$ is finite in $\I\PT \to \I\PTbis$
(Proposition~\ref{prop:proof:scott:fin}, \S\ref{sec:proof:sem:pure:finite}).
It follows that 
\[
\begin{array}{l l l}
  K \realto V
& =
& \bigcap_{1 \leq i \leq n}
  \bigcup \left\{
  \up(d_i \step e)
  \mid
  \text{$e \in V$ finite}
  \right\}
\end{array}
\]

Hence $K \realto V$ is open as a finite intersection of unions
of basic opens $\up(d_i \step e)$,
and $Q \realto V$ is open as a union of opens.

We now show that $V \realto Q$ is compact-saturated in $\I\PTbis \to \I\PT$.
Since
\[
\begin{array}{r c l}
  V
& =
& \bigcup \{ \up e \mid \text{$e \in V$ finite} \}
\\

  Q
& =
& \bigcap \{ K \in \K\Open(\I\PT) \mid Q \sle K \}
\end{array}
\]

\noindent
we have
\[
\begin{array}{l l l}
  V \realto Q
& =
& \bigcap
  \left\{
  \up e \realto K
  \mid
  \text{$e \in V$ finite and $K \in \K\Open(\I\PT)$ with $Q \sle K$}
  \right\}
\end{array}
\]

Now, writing again $K \in \K\Open(\I\PT)$
as $K = \up d_1 \cup \dots \cup \up d_n$
with $d_1,\dots,d_n \in \I\PT$ finite,
we have that $\up e \realto K = \bigcup_{1 \leq i \leq n}\up(e \step d_i)$
is compact-open.
Hence $V \realto Q$ is compact-saturated as an intersection of compact-opens
(Corollary~\ref{cor:sem:ksatbigcap}).
\end{proof}

We now turn to simple but fundamental facts
on the semantic modalities $\I{\form{\modgen}}$.

\begin{lemma}
\label{lem:proof:sem:log:mod}
If $\modgen$ is either $\pi_1$, $\pi_2$ or $\fold$,
then $\I{\form{\modgen}}$ (is monotone and) preserves all unions and all intersections.

If $i \in\{1,2\}$, then $\I{\form{\inj_i}}$
(is monotone and) preserves all unions all \emph{non-empty} intersections.
\end{lemma}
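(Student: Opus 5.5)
The plan is to read each semantic modality as either an inverse image or a direct image along an injective map, and to appeal to the elementary set-theoretic behaviour of these two operations.

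For $\modgen \in \{\pi_1,\pi_2,\fold\}$, Figure~\ref{fig:sem:mod} gives $\I{\form{\modgen}}(\SP) = g^{-1}(\SP)$, where $g$ is the function $\I{\pi_i}$, respectively $\I\unfold$. Inverse images are monotone. They also commute with arbitrary unions and arbitrary intersections, and this includes the empty cases: $g^{-1}(\emptyset) = \emptyset$, and the empty intersection $g^{-1}(\I\PT) = \I{\PT'}$ is the whole domain. Hence the first claim follows at once. No topological input is needed here; we only use that $\I\unfold$ is a total function, and it is in fact an isomorphism with inverse $\I\fold$ (\S\ref{sec:proof:sem:pure:intdef}).

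For $\modgen = \inj_i$, we have $\I{\form{\inj_i}}(\SP) = \I{\inj_i}[\SP]$, the direct image. Direct images are always monotone and preserve all unions. For intersections, recall that $\I{\inj_i} = \iota_\bot \comp \kappa_i$ is injective, since it is an order embedding into the weak coproduct of \S\ref{sec:proof:sem:pure:sum}. For any family, $h[\bigcap_j \SP_j] \sle \bigcap_j h[\SP_j]$ holds in general. For injective $h$ and a non-empty family, the converse also holds: if $y \in \bigcap_j h[\SP_j]$, pick $j_0$ and write $y = h(x)$. Then for every $j$ we have $y = h(x_j)$ with $x_j \in \SP_j$, and injectivity gives $x_j = x$, so $x \in \bigcap_j \SP_j$.

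The only delicate point, and the reason for the restriction, is the empty intersection. The empty intersection in $\Po(\I{\PT_i})$ is the whole $\I{\PT_i}$. Its image misses $\bot$ and all of the other summand, so it is not the whole $\I{\PT_1+\PT_2}$. The statement therefore cannot be strengthened, which matches the rule $\True \thesis \form\modgen\True$ being restricted to $\modgen \notin \{\inj_1,\inj_2\}$. Beyond this, I expect no genuine obstacle; the main work is checking that $\I{\inj_i}$ is injective, which holds by construction.
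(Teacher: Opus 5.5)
Your proof is correct and follows the same route as the paper. The modalities for $\pi_1$, $\pi_2$ and $\fold$ are inverse images and so preserve all unions and intersections, while those for $\inj_i$ are direct images along an injective but non-surjective map, so they preserve all unions and non-empty intersections but not the empty one; your element-chasing argument and your remark on the empty intersection spell out what the paper states in a line.
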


\begin{proof}
In the case of $\modgen \in \{\pi_1, \pi_2, \fold\}$,
the semantic modality $\I{\form{\modgen}}$ acts by inverse image,
and thus preserves all unions and all intersections.

The semantic modalities $\I{\form{\inj_i}}$ act by \emph{direct} image,
and thus preserves all unions
(direct images are lower adjoints to inverse images).
It also preserves \emph{non-empty} intersections since 
the morphism $\I{\inj_i}$ is injective
(but $\I{\form{\inj_i}}$ 
does not preserve empty intersections since $\I{\inj_i}$
is not surjective).
\end{proof}

\begin{lemma}
\label{lem:proof:sem:log:mod:upset}
\hfill
\begin{enumerate}[(1)]
\item
Given $x \in \I{\PT[\rec \TV.\PT / \TV]}$,
we have
$\I{\form\fold}(\up x) = \up \I\fold(x)$.

\item
Given $x \in \I{\PT}$ and $y \in \I{\PTbis}$,
we have
$\I{\form{\pi_1}}(\up x) = \up (x, \bot)$
and
$\I{\form{\pi_2}}(\up y) = \up (\bot, y)$.

\item
Given $x \in \I{\PT_i}$,
we have
$\I{\form{\inj_i}}(\up x) = \up \I{\inj_i}(x)$.
\end{enumerate}
\end{lemma}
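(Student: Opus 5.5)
Each of the three items unfolds into two inclusions between up-sets. The plan is to compute both sides directly from Figure~\ref{fig:sem:mod}, using only the order-theoretic properties of $\I\fold$, $\I{\pi_i}$ and $\I{\inj_i}$ established in Appendix~\ref{sec:proof:sem:pure}.

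For item (1), $\I{\form\fold}(\up x)$ is by definition the inverse image $\{y \mid \I\unfold(y) \geq x\}$. Since $\I\fold$ and $\I\unfold$ are mutually inverse isomorphisms in $\Scott$, both are monotone. Hence $\I\unfold(y) \geq x$ holds iff $y = \I\fold(\I\unfold(y)) \geq \I\fold(x)$, and this gives $\I{\form\fold}(\up x) = \up \I\fold(x)$. The isomorphism itself comes from $\fold^\ep$ being an iso in $\Scott^\ep$, so $\I\fold = \ladj{(\fold^\ep)}$ has inverse $\I\unfold$.

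For item (2), $\I{\form{\pi_1}}(\up x) = \{(a,b) \mid a \geq x\}$. Because the order on $\I{\PT \times \PTbis}$ is componentwise and $\bot$ is least, $a \geq x$ is equivalent to $(a,b) \geq (x,\bot)$. The case of $\pi_2$ is symmetric.

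For item (3), $\I{\form{\inj_i}}(\up x) = \{\I{\inj_i}(y) \mid y \geq x\}$. The inclusion $\subseteq$ holds because $\I{\inj_i}$ is monotone. For $\supseteq$, take $z \geq \I{\inj_i}(x)$ in $(\I{\PT_1} \amalg \I{\PT_2})_\bot$. Then $z \neq \bot$, since $\I{\inj_i}(x) \neq \bot$ (the new bottom lies strictly below both components). Elements of distinct summands of the disjoint union are incomparable, so $z = \I{\inj_i}(y)$ for some $y$. Finally $\I{\inj_i}$ is an order embedding (as noted in \S\ref{sec:proof:sem:pure:sum}), so $y \geq x$.

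The only delicate point is item (3). There one has to use explicitly that the lifted sum puts $\bot$ strictly below both components and keeps the summands incomparable, so that the up-set of an injected element stays inside its summand. Items (1) and (2) are immediate unfoldings of the definitions.
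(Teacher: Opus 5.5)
Your proof is correct and follows the same route as the paper: each set is computed directly, using that $\I\fold$ is an isomorphism with inverse $\I\unfold$, that the order on products is componentwise, and that $\I{\inj_i}$ is an order embedding. In item (3) you also note that the image of $\I{\inj_i}$ is upward-closed, because the new $\bot$ lies strictly below both summands and distinct summands are incomparable; this makes explicit a step the paper leaves implicit when it appeals only to the order-embedding property.
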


\begin{proof}
\hfill
\begin{enumerate}[(1)]
\item
Since $\I\fold$ is an iso with inverse $\I\unfold$, we have
\[
\begin{array}{l l l}
  \I{\form\fold}(\up x)
& =
& \left\{
  y \in \I{\rec \TV.\PT}
  \mid
  \I\unfold(y) \geq x
  \right\}
\\

& =
& \left\{
  y \in \I{\rec \TV.\PT}
  \mid
  y \geq \I\fold(x)
  \right\}
\end{array}
\]

\item
We only discuss the case of $x \in \I\PT$.
Since the order in $\I\PT \times \I\PTbis$ is pointwise,
we have
\[
\begin{array}{l l l}
  \I{\form{\pi_1}}(\up x)
& =
& \left\{
  p \in \I\PT \times \I\PTbis
  \mid
  \I{\pi_1}(p) \geq x
  \right\}
\\

& =
& \left\{
  (a,b) \in \I\PT \times \I\PTbis
  \mid
  a \geq x
  \right\}
\\

& =
& \up(x, \bot)
\end{array}
\]

\item
Since $\I{\inj_i}$ is an order embedding
(\S\ref{sec:proof:sem:pure:intdef} and \S\ref{sec:proof:sem:pure:sum}),
we have
\[
\begin{array}{l l l}
  \I{\form{\inj_i}}(\up x)
& =
& \left\{
  \I{\inj_i}(y)
  \mid
  y \geq x
  \right\}
\\

& =
& \left\{
  z
  \mid
  z \geq \I{\inj_i}(x)
  \right\}
\\

& =
& \up \I{\inj_i}(x)
\end{array}
\]
\qedhere
\end{enumerate}
\end{proof}

\begin{lemma}
\label{lem:proof:sem:log:mod:mod}
Given $\modgen \in \{\pi_1, \pi_2, \inj_1, \inj_2, \fold\}$,
the semantic modality $\I{\form\modgen}$
preserves compact-opens, opens and compact-saturated sets.
\end{lemma}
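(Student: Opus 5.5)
We need to show that for $\modgen \in \{\pi_1,\pi_2,\inj_1,\inj_2,\fold\}$, $\I{\form\modgen}$ maps compact-opens to compact-opens, opens to opens, and compact-saturated sets to compact-saturated sets. The plan is to handle opens and compact-opens directly via the basis $\{\up d \mid d \text{ finite}\}$, and then derive the compact-saturated case from the compact-open case using Lemma~\ref{lem:proof:spectral:ksat} together with intersection preservation (Lemma~\ref{lem:proof:sem:log:mod}).

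\textbf{Compact-opens.} Since the Scott domains in question are algebraic, every compact-open $K$ is a finite union $\up d_1 \cup \dots \cup \up d_n$ with $d_i$ finite. Indeed, $K$ is a union of basic opens $\up d$, and compactness selects finitely many of them. By Lemma~\ref{lem:proof:sem:log:mod}, $\I{\form\modgen}$ preserves all unions, so it suffices to treat a single $\up d$. Lemma~\ref{lem:proof:sem:log:mod:upset} then gives $\I{\form\modgen}(\up d) = \up e$, where $e$ is one of $\I\fold(d)$, $(d,\bot)$, $(\bot,d)$, or $\I{\inj_i}(d)$. Each such $e$ is finite by Proposition~\ref{prop:proof:scott:fin} (Figure~\ref{fig:proof:sem:finelt}), since $\bot$ is finite. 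Hence $\up e$ is compact-open. The case $K=\emptyset$ (the empty union) is trivial.

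\textbf{Opens.} Every open set is a union of sets $\up d$ with $d$ finite. Since unions are preserved, the argument of the previous step shows that the image is a union of compact-opens, hence open.

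\textbf{Compact-saturated sets.} Let $Q \in \K\Sat$. By Lemma~\ref{lem:proof:spectral:ksat} (valid here by Theorem~\ref{thm:spectral}), $Q = \bigcap \mathcal{K}$ with $\mathcal{K} = \{K \in \K\Open \mid Q \sle K\}$, and $\mathcal{K}$ is non-empty because it contains the whole space. For $\modgen \in \{\pi_1,\pi_2,\fold\}$, all intersections are preserved, and for $\inj_i$, non-empty intersections are preserved (Lemma~\ref{lem:proof:sem:log:mod}). Therefore $\I{\form\modgen}(Q) = \bigcap_{K \in \mathcal{K}} \I{\form\modgen}(K)$, which is an intersection of compact-opens by the first step. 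By Corollary~\ref{cor:sem:ksatbigcap}, this intersection is compact-saturated.

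\textbf{Main obstacle.} The main point of care is the $\inj_i$ case. There, preservation holds only for \emph{non-empty} intersections, so one must ensure $\mathcal{K} \neq \emptyset$, which holds because the whole space is compact-open. One must also check that the finite-union representation of compact-opens holds in each relevant domain, which follows from algebraicity and compactness as described.
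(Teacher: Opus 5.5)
Your proposal is correct and follows essentially the same route as the paper. You handle (compact) opens as (finite) unions of sets $\up d$ with $d$ finite, via Lemma~\ref{lem:proof:sem:log:mod:upset} and Proposition~\ref{prop:proof:scott:fin}, and you write compact-saturated sets as \emph{non-empty} intersections of compact-opens so that Lemma~\ref{lem:proof:sem:log:mod} and Corollary~\ref{cor:sem:ksatbigcap} apply; your care about non-emptiness in the $\inj_i$ case is exactly the point the paper also makes.
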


\begin{proof}
Let $\modgen \in \{\pi_1, \pi_2, \inj_1, \inj_2, \fold\}$.

Recall that (compact) opens are (finite) unions of sets of the form
$\up d$ with $d$ finite.
By combining Lemma~\ref{lem:proof:sem:log:mod:upset} with
Proposition~\ref{prop:proof:scott:fin},
we get that $\I{\form\modgen}(\up d)$ is of the form $\up e$
for some finite $e$.
Hence, Lemma~\ref{lem:proof:sem:log:mod} gives the result for (compact) opens.

Regarding compact saturated sets, recall from~\S\ref{sec:spectral}
that given a spectral space $X$,
each compact saturated set $Q \in \K\Sat(X)$
is the intersection of all compact opens $K \in \K\Open(X)$
such that $Q \sle K$.
But since spectral spaces are compact, we have $X \in \K\Open(X)$,
and $Q$ is always a \emph{non-empty} intersection of compact opens.
Hence the result by Lemma~\ref{lem:proof:sem:log:mod} (again)
and Corollary~\ref{cor:sem:ksatbigcap}.
\end{proof}

\subsection{Semantic Properties of Formulae}
\label{sec:proof:sem:log:form}

We begin with Proposition~\ref{prop:sem:log:degroot}.

\PropFormdeGroot*

\begin{proof}
We prove a more general statement.
Let $\val$ be a valuation of
$\FPEnv = \FP_1\colon \PTbis_1, \dots, \FP_n\colon \PTbis_n$.
\begin{enumerate}[(1)]
\item
If $\varphi \in \Lang^+(\FPEnv,\PT)$,
then the function $\I{\FP_1.\dots.\FP_n.\varphi}\val$
restricts to
\[
\begin{array}{l r c l}
  \I{\FP_1.\cdots.\FP_n.\varphi}\val \colon
& \Open(\I{\PTbis_1}) \times \dots \times \Open(\I{\PTbis_n})
& \longto
& \Open(\I{\PT})
\\

& (\SP_1,\dots,\SP_n)
& \longmapsto
& \I{\varphi}\val[\SP_1 / \FP_1, \dots, \SP_n / \FP_n]
\end{array}
\]

\item
If $\varphi \in \Lang^-(\FPEnv,\PT)$,
then the function $\I{\FP_1.\dots.\FP_n.\varphi}\val$
restricts to
\[
\begin{array}{l r c l}
  \I{\FP_1.\cdots.\FP_n.\varphi}\val \colon
& \K\Sat(\I{\PTbis_1}) \times \dots \times \K\Sat(\I{\PTbis_n})
& \longto
& \K\Sat(\I{\PT})
\\

& (\SP_1,\dots,\SP_n)
& \longmapsto
& \I{\varphi}\val[\SP_1 / \FP_1, \dots, \SP_n / \FP_n]
\end{array}
\]
\end{enumerate}

We reason by induction on $\varphi$.
\begin{itemize}
\item
The case of $\FP_i \in \Lang^\pm(\FPEnv;\PT)$ is trivial.

\item
The case of $\form{\pair{}} \in \Lang^\pm(\FPEnv;\one)$
follows from the description of finite elements
in Figure~\ref{fig:proof:sem:finelt}
(Proposition~\ref{prop:proof:scott:fin}, \S\ref{sec:proof:sem:pure:finite}).

\item
The cases of $\True$, $\varphi \land \psi$, $\False$ and $\varphi \lor \psi$
all follow from the fact that $\Open(\PT)$ and $\K\Sat(\PT)$
are closed under finite unions and finite intersections
(using Corollary~\ref{cor:sem:ksatbigcap}
for finite intersections of compact-saturated sets).

\item
The cases of $\form\modgen\varphi$ follow from Lemma~\ref{lem:proof:sem:log:mod:mod},
and that of the realizability implication $\realto$ 
follows from Lemma~\ref{lem:sem:log:realto}
(note that $\psi \realto \varphi$ has no free fixpoint variables).

\item
Finally, the case of $(\exists \itvar)\varphi \in \Lang^+(\FPEnv;\PT)$
is trivial (since $\Open(\I\PT)$ is stable under unions),
while that of $(\forall \itvar)\varphi \in \Lang^-(\FPEnv;\PT)$
follows from Corollary~\ref{cor:sem:ksatbigcap}
(closure of $\K\Sat(\I\PT)$ under all intersections).
\qedhere
\end{itemize}
\end{proof}

\subsubsection{Proofs of Lemma~\ref{lem:sem:log:posneg}
and Proposition~\ref{prop:sem:log:scott}}

We now discuss Lemma~\ref{lem:sem:log:posneg}
and Proposition~\ref{prop:sem:log:scott}.
Lemma~\ref{lem:sem:log:posneg} uses the following,
which is actually item (\ref{item:sem:log:scott:mon})
of Proposition~\ref{prop:sem:log:scott}.

\begin{lemma}
\label{lem:proof:sem:log:form:mon}
Let $\val$ be a valuation of
$\FPEnv = \FP_1\colon \PTbis_1, \dots, \FP_n\colon \PTbis_n$.
If $\varphi \in \Lang(\FPEnv;\PT)$,
then the function
\[
\begin{array}{l r c l}
  \I{\FP_1.\cdots.\FP_n.\varphi}\val \colon
& \Po(\I{\PTbis_1}) \times \dots \times \Po(\I{\PTbis_n})
& \longto
& \Po(\I{\PT})
\\

& (\SP_1,\dots,\SP_n)
& \longmapsto
& \I{\varphi}\val[\SP_1 / \FP_1, \dots, \SP_n / \FP_n]
\end{array}
\]

\noindent
is monotone.
\end{lemma}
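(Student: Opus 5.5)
We argue by induction on the formation of $\varphi \in \Lang(\FPEnv;\PT)$, proving monotonicity simultaneously for all valuations $\val$ (the valuation of iteration variables is arbitrary but fixed, and fixpoint variables are updated as needed). Monotonicity in the product $\Po(\I{\PTbis_1}) \times \dots \times \Po(\I{\PTbis_n})$ means: if $\SP_i \sle \SP'_i$ for all $i$, then $\I\varphi\val[\vec\SP/\vec\FP] \sle \I\varphi\val[\vec\SP'/\vec\FP]$.

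\textbf{Base and propositional cases.}
\begin{itemize}
\item For a variable $\FP_i$, the interpretation is the projection onto the $i$th component, which is monotone.
\item The constants $\True$, $\False$ and $\form{\pair{}}$ are interpreted by constant functions.
\item The cases $\varphi \land \psi$ and $\varphi \lor \psi$ follow from the induction hypotheses, since $\cap$ and $\cup$ are monotone in both arguments.
\item For weakening (adding an unused $\FP\colon\PTbis$ to $\FPEnv$), the function ignores the extra argument, so the induction hypothesis applies directly.
\end{itemize}

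\textbf{Modalities and quantifiers.}
\begin{itemize}
\item For $\form\modgen\varphi$, use that $\I{\form\modgen}$ is monotone by Lemma~\ref{lem:proof:sem:log:mod}: it is an inverse image for $\pi_i$ and $\fold$, and a direct image for $\inj_i$.
\item The case $\psi \realto \varphi$ is trivial: $\psi \realto \varphi$ has no free fixpoint variables (the formation rule requires empty fixpoint context), so its interpretation is constant in $\vec\SP$.
\item For $(\exists\itvar)\varphi$ and $(\forall\itvar)\varphi$, apply the induction hypothesis to $\varphi$ under each valuation $\val[m/\itvar]$, then take unions, resp. intersections, over $m \in \NN$. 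These operations preserve pointwise inclusions.
\end{itemize}

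\textbf{Bounded iteration.} This is the only case requiring care. Consider $(\finmu^t \FP)\varphi$ with $\varphi \in \Lang(\FPEnv,\FP\colon\PT;\PT)$, and let $k = \I t\val$. By Figure~\ref{fig:sem:form}, its interpretation is $\I{(\FP.\varphi)^{k}(\False)}\val$, which equals $(\I{\FP.\varphi}\val)^{k}(\emptyset)$ computed under $\val[\vec\SP/\vec\FP]$. We show by induction on $j \le k$ that the $j$th iterate is monotone in $\vec\SP$.
\begin{itemize}
\item For $j=0$, the iterate is the constant $\emptyset$.
\item For the step, the $(j{+}1)$th iterate is $\I\varphi$ evaluated at $\val[\vec\SP/\vec\FP, S_j(\vec\SP)/\FP]$, where $S_j$ is the $j$th iterate. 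The outer induction hypothesis on $\varphi$ gives monotonicity of $\I\varphi$ jointly in $(\vec\SP, \SP)$. Since $S_j$ is monotone by the inner induction hypothesis, the composite is monotone.
\end{itemize}
The case $(\finnu^t\FP)\varphi$ is identical, starting the iteration from $\I\PT$ instead of $\emptyset$.

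To make the outer induction go through, the statement must be applied to $\varphi$ in the extended context $\FPEnv,\FP\colon\PT$. This is why the lemma is stated jointly in all fixpoint variables rather than one at a time: with single-variable monotonicity, the iteration case would not compose.
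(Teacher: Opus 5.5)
Your proof is correct and follows the paper's argument: structural induction on $\varphi$ over all valuations, with the bounded-iteration case handled by an inner induction on the number of iterates, using the induction hypothesis for the body in the extended fixpoint context. One small quibble: your closing remark is overstated, since monotonicity in each fixpoint variable separately, quantified over all valuations, is equivalent to joint monotonicity and would make the iteration case go through just as well.
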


\begin{proof}
We reason by induction on $\varphi \in \Lang(\FPEnv;\PT)$.
\begin{itemize}
\item
The case of $\FP_i \in \Lang^\pm(\FPEnv;\PT)$ is trivial.

\item
The case of $\form{\pair{}} \in \Lang(\FPEnv;\one)$ is trivial.

\item
The case of $\psi \realto \varphi$ is trivial since $\psi$ and $\varphi$
have no free fixpoint variable.

\item
The cases of $\form\modgen$ follow from Lemma~\ref{lem:proof:sem:log:mod}.

\item
The cases of $\True$, $\varphi \land \psi$, $\False$, $\varphi \lor \psi$,
$(\exists \itvar)\varphi$
and
$(\forall \itvar)\varphi$
all follow from the fact that unions and intersections are monotone
operations.

\item
In the cases of
$(\finmu^t \FPbis)\varphi$
and
$(\finnu^t \FPbis)\psi$,
let $m \deq \I{t}\val$
and note that
\[
\begin{array}{r c l !{\qquad\text{and}\qquad} r c l}
  \I{(\finmu^t \FPbis)\varphi}\val
& =
& (\I{\FPbis.\varphi}\val)^m(\emptyset)

& \I{(\finnu^t \FPbis)\psi}\val
& =
& (\I{\FPbis.\psi}\val)^m(\I\PT)
\end{array}
\]

The induction hypothesis yields that
the functions
$\I{\FP_1.\cdots.\FP_n.\FPbis.\varphi}\val$
and
$\I{\FP_1.\cdots.\FP_n.\FPbis.\psi}\val$
are monotone.
An induction on $i \leq m$ then shows that the functions

\[
\begin{array}{r c l}
  \Po(\I{\PTbis_1}) \times \dots \times \Po(\I{\PTbis_n})
& \longto
& \Po(\I{\PT})
\\

  (\SP_1,\dots,\SP_n)
& \longmapsto
& (\I{\FPbis.\varphi}\val[\SP_1/\FP_1,\dots,\SP_n/\FP_n])^i(\emptyset)
\\

  (\SP_1,\dots,\SP_n)
& \longmapsto
& (\I{\FPbis.\psi}\val[\SP_1/\FP_1,\dots,\SP_n/\FP_n])^i(\I\PT)

\end{array}
\]

\noindent
are both monotone.
\qedhere
\end{itemize}
\end{proof}

We can now prove Lemma~\ref{lem:sem:log:posneg}.

\LemPosNeg*

\begin{proof}
The proof is by mutual induction on the predicates
$\itvar \Pos \varphi$ and $\itvar \Neg \varphi$.
\begin{description}
\item[Cases of]
\[
\begin{array}{c}

\dfrac{\itvar \notin \FV(\varphi)}
  {\itvar \Pos \varphi}

\qquad

\dfrac{\itvar \notin \FV(\varphi)}
  {\itvar \Neg \varphi}

\end{array}
\]

Trivial.

\item[Cases of]
\[
\begin{array}{c}

\dfrac{\itvar \Pos \varphi}
  {\itvar \Pos \form\modgen \varphi}

\qquad

\dfrac{\itvar \Neg \varphi}
  {\itvar \Neg \form\modgen \varphi}

\end{array}
\]

By induction hypothesis and Lemma~\ref{lem:proof:sem:log:mod}.

\item[Cases of]
\[
\begin{array}{c}

\dfrac{\itvar \Neg \psi
  \qquad
  \itvar \Pos \varphi}
  {\itvar \Pos \psi \realto \varphi}

\qquad

\dfrac{\itvar \Pos \psi
  \qquad
  \itvar \Neg \varphi}
  {\itvar \Neg \psi \realto \varphi}

\end{array}
\]

By induction hypothesis, since $\realto$ is contravariant
in its first argument and covariant in its second argument.

\item[Cases of]
\[
\begin{array}{c}

\dfrac{\itvar \Pos \varphi, \psi}
  {\itvar \Pos \varphi \lor \psi}

\qquad

\dfrac{\itvar \Pos \varphi, \psi}
  {\itvar \Pos \varphi \land \psi}

\qquad

\dfrac{\itvar \Neg \varphi, \psi}
  {\itvar \Neg \varphi \lor \psi}

\qquad

\dfrac{\itvar \Neg \varphi, \psi}
  {\itvar \Neg \varphi \land \psi}

\\\\

\dfrac{\itvar \Pos \varphi}
  {\itvar \Pos (\exists \itvarbis)\varphi}

\qquad

\dfrac{\itvar \Neg \varphi}
  {\itvar \Neg (\forall \itvarbis)\varphi}

\end{array}
\]

By induction hypothesis,
since unions and intersections are monotone operations.

\item[Cases of]
\[
\begin{array}{c}
\dfrac{\itvar \Pos \varphi
  \qquad
  \itvar \notin \FV(t)}
  {\itvar \Pos (\finnu^t \FP)\varphi}

\qquad

\dfrac{\itvar \Neg \psi
  \qquad
  \itvar \notin \FV(t)}
  {\itvar \Neg (\finmu^t \FP)\psi}

\end{array}
\]

Let $n \leq m$ in $\NN$.
Note that $\I{t}\val[n / \itvar] = \I{t}\val[m / \itvar] = \I{t}\val$
since $\itvar \notin \FV(t)$.
Moreover,
$\I\varphi\val[n / \itvar] \sle \I\varphi\val[m / \itvar]$
and
$\I\psi\val[m/\itvar] \sle \I\psi\val[n/\itvar]$
by induction hypothesis.

Note that we have
\[
\begin{array}{r c l !{\qquad\text{and}\qquad} r c l}
  \I{(\finmu^t \FP)\varphi}\val
& =
& (\I{\FP.\varphi}\val)^{\I{t}\val}(\emptyset)

& \I{(\finnu^t \FP)\psi}\val
& =
& (\I{\FP.\psi}\val)^{\I{t}\val}(\I\PT)

\end{array}
\]

By monotonicity of $\I{\FP.\varphi}\val$ and $\I{\FP.\psi}\val$
(Lemma~\ref{lem:proof:sem:log:form:mon}),
an induction on $i \in \NN$ yields
\[
\begin{array}{l !{\quad} r c l}
& (\I{\FP.\varphi}\val[n / \itvar])^i(\emptyset)
& \sle
& (\I{\FP.\varphi}\val[m / \itvar])^i(\emptyset)
\\

  \text{and}
& (\I{\FP.\psi}\val[m / \itvar])^i(\I\PT)
& \sle
& (\I{\FP.\psi}\val[n / \itvar])^i(\I\PT)
\end{array}
\]

Hence we are done with
\[
\begin{array}{l !{\quad} r c l}
& (\I{\FP.\varphi}\val[n / \itvar])^{\I{t}}(\emptyset)
& \sle
& (\I{\FP.\varphi}\val[m / \itvar])^{\I{t}}(\emptyset)
\\

  \text{and}
& (\I{\FP.\psi}\val[m / \itvar])^{\I{t}}(\I\PT)
& \sle
& (\I{\FP.\psi}\val[n / \itvar])^{\I{t}}(\I\PT)
\end{array}
\]

\item[Cases of]
\[
\begin{array}{c}
\dfrac{\itvar \Pos \varphi}
  {\itvar \Pos (\finmu^t \FP)\varphi}

\qquad

\dfrac{\itvar \Neg \psi}
  {\itvar \Neg (\finnu^t \FP)\psi}

\end{array}
\]

Let $n \leq m$ in $\NN$.
We have $\I{t}\val[n / \itvar] \leq \I{t}\val[m / \itvar]$.
Moreover,
$\I\varphi\val[n / \itvar] \sle \I\varphi\val[m / \itvar]$
and
$\I\psi\val[m/\itvar] \sle \I\psi\val[n/\itvar]$
by induction hypothesis.

Note that we have
\[
\begin{array}{r c l !{\qquad\text{and}\qquad} r c l}
  \I{(\finmu^t \FP)\varphi}\val
& =
& (\I{\FP.\varphi}\val)^{\I{t}\val}(\emptyset)

& \I{(\finnu^t \FP)\psi}\val
& =
& (\I{\FP.\psi}\val)^{\I{t}\val}(\I\PT)

\end{array}
\]

By monotonicity of $\I{\FP.\varphi}\val$ and $\I{\FP.\psi}\val$
(Lemma~\ref{lem:proof:sem:log:form:mon}),
an induction on $i \in \NN$ yields
\[
\begin{array}{l !{\quad} r c l}
& (\I{\FP.\varphi}\val[n / \itvar])^i(\emptyset)
& \sle
& (\I{\FP.\varphi}\val[m / \itvar])^i(\emptyset)
\\

  \text{and}
& (\I{\FP.\psi}\val[m / \itvar])^i(\I\PT)
& \sle
& (\I{\FP.\psi}\val[n / \itvar])^i(\I\PT)
\end{array}
\]

\noindent
so that
\[
\begin{array}{l !{\quad} r c l}
& (\I{\FP.\varphi}\val[n / \itvar])^{\I{t}\val[n / \itvar]}(\emptyset)
& \sle
& (\I{\FP.\varphi}\val[m / \itvar])^{\I{t}\val[n / \itvar]}(\emptyset)
\\

  \text{and}
& (\I{\FP.\psi}\val[m / \itvar])^{\I{t}\val[n / \itvar]}(\I\PT)
& \sle
& (\I{\FP.\psi}\val[n / \itvar])^{\I{t}\val[n / \itvar]}(\I\PT)
\end{array}
\]

Again using the monotonicity of
$\I{\FP.\varphi}\val$ and $\I{\FP.\psi}\val$
(Lemma~\ref{lem:proof:sem:log:form:mon}),
we get
\[
\begin{array}{l !{\quad} r c l}
& (\I{\FP.\varphi}\val[m / \itvar])^{\I{t}\val[n / \itvar]}(\emptyset)
& \sle
& (\I{\FP.\varphi}\val[m / \itvar])^{\I{t}\val[m / \itvar]}(\emptyset)
\\

  \text{and}
& (\I{\FP.\psi}\val[m / \itvar])^{\I{t}\val[m / \itvar]}(\I\PT)
& \sle
& (\I{\FP.\psi}\val[m / \itvar])^{\I{t}\val[n / \itvar]}(\I\PT)
\end{array}
\]

\noindent
and we are done.
\qedhere
\end{description}
\end{proof}

We finally turn to Proposition~\ref{prop:sem:log:scott}.
This requires some preparatory material on Scott (co)continuity.
Given complete lattices $A$ and $B$,
a function $f \colon A \to B$ is \emph{Scott-cocontinuous}
if it preserves infs of codirected sets.
We use the well-known fact that Scott-(co)continuity
in several arguments is equivalent to Scott-(co)continuity
in each argument separately.

\begin{lemma}
\label{lem:proof:sem:log:scott:argumentwise}
Let $A$, $B$, $C$ be complete lattices,
and let $f \colon A \times B \to C$ be a function.

Then $f$ is Scott-(co)continuous if, and only if,
for all $a \in A$ and all $b \in B$, the functions
$y \mapsto f(a,y)$ and $x \mapsto f(x, b)$
are Scott-(co)continuous.
\end{lemma}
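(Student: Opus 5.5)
The lemma asserts that for complete lattices $A$, $B$, $C$, a map $f \colon A \times B \to C$ is Scott-(co)continuous exactly when it is so in each argument separately. I will treat the continuous case (preservation of directed sups) in detail. The cocontinuous case then follows by duality: replace $A$, $B$, $C$ by $A^\op$, $B^\op$, $C^\op$. These are again complete lattices, and codirected infs become directed sups.

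\emph{Joint implies separate.} Fix $a \in A$ and a directed $D \sle B$. The set $\{a\} \times D$ is directed in $A \times B$, where the order is componentwise. Its sup is $(a, \bigvee D)$, because sups in $A \times B$ are computed componentwise and $D$ is non-empty. Hence $f(a,\bigvee D) = \bigvee_{y \in D} f(a,y)$. The other argument is symmetric. Monotonicity in each argument follows the same way from monotonicity of $f$.

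\emph{Separate implies joint.} First, $f$ is monotone: if $(x,y) \leq (x',y')$ then $f(x,y) \leq f(x',y) \leq f(x',y')$, using monotonicity in each argument (directed-sup preservation on two-element chains gives this). Now let $E \sle A \times B$ be directed. Put $D_1 = \pi_1(E)$ and $D_2 = \pi_2(E)$. Both are directed, and $\bigvee E = (\bigvee D_1, \bigvee D_2)$. Applying separate continuity twice gives
\[
f\big(\textstyle\bigvee D_1, \bigvee D_2\big)
= \textstyle\bigvee_{x \in D_1} \bigvee_{y \in D_2} f(x,y).
\]
It remains to show that this double sup equals $\bigvee_{(x,y) \in E} f(x,y)$. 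The inequality $\geq$ is clear, since every $(x,y) \in E$ satisfies $x \in D_1$ and $y \in D_2$. For $\leq$, take $x \in D_1$ and $y \in D_2$, witnessed by some $(x,y') \in E$ and $(x',y) \in E$. Directedness of $E$ gives an element $(x'',y'') \in E$ above both pairs. By monotonicity, $f(x,y) \leq f(x'',y'')$.

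The only point needing care is this last cofinality step, and it is where directedness of $E$ is essential. The completeness hypothesis only guarantees that all the sups involved exist. Dualising yields the cocontinuous statement.
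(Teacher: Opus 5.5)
Your proof is correct and follows the paper's approach: for the continuous case the paper just cites the standard result (Abramsky--Jung, Proposition 1.4.3), whose usual proof is your componentwise-plus-cofinality argument, and it gets the cocontinuous case by viewing $f$ as a map between the opposite lattices, exactly as you do. One small correction to your closing remark: every sup you form in the continuous half is a sup of a directed set, so completeness is not needed there; completeness is what justifies the dualization step.
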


\begin{proof}
For Scott-continuity, this is e.g.~\cite[Proposition 1.4.3]{ac98book}.
Regarding Scott-cocontinuity, since $A$, $B$ and $C$ are assumed to be
\emph{complete} lattices, one sees that $f$ is Scott-cocontinuous
precisely when $f$ is Scott-continuous as a function $A^\op \times B^\op \to C^\op$.
Hence the result.
\end{proof}

Given a function $f \colon A \times B \to C$
the function $(x,y) \mapsto f(x,-)^n(y)$ is defined as
\[
\begin{array}{l l l !{\qquad\text{and}\qquad} l l l}
  f(x,-)^0(y)
& \deq
& y

& f(x,-)^{n+1}(y)
& \deq
& f(x,\, f(x,-)^n(y))
\end{array}
\]

\begin{lemma}
\label{lem:proof:sem:log:scott:iter}
Let $A$ and $B$ be complete lattices,
and let $f \colon A \times B \to B$ be a function.

If $f$ is Scott-(co)continuous,
then so is the function $(x,y) \mapsto f(x,-)^n(y)$
for each $n \in \NN$.
\end{lemma}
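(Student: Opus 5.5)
The plan is to induct on $n \in \NN$, proving at each stage that the function $g_n \colon A \times B \to B$, $(x,y) \mapsto f(x,-)^n(y)$, is Scott-(co)continuous. I treat Scott-continuity in detail; the cocontinuous case then follows by applying the same argument to $A^\op$, $B^\op$ (still complete lattices), exactly as in the proof of Lemma~\ref{lem:proof:sem:log:scott:argumentwise}.

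For $n = 0$, $g_0$ is the second projection $(x,y) \mapsto y$. Directed sups in $A \times B$ are computed componentwise, so this projection preserves them.

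For the inductive step, write $g_{n+1}(x,y) = f(x, g_n(x,y))$. This exhibits $g_{n+1}$ as the composite of $f$ with the pairing $\langle \pi_1, g_n\rangle \colon A \times B \to A \times B$. The pairing is Scott-continuous because its components $\pi_1$ and $g_n$ are: $\pi_1$ trivially, and $g_n$ by the induction hypothesis. Again this uses that directed sups in a product are componentwise.

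It then remains to see that a composite of Scott-continuous maps is Scott-continuous. For monotonicity this is immediate. For preservation of directed sups, a monotone map sends a directed set to a directed set, so the inner map yields a directed image whose sup is then preserved by the outer map.

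The only delicate point is this directedness bookkeeping, namely checking that $\{(x,g_n(x,y)) \mid (x,y) \in D\}$ is directed with sup $(\bigvee \pi_1 D, g_n(\bigvee D))$. It follows from monotonicity of $g_n$, which is part of the induction hypothesis. If one prefers a more elementary route, Lemma~\ref{lem:proof:sem:log:scott:argumentwise} gives an alternative: check continuity separately in $x$ and in $y$, with the $x$-argument handled by the diagonal $x \mapsto (x,x)$ followed by the jointly continuous $(x,x') \mapsto f(x, g_n(x',y))$.
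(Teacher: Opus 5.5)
Your proof is correct and follows the paper's own argument: induction on $n$, with the second projection as base case, and the step $g_{n+1} = f \circ \langle \pi_1, g_n \rangle$ (the paper writes this as $f \circ (\mathrm{id} \times g_n) \circ \langle \pi_1, \mathrm{id} \rangle$), with the cocontinuous case obtained by passing to $A^{\mathrm{op}}$ and $B^{\mathrm{op}}$. The only difference is that you check continuity of the pairing and of composites directly, using componentwise directed sups, whereas the paper cites Lemma~\ref{lem:proof:sem:log:scott:argumentwise}.
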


\begin{proof}
The proof is by induction on $n \in \NN$.
In the base case $n=0$,
the function $(x,y) \mapsto f(x,-)^n(y)$ is the second projection.
This function is Scott-continuous and cocontinuous by
Lemma~\ref{lem:proof:sem:log:scott:argumentwise}.

Assume now that $(x,y) \mapsto f(x,-)^n(y)$ is Scott-(co)continuous.
The function 
$(x,y) \mapsto f(x,-)^{n+1}(y)$ is the composite
\[
\begin{tikzcd}[column sep=huge]
  A \times B
  \arrow{r}{\pair{\pi_1,\id}}
& A \times (A \times B)
  \arrow{rr}{\id \times (x,y) \mapsto f(x,-)^n(y)}
&
& A \times B
  \arrow{r}{f}
& B
\end{tikzcd}
\]

\noindent
which is Scott-continuous thanks to Lemma~\ref{lem:proof:sem:log:scott:argumentwise}.
\end{proof}

\begin{lemma}
\label{lem:proof:sem:log:scott:veewedge}
Let $A$ and $B$ be complete lattices,
let ${(f_i)}_{i \in I}$ be a family of functions $f_i \colon A \to B$.
\begin{enumerate}[(1)]
\item
\label{item:proof:sem:log:scott:veewedge:contvee}
If each $f_i$ is Scott-continuous,
then
$x \mapsto \bigvee_i f_i(x)$
is Scott-continuous.

\item
\label{item:proof:sem:log:scott:veewedge:contwedge}
Assume that sups distribute over finite infs in $B$.
If each $f_i$ is Scott-continuous and $I$ is finite,
then
$x \mapsto \bigwedge_i f_i(x)$
is Scott-continuous.

\item
\label{item:proof:sem:log:scott:veewedge:cocontwedge}
If each $f_i$ is Scott-cocontinuous,
then
$x \mapsto \bigwedge_i f_i(x)$
is Scott-cocontinuous.

\item
\label{item:proof:sem:log:scott:veewedge:cocontvee}
Assume that infs distribute over finite sups in $B$.
If each $f_i$ is Scott-cocontinuous and $I$ is finite,
then
$x \mapsto \bigvee_i f_i(x)$
is Scott-cocontinuous.
\end{enumerate}
\end{lemma}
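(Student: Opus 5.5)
I will prove the four items directly from the definitions, since each is a pointwise computation in a complete lattice. For (1), let $D \sle A$ be directed. Then
\[
  \textstyle\bigvee_i f_i(\bigvee D) = \bigvee_i \bigvee_{d \in D} f_i(d) = \bigvee_{d \in D}\bigvee_i f_i(d),
\]
using Scott-continuity of each $f_i$ and the commutation of sups. Monotonicity is immediate, so $x \mapsto \bigvee_i f_i(x)$ is Scott-continuous. Item (3) is the order-dual argument: for codirected $E$, one commutes infs to get $\bigwedge_i f_i(\bigwedge E) = \bigwedge_{e}\bigwedge_i f_i(e)$. No distributivity is needed in either case.

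For (2), I reduce to the binary case $I = \{1,2\}$, handling $I = \emptyset$ separately: the constant function $\top$ is Scott-continuous. The general finite case then follows by induction, writing $\bigwedge_{i \in I \cup \{j\}} f_i = (\bigwedge_{i\in I} f_i) \wedge f_j$. Given directed $D$, I compute
\[
  f_1(\textstyle\bigvee D) \wedge f_2(\bigvee D) = (\bigvee_{d} f_1(d)) \wedge (\bigvee_{d'} f_2(d')) = \bigvee_{d,d'} (f_1(d) \wedge f_2(d')),
\]
where the last step uses the assumption that sups distribute over finite infs in $B$, applied twice. The key step is to exploit directedness: for $d,d' \in D$ pick $d'' \in D$ above both. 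Monotonicity of $f_1$ and $f_2$ then gives $f_1(d)\wedge f_2(d') \leq f_1(d'')\wedge f_2(d'')$. Hence the double sup equals $\bigvee_{d''} (f_1(d'')\wedge f_2(d''))$, the reverse inequality being trivial.

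Item (4) is exactly dual. I would use the assumption that infs distribute over finite sups, codirectedness to find a common lower element $e'' \leq e, e'$, and monotonicity of the $f_i$ (cocontinuous maps preserve binary meets of comparable pairs, hence are monotone). Alternatively, one can obtain (4) from (2) by viewing the functions as maps $A^\op \to B^\op$ between complete lattices, as in Lemma~\ref{lem:proof:sem:log:scott:argumentwise}.

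The only point requiring care, and the one I expect to be the main obstacle, is the directedness step in (2) and (4). This is where monotonicity must be extracted from Scott-(co)continuity, noting that $\{x,y\}$ with $x\leq y$ is directed, and it is also where finiteness of $I$ is genuinely needed. Everything else is rearrangement of sups and infs.
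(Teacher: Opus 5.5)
Your proof is correct and follows the paper's argument. Items (1) and (3) commute sups (resp.\ infs). Item (2) distributes the finite meet over the directed sups and then uses directedness plus monotonicity to collapse the resulting join onto the diagonal, and items (3) and (4) follow by order duality. The only cosmetic difference is that you reduce (2) to the binary case by induction, treating $I=\emptyset$ separately, whereas the paper distributes directly over all choice functions $g \in D^I$ and bounds each $g$ by a single $d \in D$, which also covers $I=\emptyset$ at no extra cost.
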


\begin{proof}
By duality (see the proof of Lemma~\ref{lem:proof:sem:log:scott:argumentwise}),
it is sufficient to consider the case of Scott-continuity.
\begin{enumerate}[(1)]
\item
Given a directed $D \sle A$, since sups commute over sups, we have
\[
\begin{array}{r c l}
  \bigvee_{i \in I} f_i(\bigvee D)
& =
& \bigvee_{i \in I} \bigvee_{d \in D} f_i(d)
\\

& =
& \bigvee_{d \in D} \bigvee_{i \in I} f_i(d)
\end{array}
\]

\noindent
and we are done.

\item
Let $D \sle A$ be directed.
Since the $f_i$'s are Scott continuous, we have
\[
\begin{array}{l l l}
  \bigwedge_{i \in I} f_i(\bigvee D)
& =
& \bigwedge_{i \in I} \bigvee_{d \in D} f_i(d)
\end{array}
\]

Now, we have
\(
\bigwedge_{i \in I} f_i(d)
\leq
\bigwedge_{i \in I} \bigvee_{d \in D} f_i(d)
\)
for each $d \in D$, 
and thus
\[
\begin{array}{l l l}
  \bigvee_{d \in D} \bigwedge_{i \in I} f_i(d)
& \leq
& \bigwedge_{i \in I} \bigvee_{d \in D} f_i(d)
\end{array}
\]

For the converse inequality, since $I$ is finite
and sups distribute over finite infs in $B$, we have
\[
\begin{array}{l l l}
  \bigwedge_{i \in I} \bigvee_{d \in D} f_i(d)
& \leq
& \bigvee_{g \in D^I} \bigwedge_{i \in I} f_i(g(i))
\end{array}
\]

Let $g \in D^I$.
Since $I$ is finite and $D$ is directed,
there is some $d \in D$ such that
$g(i) \leq d$ for all $i \in I$.
By monotonicity of the $f_i$'s we have
$f_i(g(i)) \leq f_i(d)$ for all $i \in I$,
and thus
$\bigwedge_{i \in I} f_i(g(i)) \leq \bigwedge_{i \in I} f_i(d)$.
It follows that
\[
\begin{array}{l l l}
  \bigwedge_{i \in I} f_i(g(i))
& \leq
& \bigvee_{d \in D} \bigwedge_{i \in I} f_i(d)
\end{array}
\]

\noindent
Since this holds for each $g \in D^I$, we obtain
\[
\begin{array}{l l l}
  \bigvee_{g \in D^I} \bigwedge_{i \in I} f_i(g(i))
& \leq
& \bigvee_{d \in D} \bigwedge_{i \in I} f_i(d)
\end{array}
\]
\qedhere
\end{enumerate}
\end{proof}

We can finally prove Proposition~\ref{prop:sem:log:scott}.

\PropFormScott*

\begin{proof}
Item (\ref{item:sem:log:scott:mon})
was proven in Lemma~\ref{lem:proof:sem:log:form:mon}.
For items~(\ref{item:sem:log:scott:pos}) and~(\ref{item:sem:log:scott:neg}),
we prove a more general statement.
Let $\val$ be a valuation of
$\FPEnv = \FP_1\colon \PTbis_1, \dots, \FP_n\colon \PTbis_n$,
and consider the function
\[
\begin{array}{l r c l}
  \I{\FP_1.\cdots.\FP_n.\varphi}\val \colon
& \Po(\I{\PTbis_1}) \times \dots \times \Po(\I{\PTbis_n})
& \longto
& \Po(\I{\PT})
\\

& (\SP_1,\dots,\SP_n)
& \longmapsto
& \I{\varphi}\val[\SP_1 / \FP_1, \dots, \SP_n / \FP_n]
\end{array}
\]

\noindent
We prove the following:
\begin{itemize}
\item
if $\varphi \in \Lang^+(\FPEnv;\PT)$,
then 
$\I{\FP_1.\cdots.\FP_n.\varphi}\val$
is Scott-continuous;

\item
if $\varphi \in \Lang^-(\FPEnv;\PT)$,
then 
$\I{\FP_1.\cdots.\FP_n.\varphi}\val$
is Scott-cocontinuous.
\end{itemize}

\noindent
We reason by induction on $\varphi$.
\begin{itemize}
\item
The case of $\FP_i \in \Lang^\pm(\FPEnv;\PT)$ is trivial.

\item
The cases of $\form{\pair{}}$
and $\psi \realto \varphi$ are trivial
(recall that $\psi \realto \varphi$ has no free fixpoint variable).

\item
The cases of $\form\modgen$ follow from Lemma~\ref{lem:proof:sem:log:mod}
(note that codirected sets are required to be non-empty).

\item
The cases of $\True$, $\varphi \land \psi$, $\False$ and $\varphi \lor \psi$,
all follow from Lemma~\ref{lem:proof:sem:log:scott:veewedge}.

\item
The cases of
$(\exists \itvar)\varphi$
and
$(\forall \itvar)\psi$,
where $\varphi$ is positive and $\psi$ is negative,
follow from items~(\ref{item:proof:sem:log:scott:veewedge:contvee})
and~(\ref{item:proof:sem:log:scott:veewedge:cocontwedge})
in
Lemma~\ref{lem:proof:sem:log:scott:veewedge},
respectively.

\item
In the cases of
$(\finmu^t \FPbis)\varphi$
and
$(\finnu^t \FPbis)\psi$,
let $m \deq \I{t}\val$
and note that
\[
\begin{array}{r c l !{\qquad\text{and}\qquad} r c l}
  \I{(\finmu^t \FPbis)\varphi}\val
& =
& (\I{\FPbis.\varphi}\val)^m(\emptyset)

& \I{(\finnu^t \FPbis)\psi}\val
& =
& (\I{\FPbis.\psi}\val)^m(\I\PT)
\end{array}
\]

We can then conclude using Lemma~\ref{lem:proof:sem:log:scott:iter}.
\qedhere
\end{itemize}
\end{proof}

\subsubsection{Notes on Example~\ref{ex:sem:log:fixpoints}}

We provide some details on Example~\ref{ex:sem:log:fixpoints}.

\subparagraph*{Streams vs $\omega$-Words.}
The set of $\omega$-words on a set $A$ is $A^\omega$.
The set $A^\omega$ can be seen as a subset of the Scott domain $(A_\bot)^\omega$.
The induced subspace topology is the usual \emph{product topology}
on $A^\omega$ (see e.g.~\cite[Example 8]{rs24jfla}).
This topology is Hausdorff and has a basis of clopens.
It is moreover compact if, and only if, the set $A$ is finite
(while the Scott topology on $(A_\bot)^\omega$ is always compact).

\subparagraph*{Boolean Spaces.}
We claimed that Stone (Boolean) spaces are exactly
those spectral spaces for which the compact-saturated sets
coincide with the closed sets.
Boolean spaces can be defined as those compact Hausdorff spaces
which have a basis of clopen sets~\cite[Definition 1.3.1]{dst19book}.
Boolean spaces are of course spectral
(see e.g.~\cite[Theorem 1.3.4]{dst19book}).

In presence of a basis of compact-opens,
if compact-saturated sets coincide with closed sets,
we get that each element of the basis is in fact clopen.
Together with the $T_0$ separation axiom, this yields that
the space is Hausdorff with a basis of clopens.
Hence, a space is Boolean provided it is
a spectral space in which the compact-saturated
sets coincide with the closed sets.

Conversely, in a Boolean space each closed set is an
intersection of clopens (since each open is a union of clopens).
But in a compact Hausdorff space, clopens coincide with
compact-opens.
Hence in a Boolean space
closed sets coincide with intersections of compact-opens,
and thus with compact-saturated sets
(Lemma~\ref{lem:proof:spectral:ksat}
and Corollary~\ref{cor:proof:spectral:ksatbigcap},
\S\ref{sec:proof:spectral}).

\subparagraph*{Well-Filteredness and the Formula in Equation~\eqref{eq:form:spec}.}
Finally, we made an informal claim corresponding to the following.

\begin{lemma}
\label{lem:proof:ex:sem:log:fixpoints}
Consider a Scott-continuous
$f \colon \I{\Tree \Bool} \to \I{\Stream \Nat}$.
Given $n \in \NN$, the following are equivalent:
\begin{enumerate}[(i)]
\item
\label{item:proof:ex:sem:log:fixpoints:spec}
$f$ satisfies the formula in Equation~\eqref{eq:form:spec}
of Example~\ref{ex:form:fix}.

\item
\label{item:proof:ex:sem:log:fixpoints:unif}
There is some $m \in \NN$ such that $f$ satisfies the formula
\begin{multline*}
  \forall\Box\form\lbl(\form\true \lor \form\false)
  ~\land~
  \exists\Next \bigwedge_{i \leq m} (\exists\Next)^i \form\lbl\form\true
  ~\land~
  \exists\Next \bigwedge_{i \leq m} (\exists\Next)^i \form\lbl\form\false
  \quad\realto
\\
  \bigvee_{i \leq m}\Next^i \form\hd \form{\geq \term n} \form\tot
\end{multline*}
\end{enumerate}
\end{lemma}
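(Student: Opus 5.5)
The plan is to treat both directions separately. The implication (\ref{item:proof:ex:sem:log:fixpoints:unif})$\Rightarrow$(\ref{item:proof:ex:sem:log:fixpoints:spec}) should be a pure monotonicity argument. The implication (\ref{item:proof:ex:sem:log:fixpoints:spec})$\Rightarrow$(\ref{item:proof:ex:sem:log:fixpoints:unif}) should be an application of Lemma~\ref{lem:sem:log:realto}, i.e.\ of well-filteredness (Proposition~\ref{prop:sem:wf}). Write $\Psi$ and $\Phi$ for the left- and right-hand sides of~\eqref{eq:form:spec}, and $\Psi_m$, $\Phi_m$ for the left- and right-hand sides of the formula in (\ref{item:proof:ex:sem:log:fixpoints:unif}).

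First I compute the semantics using Example~\ref{ex:sem:log:fixpoints}. By Proposition~\ref{prop:sem:log:scott}, $\I{\Diam\form\hd\form{\geq\term n}\form\tot} = \bigcup_{m}\I{\Phi_m}$ as an increasing union. Likewise $\I{\exists\Box\form\lbl\form\true} = \bigcap_k (\I{\FP.(\form\lbl\form\true\land\exists\Next\FP)})^k(\I{\Tree\Bool})$ as a decreasing intersection, and similarly with $\form\false$. I read the conjunction $\bigwedge_{i\le m}(\exists\Next)^i\form\lbl\form\true$ in (\ref{item:proof:ex:sem:log:fixpoints:unif}) as this depth-$m$ approximant of $\exists\Box\form\lbl\form\true$, i.e.\ the nested formula asserting a path of length $m$ labelled by $\true$. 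Granting this reading, $\I{\Psi}\sle\I{\Psi_m}$ for every $m$, and the $\I{\Psi_m}$ form a codirected family with intersection $\I\Psi$. This uses that $\I{\form\lbl}$, $\I{\form\lft}$, $\I{\form\rght}$ and $\exists\Next$ preserve codirected intersections (Lemma~\ref{lem:proof:sem:log:mod}) and that $\cap$ commutes with codirected intersections.

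For (\ref{item:proof:ex:sem:log:fixpoints:unif})$\Rightarrow$(\ref{item:proof:ex:sem:log:fixpoints:spec}), I use that the semantic $\realto$ is antitone in its first argument and monotone in its second. Since $\I\Psi\sle\I{\Psi_m}$ and $\I{\Phi_m}\sle\I\Phi$, we get $\I{\Psi_m\realto\Phi_m}\sle\I{\Psi\realto\Phi}$.

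For (\ref{item:proof:ex:sem:log:fixpoints:spec})$\Rightarrow$(\ref{item:proof:ex:sem:log:fixpoints:unif}), I proceed in two steps.
\begin{enumerate}[(a)]
\item Each $\I{\Psi_m}$ is compact-saturated by Proposition~\ref{prop:sem:log:degroot}, since $\Psi_m$ is a negative formula, and $\I\Phi$ is open. The first part of Lemma~\ref{lem:sem:log:realto}, applied to $f\in\I\Psi\realto\I\Phi=\bigcap_m\I{\Psi_m}\realto\I\Phi$, gives some $m_1$ with $f\in\I{\Psi_{m_1}}\realto\I\Phi$.
\item It remains to shrink $\I\Phi$ to some $\I{\Phi_{m_2}}$. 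The set $f(\I{\Psi_{m_1}})$ is compact, as the continuous image of a compact set. It is covered by the directed union of opens $\bigcup_{m}\I{\Phi_m}$, so some single $\I{\Phi_{m_2}}$ already contains it.
\end{enumerate}
Taking $m=\max(m_1,m_2)$ and using monotonicity of both families concludes.

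The main obstacle is the identification of the finite conjunction in (\ref{item:proof:ex:sem:log:fixpoints:unif}) with the depth-$m$ approximant of $\exists\Box$, which is what makes $\bigcap_m\I{\Psi_m}=\I\Psi$. If the conjunction is instead read literally, as $\true$-labelled nodes at each depth $i\le m$ but not necessarily on a common path, the intersection over $m$ is strictly larger than $\I\Psi$. Then step~(a) no longer applies, because Lemma~\ref{lem:sem:log:realto} needs the family to intersect exactly to $\I\Psi$. In that case I would first check whether the statement needs the nested (path) reading. Failing that, I would look for a König-type argument on the finitely branching set of $\true$-labelled prefixes, though I do not expect such an argument to close the gap for an arbitrary continuous $f$.
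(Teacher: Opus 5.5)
Your proof is correct and is essentially the paper's argument. The paper obtains the antecedent index by $\ax{WF}$ and the consequent index by $\ax{CC}$ in their semantic form (Lemma~\ref{lem:proof:sem:log:sound:wfcc}, i.e.\ your well-filteredness step and your compactness step), merges the indices by monotonicity, and ends with $\exists\Next(\finnu^{m}\FP)(\form\lbl\form\true\land\exists\Next\FP)$, which is exactly your nested path reading. You differ only in using cocontinuity of $\exists\Next$ instead of splitting $\exists\Next\exists\Box$ into two indices, and in adding the easy converse, which the paper omits.

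Your doubt about the literal reading is justified, and it is a defect of the statement rather than of your proof: read literally, (i)$\Rightarrow$(ii) is false. Take the continuous $f$ sending $t$ to the constant stream of value $\I{\term n}$ if some child of the root starts a path of three $\true$-labelled nodes, and to $\bot$ otherwise. Consider the total tree whose right subtree is $\false$ everywhere and whose left subtree is $\true$ exactly at its root and at the nodes $l^k r$ ($k\geq 0$) below it. This tree satisfies every literal antecedent but is sent to $\bot$, so the K\"onig-type fallback you mention cannot work.
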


\begin{proof}
We show that
\(
  \text{(\ref{item:proof:ex:sem:log:fixpoints:spec})}
  \imp
  \text{(\ref{item:proof:ex:sem:log:fixpoints:unif})}
\).
First, note that
for a negative $\psi$
we have
\[
\begin{array}{l l l}
  \exists\Next \exists\Box \psi
& \thesisiff
& (\forall \itvarbis_1)
  (\forall \itvarbis_2)
  \left(
  \form{\lft}(\finnu^{\itvarbis_1} \FP)(\psi \land \exists\Next\FP)
  ~\lor~
  \form{\rght}(\finnu^{\itvarbis_2} \FP)(\psi \land \exists\Next\FP)
  \right)
\end{array}
\]

We shall now apply Lemma~\ref{lem:sem:log:realto}.
It is actually more convenient to apply it in the form of
Lemma~\ref{lem:proof:sem:log:sound:wfcc}
(used in the proof of the Soundness Theorem~\ref{thm:sem:log:sound}).
Hence the formula in Equation~\eqref{eq:form:spec}
implies the following formula
\begin{multline*}
  (\exists \itvarbis_1, \itvarbis_2, \itvarbis_3, \itvarbis_4, \itvar) 
  \bigg(
  \Big(
  \forall\Box\form\lbl(\form\true \lor \form\false)
\\
  ~\land~
  \left(
  \form{\lft}(\finnu^{\itvarbis_1} \FP)(\form\lbl\form\true \land \exists\Next\FP)
  ~\lor~
  \form{\rght}(\finnu^{\itvarbis_2} \FP)(\form\lbl\form\true \land \exists\Next\FP)
  \right)
\\
  ~\land~
  \left(
  \form{\lft}(\finnu^{\itvarbis_3} \FP)(\form\lbl\form\false \land \exists\Next\FP)
  ~\lor~
  \form{\rght}(\finnu^{\itvarbis_4} \FP)(\form\lbl\form\false \land \exists\Next\FP)
  \right)
  \Big)
\\
  \realto\quad
  (\finmu^\itvar \FP)(\form\hd \form{\geq \term n} \form\tot \lor \Next \FP)
  \bigg)
\end{multline*}

But now, by monotonicity
(Lemma~\ref{lem:sem:log:posneg})
we get that if $f$ satisfies the last formula above,
then it also satisfies
\begin{multline*}
  (\exists \itvar) 
  \bigg(
  \Big(
  \forall\Box\form\lbl(\form\true \lor \form\false)
\\
  ~\land~
  \left(
  \form{\lft}(\finnu^{\itvar} \FP)(\form\lbl\form\true \land \exists\Next\FP)
  ~\lor~
  \form{\rght}(\finnu^{\itvar} \FP)(\form\lbl\form\true \land \exists\Next\FP)
  \right)
\\
  ~\land~
  \left(
  \form{\lft}(\finnu^{\itvar} \FP)(\form\lbl\form\false \land \exists\Next\FP)
  ~\lor~
  \form{\rght}(\finnu^{\itvar} \FP)(\form\lbl\form\false \land \exists\Next\FP)
  \right)
  \Big)
\\
  \realto\quad
  (\finmu^\itvar \FP)(\form\hd \form{\geq \term n} \form\tot \lor \Next \FP)
  \bigg)
\end{multline*}

\noindent
and thus
\begin{multline*}
  (\exists \itvar) 
  \bigg(
  \Big(
  \forall\Box\form\lbl(\form\true \lor \form\false)
  ~\land~
  \exists\Next(\finnu^{\itvar} \FP)(\form\lbl\form\true \land \exists\Next\FP)
  ~\land~
  \exists\Next(\finnu^{\itvar} \FP)(\form\lbl\form\false \land \exists\Next\FP)
  \Big)
\\
  \realto\quad
  (\finmu^\itvar \FP)(\form\hd \form{\geq \term n} \form\tot \lor \Next \FP)
  \bigg)
\end{multline*}

\noindent
Hence the result.
\end{proof}

\subsubsection{Proof of Proposition~\ref{prop:sem:char:fin}}

We recall the statement of Proposition~\ref{prop:sem:char:fin}.

\PropTopCharFin*

The proof of Proposition~\ref{prop:sem:char:fin}
is split into Lemma~\ref{lem:proof:sem:fin:coprime-open}
and Lemma~\ref{lem:proof:sem:coprime-open:fin}.

\begin{lemma}
\label{lem:proof:sem:fin:coprime-open}
Given $\varphi \in \Lang^\land(\PT)$, if $\I\varphi \neq \emptyset$ then
$\I\varphi = \up d$ for some finite $d \in \I\PT$.
\end{lemma}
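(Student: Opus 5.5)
We argue by structural induction on $\varphi \in \Lang^\land(\PT)$, proving the statement for all pure types simultaneously. Recall that $\Lang^\land$ is built from $\True$, $\False$, $\form{\pair{}}$ and binary $\land$, together with the modalities $\form\modgen$ and the implication $\psi \realto \varphi$ with $\psi,\varphi \in \Lang^\land$.

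\textbf{Base cases.}
\begin{itemize}
\item $\I\True = \up\bot$, and $\bot$ is finite.
\item $\I\False = \emptyset$, so there is nothing to prove.
\item $\I{\form{\pair{}}} = \{\top\} = \up\top$, and $\top$ is finite in $\I\Unit$ (Figure~\ref{fig:proof:sem:finelt}).
\end{itemize}

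\textbf{Conjunction.} Suppose $\I{\varphi_1 \land \varphi_2} = \I{\varphi_1} \cap \I{\varphi_2}$ is non-empty. Then both conjuncts are non-empty, so by induction $\I{\varphi_i} = \up d_i$ with $d_i$ finite. Non-emptiness of $\up d_1 \cap \up d_2$ gives $d_1,d_2$ an upper bound. Bounded-completeness then yields $d_1 \vee d_2$, which is finite, and $\up d_1 \cap \up d_2 = \up(d_1 \vee d_2)$, as noted in \S\ref{sec:spectral}.

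\textbf{Modalities.} If $\I{\form\modgen\varphi} \neq \emptyset$, then $\I\varphi \neq \emptyset$. For $\pi_i$ and $\fold$ this holds because the semantic modality is an inverse image of a map into a set containing the image. For $\inj_i$ it holds because it is a direct image. By induction $\I\varphi = \up d$ with $d$ finite. Lemma~\ref{lem:proof:sem:log:mod:upset} then gives $\I{\form\modgen}(\up d) = \up e$, where $e$ is one of $\I\fold(d)$, $(d,\bot)$, $(\bot,d)$ or $\I{\inj_i}(d)$. Each of these is finite by Proposition~\ref{prop:proof:scott:fin}.

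\textbf{Implication.} This is the main case. Consider $\psi \realto \varphi$ with $\psi,\varphi \in \Lang^\land$, and split on whether the antecedent and consequent are empty.
\begin{itemize}
\item If $\I\psi = \emptyset$, then $\I{\psi \realto \varphi}$ is the whole function space, which equals $\up\bot$.
\item Otherwise $\I\psi = \up d$ with $d$ finite, by induction. Since $\up d \realto \I\varphi$ is assumed non-empty and $\up d \neq \emptyset$, we must have $\I\varphi \neq \emptyset$. So $\I\varphi = \up e$ with $e$ finite.
\end{itemize}
In the second sub-case we claim $\up d \realto \up e = \up(d \step e)$. The step function $d \step e$ is finite (Proposition~\ref{prop:proof:scott:fin}). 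For the claim, take a Scott-continuous $f$.
\begin{itemize}
\item If $f \in \up d \realto \up e$, then $f(d) \geq e$. By monotonicity, $f(x) \geq e = (d\step e)(x)$ for $x \geq d$, and $f(x) \geq \bot = (d \step e)(x)$ otherwise. Hence $f \geq d \step e$.
\item Conversely, if $f \geq d \step e$ and $x \geq d$, then $f(x) \geq e$.
\end{itemize}

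The only subtlety I anticipate is the non-emptiness bookkeeping in the implication case. An empty antecedent gives the full space, which is $\up\bot$. An empty consequent together with a non-empty antecedent gives the empty set, which is excluded by hypothesis. Beyond this, the proof only uses the description of finite elements already established.
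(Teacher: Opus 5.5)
Your proposal is correct and follows the paper's proof essentially step for step. It uses the same structural induction on $\Lang^\land$, bounded-completeness for $\land$, Lemma~\ref{lem:proof:sem:log:mod:upset} together with Proposition~\ref{prop:proof:scott:fin} for the modalities, and the same split on whether the antecedent is empty in the $\realto$ case. The only differences are cosmetic: you treat $\False$ explicitly, and you check $\up d \realto \up e = \up(d \step e)$ in both directions, which the paper only states.
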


\begin{proof}
The proof is by induction on $\varphi \in \Lang^\land(\PT)$.
\begin{description}
\item[Case of $\True$.]
In this case, we have $\I\varphi = \up \bot$.

\item[Case of $\varphi \land \psi$.]
First, note that $\I\varphi, \I\psi$ are non-empty since so is their intersection.
By induction hypothesis, there are finite $d,e \in \I\PT$
such that $\I\varphi = \up d$ and $\I\psi = \up e$.
Since $\up d \cap \up e$ is non-empty, and since $\I\PT$ is a Scott domain,
we get that $d \vee e$ is defined, finite, and such that
$\up(d\vee e) = \up d \cap \up e$.
Hence $\I{\varphi \land \psi} = \up(d \vee e)$.

\item[Case of $\form{\pair{}}$.]
Since $\I{\form{\pair{}}} = \up\top$.

\item[Case of $\form{\modgen}\varphi$
with $\modgen \in \{\pi_1, \pi_2, \inj_1, \inj_2, \fold\}$.]
Note that $\I\varphi$ is non-empty since so is
$\I{\form\modgen\varphi} = \I{\form\modgen}(\I\varphi)$.
Hence by induction hypothesis, there is some finite $d$
such that $\I\varphi = \up d$.
We can then conclude using Lemma~\ref{lem:proof:sem:log:mod:upset}
and Proposition~\ref{prop:proof:scott:fin}.

\item[Case of $\psi \realto \varphi$.]
First, if $\I\psi = \emptyset$,
then $\I{\psi \realto \varphi} = \up \bot$.

Assume now that $\I\psi \neq \emptyset$.
In this case, we must also have $\I\varphi \neq \emptyset$.
Hence by induction hypothesis there are $d,e$ finite
such that
$\up e = \I\psi$ and $\up d = \I\varphi$.
Then we are done since
\[
\begin{array}{l l l}
  \I{\psi \realto \varphi}
& =
& \left\{ f \mid \forall x \in \I\psi,~ f(x) \in \I\varphi\right\}
\\

& =
& \left\{ f \mid \forall x \geq e,~ f(x) \geq d\right\}
\\

& =
& \up \left(e \step d \right)
\end{array}
\]
\qedhere
\end{description}
\end{proof}

\begin{lemma}
\label{lem:proof:sem:coprime-open:fin}
If $d \in \I\PT$ is finite,
then there is
$\varphi \in \Lang^\land(\PT)$
such that $\up d = \I\varphi$.
\end{lemma}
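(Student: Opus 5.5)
We argue by induction on the derivation of $d \in \Fin(\I\PT)$ with the rules of Figure~\ref{fig:proof:sem:finelt}, which by Proposition~\ref{prop:proof:scott:fin} covers all finite elements of $\I\PT$ (here with no free type variables, so the rule for $\TV_i$ never applies at top level). Recursive types need no separate treatment: the rule for $\I\fold(d)$ reduces to a finite element of the unfolded type $\PT[\rec\TV.\PT/\TV]$. The induction is therefore on derivations, not on types.

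The base and structural cases are direct uses of Lemma~\ref{lem:proof:sem:log:mod:upset}.
\begin{itemize}
\item For $d = \bot$ we take $\True$, since $\up\bot = \I\PT$.
\item For $\top \in \I\Unit$ we take $\form{\pair{}}$.
\item For $\I{\inj_i}(d)$ we take $\form{\inj_i}\varphi$, where $\up d = \I\varphi$ by induction.
\item For $\I\fold(d)$ we take $\form\fold\varphi$ in the same way.
\item For a pair $(d,e)$ we take $\form{\pi_1}\varphi \land \form{\pi_2}\psi$, since $\up(d,e) = \up(d,\bot) \cap \up(\bot,e)$.
\end{itemize}
All these formulae stay in $\Lang^\land$.

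The main case is a finite sup of step functions $f = \bigvee_{i\in I}(d_i \step e_i)$ in $\I{\PTbis \arrow \PT}$. By induction we get $\delta_i, \gamma_i \in \Lang^\land$ with $\I{\delta_i} = \up d_i$ and $\I{\gamma_i} = \up e_i$. We take $\varphi \deq \bigwedge_{i \in I}(\delta_i \realto \gamma_i)$, with $\True$ when $I = \emptyset$. This formula lies in $\Lang^\land$.

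It remains to check that $\up f = \bigcap_i (\up d_i \realto \up e_i)$.
\begin{itemize}
\item For $\sle$: if $g \geq f$ and $x \geq d_i$, then $g(x) \geq f(x) \geq (d_i \step e_i)(x) = e_i$.
\item For $\sge$: if $g$ maps $\up d_i$ into $\up e_i$ for every $i$, then $g \geq d_i \step e_i$ pointwise for every $i$, since the step function is $\bot$ off $\up d_i$. Hence $g \geq f$, because $f$ is the least upper bound of the step functions in the pointwise order.
\end{itemize}

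The only subtle point is that $f$ must exist as a sup. This is guaranteed by the side condition of the rule in Figure~\ref{fig:proof:sem:finelt}, so the consistency conditions play no role in this direction. I expect this arrow case to be the main place needing care, mostly to make sure the pointwise-order argument really gives equality rather than only an inclusion.
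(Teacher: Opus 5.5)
Your proof is correct and follows essentially the same route as the paper: induction on the derivation of finiteness with the rules of Figure~\ref{fig:proof:sem:finelt}, choosing the same $\Lang^\land$ formula for each rule. The only difference is in the arrow case, where the paper first shows $\up(d_i \step e_i) = \I{\psi_i \realto \varphi_i}$ and then uses $\up\bigl(\bigvee_{i} d_i \step e_i\bigr) = \bigcap_i \up(d_i \step e_i)$, while you prove the combined identity directly; the content is the same.
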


\begin{proof}
We rely on Proposition~\ref{prop:proof:scott:fin}
and on the inductive definition of $\Fin(\I\PT)$ in Figure~\ref{fig:proof:sem:finelt}.
We reason by cases on the rules 
in Figure~\ref{fig:proof:sem:finelt}.
\begin{description}
\item[Case of]
\[
\dfrac{}
  {\bot \in \Fin(\I\PT)}  
\]

\noindent
Since $\up \bot = \I{\True}$.

\item[Case of]
\[
\dfrac{}
  {\top \in \Fin(\I\Unit)}
\]

\noindent
Since $\up \top = \I{\form{\pair{}}}$.

\item[Case of]
\[
\dfrac{d \in \Fin(\I\PT)
  \qquad
  e \in \Fin(\I\PTbis)}
  {(d,e) \in \Fin(\I{\PT \times \PTbis})}
\]

\noindent
By induction hypothesis, we
have $\varphi \in \Lang^\land(\PT)$
and $\psi \in \Lang^\land(\PTbis)$
such that
$\I\varphi = \up d$
and
$\I\psi = \up e$.
Since the order in $\I{\PT \times \PTbis}$
is pointwise, we get
\[
\begin{array}{l l l}
  \up(d,e)
& =
& \up d \times \up e
\\

& =
& (\up d \times \I\PTbis)
  \cap
  (\I\PT \times \up e)
\\

& =
& \I{\form{\pi_1}\varphi \land \form{\pi_2}\psi}
\end{array}
\]

\item[Case of]
\[
\dfrac{d \in \Fin(\I\PT_i)}
  {\I{\inj_i}(d) \in \Fin(\I{\PT_1 + \PT_2})}
\]

By induction hypothesis, there is
$\varphi \in \Lang^\land(\PT_i)$
such that $\I\varphi = \up d$.
Since $\I{\inj_i}$ is an order embedding
(\S\ref{sec:proof:sem:pure:intdef} and \S\ref{sec:proof:sem:pure:sum}),
\[
\begin{array}{l l l}
  \up \I{\inj_i}(d)
& =
& \left\{
  \inj_i(x) \mid \inj_i(x) \geq \inj_i(d)
  \right\}
\\

& =
& \left\{
  \inj_i(x) \mid x \geq d
  \right\}
\\

& =
& \I{\form{\inj_i}\varphi}
\end{array}
\]

\item[Case of]
\[
\dfrac{d \in \Fin(\I{\PT[\rec\TV.\PT/\TV]})}
  {\I\fold(d) \in \Fin(\I{\rec\TV.\PT})}
\]

\noindent
By induction hypothesis, there is
$\varphi \in \Lang^\land(\PT[\rec\TV.\PT/\TV])$
such that $\I\varphi = \up d$.
We thus have $\up \I\fold(d) = \I{\form\fold \varphi}$.

\item[Case of]
\[
\dfrac{\begin{array}{l}
  \text{for all $i \in I$,~
  $d_i \in \Fin(\I{\PT})$
  ~and~
  $e_i \in \Fin(\I{\PTbis})$ \@;}
  \\
  \text{for all $J \sle I$,~
  $\bigvee_{j \in J} d_j$ defined in $\I{\PT}$
  ~$\imp$~
  $\bigvee_{j \in J} e_j$ defined in $\I{\PTbis}$}
  \end{array}}
  {\bigvee_{i \in I}(d_i \step e_i) \in \Fin(\I{\PT \arrow \PTbis})}
\]

\noindent
where $I$ is a finite set.

By induction hypothesis, for each $i \in I$
there are $\varphi_i \in \Lang^\land(\PTbis)$
and $\psi_i \in \Lang^\land(\PT)$
such that $\up e_i = \I{\varphi_i}$
and $\up d_i = \I{\psi_i}$.
Note that
\[
\begin{array}{l l l}
  \up(d_i \step e_i)
& =
& \left\{
  f \colon \I\PT \to \I\PTbis \mid
  \forall x \geq d_i,~
  f(x) \geq e_i
  \right\}
\\

& =
& \left\{
  f \colon \I\PT \to \I\PTbis \mid
  \forall x \in \I{\psi_i},~
  f(x) \in \I{\varphi_i}
  \right\}
\\

& =
& \I{\psi_i \realto \varphi_i}
\end{array}
\]

\noindent
The result then follows from the fact that
\[
\begin{array}{l l l}
  \up \left(
  \bigvee_{i \in I} d_i \step e_i
  \right)
& =
& \bigcap_{i \in I} \up(d_i \step e_i)
\end{array}
\]
\qedhere
\end{description}
\end{proof}

\subsection{Soundness of Deduction}
\label{sec:proof:sem:log:sound}

This Appendix~\ref{sec:proof:sem:log:sound}
is devoted to the proof of the Soundness Theorem~\ref{thm:sem:log:sound}.
We handle some rules separately.

\begin{lemma}
\label{lem:proof:sem:log:sound:mergequant}
The following deduction rules are sound.
\[
\begin{array}{c}

\ax{\exists M}
\dfrac{}
  {(\exists \itvar)(\exists \itvarbis)\varphi
  \,\thesis\,
  (\exists \itvar)\varphi[\itvar / \itvarbis]}

\qquad

\ax{\forall M}
\dfrac{}
  {(\forall \itvar)\psi[\itvar / \itvarbis]
  \,\thesis\,
  (\forall \itvar)(\forall \itvarbis)\psi}

\end{array}
\]
\end{lemma}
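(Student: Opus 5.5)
We show that in both cases the two sides have the same interpretation under every valuation $\val$; the key input is the monotonicity statement of Lemma~\ref{lem:sem:log:posneg}. For $\ax{\exists M}$, the formula $(\exists \itvar)(\exists \itvarbis)\varphi$ is positive, so its formation requires $\itvar \Pos (\exists\itvarbis)\varphi$ and $\itvarbis \Pos \varphi$. Unfolding Figure~\ref{fig:posneg}, the only rules concluding $\itvar \Pos (\exists\itvarbis)\varphi$ are the one with $\itvar \notin \FV((\exists\itvarbis)\varphi)$ and the one with premise $\itvar \Pos \varphi$. In either case $\itvar \Pos \varphi$ holds, trivially in the first case since then $\itvar \notin \FV(\varphi)$.

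Now let $x \in \I{(\exists \itvar)(\exists \itvarbis)\varphi}\val$. Then $x \in \I\varphi\val[n/\itvar, m/\itvarbis]$ for some $n,m \in \NN$. Put $k \deq \max(n,m)$. Lemma~\ref{lem:sem:log:posneg} applied to $\itvar \Pos \varphi$ (with $\itvarbis$ fixed to $m$) and then to $\itvarbis \Pos \varphi$ (with $\itvar$ fixed to $k$) gives
\[
  \I\varphi\val[n/\itvar, m/\itvarbis]
  \sle \I\varphi\val[k/\itvar, m/\itvarbis]
  \sle \I\varphi\val[k/\itvar, k/\itvarbis].
\]
By the usual substitution lemma, the last set is $\I{\varphi[\itvar/\itvarbis]}\val[k/\itvar]$, which is contained in $\I{(\exists\itvar)\varphi[\itvar/\itvarbis]}\val$. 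This proves soundness of $\ax{\exists M}$.

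The rule $\ax{\forall M}$ is dual. The formula $(\forall\itvar)(\forall\itvarbis)\psi$ is negative, and the same case analysis yields $\itvar \Neg \psi$ and $\itvarbis \Neg \psi$. Suppose $x \in \I{\psi[\itvar/\itvarbis]}\val[k/\itvar]$ for all $k$, and let $n,m \in \NN$ be given. Take $k \deq \max(n,m)$. Antimonotonicity gives
\[
  \I\psi\val[k/\itvar,k/\itvarbis]
  \sle \I\psi\val[n/\itvar,k/\itvarbis]
  \sle \I\psi\val[n/\itvar,m/\itvarbis],
\]
so $x$ lies in the right-hand side of $\ax{\forall M}$.

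One routine ingredient is the substitution lemma $\I{\varphi[\itvar/\itvarbis]}\val = \I\varphi\val[\val(\itvar)/\itvarbis]$, proved by induction on $\varphi$ with iteration terms handled directly. The only step needing care is the extraction of $\itvar \Pos \varphi$ from $\itvar \Pos (\exists\itvarbis)\varphi$. This relies on there being no other rule for $\Pos$ on $\exists$, which is exactly the design remark following Figure~\ref{fig:posneg}.
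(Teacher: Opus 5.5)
Your proof is correct and is essentially the paper's argument. Both use the (anti)monotonicity of Lemma~\ref{lem:sem:log:posneg} in each of $\itvar$ and $\itvarbis$ to pass from $(n,m)$ to $(\max(n,m),\max(n,m))$; your extraction of $\itvar \Pos \varphi$ from $\itvar \Pos (\exists\itvarbis)\varphi$ and your use of the substitution lemma only make explicit what the paper states in one line. Your opening claim that the two sides have the same interpretation is true, but your argument establishes only the one inclusion, which is all that soundness needs.
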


\begin{proof}
Note that $\itvar, \itvarbis \Pos \varphi$,
while $\itvar, \itvarbis \Neg \psi$.
By monotonicity (Lemma~\ref{lem:sem:log:posneg}),
if $x \in \I\varphi\val[m / \itvar, n / \itvarbis]$
for some $m, n \in \NN$,
then $x \in \I\varphi\val[\max(m,n) / \itvar, \max(m,n) / \itvarbis]$.
Hence the rule $\ax{\exists M}$.

As for the rule $\ax{\forall M}$,
by \emph{anti}monotonicity
(Lemma~\ref{lem:sem:log:posneg} again),
if
$x \notin \I\psi\val[m / \itvar, n / \itvarbis]$
for some $m, n \in \NN$,
then 
$x \notin \I\psi\val[\max(m,n) / \itvar, \max(m,n) / \itvarbis]$.
\end{proof}

\begin{lemma}
\label{lem:proof:sem:log:sound:fixquant}
The following deduction rules are sound,
where $\fingen \in \{\finmu, \finnu\}$.
\[
\begin{array}{c}

\ax{\fingen/\exists}
\dfrac{\itvar \notin \FV(t)}
  {(\fingen^t \FP)(\exists \itvar)\varphi
  \,\thesis\,
  (\exists \itvar)(\fingen^t \FP)\varphi}

\qquad

\ax{\forall/\fingen}
\dfrac{\itvarbis \notin \FV(t)}
  {(\forall \itvarbis)(\fingen^t \FP)\psi
  \,\thesis\,
  (\fingen^t \FP)(\forall \itvarbis)\psi}

\end{array}
\]
\end{lemma}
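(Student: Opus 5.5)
We must show soundness of the rules $\ax{\fingen/\exists}$ and $\ax{\forall/\fingen}$, i.e.\ for any valuation $\val$,
\[
  \I{(\fingen^t \FP)(\exists \itvar)\varphi}\val \sle \I{(\exists \itvar)(\fingen^t \FP)\varphi}\val
  \quad\text{and}\quad
  \I{(\forall \itvarbis)(\fingen^t \FP)\psi}\val \sle \I{(\fingen^t \FP)(\forall \itvarbis)\psi}\val .
\]
Since $\itvar \notin \FV(t)$, the number $m \deq \I{t}\val$ does not depend on the value given to $\itvar$. Both sides can therefore be written as iterates. The left-hand side of the first inclusion is $(F_\exists)^m(S_0)$, where $F_\exists(\SP) = \bigcup_{n}\I{\varphi}\val[n/\itvar,\SP/\FP]$ and $S_0$ is $\emptyset$ or $\I\PT$ according to $\fingen$. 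The right-hand side is $\bigcup_n (F_n)^m(S_0)$, where $F_n(\SP) = \I\varphi\val[n/\itvar,\SP/\FP]$.

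The key observation concerns polarity. Since $(\exists \itvar)\varphi$ is a formula, $\itvar \Pos \varphi$, and $\varphi$ is positive. By Lemma~\ref{lem:sem:log:posneg}, combined with Lemma~\ref{lem:proof:sem:log:form:mon}, the family $(F_n)_n$ is increasing pointwise. Moreover, by the argument-wise form of Proposition~\ref{prop:sem:log:scott}(\ref{item:sem:log:scott:pos}), the joint map $(n,\SP) \mapsto F_n(\SP)$ preserves directed unions in $\SP$.

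We then prove by induction on $k \le m$ that $(F_\exists)^k(S_0) \sle \bigcup_n (F_n)^k(S_0)$. The base case $k=0$ is trivial. For the step, apply $F_\exists$ to the induction hypothesis, which is legitimate since $F_\exists$ is monotone. Then use Scott-continuity of each $F_{n'}$ on the directed union $\bigcup_n (F_n)^k(S_0)$. This family is directed because it is increasing in $n$: the $F_n$ increase and are monotone, so the iterates increase too. The result is
\[
  \bigcup_{n'}\bigcup_n F_{n'}((F_n)^k(S_0)) \sle \bigcup_{p} (F_p)^{k+1}(S_0),
\]
obtained by taking $p=\max(n,n')$ and using monotonicity in both the index and the argument. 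The case $\fingen=\finnu$ has the same shape. Its base set is $\I\PT = \bigcup_n \I\PT$, and directedness is what makes the base case and the steps go through.

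The $\forall$ rule is the order-dual argument. Here $\itvarbis \Neg \psi$ with $\psi$ negative, so the family $G_n(\SP)=\I\psi\val[n/\itvarbis,\SP/\FP]$ is decreasing in $n$ and each $G_n$ preserves codirected intersections (Proposition~\ref{prop:sem:log:scott}(\ref{item:sem:log:scott:neg})). We show by induction on $k$ that $\bigcap_n (G_n)^k(S_0) \sle (G_\forall)^k(S_0)$, where $G_\forall(\SP)=\bigcap_n G_n(\SP)$. The step again uses cocontinuity on the codirected intersection of iterates, together with the $\max$ trick.

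The main obstacle I expect is bookkeeping rather than anything conceptual. One has to check that the continuity statements are available jointly in the iteration index and the fixpoint argument, which we handle via Lemma~\ref{lem:proof:sem:log:scott:argumentwise}. One also has to make sure directedness of the iterates holds, which is exactly where $\itvar\Pos\varphi$ (respectively $\itvarbis\Neg\psi$) is used. Without monotonicity in $n$ the union would not be directed and the interchange would fail.
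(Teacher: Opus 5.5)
Your proposal is correct and follows essentially the paper's argument. The paper proves a general Claim: for an increasing family of Scott-continuous maps $f_i$ on a complete lattice, $(\bigvee_i f_i)^n(a) = \bigvee_i f_i^n(a)$. It proves this by the same induction, using directedness of the iterates and the $\max$ trick, and then treats the $\forall$ rule by duality. The only differences are inessential: you prove just the inclusion that soundness needs, and you invoke continuity of each $F_{n'}$ rather than of their pointwise union.
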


\begin{proof}
Note that $\varphi$ is positive with $\itvar \Pos \varphi$,
while $\psi$ is negative with $\itvarbis \Neg \psi$.
Hence $\ax{\forall/\fingen}$ is dual to $\ax{\fingen/\exists}$.
We only detail $\ax{\fingen/\exists}$.
The proof relies on the following.

\begin{claim}
Let $A$ be a complete lattice and consider a family
of Scott-continuous functions $(f_i \colon A \to A \mid i \in \NN)$
such that $f_i \leq f_j$ whenever $i \leq j$.
Let $f \colon A \to A$ take $a$ to $\bigvee_{i \in \NN}f_i(a)$.
Then for each $n \in \NN$, we have $f^n(a) = \bigvee_{i \in \NN} f^n_i(a)$.
\end{claim}

\begin{claimproof}
%
The proof of the Claim is by induction on $n \in \NN$.
In the base case $n = 0$,
we have $f^0(a) = a$
while $\bigvee_{i\in \NN} f_i^0(a) = \bigvee_{i \in \NN} a = a$.

For the induction step, by induction hypothesis we have
$f^{n+1}(a) = f(\bigvee_{i \in \NN} f_i^n(a))$.
Since the $f_i$'s are in particular monotone,
we get $f^n_i \leq f^n_j$ whenever $i \leq j$.
It follows that the set $\{f^n_i(a) \mid i \in \NN \}$ is directed,
so that
$f^{n+1}(a) = \bigvee_{i \in \NN} f(f^n_i(a))$
by Scott-continuity of $f$
(Lemma~\ref{lem:proof:sem:log:scott:veewedge}%
(\ref{item:proof:sem:log:scott:veewedge:contvee})).
By definition of $f$, we thus get
$f^{n+1}(a) = \bigvee_{i \in \NN} \bigvee_{j \in \NN} f_j(f_i^n(a))$.
It remains to show that
\[
\begin{array}{l l l}
  \bigvee_{i \in \NN} f_i^{n+1}(a)
& =
& \bigvee_{i \in \NN}\bigvee_{j \in \NN} f_j(f_i^n(a))
\end{array}
\]

The inequality $\leq$ is obvious
($f_i^{n+1}(a) = f_i(f_i^{n}(a))$).
For the converse inequality, let $i, j \in \NN$ and let $m = \max\{i,j\}$.
We have $f_i, f_j \leq f_m$, and thus $f_j(f_i^n(a)) \leq f_m(f_m^n(a))$.
Hence $f_j(f_i^n(a)) \leq f_m^{n+1}(a)$, which gives the result.
\end{claimproof}

Returning to the proof of the lemma,
let $f_i \deq \I{\FP.\varphi}\val[i / \itvar]$
and let $n \deq \I{t}\val$ (recall that $\itvar \notin \FV(t$).
The $f_i$'s are Scott continuous by
Proposition~\ref{prop:sem:log:scott}(\ref{item:sem:log:scott:pos}).
Moreover, $i \leq j$ implies $f_i \leq f_j$
by Lemma~\ref{lem:sem:log:posneg}.
Hence the Claim gives the result since
$f = \I{\FP.(\exists \itvar)\varphi}\val$,
so that
\[
\begin{array}{r c l c l}
  \I{(\finmu^t \FP)(\exists \itvar)\varphi}\val
& =
& (\I{\FP. (\exists \itvar)\varphi}\val)^n(\I\False)
& =
& f^n(\I\False)
\\

  \I{(\finnu^t \FP)(\exists \itvar)\varphi}\val
& =
& (\I{\FP. (\exists \itvar)\varphi}\val)^n(\I\True)
& =
& f^n(\I\True)

\end{array}
\]

\noindent
while
\[
\begin{array}{r c l}
  \I{(\exists \itvar)(\finmu^t \FP)\varphi}\val
& =
& \bigcup_{i \in \NN} f_i^n(\I\False)
\\

  \I{(\exists \itvar)(\finnu^t \FP)\varphi}\val
& =
& \bigcup_{i \in \NN} f_i^n(\I\True)

\end{array}
\]
\end{proof}

\begin{lemma}
\label{lem:proof:sem:log:sound:wfcc}
The following deduction rules are sound.
\[
\begin{array}{c}

\ax{WF}
\dfrac{\itvar \notin \FV(\varphi)}
  {(\forall \itvar)\psi \realto \varphi
  \,\thesis\,
  (\exists \itvar)(\psi \realto \varphi)}

\qquad

\ax{CC}
\dfrac{\psi \in \Lang^\pm(\PTbis)
  \qquad
  \itvar \notin \FV(\psi)}
  {\psi \realto (\exists \itvar)\varphi
  \,\thesis\,
  (\exists \itvar)(\psi \realto \varphi)}

\end{array}
\]
\end{lemma}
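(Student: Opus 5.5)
For $\ax{WF}$ we work directly with the semantics of Figure~\ref{fig:sem:form}. Fix a valuation $\val$ and write $Q_n \deq \I\psi\val[n/\itvar]$ and $V \deq \I\varphi\val$; the latter does not depend on $\itvar$ because $\itvar \notin \FV(\varphi)$. The formula $(\forall\itvar)\psi \realto \varphi$ is positive, so $(\forall\itvar)\psi$ is negative. Hence $\psi$ is negative with $\itvar \Neg \psi$, and $\varphi$ is positive and closed (the realizability implication has no free fixpoint variables). Proposition~\ref{prop:sem:log:degroot} then gives $Q_n \in \K\Sat$ and $V$ Scott-open. Lemma~\ref{lem:sem:log:posneg} shows that $n \mapsto Q_n$ is antimonotone, so $\{Q_n \mid n \in \NN\}$ is codirected. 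Now take $f \in \bigcap_n Q_n \realto V$. The first part of Lemma~\ref{lem:sem:log:realto} gives some $n$ with $f \in Q_n \realto V = \I{\psi \realto \varphi}\val[n/\itvar]$, and therefore $f \in \I{(\exists\itvar)(\psi\realto\varphi)}\val$. The only point that needs care is the bookkeeping of polarities, which ensures that the hypotheses of Lemma~\ref{lem:sem:log:realto} are met.

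For $\ax{CC}$ we take $\psi \in \Lang^\pm$ closed, so $\I\psi$ is compact-open by Proposition~\ref{prop:sem:log:degroot}. The formula $\varphi$ is positive with $\itvar \Pos\varphi$, so $V_n \deq \I\varphi\val[n/\itvar]$ is an increasing chain of opens (Lemma~\ref{lem:sem:log:posneg} and Proposition~\ref{prop:sem:log:degroot}). Let $f \in \I\psi \realto \bigcup_n V_n$. Then $\I\psi \sle f^{-1}(\bigcup_n V_n) = \bigcup_n f^{-1}(V_n)$, which is a directed union of opens because $f$ is Scott-continuous. Since $\I\psi$ is compact, $\I\psi \sle f^{-1}(V_n)$ for some $n$, and so $f \in \I{\psi\realto\varphi}\val[n/\itvar]$.

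Alternatively, one can combine Lemmas~\ref{lem:ded:omega}, \ref{lem:ded:omegalorland} and Proposition~\ref{prop:sem:char:fin} to write $\I\psi$ as a finite union of sets $\up d_i$. Then $f(d_i) \in V_{n_i}$ for each $i$, and taking $n = \max_i n_i$ works by monotonicity. I expect the main obstacle here to be making sure that compactness of $\I\psi$ is really available. This is exactly where the side condition $\psi \in \Lang^\pm$ is used: a merely negative $\psi$ is compact but $f^{-1}$ of the union may still be fine. Compactness alone suffices, so the argument goes through in either case.
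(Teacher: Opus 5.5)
Your proof is correct and follows the paper's argument: $\ax{WF}$ goes through well-filteredness in the form of Lemma~\ref{lem:sem:log:realto}, and $\ax{CC}$ goes through compactness of $\I\psi\val$ against the directed cover $(f^{-1}(V_n))_n$. The paper differs in one detail: in $\ax{WF}$ it first replaces the chain $(Q_n)_n$ by the codirected family of compact-opens that contain some $Q_n$, a detour you rightly skip since Lemma~\ref{lem:sem:log:realto} already applies to codirected families in $\K\Sat$. Your closing remark contradicts itself as written (you first say the side condition is "exactly" what is used), but its conclusion is right: only compactness of $\I\psi\val$ is used, so the rule would stay sound for negative $\psi$, and the restriction to $\Lang^\pm$ only reflects how $\ax{CC}$ is used in Proposition~\ref{prop:ded:prenex}.
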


\begin{proof}
We begin with $\ax{WF}$.
Let $f \in \I{(\forall \itvar)\psi \realto \varphi}\val$.

Note that $(\forall \itvar)\psi$ is negative.
Hence $\itvar \Neg \psi$,
and Lemma~\ref{lem:sem:log:posneg}
yields that the function
$n \in \NN \mapsto \I\psi\val[n / \itvar]$ is antimonotone.
It then follows from Proposition~\ref{prop:sem:log:degroot}
that $\{\I\psi\val[n / \itvar] \mid n \in \NN\}$
is a codirected family of compact saturated sets.

Moreover, for each $n \in \NN$,
$\I\psi\val[n / \itvar]$ is the codirected
intersection of all compact opens $K$
such that $\I\psi\val[n / \itvar] \sle K$
(see Lemma~\ref{lem:proof:spectral:ksat} and its proof, \S\ref{sec:proof:spectral}).
Let $\mathcal{Q}$ be the set of all compact opens $K$
such that $\I\psi\val[n / \itvar] \sle K$ for some $n \in \NN$.
Hence $\mathcal{Q}$ is a codirected collection of compact opens
with $\bigcap \mathcal{Q} = \I{(\forall \itvar)\psi}\val$.

On the other hand, $\varphi$ is positive.
Hence $\I\varphi\val$ is open (Proposition~\ref{prop:sem:log:degroot} again),
and we get from Lemma~\ref{lem:sem:log:realto}
that $f \in K \realto \I\varphi\val$
for some $K \in \mathcal{Q}$.
But $K \sge \I\psi\val[n / \itvar]$ for some $n \in \NN$,
so that $f \in \I{\psi \realto \varphi}\val[n / \itvar]$
by contravariance of $\realto$ in its first argument
(recall that $\itvar$ is not free in $\varphi$).

We now turn to $\ax{CC}$.
Let $f \in \I{\psi \realto (\exists \itvar)\varphi}$.
The formula $\varphi$ is positive.
Hence $\itvar \Pos \varphi$
and by Lemma~\ref{lem:sem:log:posneg}
and Proposition~\ref{prop:sem:log:degroot}
we get that $\I{(\exists \itvar)\varphi}\val$
is the directed union of the opens 
$(\I{(\exists \itvar)\varphi}\val[n / \itvar] \mid n \in \NN)$.

On the other hand, again by Proposition~\ref{prop:sem:log:degroot},
we have that $\I\psi\val$ is compact (open).
Since $f$ is continuous, 
we get that that $\I\psi\val$ is a compact set included in the union
of the directed family of opens
$(f^{-1}(\I{(\exists \itvar)\varphi}\val[n / \itvar]) \mid n \in \NN)$.
Hence
$\I\psi \sle f^{-1}(\I{(\exists \itvar)\varphi}\val[n / \itvar])$
for some $n \in \NN$,
and
$f \in \I{\psi \realto \varphi}\val[n / \itvar]$
(recall that $\itvar$ is not free in $\varphi$).
\end{proof}

\begin{lemma}
\label{lem:proof:sem:log:sound:realtolor}
The following rule is sound.
\[
\ax{{\realto}/\lor}
\dfrac{\delta \in \Lang^\land(\PT)}
  {\delta \realto (\varphi_1 \lor \varphi_2)
  \,\thesis\,
  (\delta \realto \varphi_1) \lor (\delta \realto \varphi_2)}
\]
\end{lemma}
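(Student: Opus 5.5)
We must show: for $\delta \in \Lang^\land(\PT)$ and any valuation $\val$, every $f \in \I{\delta \realto (\varphi_1 \lor \varphi_2)}\val$ lies in $\I{\delta \realto \varphi_1}\val \cup \I{\delta \realto \varphi_2}\val$. Since $\delta$ is closed and belongs to $\Lang^\land \sle \Lang^\omega$, its interpretation $\I\delta$ does not depend on $\val$. We split into two cases according to whether $\I\delta$ is empty.

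\emph{The empty case.} If $\I\delta = \emptyset$, then $\I\delta \realto \SPbis$ is the whole function space $\I{\PT \arrow \PTbis}$ for every $\SPbis$. In particular $f \in \I{\delta \realto \varphi_1}\val$, and we are done. This matches the trivial sequent $\True \thesis (\False \realto \varphi)$.

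\emph{The non-empty case.} Otherwise, Proposition~\ref{prop:sem:char:fin} gives a finite $d \in \I\PT$ with $\I\delta = \up d$. Then $f \in \I{\delta \realto (\varphi_1 \lor \varphi_2)}\val$ means $f(x) \in \I{\varphi_1}\val \cup \I{\varphi_2}\val$ for all $x \geq d$. In particular $f(d) \in \I{\varphi_i}\val$ for some $i \in \{1,2\}$. The point is to transfer this single membership back to all of $\up d$: for each $x \geq d$, monotonicity of $f$ gives $f(d) \leq f(x)$, so it suffices to know that $\I{\varphi_i}\val$ is upward-closed.

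\emph{The key step.} We therefore need that every formula denotes a saturated set. When $\varphi_i$ is positive or negative, this follows from Proposition~\ref{prop:sem:log:degroot}, since Scott-open and compact-saturated sets are both saturated. The rule, however, is stated for arbitrary $\sign$ and possibly open formulae, with arbitrary valuations of fixpoint variables, so in general we argue by induction on formulae. For valuations sending fixpoint variables to saturated sets, $\I\varphi\val$ is saturated. Each case preserves saturation: unions, intersections, inverse images under the monotone maps $\I{\pi_i}$ and $\I\unfold$, direct images under the order-embeddings $\I{\inj_i}$ (using that $\bot$ is never in the image, so the image is still upward-closed), $\realto$ with a saturated target under the pointwise order, and finite iterations together with $\exists$ and $\forall$. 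In the statement of soundness we take valuations to be saturated-valued, and the rule only involves such valuations.

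\emph{The main obstacle.} The only delicate point is this saturation bookkeeping. Given it, the argument is short: saturation of $\I{\varphi_i}\val$ yields $f \in \up d \realto \I{\varphi_i}\val = \I{\delta \realto \varphi_i}\val$, which is the required conclusion.
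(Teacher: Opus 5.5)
Your proof is correct and follows the paper's argument: you case-split on $\I\delta = \emptyset$, use Proposition~\ref{prop:sem:char:fin} to get $\I\delta = \up d$, and conclude from the monotonicity of $f$ together with saturation of $\I{\varphi_i}\val$. The extra induction proving saturation, and the restriction to saturated-valued valuations, are unnecessary: $\varphi_1 \lor \varphi_2$ is the target of a $\realto$, so it lies in $\Lang^\sign(;\PT)$ with no free fixpoint variables, and since $\Lang^\pm = \Lang^+ \cap \Lang^-$ every formula is positive or negative, so Proposition~\ref{prop:sem:log:degroot} gives saturation directly, exactly as the paper uses it.
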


\begin{proof}
If $\I\delta = \emptyset$,
then $\I{\delta \realto \psi}\val = \I{\True}$ for all $\psi$,
hence the result.
Otherwise,
by Proposition~\ref{prop:sem:char:fin}
there is some finite $d \in \I\PT$ such that $\I\delta = \up d$.
Hence, 
if $f \in \I{\delta \realto (\varphi_1 \lor \varphi_2)}\val$,
then in particular $f(d) \in \I{\varphi_1}\val \cup \I{\varphi_2}\val$.
But since $f$ is monotone and
$\I{\varphi_i}\val$ is saturated (Proposition~\ref{prop:sem:log:degroot}),
we have
$f \in \I{\delta \realto \varphi_i}\val$
whenever $f(d) \in \I{\varphi_i}\val$.
\end{proof}

We now prove the Soundness Theorem~\ref{thm:sem:log:sound}.

\ThmLogSound*

\begin{proof}
We reason by mutual induction on $\C(\delta)$ (Figure~\ref{fig:consist})
and $\psi \thesis \varphi$ (Figure~\ref{fig:ded}).
We discuss each (sub)figure separately.
\begin{description}
\item[Propositional rules in Figure~\ref{fig:ded:prop}.]
These rules are all fairly standard, and we omit the proof.

\item[Modal rules in Figure~\ref{fig:ded:mod}.]
The rules $\ax{I}$ and the rule for $\realto$ were already
discussed in~\S\ref{sec:sem:log}.
The remaining rules all follow from Lemma~\ref{lem:proof:sem:log:mod}.

\item[Quantifier rules in Figure~\ref{fig:ded:quant}.]
The rules $\ax{\exists M}$ and $\ax{\forall M}$
are handled by Lemma~\ref{lem:proof:sem:log:sound:mergequant},
while the rules $(\fingen / \exists)$ and $(\forall / \fingen)$
are handled by Lemma~\ref{lem:proof:sem:log:sound:fixquant}.
The remaining rules are standard.

\item[Rules for bounded iteration in Figure~\ref{fig:ded:iter}.]
We discuss the rules
\[
\begin{array}{l l l}

\dfrac{\psi \,\thesis\, \varphi}
  {(\finmu^t \FP)\psi \,\thesis\, (\finmu^t \FP)\varphi}

\qquad

\dfrac{\psi \,\thesis\, \varphi}
  {(\finnu^t \FP)\psi \,\thesis\, (\finnu^t \FP)\varphi}
 
\end{array}
\]

\noindent
By induction hypothesis we have
$\I\psi\val \sle \I\varphi\val$.
Since moreover the functions
$\I{\FP.\psi}\val$ and $\I{\FP.\varphi}\val$
are monotone
(Proposition~\ref{prop:sem:log:scott}(\ref{item:sem:log:scott:mon})),
for each (appropriate) $\SP$ an induction on $n \in \NN$ yields
\[
\begin{array}{l l l}
  (\I{\FP.\psi}\val)^n(\SP)
& \sle
& (\I{\FP.\varphi}\val)^n(\SP)
\end{array}
\]

\noindent
Hence the result.

The other rules all directly follow from the interpretation
of bounded iteration in Figure~\ref{fig:sem:form}.

\item[Rules for the realizability implication in Figure~\ref{fig:ded:realto}.]
The rules $\ax{WF}$ and $\ax{CC}$ are handled by
Lemma~\ref{lem:proof:sem:log:sound:wfcc},
while the rule $\ax{{\realto}/\lor}$ is dealt-with
in Lemma~\ref{lem:proof:sem:log:sound:realtolor}.
All the remaining rules are standard, excepted $\ax{C}$.
\[
\ax{C}
\dfrac{\C(\delta)
  \qquad
  \delta \in \Lang^\land}
  {(\delta \realto \False) \,\thesis\, \False}
\]

By induction hypothesis we have that $\I\delta \neq \emptyset$.
Hence $\I{\delta \realto \False}\val$ is empty.

\item[Rules for the consistency predicate $\C$ in Figure~\ref{fig:consist}.]
We only detail a couple of rules, the other one being trivial.
\begin{description}
\item[Case of]
\[
\dfrac{\C(\varphi) 
  \qquad
  \C(\psi)}
  {\C(\form{\pi_1}\varphi \land \form{\pi_2}\psi)}
\]

\noindent
By induction hypothesis we have $\I\varphi \neq \emptyset$
and $\I\psi \neq \emptyset$.
Hence $\I{\form{\pi_1}\varphi \land \form{\pi_2}\psi} \neq \emptyset$
since
\[
\begin{array}{l l l}
  \I{\form{\pi_1}\varphi \land \form{\pi_2}\psi}
& =
& \left\{ z \mid \pi_1(z) \in \I\varphi \right\}
  \cap
  \left\{ z \mid \pi_2(z) \in \I\psi \right\}
\\

& =
& \left\{ (x,y) \mid \text{$x \in \I\varphi$ and $y \in \I\psi$} \right\}
\end{array}
\]

\item[Case of]
\[
\dfrac{\begin{array}{l}
  \text{$I$ finite and $\forall i \in I$,}~
  \C(\psi_i) 
  ~\text{and}~
  \C(\varphi_i) ;
  \\
  \text{$\forall J \sle I$,}~
  \bigwedge_{j \in J} \psi_j \thesis \False
  ~~\text{or}~~
  \C\left( \bigwedge_{j \in J} \varphi_j \right)
  \end{array}}
  {\C\left( \bigwedge_{i \in I}(\psi_i \realto \varphi_i) \right)}
\]

\noindent
Let $\PT,\PTbis$ such that
$\varphi_i \in \Lang^\land(\PT)$ and $\psi_i \in \Lang^\land(\PTbis)$
for all $i \in I$.

First, by induction hypothesis 
we have
$\I{\psi_i} \neq \emptyset$
and
$\I{\varphi_i} \neq \emptyset$
for all $i \in I$.
Hence, it follows from Proposition~\ref{prop:sem:char:fin}
that for each $i \in I$,
there are finite $d_i \in \I\PT$
and $e_i \in \I\PTbis$
such that
$\up d_i = \I{\varphi_i}$
and
$\up e_i = \I{\psi_i}$.
We thus have
$\I{\psi_i \realto \varphi_i} = \up(e_i \step d_i)$
for each $i \in I$,
so that
\[
\begin{array}{l l l}
  \I{\bigwedge_{i \in I} \left( \psi_i \realto \varphi_i \right)}
& =
& \bigcap_{i \in I} \up(e_i \step d_i)
\end{array}
\]

Assume 
$\I{\bigwedge_{i \in I} \left( \psi_i \realto \varphi_i \right)} = \emptyset$.
As recalled in \S\ref{sec:proof:sem:pure:finite}
(see also \cite[Theorem 1.4.12]{ac98book}),
there is some $J \sle I$
such that
\[
\begin{array}{l l l !{\quad\text{and}\quad} l l l}
  \bigcap_{i \in I} \up e_i
& \neq
& \emptyset

& \bigcap_{i \in I} \up d_i
& =
& \emptyset
\end{array}
\]

But the induction hypothesis
yields either $\bigcap_{i \in I} \up e_i = \emptyset$
or $\bigcap_{i \in I} \up d_i \neq \emptyset$,
a contradiction.
\qedhere
\end{description}
\end{description}
\end{proof}

\subsection{Completeness for the Neutral Fragment}
\label{sec:proof:sem:log:compl}

This Appendix~\ref{sec:proof:sem:log:compl}
is devoted to the proof of Theorem~\ref{thm:sem:log:compl}.

\ThmLogCompl*

The proof uses the following preliminary lemma.
The \emph{size} of a formula $\delta$ is denoted $|\delta|$.

\begin{lemma}
\label{lem:proof:sem:log:compl:nf}
Let $\delta \in \Lang^\land$ such that neither $\delta \thesis \False$
nor $\True \thesis \delta$ is derivable.
\begin{enumerate}[(1)]
\item
\label{item:proof:sem:log:compl:nf:unit}
If $\delta \in \Lang^\land(\Unit)$,
then $\delta \thesisiff \form{\pair{}}$.

\item
\label{item:proof:sem:log:compl:nf:rec}
If $\delta \in \Lang^\land(\rec \TV.\PT)$,
then
$\delta \thesisiff \form{\fold}\psi$
with $|\form{\fold}\psi| \leq |\delta|$.

\item
\label{item:proof:sem:log:compl:nf:plus}
If $\delta \in \Lang^\land(\PT_1 + \PT_2)$,
then there is an $i \in \{1,2\}$
such that
$\delta \thesisiff \form{\inj_i}\psi$
with $|\form{\inj_i}\psi| \leq |\delta|$.

\item
\label{item:proof:sem:log:compl:nf:times}
If $\delta \in \Lang^\land(\PT_1 \times \PT_2)$,
then
$\delta \thesisiff \form{\pi_1}\psi_1 \land \form{\pi_2}\psi_2$
with
\(
  |\form{\pi_1}\psi_1 \land \form{\pi_2}\psi_2| \leq |\delta|
\).

\item
\label{item:proof:sem:log:compl:nf:arrow}
If $\delta \in \Lang^\land(\PTbis \arrow \PT)$,
then there is a finite non-empty set $J$ such that
\[
\begin{array}{l l l}
  \delta
& \thesisiff
& \bigwedge_{j \in J} \left(\varphi_j \realto \psi_j\right)
\end{array}
\]

\noindent
with
\(
  |\bigwedge_{j \in J} \left(\varphi_j \realto \psi_j\right)|
  \leq
  |\delta|
\).
\end{enumerate}
\end{lemma}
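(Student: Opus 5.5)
We argue by induction on the size $|\delta|$, and within a given size by cases on the outermost connective of $\delta \in \Lang^\land$. The connectives in play are $\True$, $\False$, $\land$, the modalities $\form\modgen$, $\form{\pair{}}$ and $\realto$. The hypotheses exclude $\delta = \True$ (since $\True \thesis \True$) and $\delta = \False$ (since $\False \thesis \False$). Hence $\delta$ is either headed by the constructor matching its type, or it is a conjunction $\delta_1 \land \delta_2$. In the first case the statement holds trivially, with $\psi$ read off directly. At type $\Unit$ the only constructor is $\form{\pair{}}$. At type $\PT_1 \times \PT_2$ a single $\form{\pi_1}\psi_1$ is completed to $\form{\pi_1}\psi_1 \land \form{\pi_2}\True$ using $\True \thesis \form{\pi_2}\True$. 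Here the size bound needs care: we either measure size so that $\form{\pi_2}\True$ is free, or we allow $\psi_2$ to be omitted when it is $\True$. I would adopt the convention that a conjunct $\form{\pi_i}\True$ does not count.

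The substantive case is $\delta = \delta_1 \land \delta_2$. First, if $\True \thesis \delta_1$ is derivable, then $\delta \thesisiff \delta_2$ with $|\delta_2| < |\delta|$. Moreover $\delta_2$ satisfies the hypotheses: $\delta_2 \thesis \False$ would give $\delta \thesis \False$, and $\True \thesis \delta_2$ would give $\True \thesis \delta$. So the induction hypothesis applies to $\delta_2$, and symmetrically when $\True \thesis \delta_2$. Next, neither $\delta_i \thesis \False$ can hold, since otherwise $\delta \thesis \False$. So both conjuncts satisfy the hypotheses and are strictly smaller, and the induction hypothesis gives normal forms for each. We then combine them type by type:
\begin{itemize}
\item at $\Unit$: $\form{\pair{}} \land \form{\pair{}} \thesisiff \form{\pair{}}$;
\item at $\rec\TV.\PT$: $\form\fold\psi_1 \land \form\fold\psi_2 \thesisiff \form\fold(\psi_1 \land \psi_2)$ by Figure~\ref{fig:proof:log:der:mod}, and the size does not grow;
\item at $\PT_1 \times \PT_2$: regroup the four $\pi$-conjuncts as $\form{\pi_1}(\psi_1 \land \psi_1') \land \form{\pi_2}(\psi_2 \land \psi_2')$;
\item at $\PTbis \arrow \PT$: concatenate the two index sets $J$.
\end{itemize}

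The main obstacle is the sum case. Suppose the two normal forms are $\form{\inj_i}\psi$ and $\form{\inj_j}\psi'$. If $i \neq j$, rule $\ax{I}$ gives $\delta \thesis \False$, contradicting the hypothesis. So $i = j$, and $\form{\inj_i}\psi \land \form{\inj_i}\psi' \thesisiff \form{\inj_i}(\psi \land \psi')$. This is exactly where the hypothesis $\delta \not\thesis \False$ is used essentially.

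The size bookkeeping must also be checked throughout: each rewriting merges modalities and so never increases size, apart from the $\form{\pi_2}\True$ padding discussed above. All the equivalences needed are those of Figures~\ref{fig:proof:log:der:mod} and~\ref{fig:proof:log:der:realto}, together with the propositional rules.
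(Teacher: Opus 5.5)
Your proposal is correct and follows essentially the same route as the paper. Both argue by induction on $\delta$, treat the constructor-headed cases as immediate, and in the conjunction case either discard a conjunct $\delta_i$ with $\True \thesis \delta_i$ or merge the two normal forms type by type, using rule $\ax{I}$ and $\delta \not\thesis \False$ to force the injections to match. Your remark about padding $\form{\pi_1}\psi_1$ with $\form{\pi_2}\True$ is a fair catch that the paper dismisses as ``trivial'': taken literally the stated size bound fails there, while what the later proofs actually use is only $|\psi_1|,|\psi_2| < |\delta|$, and the paper itself later works with the unpadded form $\form{\pi_i}\varphi'$.
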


\begin{proof}
The proof is a mutual induction on the construction of $\delta \in \Lang^\land$
with the (appropriate) rules in Figure~\ref{fig:form}.
The cases of the rules in Figure~\ref{fig:form:mod}
and of the rule
\[
\dfrac{\varphi_1 \in \Lang^\land(\PT_1)
  \qquad
  \varphi_2 \in \Lang^\land(\PT_2)}
  {\varphi_1 \realto \varphi_2 \in \Lang^\land(\PT_1 \arrow \PT_2)}
\]

\noindent
are trivial.
The only possible rule of
Figure~\ref{fig:form:prop}
is
\[
\dfrac{\varphi_1 \in \Lang^\land
  \qquad
  \varphi_2 \in \Lang^\land}
  {\varphi_1 \land \varphi_2 \in \Lang^\land}
\]

\noindent
We handle the different items separately.
In each case, if $\True \thesis \varphi_i$,
then $\delta \thesisiff \varphi_{3-i}$,
and the result follows from the induction hypothesis.
Also, we can assume $\varphi_i \not\thesis \False$
since $\delta \not\thesis \False$.
\begin{description}
\item[Item~(\ref{item:proof:sem:log:compl:nf:unit}).]
By induction hypothesis
we have $\varphi_i \thesisiff \form{\pair{}}$,
and the result follows.

\item[Item~(\ref{item:proof:sem:log:compl:nf:rec}).]
By induction hypothesis,
there are $\psi_i$ smaller than $\varphi_i$
such that
$\varphi_i \thesisiff \form{\fold}\psi_i$.
We have
$\delta \thesisiff \form{\fold}(\psi_1 \land \psi_2)$.
Moreover, since $|\varphi_i| \geq |\psi_i|+1$,
we have 
\[
\begin{array}{*{9}{l}}
  |\delta|
& \geq
& 1 + |\varphi_1| + |\varphi_2|

& \geq
& 3 + |\psi_1| + |\psi_2|

& \geq
& 2 + |\psi_1 \land \psi_2|

& \geq
& 1 + |\form{\fold}(\psi_1 \land \psi_2)|
\end{array}
\]

\item[Item~(\ref{item:proof:sem:log:compl:nf:plus}).]
By induction hypothesis,
there are $j_i \in \{1,2\}$
and $\psi_i$ smaller than $\varphi_i$
such that
$\varphi_i \thesisiff \form{\inj_{j_i}}\psi_i$.
Since $\delta \not\thesis \False$
we must have $j_1 = j_2$,
and
we can conclude as for item~(\ref{item:proof:sem:log:compl:nf:rec}).

\item[Item~(\ref{item:proof:sem:log:compl:nf:times}).]
By induction hypothesis,
we have
$\varphi_i \thesisiff \form{\pi_1}\psi_{i,1} \land \form{\pi_2}\psi_{i,2}$
with
$|\form{\pi_1}\psi_{i,1} \land \form{\pi_2}\psi_{i,2}| \leq |\varphi_i|$.
Note that
\[
\begin{array}{l l l}
  |\form{\pi_1}(\psi_{1,1} \land \psi_{2,1})
  \land \form{\pi_2}(\psi_{1,2} \land \psi_{2,2})|
& \leq
& 
  |(\form{\pi_1}\psi_{1,1} \land \form{\pi_2}\psi_{1,2})
   \land
   (\form{\pi_1}\psi_{2,1} \land \form{\pi_2}\psi_{2,2})|
\end{array}
\]

Hence we are done
with $\delta \thesisiff \form{\pi_1}\psi_1 \land \form{\pi_2}\psi_2$
where $\psi_i$ is $\psi_{1,i} \land \psi_{2,i}$.

\item[Item~(\ref{item:proof:sem:log:compl:nf:arrow}).]
By induction hypothesis.
\qedhere
\end{description}
\end{proof}

Theorem~\ref{thm:sem:log:compl}(\ref{item:sem:log:compl:consist})
is a consequence of the following.

\begin{lemma}
\label{lem:proof:sem:log:compl:c-false}
Let $\delta \in \Lang^\land(\PT)$.
\begin{enumerate}[(1)]
\item
\label{item:proof:sem:log:compl:c-false:c}
If $\I\delta \neq \emptyset$,
then $\C(\delta)$ is derivable.

\item
\label{item:proof:sem:log:compl:c-false:ded}
If $\I\delta = \emptyset$,
then $\delta \thesis \False$ is derivable.
\end{enumerate}
\end{lemma}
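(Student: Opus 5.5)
We prove items~(\ref{item:proof:sem:log:compl:c-false:c}) and~(\ref{item:proof:sem:log:compl:c-false:ded}) simultaneously by induction on the size $|\delta|$, using Lemma~\ref{lem:proof:sem:log:compl:nf} to reduce $\delta$ to a normal form of no greater size and then reasoning by cases on the type $\PT$. First we dispose of the degenerate cases. If $\delta \thesis \False$ is derivable, then $\I\delta = \emptyset$ by soundness (Theorem~\ref{thm:sem:log:sound}), so only item~(\ref{item:proof:sem:log:compl:c-false:ded}) is relevant and it holds. If $\True \thesis \delta$ is derivable, then $\I\delta \neq \emptyset$ by soundness, and $\C(\delta)$ follows from $\C(\True)$ and the weakening rule of Figure~\ref{fig:consist}. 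Otherwise Lemma~\ref{lem:proof:sem:log:compl:nf} applies. Since $\C$ is closed under $\thesis$ into $\Lang^\land$, and $\thesis \False$ is closed under cut, it suffices to treat the normal form.

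The first-order cases go as follows.
\begin{itemize}
\item For $\Unit$, the normal form is $\form{\pair{}}$, which is nonempty and has $\C(\form{\pair{}})$.
\item For $\form{\fold}\psi$, $\form{\inj_i}\psi$ and $\form{\pi_1}\psi_1 \land \form{\pi_2}\psi_2$, the semantics shows that $\I\delta$ is nonempty iff $\I\psi$ (resp.\ both $\I{\psi_i}$) are nonempty. We then apply the induction hypothesis to the strictly smaller $\psi$'s.
\item If some $\psi$ has empty semantics, the induction hypothesis gives $\psi \thesis \False$, hence $\form\modgen\psi \thesis \form\modgen\False \thesis \False$. For the product, we conclude with the $\land$-left rules.
\item If all the $\psi$'s are nonempty, the $\C$ rules for $\fold$, $\inj_i$ and pairs give $\C(\delta)$.
\end{itemize}
One subtlety is that size may not strictly decrease when passing to the normal form. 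We therefore phrase the induction on size with the normal form as an intermediate step: the normal form has size $\leq |\delta|$, and its immediate subformulae are strictly smaller.

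The arrow case $\delta \thesisiff \bigwedge_{j\in J}(\varphi_j \realto \psi_j)$ is the main obstacle. Let $J_0 = \{j \mid \I{\varphi_j} \neq \emptyset\}$. For $j \notin J_0$, the induction hypothesis gives $\varphi_j \thesis \False$, so $\True \thesis \False \realto \psi_j \thesis \varphi_j \realto \psi_j$ by contravariance, and we may drop these conjuncts. For $j \in J_0$ we split on $\I{\psi_j}$.
\begin{itemize}
\item If some $\I{\psi_j} = \emptyset$, then $\I\delta = \emptyset$. By induction $\psi_j \thesis \False$, so $\varphi_j \realto \psi_j \thesis \varphi_j \realto \False \thesis \False$ by rule $\ax{C}$, using $\C(\varphi_j)$ from the induction hypothesis.
\item If all the $\varphi_j$'s and $\psi_j$'s in $J_0$ are nonempty, they are finite up-sets $\up d_j$ and $\up e_j$ (Proposition~\ref{prop:sem:char:fin}). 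Then $\I\delta = \bigcap_j \up(d_j \step e_j)$, which is nonempty iff the step-function compatibility condition holds (\cite[Theorem 1.4.12]{ac98book}): for every $K \sle J_0$, if $\bigcap_{K}\up d_k \neq \emptyset$ then $\bigcap_K \up e_k \neq \emptyset$. The formulae $\bigwedge_K \varphi_k$ and $\bigwedge_K \psi_k$ are strictly smaller than $\delta$. So if $\I\delta \neq \emptyset$, the induction hypothesis yields, for each $K$, either $\bigwedge_K\varphi_k \thesis \False$ or $\C(\bigwedge_K \psi_k)$, and the last rule of Figure~\ref{fig:consist} gives $\C(\delta)$.
\item If $\I\delta = \emptyset$, pick a witness $K$ with $\I{\bigwedge_K \varphi_k} \neq \emptyset$ and $\I{\bigwedge_K\psi_k} = \emptyset$. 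By induction we get $\C(\bigwedge_K\varphi_k)$ and $\bigwedge_K \psi_k \thesis \False$. We then derive
\[\delta \thesis \bigwedge_K(\varphi_k\realto\psi_k) \thesis \textstyle\bigwedge_K\varphi_k \realto \bigwedge_K \psi_k \thesis \bigwedge_K\varphi_k \realto \False \thesis \False,\]
using contravariance and $\ax{{\realto}/\land}$, and finally $\ax{C}$.
\end{itemize}
The delicate points are the size bookkeeping, which needs $|\bigwedge_K\varphi_k| < |\delta|$, and the case $J_0 = \emptyset$, where $\delta \thesisiff \True$ and which is excluded by the degenerate cases treated at the start.
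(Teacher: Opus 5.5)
Your proposal is correct and follows essentially the same route as the paper. It uses a simultaneous induction on $|\delta|$ through the normal forms of Lemma~\ref{lem:proof:sem:log:compl:nf}, and handles the arrow case the same way: discard conjuncts with empty antecedent, apply rule $\ax{C}$ when some consequent is empty, and otherwise use the step-function compatibility criterion to conclude either by the last rule of Figure~\ref{fig:consist} or by rule $\ax{C}$ on a witnessing subfamily $K$. The only thing left implicit is that this last consistency rule also requires $\C(\varphi_j)$ and $\C(\psi_j)$ for each remaining $j$; your induction hypothesis supplies these because all of them have non-empty semantics.
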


\begin{proof}
Both statements are proved by a simultaneous induction on the 
size of $\delta \in \Lang^\land$,
using Lemma~\ref{lem:proof:sem:log:compl:nf}.

Note that we may always assume $\True \not\thesis \delta$
and $\delta \not\thesis \False$.
Indeed, if $\True \thesis \delta$ then
item~(\ref{item:proof:sem:log:compl:c-false:c}) is trivial,
while 
item~(\ref{item:proof:sem:log:compl:c-false:ded})
is impossible by the Soundness Theorem~\ref{thm:sem:log:sound}.
Similarly, if $\delta \thesis \False$,
then
item~(\ref{item:proof:sem:log:compl:c-false:c})
is impossible by the Soundness Theorem~\ref{thm:sem:log:sound},
while 
item~(\ref{item:proof:sem:log:compl:c-false:ded})
is trivial.

We reason by cases on $\PT$.
\begin{description}
\item[Case of $\Unit$.]
By Lemma~\ref{lem:proof:sem:log:compl:nf},
we have $\delta \thesisiff \form{\pair{}}$.
Hence $\I\delta = \{\top\}$,
only item~(\ref{item:proof:sem:log:compl:c-false:c})
is possible
and we get $\C(\delta)$ since $\C(\form{\pair{}})$ and $\form{\pair{}} \thesis \delta$.

\item[Case of $\rec\TV.\PT$.]
By Lemma~\ref{lem:proof:sem:log:compl:nf},
we have $\delta \thesisiff \form\fold\psi$
with $|\psi| < |\delta|$.

We first consider item~(\ref{item:proof:sem:log:compl:c-false:c}).
If $\I\delta \neq \emptyset$,
then $\I{\psi} \neq \emptyset$.
Since
$\psi$ is smaller than $\delta$,
the induction hypothesis
yields $\C(\psi)$.
Hence $\C(\form\fold \psi)$
and we get $\C(\delta)$ since $\form\fold \psi \thesis \delta$.

We now turn to item~(\ref{item:proof:sem:log:compl:c-false:ded}).
If $\I\delta = \emptyset$ then
$\I{\psi} = \emptyset$.
Since $\psi$ is smaller than $\delta$,
the induction hypothesis yields
$\psi \thesis_{\PT[\rec\TV.\PT/\TV]} \False$,
so that
$\form\fold\psi \thesis_{\rec\TV.\PT} \form\fold \False$.
Then we are done since $\form\fold \False \thesis \False$.

\item[Case of $\PT_1 + \PT_2$.]
Similar.

\item[Case of $\PT_1 \times \PT_2$.]
By Lemma~\ref{lem:proof:sem:log:compl:nf},
we have
$\delta \thesisiff \form{\pi_1}\psi_1 \land \form{\pi_2}\psi_2$
with
\(
  |\form{\pi_1}\psi_1 \land \form{\pi_2}\psi_2| \leq |\delta|
\).

We first consider item~(\ref{item:proof:sem:log:compl:c-false:c}).
If $\I\delta \neq \emptyset$,
then $\I{\psi_1} \neq \emptyset$
and $\I{\psi_2} \neq \emptyset$.
Since $|\psi_1|, |\psi_2| < |\delta|$,
the induction hypothesis yields
$\C(\psi_1)$ and $\C(\psi_2)$.
Hence
$\C(\form{\pi_1}\psi_1 \land \form{\pi_2}\psi_2)$
and we are done.

We now turn to item~(\ref{item:proof:sem:log:compl:c-false:ded}).
If $\I\delta = \emptyset$ then we must have (say)
$\I{\psi_i} = \emptyset$.
Hence $\psi_i \thesis \False$ by induction hypothesis.
It follows that
$\form{\pi_i}\psi_i \thesis \form{\pi_i}\False$,
and we are done since
$\form{\pi_i}\False \thesis \False$.

\item[Case of $\PTbis \arrow \PT$.]
This is the most important and most difficult case.

By Lemma~\ref{lem:proof:sem:log:compl:nf},
there is a finite set $J$ such that
\[
\begin{array}{l l l}
  \delta
& \thesisiff
& \bigwedge_{j \in J} \left(\varphi_j \realto \psi_j\right)
\end{array}
\]

\noindent
with
\(
  |\bigwedge_{j \in J} \left(\varphi_j \realto \psi_j\right)|
  \leq
  |\delta|
\).

Assume first that for some $j \in J$,
we have $\I{\psi_j} = \emptyset$
with $\I{\varphi_j} \neq \emptyset$.
Then $\I{\varphi_j \realto \psi_j} = \emptyset$
and $\I\delta = \emptyset$.
Hence we must be in the case of item~(\ref{item:proof:sem:log:compl:c-false:ded}).
Moreover, by induction hypothesis we have $\C(\varphi_j)$
and $\psi_j \thesis \False$,
so that we can derive $\delta \thesis \False$ using the rule $\ax{C}$
in Figure~\ref{fig:ded:realto}.

Assume now that we have $\I{\psi_j} \neq \emptyset$
for all $j \in J$ such that $\I{\varphi_j} \neq \emptyset$.

Given $j \in J$ such that $\I{\varphi_j} = \emptyset$,
by induction hypothesis we have $\varphi_j \thesis \False$,
and since $\True \thesis \left( \False \realto \psi_j \right)$,
we get $\True \thesis \left(\varphi_j \realto \psi_j \right)$.
Hence
$\bigwedge_{i \neq j}(\varphi_i \realto \psi_i) \thesisiff \delta$.

We can therefore reduce the case of 
$\bigwedge_{j \in J} (\varphi_j \realto \psi_j)$
where 
$\I{\psi_j} \neq \emptyset$ and $\I{\varphi_j} \neq \emptyset$
for all $j \in J$.
In, particular, the induction hypothesis yields $\C(\psi_j)$
and $\C(\varphi_j)$ for all $j \in J$.

Regarding item~(\ref{item:proof:sem:log:compl:c-false:c}),
if $\I\delta \neq \emptyset$,
then for all $I \sle J$ we have either
$\BigI{\bigwedge_{i \in I} \varphi_i} = \emptyset$
or
$\BigI{\bigwedge_{i \in I} \psi_i} \neq \emptyset$,
so that either
$\bigwedge_{i \in I} \varphi_i \thesis \False$
or
$\C(\bigwedge_{i \in I} \psi_i)$
by induction hypothesis.
We can thus obtain $\C(\delta)$ by using the last rule in
Figure~\ref{fig:consist}.

Concerning item~(\ref{item:proof:sem:log:compl:c-false:ded}),
if $\I\delta = \emptyset$,
then there is some $I \sle J$
such that
$\BigI{\bigwedge_{i \in I} \varphi_i} \neq \emptyset$
while
$\BigI{\bigwedge_{i \in I} \psi_i} = \emptyset$.
Hence
$\C(\bigwedge_{i \in I} \varphi_i)$
and
$\bigwedge_{i \in I} \psi_i \thesis \False$
by induction hypothesis.
We have
\[
\begin{array}{l l l}
  \delta
& \thesis
& \bigwedge_{j \in J}
  \left(
  \left( \bigwedge_{i \in I} \varphi_i \right)
  \realto
  \psi_j
  \right)
\end{array}
\]

\noindent
and thus
\[
\begin{array}{l l l}
  \delta
& \thesis
& \left( \bigwedge_{i \in I} \varphi_i \right)
  \realto
  \bigwedge_{i \in I} \psi_i
\end{array}
\]

\noindent
and we obtain $\delta \thesis \False$ using the rule $\ax{C}$
in Figure~\ref{fig:ded:realto}.
\qedhere
\end{description}
\end{proof}

We now turn to Theorem~\ref{thm:sem:log:compl}(\ref{item:sem:log:compl:ded}).
We shall use the following preliminary result.

\begin{lemma}
\label{lem:proof:sem:log:compl:true}
Given $\delta \in \Lang^\land(\PT)$,
if $\I\delta = \I\True$,
then $\True \thesis \delta$ is derivable.
\end{lemma}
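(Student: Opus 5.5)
We argue by induction on the size $|\delta|$, mirroring the proof of Lemma~\ref{lem:proof:sem:log:compl:c-false}. First dispose of trivial cases. If $\True \thesis \delta$ is already derivable there is nothing to do. If $\delta \thesis \False$ were derivable, Soundness (Theorem~\ref{thm:sem:log:sound}) would give $\I\delta = \emptyset \neq \I\True$, which is impossible since every $\I\PT$ is non-empty. So we may assume neither holds, and Lemma~\ref{lem:proof:sem:log:compl:nf} applies. It gives a normal form of size $\leq |\delta|$, and we proceed by cases on $\PT$.

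For $\Unit$ we get $\delta \thesisiff \form{\pair{}}$, but $\I{\form{\pair{}}} = \{\top\} \neq \I\True$, so this case cannot occur. For a sum $\PT_1 + \PT_2$ we get $\delta \thesisiff \form{\inj_i}\psi$, whose interpretation misses $\bot$, so this case cannot occur either.

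For $\rec\TV.\PT$ we get $\delta \thesisiff \form\fold\psi$ with $\psi$ smaller. Since $\I\fold$ is a bijection, $\I\psi = \I\True$, and the induction hypothesis gives $\True \thesis \psi$. Hence $\form\fold\True \thesis \form\fold\psi$, and composing with the rule $\True \thesis \form\fold\True$ yields $\True \thesis \delta$.

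For products we get $\delta \thesisiff \form{\pi_1}\psi_1 \land \form{\pi_2}\psi_2$. Because $\I{\PT_1\times\PT_2}$ is the full Cartesian product, each $\I{\psi_i}$ is total. Induction together with $\True \thesis \form{\pi_i}\True$ then gives both conjuncts.

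For arrows we get $\delta \thesisiff \bigwedge_{j}(\varphi_j \realto \psi_j)$ with smaller components. This is the case I expect to be the main obstacle, since totality of the conjunction must be transferred to each conjunct. Totality of the conjunction gives totality of each conjunct $\varphi_j \realto \psi_j$. Take the constant function $\bot$, which lies in $\I{\varphi_j \realto \psi_j}$. Then for all $x \in \I{\varphi_j}$ we have $\bot \in \I{\psi_j}$. Two subcases follow.
\begin{itemize}
\item If $\I{\varphi_j} = \emptyset$, Lemma~\ref{lem:proof:sem:log:compl:c-false}(\ref{item:proof:sem:log:compl:c-false:ded}) gives $\varphi_j \thesis \False$. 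Then $\True \thesis \False \realto \psi_j \thesis \varphi_j \realto \psi_j$ by the axiom and the contravariance rule for $\realto$.
\item Otherwise $\bot \in \I{\psi_j}$, and since $\I{\psi_j}$ is saturated (Proposition~\ref{prop:sem:log:degroot}), $\I{\psi_j} = \I\True$. Induction gives $\True \thesis \psi_j$, hence $\True \thesis \varphi_j \realto \True \thesis \varphi_j \realto \psi_j$, using the rule $\True \thesis (\psi \realto \True)$ and covariance.
\end{itemize}
Conjoining over the finitely many $j$ concludes. The size bounds of Lemma~\ref{lem:proof:sem:log:compl:nf} ensure the components are strictly smaller than $\delta$, so the induction is well-founded.
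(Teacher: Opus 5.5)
Your proof is correct and follows the paper's own argument: induction on $|\delta|$ through the normal-form Lemma~\ref{lem:proof:sem:log:compl:nf}, where in the arrow case the constant-$\bot$ function forces $\bot \in \I{\psi_j}$ whenever $\I{\varphi_j} \neq \emptyset$. The only difference is presentational. The paper argues contrapositively, showing $\I\delta \neq \I\True$ when $\delta \not\thesis \False$ and $\True \not\thesis \delta$, and for that it invokes the rule $\ax{C}$ to rule out $\psi_j \thesis \False$. Your direct version builds $\True \thesis \delta$ conjunct by conjunct and does not need that step.
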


\begin{proof}
By the Soundness Theorem~\ref{thm:sem:log:sound},
if $\delta \thesis \False$ then $\I\delta = \emptyset$.
We can thus assume $\delta \not\thesis \False$.
As consequence, we are done if we show that $\I\delta \neq \I\True$
whenever $\delta \not\thesis \False$ and $\True \not\thesis \delta$.
The proof is by induction in the size of $\delta$,
using Lemma~\ref{lem:proof:sem:log:compl:nf}.
We reason by cases on $\PT$.
\begin{description}
\item[Case of $\Unit$.]
By Lemma~\ref{lem:proof:sem:log:compl:nf},
we have $\delta \thesisiff \form{\pair{}}$.
Hence we are done since $\I{\form{\pair{}}} = \{\top \}$
while $\I\Unit = \{\bot,\top\}$.

\item[Case of $\PT_1 + \PT_2$.]
By Lemma~\ref{lem:proof:sem:log:compl:nf},
we have $\delta \thesisiff \form{\inj_i} \psi$
for some $i \in \{1,2\}$.
Hence $\I\delta \neq \I\True$.

\item[Case of $\rec\TV.\PT$.]
By Lemma~\ref{lem:proof:sem:log:compl:nf},
we have $\delta \thesisiff \form\fold\psi$
with $|\psi| < |\delta|$.

If $\psi \thesis \False$
then $\form{\fold} \psi \thesis \form{\fold}\False$,
so that $\delta \thesis \False$
since $\form{\fold}\False \thesis \False$.
Similarly, 
if $\True \thesis \psi$
then $\True \thesis \delta$
since $\True \thesis \form{\fold}\True$.

Hence, we have $\True \not\thesis \psi$
and $\psi \not\thesis \False$.
The induction hypothesis yields $\I\psi \neq \I\True$,
from which we get $\I\delta \neq \I\True$
since $\I\unfold$ is an iso.

\item[Case of $\PT_1 \times \PT_2$.]
By Lemma~\ref{lem:proof:sem:log:compl:nf},
we have
$\delta \thesisiff \form{\pi_1}\psi_1 \land \form{\pi_2}\psi_2$
with
\(
  |\form{\pi_1}\psi_1 \land \form{\pi_2}\psi_2| \leq |\delta|
\).

Assume toward a contradiction that $\psi_i \thesis \False$.
Then $\delta \thesis \False$ since $\form{\pi_i}\False \thesis \False$.
Hence we have
$\psi_1 \not\thesis \False$ and $\psi_2 \not\thesis \False$.
Moreover, since $\True \thesis \form{\pi_i}\True$
and $\True \not\thesis \delta$,
must have either $\True \not\thesis \psi_1$
or $\True \not\thesis \psi_2$.

Assume (say) $\True \not\thesis \form{\pi_i} \psi_i$.
Then by induction hypothesis we get
$\I{\psi_i} \neq \I\True$,
which implies $\I\delta \neq \I\True$.

\item[Case of $\PTbis \arrow \PT$.]
By Lemma~\ref{lem:proof:sem:log:compl:nf},
there is a finite set $J$ such that
\[
\begin{array}{l l l}
  \delta
& \thesisiff
& \bigwedge_{j \in J} \left(\varphi_j \realto \psi_j\right)
\end{array}
\]

\noindent
with
\(
  |\bigwedge_{j \in J} \left(\varphi_j \realto \psi_j\right)|
  \leq
  |\delta|
\).

Let $i \in J$ such that $\True \thesis \psi_i$.
Hence $\True \thesis \varphi_i \realto \psi_i$
and
$\delta \thesisiff \bigwedge_{j \neq i} (\varphi_j \realto \psi_j)$.
Since $\True \not\thesis \delta$,
we can thus reduce to the case where
$\True \not\thesis \psi_j$ for all $j \in J$.

Similarly, let $i \in J$ such that $\varphi_i \thesis \False$.
Hence $\True \thesis \varphi_i \realto \psi_i$
and
$\delta \thesisiff \bigwedge_{j \neq i} (\varphi_j \realto \psi_j)$.
Since $\True \not\thesis \delta$,
we can thus reduce to the case where
$\varphi_i \not\thesis \False$ for all $j \in J$.
In particular, 
by Lemma~\ref{lem:proof:sem:log:compl:c-false}
we have $\I{\varphi_j} \neq \emptyset$ and $\C(\varphi_j)$
for all $j \in J$.

Moreover, 
$\psi_j \thesis \False$
would imply $\delta \thesis \False$
using the rule $\ax{C}$ in Figure~\ref{fig:ded:realto}.
We thus have $\psi_j \not\thesis \False$ for all $j \in J$.

Assume toward a contradiction that $\I\delta = \I\True$.
Hence in particular $(\bot \step \bot) \in \I\delta$,
so that $(\bot \step \bot) \in \I{\varphi_j \realto \psi_j}$
for all $j \in J$.
For each $j \in J$,
since $\I{\varphi_j} \neq \emptyset$
we have $\bot \in \I{\psi_j}$.
But this implies $\I{\psi_j} = \I\True$,
which by induction hypothesis contradicts that
$\True \not\thesis \psi_j$
and
$\psi_j \not\thesis \False$.
\qedhere
\end{description}
\end{proof}

\begin{lemma}
\label{lem:proof:sem:log:compl:ded}
For all $\varphi,\psi \in \Lang^\land(\PT)$,
if $\I\psi \sle \I\varphi$,
then $\psi \thesis \varphi$ is derivable.
\end{lemma}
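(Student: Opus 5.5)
I will argue by induction on the size $|\psi| + |\varphi|$, with a case analysis on the type $\PT$, following the pattern of Lemmas~\ref{lem:proof:sem:log:compl:c-false} and~\ref{lem:proof:sem:log:compl:true}. First I dispose of the degenerate cases. If $\I\psi = \emptyset$, then Lemma~\ref{lem:proof:sem:log:compl:c-false} gives $\psi \thesis \False$, and hence $\psi \thesis \varphi$. If $\I\varphi = \I\True$, then Lemma~\ref{lem:proof:sem:log:compl:true} gives $\True \thesis \varphi$, and hence $\psi \thesis \varphi$. Also, if $\True \thesis \psi$, then soundness gives $\I\varphi = \I\True$, which is the previous case. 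So I may assume that neither $\psi \thesis \False$, $\True \thesis \psi$, $\varphi \thesis \False$ nor $\True \thesis \varphi$ is derivable; the case $\varphi \thesis \False$ is excluded because then $\I\psi=\emptyset$. This lets me replace both formulae by the normal forms of Lemma~\ref{lem:proof:sem:log:compl:nf}, without increasing size. By Proposition~\ref{prop:sem:char:fin}, $\I\psi = \up d$ and $\I\varphi = \up e$ with $e \leq d$ finite.

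Then I go through the types. For $\Unit$, both formulae are equivalent to $\form{\pair{}}$. For $\rec\TV.\PT$, the normal forms are $\form\fold\psi'$ and $\form\fold\varphi'$; since $\I\fold$ is an iso, $\I{\psi'} \sle \I{\varphi'}$, and induction together with monotonicity of $\form\fold$ gives the result. For sums, the normal forms are $\form{\inj_i}\psi'$ and $\form{\inj_j}\varphi'$. Non-emptiness of $\I\psi$ forces $i=j$, and since $\I{\inj_i}$ is injective I again get $\I{\psi'} \sle \I{\varphi'}$ and conclude by induction. For products, I project componentwise: $\I{\psi_k} \sle \I{\varphi_k}$ holds because both $\I{\psi_1}$ and $\I{\psi_2}$ are non-empty. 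Induction then gives $\psi_k \thesis \varphi_k$, and I recombine using the monotonicity rule for $\form{\pi_k}$ and the $\land$-rules.

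The main obstacle is the arrow case. There $\psi \thesisiff \bigwedge_{i \in I}(\alpha_i \realto \beta_i)$ and $\varphi \thesisiff \bigwedge_{j\in J}(\gamma_j \realto \epsilon_j)$. As in the earlier lemmas, I may drop conjuncts that are trivially true, so that all $\I{\alpha_i}, \I{\gamma_j}$ are non-empty; hence all $\I{\beta_i}$ are non-empty too, since $\I\psi\neq\emptyset$. It suffices to derive $\psi \thesis \gamma_j \realto \epsilon_j$ for each $j$. Semantically, $\I\psi = \up\bigvee_i(a_i \step b_i)$ with $\I{\alpha_i}=\up a_i$ and $\I{\beta_i}=\up b_i$. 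Writing $\I{\gamma_j} = \up c$, the inclusion $\I\psi \sle \I{\gamma_j\realto\epsilon_j}$ amounts, by evaluating the least element of $\I\psi$ at $c$, to $\bigvee\{b_i \mid a_i \leq c\} \in \I{\epsilon_j}$. Let $K = \{i \mid \I{\gamma_j} \sle \I{\alpha_i}\}$. Then $\I{\bigwedge_{i\in K}\beta_i} \sle \I{\epsilon_j}$, which holds vacuously-via-$\True$ when $K=\emptyset$, since then $\bot\in\I{\epsilon_j}$. Both sides involve formulae of smaller size, so induction gives $\gamma_j \thesis \alpha_i$ for $i\in K$ and $\bigwedge_{i\in K}\beta_i \thesis \epsilon_j$. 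I then derive $\psi \thesis \bigwedge_{i\in K}(\gamma_j \realto \beta_i) \thesis \gamma_j \realto \bigwedge_{i\in K}\beta_i \thesis \gamma_j\realto\epsilon_j$, using contravariance of $\realto$, the rule $\ax{{\realto}/\land}$ (or $\True \thesis \gamma_j \realto \True$ when $K=\emptyset$), and covariance. The delicate point is to check this semantic characterisation of $\up\bigvee_i(a_i\step b_i) \sle \up c \realto \up e'$ carefully. I will also need to ensure that the size measure decreases when applying induction to the mixed-type pairs $(\gamma_j,\alpha_i)$ and $(\bigwedge_K\beta_i,\epsilon_j)$. For this I will take the induction to be on the sum of sizes, which strictly decreases since these pieces are proper subformulae of the normal forms.
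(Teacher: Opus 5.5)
Your proposal is correct and follows the paper's proof. It uses the same induction on $|\psi|+|\varphi|$, the same degenerate-case reductions via Lemmas~\ref{lem:proof:sem:log:compl:c-false} and~\ref{lem:proof:sem:log:compl:true}, the normal forms of Lemma~\ref{lem:proof:sem:log:compl:nf}, and, in the arrow case, the same evaluation of $\bigvee_i(a_i \step b_i)$ at the least element of the antecedent, followed by contravariance, $\ax{{\realto}/\land}$ and covariance. The only cosmetic difference is that the paper first reduces $\varphi$ to a non-conjunction ($\form{\pi_i}\varphi'$ or $\varphi''\realto\varphi'$), whereas you treat the conjuncts of the normal form of $\varphi$ one at a time, which amounts to the same argument.
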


\begin{proof}
The proof is by induction on the sum of the 
sizes of $\psi$ and $\varphi$.

First, note that the result is trivial if either
$\True \thesis \varphi$
or $\psi \thesis \False$.
%
Moreover, if $\varphi \thesis \False$
then $\I\varphi = \emptyset$ by the Soundness Theorem~\ref{thm:sem:log:sound},
and we also have $\I\psi = \emptyset$,
so that $\psi \thesis \False$
by Lemma~\ref{lem:proof:sem:log:compl:c-false}.
%
Furthermore, if $\True \thesis \psi$,
then $\I\psi = \I\True$,
so that $\I\varphi = \I\True$ as well,
and Lemma~\ref{lem:proof:sem:log:compl:true}
implies $\True \thesis \varphi$.

Altogether, we can thus assume
$\True \not\thesis \varphi$,
$\varphi \not\thesis \False$,
$\True \not\thesis \psi$
and
$\psi \not\thesis \False$.
Hence Lemma~\ref{lem:proof:sem:log:compl:nf}
applies to both $\varphi$ and $\psi$.

Moreover, note that if $\varphi = \varphi_1 \land \varphi_2$,
then $\I\psi \sle \I\varphi$ implies
$\I\psi \sle \I{\varphi_i}$ for all $i = 1,2$,
and we can obtain $\psi \thesis \varphi$ from the induction hypotheses.
We can thus assume that $\varphi$ is not a conjunction.

We now reason by cases on $\PT$.
\begin{description}
\item[Case of $\Unit$.]
By Lemma~\ref{lem:proof:sem:log:compl:nf}
we have
$\psi \thesisiff \form{\pair{}}$
and $\varphi \thesisiff \form{\pair{}}$.
Hence $\psi \thesis \varphi$.

\item[Case of $\rec \TV.\PT$.]
By Lemma~\ref{lem:proof:sem:log:compl:nf}
we have
$\psi \thesisiff \form{\fold} \psi'$
and $\varphi \thesisiff \form{\fold} \varphi'$
with $|\psi'| < |\psi|$
and $|\varphi'| < |\varphi|$.
Since moreover $\I{\psi'} \sle \I{\varphi'}$,
we can conclude by induction hypothesis.

\item[Case of $\PT_1 + \PT_2$.]
By Lemma~\ref{lem:proof:sem:log:compl:nf}
we have
$\psi \thesisiff \form{\inj_i} \psi'$
and $\varphi \thesisiff \form{\inj_j} \varphi'$
with $|\psi'| < |\psi|$
and $|\varphi'| < |\varphi|$.
Since $\I\psi \sle \I\varphi$,
we must have $i = j$,
and the result follows using the induction hypothesis.

\item[Case of $\PT_1 \times \PT_2$.]
By Lemma~\ref{lem:proof:sem:log:compl:nf}
we have
$\psi \thesisiff \form{\pi_1} \psi'_1 \land \form{\pi_2} \psi'_2$
and $\varphi \thesisiff \form{\pi_i} \varphi'$
with $|\psi'_1|, |\psi'_2| < |\psi|$
and $|\varphi'| < |\varphi|$.

Since $\I\psi \sle \I\varphi$,
we must have
$\I{\psi'_i} \sle \I{\varphi'}$,
so that
$\psi'_i \thesis \varphi'$ by induction hypothesis,
and thus
$\form{\pi_i}\psi'_i \thesis \form{\pi_i}\varphi'$.
We then obtain $\psi \thesis \varphi$, as expected.

\item[Case of $\PTbis \arrow \PT$.]
This is again the most important and most difficult case.
The proof is an adaptation to our setting
of the proof of~\cite[Proposition 10.5.2]{ac98book}.

First, 
by Lemma~\ref{lem:proof:sem:log:compl:nf}
the formula
$\varphi$ must be of the form $\varphi'' \realto \varphi'$.
If $\I{\varphi''} = \emptyset$,
then by Lemma~\ref{lem:proof:sem:log:compl:c-false}
we obtain $\varphi'' \thesis \False$,
so that $\True \thesis (\varphi'' \realto \varphi')$.
Hence $\psi \thesis \varphi$ in this case.
We can thus assume $\I{\varphi''} \neq \emptyset$.
Since $\varphi \not\thesis \False$,
by Lemma~\ref{lem:proof:sem:log:compl:c-false}
we get
$\I\varphi \neq \emptyset$,
which implies that $\I{\varphi'} \neq \emptyset$ as well.
Hence, by Proposition~\ref{prop:sem:char:fin}
there are finite $d'',d'$
such that $\I{\varphi''} = \up d''$
and $\I{\varphi'} = \up d'$.

On the other hand,
Lemma~\ref{lem:proof:sem:log:compl:nf} yields that
$\psi$ is of the form
$\bigwedge_{i \in I} (\psi''_i \realto \psi'_i)$
for some finite set $I$,
with
$|\bigwedge_{i \in I} (\psi''_i \realto \psi'_i)| \leq |\psi|$.
Furthermore, since $\True \not\thesis \psi$,
reasoning as in the proof of Lemma~\ref{lem:proof:sem:log:compl:true},
we can reduce to the case where 
$\I{\psi''_i} \neq \emptyset$ for all $i \in I$.
Lemma~\ref{lem:proof:sem:log:compl:c-false}
then yields $\C(\psi''_i)$ for all $i \in I$.
Since moreover $\psi \not\thesis \False$,
Lemma~\ref{lem:proof:sem:log:compl:c-false} (again) yields
$\I{\psi'_i} \neq \emptyset$ for all $i \in I$.
Again by Proposition~\ref{prop:sem:char:fin},
for each $i \in I$ there are finite $e''_i,e'_i$
such that $\I{\psi''_i} = \up e''_i$
and $\I{\psi'_i} = \up e'_i$.
Moreover, $\bigvee_{i \in I}(e''_i \step e'_i)$
exists since $\I\psi \neq \emptyset$
by Lemma~\ref{lem:proof:sem:log:compl:c-false}
(as $\psi \not\thesis \False$).

Hence $\I\psi \sle \I\varphi$ means
\[
\begin{array}{l l l}
  \up \bigvee_{i \in I} \left(e''_i \step e'_i \right)
& \sle
& \up \left( d'' \step d' \right)
\end{array}
\]

\noindent
which implies
\[
\begin{array}{l l l}
  d'' \step d'
& \leq
& \bigvee_{i \in I} \left(e''_i \step e'_i \right)
\end{array}
\]

\noindent
We now evaluate the two functions above at $d''$.
Since
\[
\begin{array}{l l l}
  \left( \bigvee_{i \in I} e''_i \step e'_i \right)(x)
& =
& \bigvee \{ e'_i \mid x \geq e''_i\}
\end{array}
\]

\noindent
we get
\[
\begin{array}{l l l}
  d'
& \leq
& \bigvee_{d'' \geq e''_i} e_i
\end{array}
\]

\noindent
that is
\[
\begin{array}{l l l}
  \up \bigvee_{\up d'' \sle \up e''_i} e'_i
& \sle
& \up d'
\end{array}
\]

\noindent
In other words,
\[
\begin{array}{l l l}
  \I{\bigwedge_{\I{\varphi''} \sle \I{\psi ''_i}} \psi'_i}
& \sle
& \I{\varphi'}
\end{array}
\]

\noindent
and by induction hypothesis
\[
\begin{array}{l l l}
  \bigwedge_{\varphi'' \thesis \psi ''_i} \psi'_i
& \thesis
& \varphi'
\end{array}
\]

Hence we are done since
\[
\begin{array}{l l l}
  \psi
& \thesis
& \bigwedge_{i \in I} \left(
  \left( \bigwedge_{\varphi'' \thesis \psi ''_i} \psi''_i \right)
  \realto
  \psi'_i
  \right)
\end{array}
\]

\noindent
and thus
\[
\begin{array}{l l l}
  \psi
& \thesis
& \left( \bigwedge_{\varphi'' \thesis \psi ''_i} \psi''_i \right)
  \realto
  \bigwedge_{\varphi'' \thesis \psi ''_i} \psi'_i
\end{array}
\]

\noindent
while
\(
\varphi'' \thesis \bigwedge_{\varphi'' \thesis \psi ''_i} \psi''_i
\).
\qedhere
\end{description}
\end{proof}

We note here that the (closed) neutral fragment is complete.
We do not need this result.

\begin{corollary}
Let $\varphi, \psi \in \Lang^\pm(\PT)$ be \emph{closed}.
If $\I\psi \sle \I\varphi$,
then $\psi \thesis \varphi$ is derivable.
\end{corollary}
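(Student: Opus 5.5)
The corollary follows by reduction to Theorem~\ref{thm:sem:log:compl}, using the normal forms of \S\ref{sec:deduction} together with soundness. Let $\varphi, \psi \in \Lang^\pm(\PT)$ be closed with $\I\psi \sle \I\varphi$.

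First, by Lemma~\ref{lem:ded:omega}, there are $\psi_0, \varphi_0 \in \Lang^\omega(\PT)$ with $\psi \thesisiff \psi_0$ and $\varphi \thesisiff \varphi_0$. By Lemma~\ref{lem:ded:omegalorland}, these are in turn equivalent to formulae of $\Lang^{\lor\land}(\PT)$. So we may write
\[
  \psi \thesisiff \bigvee_{i \in I} \gamma_i
  \qquad\text{and}\qquad
  \varphi \thesisiff \bigvee_{j \in J} \delta_j ,
\]
with $I, J$ finite and each $\gamma_i, \delta_j \in \Lang^\land(\PT)$. An empty disjunction is read as $\False$; this needs a small check, since $\Lang^{\lor\land}$ contains $\False$ through $\Lang^\land$. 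By the Soundness Theorem~\ref{thm:sem:log:sound}, the semantics is preserved, so $\bigcup_i \I{\gamma_i} \sle \bigcup_j \I{\delta_j}$.

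Next, by the left $\lor$-rule, it suffices to derive $\gamma_i \thesis \bigvee_j \delta_j$ for each $i$. Fix $i$ and split into two cases.
\begin{itemize}
\item If $\I{\gamma_i} = \emptyset$, then Lemma~\ref{lem:proof:sem:log:compl:c-false} gives $\gamma_i \thesis \False$, hence $\gamma_i \thesis \bigvee_j \delta_j$.
\item Otherwise, Proposition~\ref{prop:sem:char:fin} gives a finite $d$ with $\I{\gamma_i} = \up d$. Then $d \in \I{\delta_j}$ for some $j$. Since $\I{\delta_j}$ is saturated (Proposition~\ref{prop:sem:log:degroot}), this yields $\up d \sle \I{\delta_j}$, i.e.\ $\I{\gamma_i} \sle \I{\delta_j}$. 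Theorem~\ref{thm:sem:log:compl}(\ref{item:sem:log:compl:ded}) then gives $\gamma_i \thesis \delta_j$, and a right $\lor$-introduction finishes.
\end{itemize}
Composing with the equivalences via the cut rule yields $\psi \thesis \varphi$.

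The only real content is the step where the primality of $\up d$ selects a single disjunct $\delta_j$. This is exactly what Proposition~\ref{prop:sem:char:fin} provides, so no serious obstacle is expected. The only bookkeeping is confirming that the normal-form lemmas apply to closed neutral formulae, which they do as stated.
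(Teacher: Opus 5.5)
Your proposal is correct and follows the paper's own proof essentially step for step. You reduce both sides to $\Lang^{\lor\land}$ via Lemmas~\ref{lem:ded:omega} and~\ref{lem:ded:omegalorland} plus soundness, split the left-hand disjunction, and dispatch the empty case with Lemma~\ref{lem:proof:sem:log:compl:c-false}; in the non-empty case you use $\I{\gamma_i} = \up d$ (Proposition~\ref{prop:sem:char:fin}) to select a single right-hand disjunct before applying completeness of $\Lang^\land$, exactly as the paper does. The only superfluous point is your worry about empty disjunctions, which cannot arise since $\Lang^{\lor\land}$ is the closure of $\Lang^\land$ under binary $\lor$.
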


\begin{proof}
By Lemma~\ref{lem:ded:omega}
and Lemma~\ref{lem:ded:omegalorland}
(together with Soundness Theorem~\ref{thm:sem:log:sound}),
we can assume $\varphi,\psi \in \Lang^{\lor\land}(\PT)$.
If $\psi$ is a disjunction, say $\psi_1 \lor \psi_2$,
then $\I\psi \sle \I\varphi$
implies
$\I{\psi_1} \sle \I\varphi$
and
$\I{\psi_2} \sle \I\varphi$.
We can thus reduce to the case where $\psi \in \Lang^\land(\PT)$.
If $\I\psi = \emptyset$,
then Lemma~\ref{lem:proof:sem:log:compl:c-false}
yields that $\psi \thesis \False$,
and we are done.

Otherwise, by Proposition~\ref{prop:sem:char:fin}
there is some finite $d \in \I\PT$
such that $\I\psi = \up d$.
Hence $\I\psi \sle \I\varphi$
implies $d \in \I\varphi$.
If $\varphi$ is a disjunction, say $\varphi_1 \lor \varphi_2$,
then either $d \in \I{\varphi_1}$ and $\I\psi \sle \I{\varphi_1}$,
or $d \in \I{\varphi_2}$ and $\I\psi \sle \I{\varphi_2}$.
We can thus reduce to the case were $\varphi \in \Lang^\land(\PT)$,
and conclude with Lemma~\ref{lem:proof:sem:log:compl:ded}.
\end{proof}

}
\opt{full}{
\section{Proofs of \S\ref{sec:sem:reft} (\nameref{sec:sem:reft})}
\label{sec:proof:sem:reft}

We begin with the (long) proof of Theorem~\ref{thm:compl:land}.

\ThmComplLand*

\begin{proof}
Note that Lemma~\ref{lem:reft} 
restricts to types $\RTter^\land$, in the sense that given $\RTter^\land$,
there is some $\varphi \in \Lang^\land(\UPT{\RTter^\land})$
such that $\RTter^\land \eqtype \reft{\UPT{\RTter^\land} \mid \varphi}$.

In the following, we write $\Env \thesis M \colon \RT$
for $\Env^\land \thesis M \colon \RT^\land$.

Let $\varphi \in \Lang^\land(\UPT\RT)$ such that
$\RT \eqtype \reft{\UPT\RT \mid \varphi}$.
Assume
$\Env = x_1\colon \RTbis_1,\dots,x_n \colon \RTbis_n$
and
let $\psi_i \in \Lang^\land(\UPT{\RTbis_i})$
such that
$\RTbis_i \eqtype \reft{\UPT{\RTbis_i} \mid \psi_i}$
for each $i = 1,\dots,n$.

Note that if
$\I{\RTbis_i} = \emptyset$ for some $i \in \{1,\dots,n\}$,
since $\I{\UPT{\RTbis_i}}$ is not empty (as it is a Scott domain),
we must have $\I\psi_i = \emptyset$
and thus $\psi_i \thesis \False$ by
completeness for $\Lang^\land$
(Theorem~\ref{thm:sem:log:compl}).
We can then conclude with the rule for $\False$ in Figure~\ref{fig:reft:log}
(\S\ref{sec:reft}).
We can thus reduce the case of a typing context
such that 
$\I{\RTbis_i} \neq \emptyset$
(or equivalently $\psi_i \not\thesis \False$)
for each $i = 1,\dots,n$.
We call such typing contexts \emph{consistent}.

Note also that if $\True \thesis \varphi$,
then we can derive $\Env \thesis M \colon \RT$
from $\UPT\Env \thesis M \colon \UPT\RT$ by subtyping.
We can thus assume $\True \not\thesis \varphi$.
Moreover, since $\Env$ is consistent we must have $\I\varphi \neq \emptyset$,
that is, $\varphi \not\thesis \False$.
We can thus always apply Lemma~\ref{lem:proof:sem:log:compl:nf} to $\varphi$.

We now reason by induction on the typing derivation of
$\UPT\Env \thesis M \colon \UPT\RT$.
\begin{description}
\item[Case of]
\[
\dfrac{(x\colon\UPT\RT) \in \UPT\Env}
  {\UPT\Env \thesis x\colon\UPT\RT}
\]

We have $(x \colon \RTbis) \in \Env$ for some type $\RTbis$ with $\UPT\RTbis = \UPT\RT$.
Let $\varphi,\psi \in \Lang^\land(\UPT\RT)$
such that $\RT \eqtype \reft{\UPT\RT \mid \varphi}$
and $\RTbis \eqtype \reft{\UPT\RT \mid \psi}$.
Since $\Env \thesis M :\RT$ is sound,
we have $\I\psi \sle \I\varphi$.
Hence $\psi \thesis \varphi$
by completeness for $\Lang^\land$
(Theorem~\ref{thm:sem:log:compl}).
We then conclude by subtyping.

\item[Case of]
\[
\dfrac{}
  {\UPT\Env \thesis \pair{} \colon \UPT\RT}
\]

We have $\UPT\RT = \Unit$
and $\varphi \thesisiff \form{\pair{}}$.
We conclude with the rule
\[
\dfrac{}
  {\Env \thesis \pair{} \colon \reft{\Unit \mid \form{\pair{}}}}
\]

\item[Case of]
\[
\dfrac{\UPT\Env \thesis N_1 \colon \PT_1
  \qquad
  \UPT\Env \thesis N_2 \colon \PT_2}
  {\UPT\Env \thesis \pair{N_1,N_2} \colon \PT_1 \times \PT_2}
\]

\noindent
where $\UPT\RT = \PT_1 \times \PT_2$
and $M = \pair{N_1,N_2}$.
We have
$\varphi \thesisiff \form{\pi_1}\varphi_1 \land \form{\pi_2}\varphi_2$.
The judgements
$\Env \thesis N_1 \colon \reft{\PT_1 \mid \varphi_1}$
and
$\Env \thesis N_2 \colon \reft{\PT_2 \mid \varphi_2}$
are both sound since so is
\(
  \Env
  \thesis
  \pair{N_1,N_2} \colon
  \reft{\PT_1 \times \PT_2 \mid \form{\pi_1}\varphi_1 \land \form{\pi_2}\varphi_2}
\).
We can thus conclude with the induction hypothesis and
\[
\text{
\AXC{$\vdots$}
\UIC{$\Env \thesis N_1 \colon \reft{\PT_1 \mid \varphi_1}$}
\UIC{$\Env \thesis \pair{N_1,N_2} \colon
  \reft{\PT_1 \times \PT_2 \mid \form{\pi_1} \varphi_1}$}
\AXC{$\vdots$}
\UIC{$\Env \thesis N_2 \colon \reft{\PT_2 \mid \varphi_2}$}
\UIC{$\Env \thesis \pair{N_1,N_2} \colon
  \reft{\PT_1 \times \PT_2 \mid \form{\pi_1} \varphi_2}$}
\BIC{$\Env \thesis \pair{N_1,N_2} \colon
  \reft{\PT_1 \times \PT_2 \mid \form{\pi_1} \varphi_1 \land \form{\pi_2} \varphi_2}$}
\DisplayProof}
\]

\item[Case of]
\[
\begin{array}{c}
\dfrac{\UPT\Env \thesis N \colon \PT \times \PTbis}
  {\UPT\Env \thesis \pi_i(N) \colon \UPT\RT}
\end{array}
\]

\noindent
where $i = 1,2$ and $M = \pi_1(N)$.
Assume w.l.o.g.\ $i = 1$ (so that $\UPT\RT = \PT$).
Note that
$\Env \thesis N \colon \reft{\PT \times \PTbis \mid \form{\pi_1}\varphi}$
is sound since so is
$\Env \thesis \pi_1(N) \colon \reft{\PT \mid \varphi}$.
We then conclude with the induction hypothesis and the rule
\[
\dfrac{\Env \thesis N \colon \reft{\PT \times \PTbis \mid \form{\pi_1} \varphi}}
  {\Env \thesis \pi_1(N) \colon \reft{\PT \mid \varphi}}
\]

\item[Case of]
\[
\dfrac{\UPT\Env \thesis N \colon \PT[\rec\TV.\PT / \TV]}
  {\UPT\Env \thesis \fold(N) \colon \rec\TV.\PT}
\]

\noindent
where $\UPT\RT = \rec\TV.\PT$ and $M = \fold(M)$.
We have $\varphi \thesisiff \form{\fold}\psi$.
Moreover, 
$\Env \thesis N \colon \reft{\PT[\rec\TV.\PT / \TV] \mid \psi}$
is sound
since so is
$\Env \thesis \fold(N) \colon \reft{\rec\TV.\PT \mid \form\fold\psi}$.
We can thus conclude using the induction hypothesis and the rule
\[
\dfrac{\Env \thesis N \colon \reft{\PT[\rec\TV.\PT/\TV] \mid \psi}}
  {\Env \thesis \fold(N) \colon \reft{\rec\TV.\PT \mid \form\fold \psi}}
\]

\item[Case of]
\[
\dfrac{\UPT\Env \thesis N \colon \rec\TV.\PT}
  {\UPT\Env \thesis \unfold(N) \colon \PT[\rec\TV.\PT / \TV]}
\]

\noindent
where $\UPT\RT = \PT[\rec\TV.\PT/\TV]$
and $M = \unfold(N)$.
The judgement 
$\Env \thesis N \colon \reft{\rec\TV.\PT \mid \form\fold \varphi}$
is sound
since so is
$\Env \thesis \unfold(N) \colon \reft{\PT[\rec\TV.\PT / \TV] \mid  \varphi}$
and we conclude using the induction hypothesis and the rule
\[
\dfrac{\Env \thesis N \colon \reft{\rec\TV.\PT \mid \form\fold \varphi}}
  {\Env \thesis \unfold(N) \colon \reft{\PT[\rec\TV.\PT/\TV] \mid \varphi}}
\]

\item[Case of]
\[
\dfrac{\UPT\Env,x\colon\PTbis \thesis N \colon \PT}
  {\UPT\Env \thesis \lambda x.N \colon \PTbis \arrow \PT}
\]

\noindent
where $\UPT\RT = \PTbis \arrow \PT$,
and where $M = \lambda x.N$.

We have $\varphi \thesisiff \bigwedge_{i \in I}(\varphi''_i \realto \varphi'_i)$
for some finite set $I$.
Let $i \in I$.
The judgement
\(
  \Env
  \thesis
  \lambda x.N \colon
  \reft{\PTbis \arrow \PT \mid \varphi''_i \realto \varphi'_i}
\)
is sound,
and so is
\(
  \Env, x \colon \reft{\PTbis \mid \varphi''_i}
  \thesis
  N \colon
  \reft{\PT \mid \varphi'_i}
\).
Using the induction hypothesis, we derive
\(
  \Env
  \thesis
  \lambda x.N \colon
  \reft{\PTbis \arrow \PT \mid \varphi''_i \realto \varphi'_i}
\).
We can then obtain
\(
  \Env
  \thesis
  \lambda x.N \colon
  \reft{\PTbis \arrow \PT \mid \varphi}
\).

\item[Case of]
\[
\dfrac{\UPT\Env \thesis N \colon \PTbis \arrow \PT
  \qquad
  \UPT\Env \thesis V \colon \PTbis}
  {\UPT\Env \thesis N V \colon \PT}
\]

\noindent
where $\UPT\RT = \PT$ and where $M = N V$.
For each $i \in \{1,\dots,n\}$,
since $\I{\psi_i} \neq \emptyset$,
by Proposition~\ref{prop:sem:char:fin}
there is a finite $e_i \in \I{\UPT{\RTbis_i}}$
such that $\I{\psi_i} = \up e_i$.
Similarly, since $\I\varphi \neq \emptyset$,
again by Proposition~\ref{prop:sem:char:fin}
there is a finite $d \in \I{\PT}$
such that $\I{\varphi} = \up d$.

We have
$\I M(\vec e) \in \I\varphi$.
Note that
$\I M(\vec e) = \I N(\vec e)\left(\I V(\vec e) \right)$.
Since $\I\PTbis$ is a Scott domain, it is algebraic,
and $\I V(\vec e)$ is the directed l.u.b.\ of the finite $e \leq \I V(\vec e)$.
Since $\I N(\vec e)$ is Scott-continuous, we thus get
that $\I M(\vec e)$ is the l.u.b.\ of the directed set
\[
\left\{
  \I N(\vec e)(e)
  \mid
  \text{$e$ finite and $\leq \I V(\vec e)$}
\right\}
\]

\noindent
Since $d \leq \I M(\vec e)$ and since $d$ is finite,
it follows that we have
$d \leq \I N(\vec e)(e)$ for some finite $e \leq \I V(\vec e)$.
By Proposition~\ref{prop:sem:char:fin},
there is a formula $\psi \in \Lang^\land(\PTbis)$
such that $\I\psi = \up e$.

Since $d \leq \I N(\vec e)(e)$, we have
$(e \step d) \leq \I N(\vec e)$,
so that
$\I N(\vec e) \in \I{\psi \realto \varphi}$.
Since $\I N$ is monotone, it follows that 
$\Env \thesis N \colon \reft{\PTbis \arrow \PT \mid \psi \realto \varphi}$
is sound.
Hence, this judgement is derivable by induction hypothesis.

Similarly, since $e \leq \I V(\vec e)$,
we obtain that the judgement
$\Env \thesis V \colon \reft{\PTbis \mid \psi}$
is sound and thus derivable.

We can then easily derive $\Env \thesis M \colon \reft{\PT \mid \varphi}$.

\item[Case of]
\[
\dfrac{\UPT\Env,x\colon\UPT\RT \thesis N \colon \UPT\RT}
  {\UPT\Env \thesis \fix x.N \colon \UPT\RT}
\]

\noindent
where $M = \fix x.N$.
Similarly as above, for each $i \in \{1,\dots,n\}$
there is a finite $e_i \in \I{\psi_i}$
such that $\I{\RTbis_i} = \up e_i$.
There is also
a finite $d \in \I{\UPT\RT}$ such that $\I\varphi = \up d$.

Let $f \colon \I{\UPT\RT} \to \I{\UPT\RT}$
be the Scott-continuous function which takes
$a \in \I{\UPT\RT}$ to $\I N(\vec e,a)$.
We have
\[
\begin{array}{l l l}
  \I{\fix x.N}(\vec e)
& =
& \bigvee_{k \in \NN}
  f^k(\bot)
\end{array}
\]

Since $d \leq \I{\fix x.N}(\vec e)$ with $d$ finite,
there is some $k \in \NN$
such that
$d \leq f^k(\bot)$.
Write $d_k$ for $d$.
Since $\I{\UPT\RT}$ is algebraic,
by induction, for each $j = k-1,\dots,0$
there is some finite $d_j$ such that 
$d_{j+1} \leq f(d_j)$
and
$d_j \leq f^{j}(\bot)$.
In particular, $d_0 = \bot$.
For each $j = 0,\dots,k$,
let $\varphi_j$ such that $\I{\varphi_j} = \up d_j$.
Note that $\varphi_k = \varphi$.
Moreover, since $d_0 = \bot$,
we can take $\varphi_0 = \True$.

Again reasoning similarly as above,
we obtain that 
\(
  \Env, x \colon \reft{\UPT\RT \mid \varphi_j} \thesis
  N \colon \reft{\UPT\RT \mid \varphi_{j+1}}
\)
is sound and thus derivable for each $j = 0,\dots,k-1$.
Moreover,
$\Env \thesis \fix x.N \colon \reft{\UPT\RT \mid \varphi_0}$
is derivable.
We can then derive $\Env \thesis \fix x.N \colon \reft{\UPT\RT \mid \varphi}$
by iterated applications of the rule
\[
\dfrac{\Env \thesis \fix x.N \colon \reft{\PT \mid \psi}
  \qquad
  \Env, x: \reft{\PT \mid \psi} \thesis N \colon \reft{\PT \mid \psi'}}
  {\Env \thesis \fix x.N \colon \reft{\PT \mid \psi'}}
\]

\item[Case of]
\[
\dfrac{\UPT\Env \thesis N \colon \PT_i}
  {\UPT\Env \thesis \inj_i(N) \colon \PT_1 + \PT_2}
\]

\noindent
where $\UPT\RT = \PT_1 + \PT_2$ and $M = \inj_i(N)$.
We have $\varphi \thesisiff \form{\inj_j}\psi$,
and since $\Env \thesis M\colon \RT$ is sound,
we must have $i = j$.
But then $\Env \thesis N \colon \reft{\PT_i \mid \psi}$
is also sound, and we can conclude by induction hypothesis and
the rule
\[
\dfrac{\Env \thesis N \colon \reft{\PT_i \mid \psi}}
  {\Env \thesis \inj_i(N) \colon \reft{\PT_1 + \PT_2 \mid \form{\inj_i} \psi}}
\]

\item[Case of]
\[
\dfrac{\UPT\Env \thesis N \colon \PT_1 + \PT_2
  \qquad \UPT\Env, x_1 \colon \PT_1 \thesis N_1 \colon \PTbis
  \qquad \UPT\Env, x_2 \colon \PT_2 \thesis N_2 \colon \PTbis}
  {\UPT\Env \thesis \cse\ N\ \copair{x_1 \mapsto N_1 \mid x_2 \mapsto N_2} \colon \PTbis}
\]

\noindent
where $\UPT\RT = \PTbis$
and where $M$ is the term
$(\cse\ N\ \copair{x_1 \mapsto N_1 \mid x_2 \mapsto N_2})$.

Similarly as in the cases of $\fix x.N$ and $N V$ above,
for each $i \in \{1,\dots,n\}$
there is a finite $e_i \in \I{\UPT{\RTbis_i}}$
such that $\I{\psi_i} = \up e_i$.
Also, there is a finite $d \in \I\PTbis$ such that $\I\varphi = \up d$.

Since $\I\varphi \neq \I\True$,
we must have $\bot \notin \I\varphi$,
so that $\I M(\vec e) \neq \bot$
and thus $\I N(\vec e) \neq \bot$.
Hence $\I N(\vec e) = \I{\inj_i}(z)$ for some $z \in \I{\PT_i}$.
Since $\I M(\vec e) = \I{N_i}(\vec e,z) \in \up d$,
by algebraicity there is a finite $e \leq z$
such that $d \leq \I{N_i}(\vec e,e)$.
With $\psi \in \Lang^\land(\PT_i)$ such that
$\I\psi = \up e$,
we get that the judgements
$\Env \thesis N \colon \reft{\PT_1 + \PT_2 \mid \form{\inj_i}\psi}$
and
$\Env, x\colon \reft{\PT_i \mid \psi} \thesis N_i \colon \reft{\PTbis \mid \varphi}$
are sound.
We can then conclude using the induction hypothesis and the rule
\[
\dfrac{
  \Env \thesis N \colon \reft{\PT_1 + \PT_2 \mid \form{\inj_i}\psi}
  \qquad
  \Env, x_i \colon \reft{\PT_i \mid \psi} \thesis N_i \colon \RT
  \qquad
  \UPT\Env, x_{3-i} \colon \PT_{3-i} \thesis N_{3-i} \colon \UPT\RT}
  {\Env \thesis \cse\ N\ \copair{x_1 \mapsto N_1 \mid x_2 \mapsto N_2} \colon \RT}
\]
\qedhere
\end{description}
\end{proof}

We now prove our main result, namely the Positive Completeness
Theorem~\ref{thm:compl}.

\ThmCompl*

\begin{proof}
Write $\Env \thesis M \colon \RT$ for $\Env^- \thesis M \colon \RT^+$.
Let $\Env = x_1\colon \RTbis_1,\dots,x_n \colon \RTbis_n$.
By Lemma~\ref{lem:reft} and
Proposition~\ref{prop:ded:prenex},
for each $i \in \{1,\dots,n\}$
there is a $\psi_i \in \Lang^\pm(\UPT{\RTbis_i})$
such that $\RTbis_i \eqtype \reft{\UPT{\RTbis_i} \mid (\forall \itvarbis_i)\psi_i}$.
We similarly get a $\varphi \in \Lang^\pm(\UPT\RT)$
such that $\RT \eqtype \reft{\UPT\RT \mid (\exists \itvar)\varphi}$.

Let $\mathcal{Q} = \{S_m \mid m \in \NN\}$,
where for each $m \in \NN$,
\[
\begin{array}{l l l}
  S_m
& =
& \I{\psi_1[m / \itvarbis_1]}
  \times
  \dots
  \times
  \I{\psi_n[m / \itvarbis_n]}
\end{array}
\]

\begin{claim}
$\mathcal{Q}$ is a codirected family of compact saturated sets.
Moreover
\[
\begin{array}{l l l}
  \bigcap \mathcal{Q}
& =
& \I{\RTbis_1}
  \times \dots \times
  \I{\RTbis_n}
\end{array}
\]
\end{claim}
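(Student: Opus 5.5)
The claim has three parts: each $S_m$ is compact-saturated, the family $(S_m)_{m\in\NN}$ is codirected, and its intersection is $\I{\RTbis_1}\times\dots\times\I{\RTbis_n}$. I will prove them in that order, working in the Scott domain $\I{\UPT\Env} = \prod_i \I{\UPT{\RTbis_i}}$.

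\emph{Each $S_m$ is compact-saturated.} Since $(\forall \itvarbis_i)\psi_i$ is a negative formula, its formation rule gives $\itvarbis_i \Neg \psi_i$. Moreover $\psi_i$ is neutral, so $\psi_i[m/\itvarbis_i]$ is a closed neutral formula. By Proposition~\ref{prop:sem:log:degroot}, $\I{\psi_i[m/\itvarbis_i]}$ is compact-open, hence compact-saturated. I will not argue with Tychonoff directly. Instead I write the product as
\[
S_m = \bigcap_{i} \I{\pi_i}^{-1}\bigl(\I{\psi_i[m/\itvarbis_i]}\bigr),
\]
where each $\I{\pi_i}^{-1}$ is an iterated instance of the semantic modalities $\I{\form{\pi_1}}$, $\I{\form{\pi_2}}$ on the nested binary products. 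Lemma~\ref{lem:proof:sem:log:mod:mod} says these modalities preserve compact-saturated sets. Corollary~\ref{cor:sem:ksatbigcap} then closes $\K\Sat$ under the finite intersection. For $n=0$ there is nothing to intersect: $S_m$ is the whole domain, which is compact, so the claim still holds.

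\emph{Codirectedness.} By Lemma~\ref{lem:sem:log:posneg}, $\itvarbis_i \Neg \psi_i$ makes $m \mapsto \I{\psi_i[m/\itvarbis_i]}$ antimonotone. Hence $m \le m'$ implies $S_{m'} \sle S_m$. So $\mathcal{Q}$ is a non-empty chain under reverse inclusion, and therefore codirected.

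\emph{The intersection.} Products commute with intersections of chains, since a tuple lies in every $S_m$ exactly when each coordinate does. So $\bigcap_m S_m = \prod_i \bigcap_m \I{\psi_i[m/\itvarbis_i]}$. Each factor $\bigcap_m \I{\psi_i[m/\itvarbis_i]}$ is $\I{(\forall\itvarbis_i)\psi_i}$ by Figure~\ref{fig:sem:form}, after identifying the closed substitution instance with the valuation $[m/\itvarbis_i]$. That equals $\I{\RTbis_i}$, because $\RTbis_i \eqtype \reft{\UPT{\RTbis_i}\mid(\forall \itvarbis_i)\psi_i}$ and Theorem~\ref{thm:sem:log:sound} applied to both subtyping directions gives equal interpretations.

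The main obstacle is bookkeeping rather than mathematics. First, the identification $\I{\psi[m/\itvarbis]} = \I\psi\val[m/\itvarbis]$ is a routine substitution lemma that must be checked. Second, Theorem~\ref{thm:sem:log:sound} covers deduction, but the equality $\I{\RTbis} = \I{\reft{\UPT\RTbis\mid\cdot}}$ needs soundness of subtyping. I would prove that by a short induction on the subtyping rules, using the interpretation of refinement types in \S\ref{sec:sem:reft}.
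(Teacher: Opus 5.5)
Your proof is correct. For codirectedness (antimonotonicity via Lemma~\ref{lem:sem:log:posneg}) and for the equality $\bigcap\mathcal{Q} = \I{\RTbis_1}\times\dots\times\I{\RTbis_n}$ (coordinatewise unfolding), it is exactly the paper's argument.

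The one genuine difference is how you show that each $S_m$ is compact-saturated. The paper cites a general topological fact: a finite product of compact-saturated sets is compact-saturated. That fact concerns the product topology. The claim, however, is used with the Scott topology of the product domain $\prod_i \I{\UPT{\RTbis_i}}$, namely in Proposition~\ref{prop:sem:wf}. So the paper tacitly needs the Scott topology of a product of Scott domains to coincide with the product of the Scott topologies. This holds because the factors are continuous, but the paper does not mention it.

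You instead write $S_m$ as a finite intersection of iterated preimages $\I{\form{\pi_j}}(\cdots)$ in a nested binary product type. You then conclude with Lemma~\ref{lem:proof:sem:log:mod:mod} and Corollary~\ref{cor:sem:ksatbigcap}. This stays inside the Scott topology of the product domain and uses only results already proved in the paper. The price is transporting along the order-isomorphism between the $n$-ary product and the nested one, which is harmless since order-isomorphisms are Scott-homeomorphisms.

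You also make explicit two points the paper's final chain of equivalences uses silently. One is the substitution lemma $\I{\psi[m/\itvarbis]} = \I{\psi}\val[m/\itvarbis]$. The other is soundness of $\eqtype$, needed to identify $\I{\RTbis_i}$ with $\I{(\forall\itvarbis_i)\psi_i}$.
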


\begin{claimproof}
It is clear that each $S_m \in \mathcal{Q}$
is compact-saturated as a (finite) product of compact-saturated sets
(see e.g.~\cite[Theorem 4.5.12]{goubault13book}).
Moreover, $\mathcal{Q}$ is codirected since (it is non-empty and)
$m \leq p$ implies
$S_p \sle S_m$ by antimonotonicity
(Lemma~\ref{lem:sem:log:posneg}).

Finally, the equality follows from the fact that
\[
\begin{array}{l l l}
  (u_1,\dots,u_n) \in \bigcap \mathcal{Q}
& \text{iff}
& \forall m \in \NN,~ (u_1,\dots,u_n) \in S_m
\\

& \text{iff}
& \forall m \in \NN,~
  \forall i \in \{1,\dots,n\},~
  u_i \in \I{\psi_i[m / \itvarbis_i]}
\\

& \text{iff}
& \forall i \in \{1,\dots,n\},~
  \forall m \in \NN,~
  u_i \in \I{\psi_i[m / \itvarbis_i]}
\\

& \text{iff}
& \forall i \in \{1,\dots,n\},~
  u_i \in \I{\RTbis_i}
\end{array}
\]
\end{claimproof}

Since $\Env \thesis M \colon \RT$ is sound, we have
$\bigcap \mathcal{Q} \sle \I M^{-1}(\I\RT)$,
where $\I\RT$ is Scott-open and $M$ is Scott-continuous.
By well-filteredness (Proposition~\ref{prop:sem:wf}) and the Claim,
there is some $S_m \in \mathcal{Q}$ such that
$S_m \sle \I M^{-1}(\I\RT)$.
Hence there is some $m \in \NN$ such that the following is sound:
\begin{equation}
\label{eq:proof:sem:reft:neutral-left}
\begin{array}{l l l}
  x_1 \colon \reft{\UPT{\RTbis_1} \mid \psi_1[m / \itvarbis_1]}
  ,~ \dots,~
  x_n \colon \reft{\UPT{\RTbis_n} \mid \psi_n[m / \itvarbis_n]}
& \thesis
& M \colon \RT
\end{array}
\end{equation}

Now, by Lemmas~\ref{lem:ded:omega} and~\ref{lem:ded:omegalorland},
each $\psi_i[m/\itvar] \in \Lang^\pm(\UPT{\RTbis_i})$
is equivalent to a disjunction $\bigvee_{j \in J_i} \delta_{i,j}$,
where $\delta_{i,j} \in \Lang^\land(\UPT{\RTbis_i})$.
For each tuple $(j_1,\dots,j_n) \in J_1 \times \dots \times J_n$,
the judgement
\begin{equation}
\label{eq:proof:sem:reft:land-left}
\begin{array}{l l l}
  x_1 \colon \reft{\UPT{\RTbis_1} \mid \delta_{1,j_1}}
  ,~ \dots,~
  x_n \colon \reft{\UPT{\RTbis_n} \mid \delta_{n,j_n}}
& \thesis
& M \colon \RT
\end{array}
\end{equation}

\noindent
is sound.
If $\delta_{i,j_i} \thesis \False$, then we can
derive~\eqref{eq:proof:sem:reft:land-left}
using the rule for $\False$ in Figure~\ref{fig:reft:log} (\S\ref{sec:reft}).

Assume $\delta_{i,j_i} \not\thesis \False$ for all $i = 1,\dots,n$.
Hence by Proposition~\ref{prop:sem:char:fin},
for each $i = 1,\dots,n$
there is some finite $d_i \in \I{\UPT{\RTbis_i}}$
such that $\I{\delta_i} = \up d_i$.
The soundness of~\eqref{eq:proof:sem:reft:land-left}
then yields
$\I M(\vec d) \in \I\RT$.
Hence there is $p \in \NN$ such that $\I M(\vec d) \in \I{\varphi[p / \itvar]}$.
By Lemmas~\ref{lem:ded:omega} and~\ref{lem:ded:omegalorland} again,
the formula $\varphi[p / \itvar]$ is equivalent to a disjunction
of $\gamma \in \Lang^\land(\UPT\RT)$.
Since $\I M(\vec d) \in \I{\varphi[p / \itvar]}$,
we have $\I M(\vec d) \in \I{\gamma}$ for at least one of these $\gamma$'s.
It follows that the judgement 
\[
\begin{array}{l l l}
  x_1 \colon \reft{\UPT{\RTbis_1} \mid \delta_{1,j_1}}
  ,~ \dots,~
  x_n \colon \reft{\UPT{\RTbis_n} \mid \delta_{n,j_n}}
& \thesis
& M \colon \reft{\UPT\RT \mid \gamma}
\end{array}
\]

\noindent
is sound, and thus derivable by Theorem~\ref{thm:compl:land}.
We then derive~\eqref{eq:proof:sem:reft:land-left}
by subtyping,
using that $\gamma \thesis \varphi[p / \itvar]$.

Since we have~\eqref{eq:proof:sem:reft:land-left}
for each tuple $(j_1,\dots,j_n) \in J_1 \times \dots \times J_n$,
we then obtain~\eqref{eq:proof:sem:reft:neutral-left}
by multiple applications
of the rule for $\lor$ in Figure~\ref{fig:reft:log} (\S\ref{sec:reft}).
Then we are done by subtyping since
$(\forall \itvarbis_i)\psi_i \thesis \psi_i[m / \itvarbis_i]$
for each $i = 1,\dots,n$.
\end{proof}

}

\opt{full}{\setcounter{tocdepth}{4}}
\opt{full}{\newpage\tableofcontents}

\end{document}